


\documentclass[11pt,vi]{mitthesis}
\pagestyle{plain}
\usepackage[hidelinks, bookmarks=true]{hyperref}
\usepackage{pdfpages}
\usepackage{lgrind}
\usepackage[english]{babel}
\usepackage[fleqn]{amsmath}
\usepackage{array}
\usepackage{graphicx}
\usepackage{bbm}
\usepackage[numbered,autolinebreaks,useliterate]{mcode}
\usepackage{bookmark}
\usepackage{color}
\usepackage{wrapfig}
\usepackage{caption}
\usepackage{subcaption}
\usepackage{amssymb}
\usepackage{xcolor,colortbl}
\usepackage{amsthm}
\usepackage{tabularx}
\usepackage{booktabs}
\usepackage{multirow}
\usepackage{afterpage}
\usepackage{tocloft}
\usepackage{mdframed}
\usepackage{hhline}
\usepackage{upgreek}
\usepackage{pifont}
\usepackage{enumitem}
\usepackage{mathtools}
\usepackage{setspace}



\newcommand{\finishchapquote}[3]{\begin{quotation} \textit{#1} \end{quotation} \begin{flushright} - #2. \textit{#3}\end{flushright} }

\newtheorem{lemma}{Lemma}
\newtheorem{theorem}{Theorem}
\newtheorem{corollary}{Corollary}

\newtheorem{definition}{Definition}

\newcommand{\Rho}{\mathrm{P}}
\newcommand{\Beta}{\mathrm{B}}
\newcommand{\Alpha}{\mathrm{A}}
\newcommand{\Nu}{\mathrm{V}}
\newcommand{\W}{\mathrm{W}}

\newcommand {\param} {\mathrm}

\newcommand {\Rmax} {\param{R}_{\param{max}}}
\renewcommand {\r} {\param{r}}
\newcommand {\n}{\param{n}}
\newcommand {\N}{\param{N}}
\renewcommand {\W}{\param{W}}
\newcommand {\R}{\param{R}}
\renewcommand {\a}{\param{a}}
\newcommand {\s}{\param{s}}
\renewcommand {\b}{\param{b}}
\newcommand {\m}{\param{m}}
\newcommand {\xbar}{\bar{\param{x}}}
\newcommand {\ybar}{\bar{\param{y}}}
\newcommand {\zbar}{\bar{\param{z}}}
\newcommand {\ubar}{\bar{\param{u}}}
\newcommand {\wbar}{\bar{\param{w}}}
\newcommand {\w}{\param{\upomega}}
\newcommand {\A}{\param{A}}
\newcommand {\B}{\param{B}}
\newcommand {\D}{\param{D}}
\renewcommand {\d}{\param{d}}

\renewcommand {\P}{\param{P}}
\newcommand {\Q}{\param{Q}}
\newcommand {\C}{\param{C}}
\newcommand {\zo}{\zbar}
\newcommand {\uo}{\ubar}
\newcommand {\wo}{\wbar}
\newcommand {\p}{\param{p}}
\newcommand {\del}{\updelta}

\newcommand{\RN}[1]{%
  \textup{\uppercase\expandafter{\romannumeral#1}}%
}

\def\meas{{\rm meas}\,}

\makeatletter
\def\@fnsymbol#1{\ensuremath{\ifcase#1\or \,a\or \,b\or \,c\or \,d\or \,e\else\@ctrerr\fi}}
\makeatother

\DeclareMathOperator*{\argmax}{argmax}

\newlist{papers}{enumerate}{1}
\setlist[papers, 1]{label = {\bf Paper \arabic*}, align=left}

\begin{document}
\setcounter{page}{1}
\pagenumbering{roman}

\renewcommand\cftpartpresnum{Part~}

\setcounter{tocdepth}{2}
\setcounter{secnumdepth}{2}

\newcommand {\titlet} {Towards a Cybernetic Foundation for Natural Resource Governance}
\newcommand {\authort} {Talha Manzoor}
\newcommand {\schoolt} {Syed Babar Ali School of Science and Engineering}
\newcommand {\departmentt}{Department of Electrical Engineering}
\newcommand {\degreet}{Doctor of Philosophy }
\newcommand {\degreefieldt}{Electrical Engineering}
\newcommand{\supervisornamet}{Abubakr Muhammad }
\newcommand {\supervisoraffiliationt}{LUMS}
\newcommand{\cosupervisornamet}{Elena Rovenskaya }
\newcommand {\cosupervisoraffiliationt}{IIASA}
\newcommand {\degreemontht}{December}
\newcommand {\degreeyeart}{2017}

\cleardoublepage
 \pagestyle{empty}
\begin{titlepage}
\Large
{\def\baselinestretch{1.2}\Huge\bf \choosecase{\titlet} \par}
\vspace{80pt}
\Large{A Thesis}\\
\choosecase{Presented to}\\
\choosecase{the Academic Faculty}\\
\vspace{10pt}
\choosecase{ by}\\
\vspace{30pt}
{\huge \choosecase{\bf \authort}}
\par
\vspace{30pt}
\choosecase{In Partial Fullfilment}\\
\choosecase{of the Requirements for the Degree of }\\
\choosecase{\degreet in \\ \degreefieldt}\\
\vspace{20pt}
\choosecase{Supervisor: \supervisornamet (\supervisoraffiliationt)}\\
\choosecase{Co-supervisor: \cosupervisornamet (\cosupervisoraffiliationt)}\\
\vfill
\includegraphics[width=0.25\linewidth]{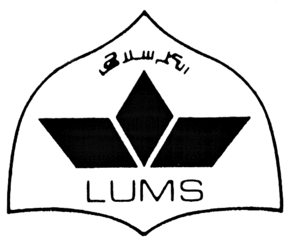}\par
\schoolt\\
\lums\\
\degreemontht\ \degreeyeart
\par
\copyright\ \uppercase\expandafter{\degreeyeart} by \authort
\end{titlepage}

\cleardoublepage

\includepdf[link,pages=-]{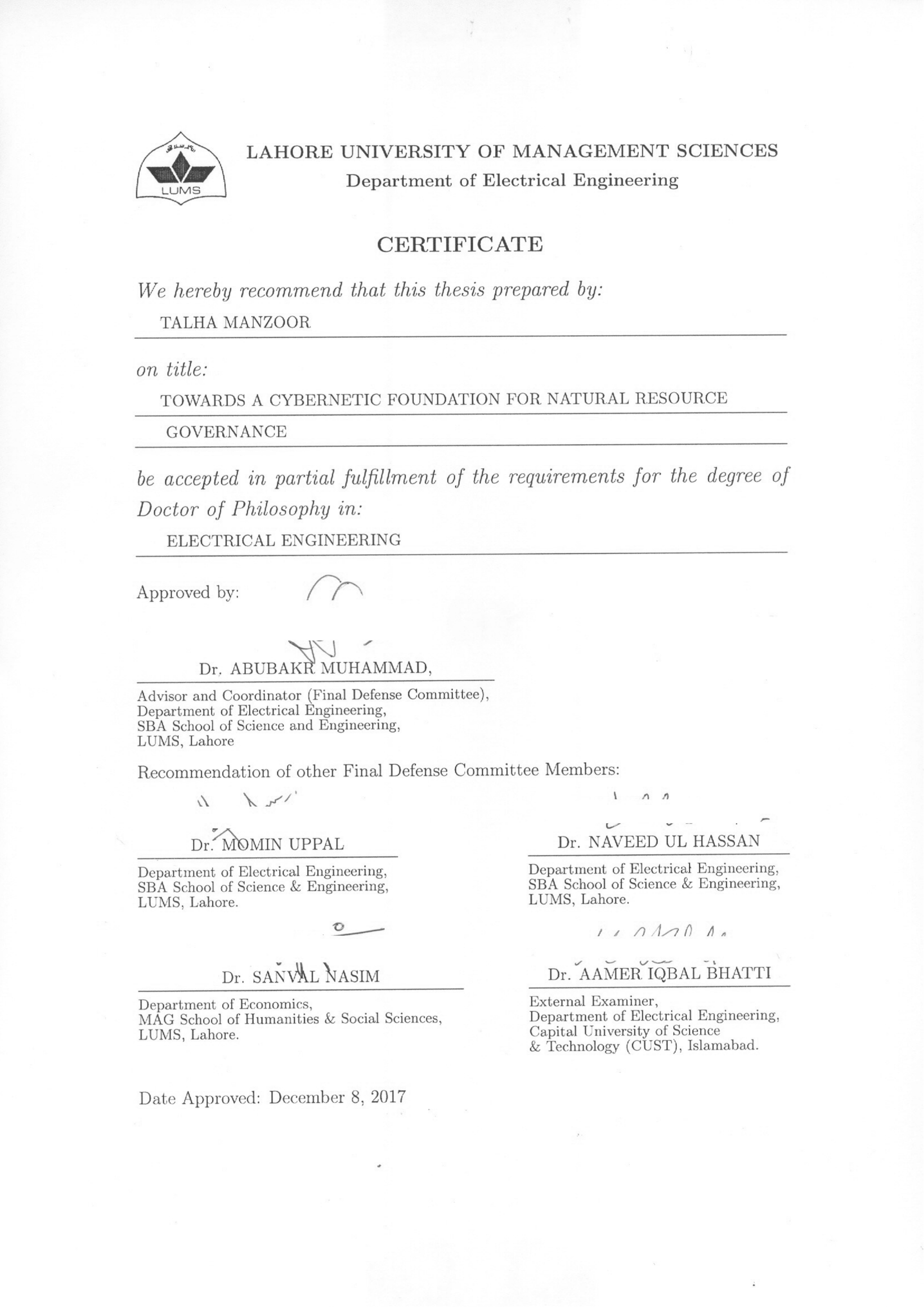}

\cleardoublepage
\par

\clearpage
\begin{center}
    \thispagestyle{empty}
    \vspace*{.35 \textheight}
    	\emph{\Large To Marwa and her never-ending quest for adventure.}
    \vspace*{\fill}
\end{center}
\clearpage


\begin{abstract}
%
%
%

\addcontentsline{toc}{chapter}{Abstract}

This study explores the potential of the cybernetic method of inquiry for the problem of natural resource governance. The systems way of thinking has already enabled scientists to gain considerable headway in framing global environmental challenges. On the other hand, technical solutions to environmental problems have begun to show significant promise, driven by the advent of technology and its increased proliferation in coupled human and natural systems. Such settings lie on the interface of engineering, social and environmental sciences, and as such, require a common language in order for natural resources to be studied, managed and ultimately sustained. In this dissertation, we argue that the systems theoretic tradition of cybernetics may provide the necessary common ground for examining such systems.

After discussing the relevance of the cybernetic approach to natural resource governance, we present a mathematical model of resource consumption, grounded in social psychological research on consumer behavior. We also provide interpretations of the model at various levels of abstraction in the social network of the consuming population. We demonstrate the potential of the model by examining it in various theoretic frameworks which include dynamical systems, optimal control theory, game theory and the theory of learning in games. Each framework yields different policy guidelines to avoid Tragedy of the Commons like scenarios in the natural resource system. 

Mainly, we find that a high importance attached to social information (rather than ecological information) on part of the consumers helps overcome free-riding behavior and achieves affluence in both the resource stock and consumption levels. Moreover, we observe that discounting future utility beyond a specific threshold results in unsustainable consumption patterns according to a pre-defined notion of sustainability. We study the optimal control law for both sustainable and unsustainable cases, and give a rigorous criterion for sustainable growth. Later we examine the long-term effects of rational behavior on the part of the consumers and compare it with the collectively optimal outcome via a non-cooperative game and see how different societal attributes effect the ``tragicness" of the game. We show that the adoption of a basic fictitious play learning scheme by the consumers results in the equilibrium solution in a way that avoids free-riding behavior. All aspects of the analysis are conducted with one single question in mind -- what are the favorable conditions for sustainability?

\end{abstract}
\newpage


\cleardoublepage

\pagestyle{plain}

\chapter*{Preface}
\addcontentsline{toc}{chapter}{Preface}

My first exposure to environmental problems began in the summer of 2013 when I participated in the Young Scientists Summer Program at the International Institute for Applied Systems Analysis (IIASA) in Laxenburg, Austria. I had just been admitted in the PhD program of Electrical Engineering at the Lahore University of Management Sciences (LUMS), which was set to begin in the fall of 2013. As a typical graduate student just beginning a PhD, I was still going back and forth over ideas for my doctorate and the work I started at IIASA four years ago resulted in this dissertation. 

The systems way of thinking made me realize how the proper level of abstraction can reveal similarities between problems originating from completely different disciplines. How scientific methods and tools developed by scientists of a particular field could be used by those working in completely different areas. It is exactly the reason how a mechatronics engineer with a strict background in robotics ended up writing a thesis in natural resource governance, something that many of my colleagues have found to be peculiar to say the least. To them, I say that when viewed through the lens of systems theory, robotics (and all engineering disciplines for that matter) and resource governance do not turn out to be as disparate as they may appear at first.

I must admit here that my initial comprehension of systems analysis was little more than just a buzzword that is always fascinating and politically correct. Writing this dissertation has given me an opportunity to understand better what systems thinking really is, and made me appreciate the role of a unified articulation for systems independent of their discipline of origin. Such a thing is necessary for the engineers as they have begun to encounter more and more human-centric applications of their technologies. The perspective with which they are accustomed to study their machines has deep philosophical roots which cover not only man-made but also naturally occurring systems and even living organisms. Such is the language of cybernetics, a distinct tradition of systems theory that has sadly disappeared from the forefront of the engineering community, hidden behind its descendants of control and communication. Even while we are introduced to the basic concepts of feedback and control, our vision is so narrow that we do not realize that the underlying philosophy of the field is so rich. For me this was mind boggling. In contrast however, to quote my advisor Abubakr Muhammad -- \emph{It doesn't surprise me one bit that a mechatronics engineer finally discovers his machinistic origins of thinking in a different topic}. 

In this dissertation I argue that the cybernetic way of thinking may provide the common language that the engineering community requires to study the intricate couplings between society, technology and the environment. Indeed, as I explain, small factions within the engineering community are already doing this without explicitly referring to the term cybernetics in its original form. It is my hope that this dissertation may serve as a spring board to further propagate this view and to guide future work in this area.

\section*{List of Contributions}

Below is a list of the salient contributions of this thesis. Where applicable, the contributions are linked in parenthesis to the publications list of the following section.

\begin{itemize}

\item A unique discourse on the relevance of cybernetics to natural resources.

\item A psychologically justified mathematical model of human behavior in resource consumption {\bf(Paper 2)}.

\item An aggregation mechanism to lump communities in the social network to ease tractability of the model {\bf (Paper 4)}.

\item A mathematical criterion of sustainability {\bf (Papers 3 \& 6)}.

\item The application of Pontryagin's Maximum Principle to an optimal control problem with unbounded control and non-concave Hamiltonian {\bf (Papers 3, 5 \& 6)}.

\item A notion of ``tragedy" in a non-cooperative game describing the Commons dilemma in our resource consumption model {\bf (Paper 2)}.

\item Some results illustrating the dependence of ``tragicness" on various societal parameters and guidelines on how to reduce it {\bf (Paper 2)}.

\item A demonstration of the relevance of the rational outcome for the commons game via application of the theory of learning in games {\bf (Paper 1)}. 

\item Various results describing the emergence of free-riding behavior and the role of different societal attributes in discouraging such behavior {\bf (Papers 1 \& 2)}.

\item A Lyapunov function based proof for the asymptotic global stability of the model, which may also help in determining stability for a class of related systems {\bf (Paper 1)}.

\end{itemize}

\section*{Publications}

Below is a list of publications that have resulted directly from the research described in this thesis. The chapters from the dissertation that include the material from each publication have also been given in the parenthesis.

\subsection*{Journal articles}

\begin{papers}
   \item (Chapters \ref{chap:open} \& \ref{chap:learning})\\
{\bf Talha Manzoor}, Elena Rovenskaya, Alexey Davydov, Abubakr Muhammad, ``Learning through Fictitious Play in a Game-theoretic Model of Natural Resource Consumption", {\it IEEE Control Systems Letters}, vol. 2, pp. 163-168, Jan 2018.
\end{papers}

\begin{papers}[resume]
   \item (Chapters \ref{chap:model}, \ref{chap:open} \& \ref{chap:game})\\
{\bf Talha Manzoor}, Elena Rovenskaya, Abubakr Muhammad, ``Game-theoretic insights into the role of environmentalism and social-ecological relevance: A cognitive model of resource consumption", {\it Ecological Modelling}, vol. 340, pp. 74-85, Nov 2016.
\end{papers}

\subsection*{Book chapters}

\begin{papers}[resume]
     \item (Chapters \ref{chap:optimal} \& \ref{chap:learning})\\ 
Sergey Aseev, {\bf Talha Manzoor}, ``Optimal Exploitation of Renewable Resources: Lessons in Sustainability from an Optimal Growth Model of Natural Resource Consumption", {\it Lecture Notes in Economics and Mathematical Systems}, Springer (in press). 
\end{papers}

\subsection*{Conference proceedings}

\begin{papers}[resume]
    \item (Chapter \ref{chap:lump})\\
{\bf Talha Manzoor}, Elena Rovenskaya, Abubakr Muhammad, ``Structural Effects and Aggregation in a Social-Network Model of Natural Resource Consumption", {\it 20th World Congress of the International Federation of Automatic Control (IFAC)}, Toulouse, France, Jul 2017.
    \item (Chapter \ref{chap:optimal})\\
{\bf Talha Manzoor}, Sergey Aseev, Elena Rovenskaya, Abubakr Muhammad,  ``Optimal Control for Sustainable Consumption of Natural Resources", {\it 19th World Congress of the International Federation of Automatic Control (IFAC)}, Cape Town, South Africa, Aug 2014. 
 \end{papers}

\subsection*{Working papers}

\begin{papers}[resume]
    \item (Chapters \ref{chap:optimal} \& \ref{chap:learning})\\
Sergey Aseev, {\bf Talha Manzoor}, ``Optimal Growth, Renewable Resources and Sustainability", {\it IIASA Working Paper} WP-16-017, IIASA, Laxenburg, Austria, Dec 2016.
\end{papers}

\chapter*{Acknowledgments}
\addcontentsline{toc}{chapter}{Acknowledgments}

During the course of my academic career, I have been blessed with many great teachers. The work presented herein, is a result of their guidance, perseverance and patience towards me, and my many flaws as a student. Foremost on this long list is my advisor Abubakr Muhammad who took me under his wing when I was a young graduate and provided the environment that was necessary for me to conduct my research. I am thankful to him for allowing me flexibility in my work, and insulating me from many, many hindrances (both financial and logistical) that could otherwise have proven detrimental to my focus as a graduate student. I view this dissertation purely as the outcome of his own intellectual capability and his capacity as an outstanding advisor.

Next I wish to thank my co-advisor Elena Rovenskaya whose supervision has been instrumental especially during the foundational stage of my doctoral research. I stumbled into the 2013 YSSP program under her supervision at IIASA with only a minute idea of what I wanted to do in my PhD. She contributed in establishing a direction for the research and also helped me in navigating across a multitude of academic disciplines that were well beyond my comfort zone.

I am also grateful to Sergey Aseev who led the research on optimal growth that eventually formed a vital part of this dissertation. This would not have been possible without his extensive technical expertise and also his patience with my lack of experience in the area. I also want to thank Alexey Davydov for providing the proof on stability of our model that we had struggled with unsuccessfully for months on our own. I am thankful to Momin Uppal and Naveed Ul Hassan for their guidance as part of my PhD committee and informal advice at numerous occasions. Also to Tariq Samad, Magnus Egerstedt and Andries Richter for sparing time from their busy schedules to review an initial draft of the dissertation and also for providing their constructive feedback and criticism.

The Electrical Engineering Department at LUMS provided the infrastructure and support that was necessary to facilitate me in my work. I am thankful to Ijaz Naqvi, the Chair of the Graduate Program Committee who supported me in all the logistics and paperwork especially during my final phase as a graduate student; to Tariq Jadoon who provided all necessary support (both within and beyond his job description) as Chair of the Department; to Soban Hameed and his tedious hours of administrative assistance. I also want to give a special mention to the LUMS library and its staff that has helped me greatly during my studies and still continues to surprise me with the diversity and range of its excellent collection.

During my stay at LUMS, I have enjoyed the company of many great colleagues and strong friends. I wish to thank all members of the CYPHYNETS lab for making this time memorable: Syed Muhammad Abbas, Mudassir Khan, Zahoor Ahmad, Atif Adnan, Hasan Arshad, Bilal Talat, Zeeshan Shareef, Faiz Alam, Adnan Munawar, Zubair Ahmed, Sajjad Haider, Ansir Ilyas, Waqas Riaz, Bilal Haider, Hamza Jan, Ali Ahmed, Saad Hassan, Muhammad Abdullah and Allah Bakhsh (for those whose name I have failed to mention, please accept my humble apology).

As my advisor says, a PhD is a joint effort on part of the student and his family. In my case, I would say that the role of my family far exceeds that of my own. I first want to reach out to my father, Manzoor Hussain, who not only allowed me to pursue a career of my own choosing, but also supported me through all hardships and difficulties that I faced. I am rarely able to express how thankful I am to him for providing me a life full of comfort and void of any liabilities. His support has enabled me to maintain focus on my work and to enjoy my personal life to the fullest, a luxury seldom available to a doctoral candidate supporting a wife and child along with his research.

I wish to mention my late grandfather, Chaudhry Rehmat Ali who I am sure would have been extremely proud at this moment. Also my grandmother, Ramzan Bibi whose constant prayers and relentless concern have motivated me greatly throughout my studies. 

I want to thank my sister, Bazla Manzoor, for supplying a lifetime of joy and memories as a wonderful sibling. Even as I am here in Lahore, she takes care of our mother to the best of her ability as a married woman with a promising professional career. This has been a source of great content and consolation for me during the recent years of my studies.

For my mother, Naheed Manzoor, no measure of text would be sufficient to contain the ocean of praise that she deserves from me. Even so, I am at a total loss of words to express my gratitude towards her, for nurturing a young soul into the grown man typing in these words right now. To her I can only say

\begin{center}
\emph{Thank you mother, for everything}.
\end{center}

In the end I wish to acknowledge my wife, Mehwish Pervaiz for providing me the comfort of a clean house, ironed clothes, warm meals, and most importantly, a loving family. Our daughter was born while my PhD was undergoing an extremely critical phase. I could not be more thankful to Mehwish for taking care of her all through the tedious household responsibilities and for her moral and emotional support during the difficult times of my studies. To happily endure the long hours I had to work in the lab, weekends I could not spend at home and the uncertainty of my finances. I am truly grateful to her, and hope I can make it worth her while, as we move ahead to the next chapter of our lives together.\\

\vspace{40pt}
\hfill\begin{minipage}[r]{0.45\linewidth}
\flushright Talha Manzoor,\\
LUMS, 2017.
\end{minipage}

\pdfbookmark{\contentsname}{Contents}
\tableofcontents
\newpage
\listoffigures
\newpage
\listoftables
\newpage
\chapter*{Glossary}

\begin{tabular}{r p{10cm}}
{\bf LTG} & Limits To Growth: An influential text authored by Donella Meadows et. al. in 1972. LTG models investigate the effects of exponential growth on the finite resources of the Earth\\
{\bf TOC} & Tragedy of the Commons\\
{\bf CPR} & Common Pool Resource\\
{\bf NRM} & Natural Resource Management\\
{\bf SA} & Systems Analysis\\
{\bf CPS} & Cyber Physical Systems\\
{\bf CPSS} & Cyber Physical Social Systems\\
{\bf SES} & Social Ecological System\\
{\bf HITL} & Human In The Loop\\
{\bf IoT} & Internet of Things\\
{\bf IoE} & Internet of Everything\\
{\bf NSF} & {National Science Foundation: A United States Government Agency that supports Research and Education in the Non-Medical Sciences and Engineering}\\
{\bf CHANS} & Coupled Human And Natural System\\
{\bf SER} & Social-Ecological Relevance\\
{\bf OCP} & Optimal Control Problem\\
{\bf BR} & Best Response\\
{\bf NE} & Nash Equilibrium\\
\end{tabular}
\afterpage{\null\newpage}

\setcounter{page}{1}
\pagenumbering{arabic}

\part{Foundations and Background}

\chapter{Lessons in Natural Resource Governance}

\label{chap:res_gov}

This chapter marks the beginning of the first part of this dissertation. The three chapters included in this part are qualitative in nature and are intended to provide context to the technical work presented afterwards. The first chapter, Chapter \ref{chap:res_gov}, highlights the resource governance problem in general. The second chapter, Chapter \ref{chap:sa}, discusses the difficulties associated with the analysis and control of resource based systems and showcases the relevant theoretical tools developed, and research conducted by scientists working in the area. The third chapter, Chapter \ref{chap:cyber} focuses on the role of technology in resource based systems and how engineers may employ cybernetic methods to analyze these systems in an effective and integrative manner.

\section{Respecting the Limits of our Habitat}

In his thought-provoking book ``A Green History of the World" \cite{ponting1991green}, Clive Ponting presents a narrative on how the uncontrolled exploitation and eventual depletion of natural resources has played a pivotal role in the collapse of many leading civilizations of the world. It is not so difficult to comprehend the immense importance of preserving the habitat on which the human population is increasingly reliant. The finiteness of the planet's resources dictates that in order for the human race to satisfy the demands of its increasing population, it must devise a way of life that does not eradicate the resource base which is so essential for its survival. As Ponting most aptly recounts, there exist numerous lessons for us throughout our history in the form of societies that collapsed as a result of their direct or indirect abuse of the environment. Perhaps the most appropriate example to discuss here is that of the Polynesian society of Easter Island. 

Easter Island is located in one of the most remote regions on Earth. It is situated in the south Pacific Ocean over 3,500 km off the west coast of Chile in South America. Populated by a few dozen Polynesians in the 5th century A.D., the society gradually flourished untill the population peaked to a few thousand in the 16th century. Their diet consisted mainly of chicken and sweet potatoes, ensuring the provision of which, did not consume much of their time. This left a majority of the islanders' time open for social customs and rituals, avidly directed by the individual chiefs of the numerous clans that had taken shape over time. The center of all rituals and worship were the \emph{ahu}, stone platforms on which large statues (the \emph{Moai}) were placed as a symbol of authority and power. These Moai were extremely large in size and required an enormous amount of labour and also substance in the form of trees in order to transport them from the quarry to the ahu. As the island population grew in magnitude, the clans started competing for the felling of trees to transport the ever-growing Moais which led to the eventual depletion of the forests and degradation of their ecosystem. By the time that the European explorers first visited the island in the 18th century, the Easter Islanders had been reduced to a meager population engaged in perpetual warfare over the dwindling resources that remained, reduced to a mere shadow of the culturally complex and prosperous society they once used to be. Ponting compares the case of the Polynesians on Eater Island to that of all mankind on planet Earth and points out striking similarities albeit on different scales. Like Easter Island, the Earth is also isolated with nowhere the humans can escape to in the event that the environment is degraded to the point of inhabitability. The question quite rightly put is: have humans ``\emph{been any more successful than the islanders in finding a way of life that does not fatally deplete the resources that are available to them and irreversibly damage their life support system}?" 

As human beings, it is easy for us to realize the absolute importance of maintaining a balanced relationship with the surrounding environment in small isolated territories. For instance, Easter Island comprises a small area (roughly 160 sq. km.) which can be traversed in a single day. Moreover, due to the depleting timber and the state of their technology, the inhabitants were effectively isolated from other lands and so had nowhere to go in the event of a crisis. The causality between the deteriorating resource base and the devastating events that transpired must have been painstakingly apparent to the existing population. However, the same cannot be said when we think of our planet as a whole. In his famous essay \cite{boulding1996}, Kenneth Boulding discusses the need for a change in how we perceive our biosphere. The primitive society viewed itself as living in a limitless world i.e. a world with unbounded resources spread out over an infinite plane. Whenever primordial humans were faced with deterioration of the natural habitat or their social structure, they always had the luxury of migrating to areas with better prospects. Thus, there always existed a ``frontier", which separated the known limits of human domain from the promising territory which was yet to be explored. However as mankind would eventually discover, the Earth is not a limitless plane, but in fact a closed sphere with limited resources whose boundaries we have now begun to increasingly stress. Hence there is a need to abandon the ``\emph{cowboy}" view of the economy which symbolizes the limitless plains and the sense of adventure and exploration that always comes with the existence of a frontier. The inability of mankind to incorporate the finiteness of nature's resources in its economic activities, and the inexorable requirement to satisfy the needs of a growing world population  has led to an extraordinary strain being put on the Earth's ecosystem.

Boulding notes that abandoning the image of a limitless Earth in our economic principles is not an easy task as this image has been inscribed in our minds for centuries. Nevertheless we must now adopt the idea of a ``spaceship" economy, whereby realizing that the earth is like a spaceship, limited in its resources and isolated from its surroundings. There is no other planet we can escape to \footnote{One solution that immediately comes to mind is undertaking space and sea explorations to expand the frontiers of the available resources. But even if we extend to the immediate solar system or find new resources under the sea, it pushes these frontiers only for a finite period of time. It is important to realize that our unbounded desire to growth may eventually overtake even apparently unbounded resources. So the basic dilemma is physical and always holds.} and this is becoming increasingly clear as we are finally beginning to bear the brunt of our inconsiderate economic activities in the form of phenomena such as environmental pollution and global warming. As opposed to the cowboy economy, the magnitude of consumption and production can no longer be held as a metric for success, but in fact, it is ``\emph{the nature, extent, quality, and complexity of the total capital stock}" that really matters. 

In light of the debate on the potentially lethal relationship between mankind's economic activities and the Earth's environment, one wonders what would happen if human beings continue the pursuit of economic growth in the current fashion. Jay Forrester was perhaps the first to address this issue in a mathematically rigorous framework. His book ``World Dynamics" \cite{forrester1971world} presents a dynamical model of certain global variables that represent the state of the world economy. These include world population, pollution, quantity of non-renewable resources, world capital investment and fraction of investment in the agriculture sector. His model known as World2 marks the beginning of the field of global modeling. Perhaps the most influential work that resulted as a direct consequence of his efforts is the book ``Limits To Growth" (LTG) \cite{meadows1972limits} with Donella Meadows as the lead author. 

Using Forrester's system dynamics technique and building on his work, the authors of LTG construct a detailed computer model of the world economy called World3. The purpose of the model is to demonstrate the effect of limitless exponential growth in a world whose resources are finite. The relationship between the variables are assumed to be highly non-linear with parameters regressed from statistical yearbooks. The model also includes various feedback effects, for example, population growth is positively related to food production. However population growth is positively related to industrial output, which is positively related to pollution, which in turn, is negatively related to food production and hence population growth. 


Meadows et. al. ran the model under 12 different scenarios and found that at the time (in 1972) the world economy was operating inside the planetary boundaries still with room to grow and an opportunity for mankind to contemplate long-term corrective measures. However the study predicted a sudden collapse of the world population, a massive die-out somewhere in the 21st century, if the world economy continued to grow at a positive rate. Thus the author's proposed an operating point of zero growth as the only way to avoid the massive die-out. The original book received a few updates with the latest one \cite{meadows2004synopsis} arguing that humanity has already passed the planetary boundaries bringing it in a state of serious overshoot and much must be changed in order to minimize the impact of the overshoot. Various studies also observe that the historical data of events after 1972 support the key features of the predictions made by LTG \cite{turner2008comparison, turner2012cusp}. Although there has been much criticism on the methods of the LTG models and interpretation of the results (for instance, see \cite[Chapter 7]{bardi2011limits}), one thing is painstakingly clear: there is a dire need for change if we hope to live in harmony with the Earth's natural system which is at the foundation of the economical system and already showing signs of fragility. 

\section{Inability to Act: The Tragedy of the Commons}

After going through the history of Easter Island, one very intriguing thought strikes the mind. As mentioned earlier, effects of the rapid deforestation and the impending doom it would bring, would have been clearly visible to the people of the island at that time. One might ask what stopped the individual clans from abstaining from deforestation and altering their lifestyle so as to maintain the intricate balance between mother nature and human society. After all, if the environmental deterioration stopped it would provide a chance for the Islanders to rebuild and prosper once more, a situation that would unquestionably be beneficial for all clans. Yet as we know this did not happen. The history of human civilization is filled with accounts of societies that collapsed due to direct or indirect environmental abuse. For instance, the collapse of the Sumerians in Mesopotamia, the fall of the Mohenjo Daro and Harappa civilizations of the Indus, Baghdad and the Mediterranean, deforestation in China and medieval Ethiopia, and the decline of the Mayans in Mesoamerica were all linked to uncontrolled exploitation of the environment in one form or the other.  It is not unreasonable to assume that the effects of human activity on its surroundings and the eventual backlash would have been apparent to the individuals of each society in question at the time that the activities were taking place. Yet somehow, the detrimental activities continued till the ultimate disintegration of the societies. What prevented those communities from acting collectively in their own mutual interest? The answer lies in a concept that has gained the attention of social scientists for the past many decades. 

Garrett Hardin introduced the phrase ``Tragedy of the Commons" for the first time in his famous article \cite{hardin1968tragedy} in 1968. The concept is presented by envisioning a pasture or grazing commons, filled with grass. The local herdsman bring their cattle to the commons to graze upon each day. The commons is open to everyone and so their is no restriction as to how many cattle one can bring to graze upon there. The addition of a single cow on the commons will present a benefit to the particular herdsman that cow belongs to. However this addition also holds a cost in the form of additional overgrazing as a result of the increase in animals. While the benefit of an additional cow is for a single herdsman to enjoy all by himself, the cost is shared by all herdsmen which means that the cost of an additional animal (for the individual herdsman) is only a fraction of the benefit. Thus the only rational action for the herdsman is to expand his herd. This course of action is followed by all herdsmen and eventually the number of cattle exceeds the capacity of the commons thus ruining it for everyone. Rational action on behalf of the herdsmen results in collective destruction and herein lies the tragedy. What aggravates the tragedy even more is the fact that the herdsmen continue to act this way even with full knowledge of the consequence of their actions (hence a tragedy and not a shock or disaster). The commons dilemma is a specific instance of a social dilemma where individual rationality is in contrast to the collective benefit of all.

While the concept of the Tragedy of the Commons (TOC) has been defined using the story of the commons, it has many diverse manifestations in real life and is certainly not restricted to the grazing pastures only. One example that Hardin discusses in his article as well is that of overpopulation. An individual couple may embrace parenthood due to the psychological satisfaction and other benefits that come with offspring, however each additional human being on the planet increases the strain being put on its resources. If the increase in population is not stopped soon enough, it may grow beyond what the Earth is able to support. Here again, the benefits of offspring are enjoyed by the parents only, while the cost (however small) is incurred by the remaining world population. Another example is traffic congestion. Here the common resource is public roads. An individual may quite understandably prefer to commute via public roads in order to save time. However, if everyone follows this reasoning it would result in congestion of traffic on the public roads, eventually causing everyone to be late. Another illustration of the Tragedy occurs in developing communities with shortage of electric supply. In such settings, households often install an Uninterruptible Power Supply (UPS) system which stores backup electricity for use during the power outages. While this may ensure that the household gets electricity even during the outage, it puts additional strain on the supply grid due to its inefficiency (the power returned by a UPS during an outage is less than the power consumed in order to charge it). This means that if every household installs a UPS, it would result in even more outages than before, thus the tragedy. Other real life examples of the commons dilemma include, but are not limited to, pollution, overfishing (the bluefin tuna is perhaps the most eyeopening manifestation of the TOC), deforestation, depleting groundwater aquifers (a prevalent problem in Pakistan and the South-Asian region \cite{shah2003sustaining}), animal poaching and sending spam emails. 

The commons serves as a symbol for all resources that are openly accessible to the general consumer (more on this in Section \ref{subsec:res_typ}) and Hardin argues that in the absence of external regulation, rational consumers will eventually and most certainly lead to destruction of the resource. The Tragedy of the commons is one of three metaphorical models which have until recently been most influential for policy prescription in resource governance problems. We discuss the others in the following section.

\subsection{Other Models of Inevitable Doom}

The tragedy of the commons has been popularly formalized as a Prisoner's dilemma game \cite{dawes1974formal}. The motivation behind this game is the following story. Consider two robbers who are known to have robbed a bank together. However the police does not have enough evidence to convict any of the robbers without a confession from at least one of them. Upon capturing the robbers the police offers each of them the following two choices (in isolation), either they confess to the crime (the defect strategy) or not (the cooperate strategy). If that prisoner confesses to the crime and the other does not, then he is freed while the other receives 10 years in jail. If both prisoners confess then both of them receive 5 years in jail. If neither of them confesses then each will receive a single year in jail on minor charges. The game is presented in matrix form in Figure \ref{fig:pris_dil} with the Nash equilibrium shaded in gray (see Appendix \ref{app:games} for a basic introduction to games).
\begin{figure}[!htb]
\begin{center}
	\centering
	\begin{tabular}[t]{ >{\centering}m{10pt} | >{\centering}m{30pt} | >{\centering}m{30pt} | m{0pt}}
		\multicolumn{1}{c}{ }& \multicolumn{1}{c}{ C} & \multicolumn{1}{c}{D}\\
		\cline{2-3}
		C & -1,-1 & -10,0 &\\[30pt]
		\cline{2-3}
		D & 0,-10 & \cellcolor{gray}-5,-5 &\\[30pt]
		\cline{2-3}
	\end{tabular}\\[10pt]
\end{center}
\caption{The prisoner's dilemma game}
\label{fig:pris_dil}
\end{figure}

From the structure of the game, we see that each player has a dominating strategy of defecting. However if each player plays this strategy then the resulting outcome will be pareto sub-optimal i.e. there exists some other strategy (in this case: both cooperating) in which at least one player is better-off without the other being worse-off. The definition of the more general N-person dilemma which exhibits the same structure can be found in \cite{dawes1980social}. It is easy to see how this game also serves as an illustration of the Tragedy of the Commons. Assume that the players are now two herdsmen deciding on how many cattle to keep in their respective herds. If the critical level of total cattle (after which the commons is destroyed) is $L$ then let the cooperate strategy for each herdsman be to restrict his herd to under $L/2$, and the defect strategy be to add as many cattle to his herd as he pleases. The resulting game also has the same structure as shown in Figure \ref{fig:pris_dil}. 

The prisoner's dilemma is fascinating because it contradicts the inherent belief of humans that actions that are individually rational also lead to collectively rational outcomes. In his popular book ``The Logic of Collective Action"  \cite{olson2009logic},  Mancur Olson challenges the (up till then widely accepted)  basic premise of group theory, that individuals who aim to fulfill their personal benefits will rationally strive to fulfill the collective benefit of the group as well. Olson asserts that without any external coercion, large groups of individuals will not voluntarily act in the common interest of the group. In order for them to collectively achieve the group's most beneficial outcome, they would have to exhibit at least some altruistic tendency which is more of an exception than the norm, especially when there are economic principles involved. Groups and organizations tend to exist solely to serve the common goals or benefits of the group itself. If they existed to serve the interests of the individuals, it would be pointless, since independent, unorganized action on behalf of the individuals themselves would be most effective in achieving that interest. For instance, the Labor Union exists to the serve the common interest of higher wages for the workers, but each worker also has the individual interest in her own personal income which not only depends on the wages but also on the number of hours she works everyday. 

So what circumstances lead to a group of rational individuals directing their actions towards achieving their collective interest? Olson argues that economic incentives alone are not enough, there must be some social incentive as well. For instance, individuals might serve the group in order to win ``prestige" within it. Such social dynamics are more significant in groups of smaller sizes. This implies that one of the most important factors in determining whether or not individual action will result in the fulfillment of the group's mutual benefit is the size of the group. Thus smaller groups are more efficient and viable than larger ones, which is one of the major conclusions of Olson's work.

\subsection{Open-access: The Nature of ``Tragic" Resources}

\label{subsec:res_typ}

Here we take some time to discuss the nature of the resources prone to the Commons Dilemma. As we shall soon see, natural resources that are vulnerable to tragedy-like situations have particular characteristics that are not exhibited by every resource. For example, their is no TOC in the direct consumption of solar energy which will be available in the same quantity tomorrow, irrespective of how much it is consumed today. We elaborate more formally as follows.

Natural resources can be classified as either renewable or non-renewable/exhaustive resources. While renewable resources have the capacity to regenerate over time, non-renewable resources either do not possess a regenerative capacity or if so, the timescale of the regeneration is of geological magnitude (e.g. fossil fuels). While non-renewable resources are bound to be exhausted at some future point in time (unless of course consumption is stopped altogether), renewable resources can be consumed indefinitely while maintaining a positive stock quantity, provided the consumption is below a certain critical level. This possibility is what makes the study of renewable resources so much more interesting especially in the context of sustainability.

Natural resources are also fundamentally classified as either public or private. This classification \cite{perman2003natural} is based on the categorization of goods as rivalrous/non-rivalrous and excludable/non-excludable.  An excludable good is one in which individuals who do not pay for the good can be prevented from consuming it. All private goods such as cellphones, personal vehicles and clothing, etc, are excludable. Other resources such as public parks, cinemas and zoos are also excludable.  Goods such as fish in the open sea, solar energy, national television and radio are goods that are non-excludable. It is not possible to prevent someone from consuming a non-excludable resource, it is open for consumption to anyone who can do so. A rivalrous good is one in which consumption of the good by one individual affects the amount available to other individuals. The earlier example of fish in the open sea is a rivalrous resource. Solar energy is not. The described classification is tabularized in Table \ref{tab:goods}. The category of major interest here is \textbf{open-access} or \textbf{common-pool resources (CPR)}. Examples of such resources include fisheries, forests, bio-fuels, geo-thermal energy, minerals, oil pools and many more. Common pool resources may also be either renewable or non-renewable (see Figure \ref{fig:res_typology}). It is the open-access, rivalrous nature of these resources that make them amenable to phenomena such as the Tragedy of the Commons.

Let us revisit the example of the grazing commons. As the commons is not owned privately, any new herdsman may approach the commons so that his cattle may graze upon it. There is also no possibility of removing any existing cattle already on the commons. These are all characteristics of a non-excludable resource. Furthermore, despite the grass being a renewable entity, the total amount at any given instant  is finite and so, the grass consumed by some cattle is not available for consumption to the others, i.e., the resource is rivalrous. Thus according to Table \ref{tab:goods}, the grazing commons is an open-access or a common pool resource. It is this attribute of the commons that makes the tragedy possible.

\begin{table}[t!]
\centering
	\begin{tabular}[t]{ >{\flushleft}m{80pt} | >{\centering}m{150pt} | >{\centering}m{150pt} | m{0pt}}
	\multicolumn{1}{c}{ }& \multicolumn{1}{c}{Excludable} & \multicolumn{1}{c}{Non-excludable}\\[10pt]
	\cline{2-3}
	Rivalrous & \begin{minipage}{\linewidth}\centering \textbf{Purely private good} \\ cellphones\end{minipage} & \begin{minipage}{\linewidth}\centering \textbf{Common pool resource (CPR)} \\ Fisheries\end{minipage} &\\ [25pt]
	\cline{2-3}
	Non-rivalrous & \begin{minipage}{\linewidth}\centering \textbf{Congestible resource} \\ public parks\end{minipage} & \begin{minipage}{\linewidth}\centering \textbf{Purely public good} \\ national television\end{minipage} &\\ [25pt]
	\cline{2-3}
	\end{tabular}\\[10pt]
\caption{public and private goods}
\label{tab:goods}
\end{table}
\begin{figure}[b!]
	\captionsetup{font=normal,width=0.8\textwidth}
	\begin{center}
		\includegraphics[width=0.8\linewidth]{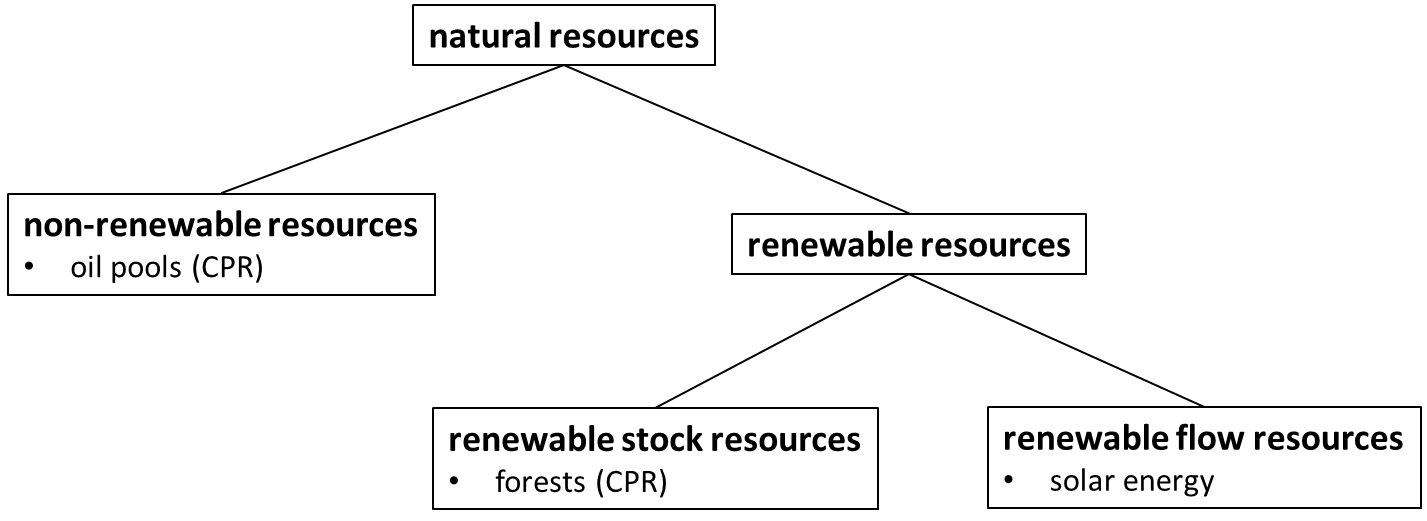}
	\end{center}
	\caption{A typology of natural resources.} 
	\label{fig:res_typology} 
\end{figure}

Another classification of interest is shown in Figure \ref{fig:res_typology}, where renewable resources are classified as either renewable flow resources or renewable stock resources. Renewable stock resources are resources for which the amount consumed in the present, affects the amount available for consumption in the future. Examples include fisheries, water resources, forests, arable and grazing land, etc. Renewable flow resources on the other hand are ones for which the amount available for consumption in the future is not affected by the amount consumed in the present. Examples include solar, wind and geo-thermal energy (although the sun itself, and the Earth that harbors the winds and geo-thermal reservoirs are physically depleting albeit over extremely large stellar and geological timescales).

It should be noted that for all practical purposes flow resources (in contrast to stock resources) are non-depletable. Thus the issues that arise in governing stock resources are not encountered in flow resources. The concept of a common-pool resource covers only stock resources (here a non-renewable resource is also considered a stock resource) and thus renewable flow resources are not generally categorized as CPRs.

\subsection{Privatization and State Regulation: The Only Solutions?}

The implication of Hardin's Tragedy of the Commons and the other metaphorical models of doom (the prisoner's dilemma and Olson's logic of collective action) imply that in the absence of external coercion by some central authority, consumers of a CPR are trapped in a downward spiral of resource depletion and ultimate ruin. In his original article, Hardin posited that the commons dilemma belonged to a class of ``no technical solution problems", where a technical solution is understood to mean ``\emph{one that requires a change only in the techniques of the natural sciences, demanding little or nothing in the way of change in human values or ideas of morality}". Up until recently, it was commonly conceived that the only way to ensure sustainable consumption of a commons was either privatization or state regulation, thus fundamentally altering the very open-access nature of the resource. Indeed these are the very solutions that Hardin advocated for the TOC. No community-based solution was thought to be feasible. However, as we shall soon see, Elinor Ostrom was yet to compile several real-world examples of communities, which had collectively devised their own institutions to use the commons in a renewable and sustainable manner. Not only this, but she would also enumerate several common characteristics of the institutions as guidelines for policy prescription, a work which would eventually win her the Nobel prize in economics.

\section{Cooperative Solutions to the Commons Dilemma}

In her book ``Governing the Commons" \cite{ostrom1990governing}, Elinor Ostrom presents her work on community-based governance of natural resources in open-access settings. She begins by accepting that the three models namely, The Tragedy of the Commons, The Prisoner's Dilemma and The Logic of Collective Action all serve very nicely as illustrations of situations in which societies are not able to fulfill their mutual interests. The underlying aspect of such situations that these models highlight is the so called tendency of \emph{free-riding}. Such a phenomenon occurs when individuals of a group cannot be excluded from enjoying the group's mutual benefit, regardless of how much effort he or she exerted towards fulfilling that benefit. Thus fishers are tempted to individually harvest large stocks of fish while the others abstain to maintain the stock at a renewable level and factories spill their waste in untreated form leaving it to others to worry about spending money on waste disposal systems. And so these models are able to explain (at least up to some convincing extent) why individually rational actions may lead to collectively irrational outcomes.

\begin{table}[b!]
\centering
\begin{tabular}{| >{\small}m{0.47\linewidth} | >{\small}m{0.47\linewidth} | m{0pt}}
\cline{1-2}
\centering {\cellcolor[HTML]{9B9B9B}{\color[HTML]{FFFFFF} \bf \large Design Principles}} & \centering {\cellcolor[HTML]{9B9B9B}{\color[HTML]{FFFFFF} \bf \large Examples}} & \\[0.5 cm]  \cline{1-2}
\flushleft \begin{minipage}{\linewidth} \centering \textbf{Clearly Defined Boundaries} \\ The individuals who have the right to harvest the CPR and the boundaries of the CPR itself are well defined \vspace{10 pt}\end{minipage}   & Fishing licenses, Demarcation of grazing pastures &\\  \cline{1-2}
\flushleft \begin{minipage}{\linewidth} \centering \textbf{Proportional Equivalence Between Benefits and Costs} \\ Individuals who benefit more from the CPR must incur more of the operational cost \vspace{10 pt}\end{minipage}   & Large benefactors providing more labor for communal activities, Imposition of taxes proportional to consumption &\\  \cline{1-2}
\flushleft \begin{minipage}{\linewidth} \centering \textbf{Collective-choice arrangements} \\ Individuals effected by operational rules also have influence in changing those rules \vspace{10 pt}\end{minipage}   & Local elections, Referendums, Rotational rosters for irrigation&\\  \cline{1-2}
\flushleft \begin{minipage}{\linewidth} \centering \textbf{Monitoring} \\ Monitors/auditors are accountable to the users \vspace{10 pt}\end{minipage}   & Auditors elected from within the community itself, Technological solutions such as smart meters &\\  \cline{1-2}
\flushleft \begin{minipage}{\linewidth} \centering \textbf{Graduated sanctions} \\ Violation of rules receive penalties proportional to the degree of violation \vspace{10 pt}\end{minipage}   & Extra labor in the case of violating a maintenance
contribution, Predefined fines or seasonal bans on resource use
stealing &\\  \cline{1-2}
\flushleft \begin{minipage}{\linewidth} \centering \textbf{Conflict-resolution mechanisms} \\ Users have access to low cost local arenas to resolve conflicts \vspace{10 pt}\end{minipage}   & Communal judicial systems such as Community Courts in the Western countries and the Panchayati Raj in South Asia &\\  \cline{1-2}
\flushleft \begin{minipage}{\linewidth} \centering \textbf{Minimal recognition of rights to organize} \\ Institutions devised by locals to regulate the CPR are recognized by external government authorities \vspace{10 pt}\end{minipage}   & Authoritative powers entrusted to the municipalities, Promotion of autonomous tribal area systems for resource governance &\\  \cline{1-2}
\flushleft \begin{minipage}{\linewidth} \centering \textbf{Nested enterprises} \\ Multiple tiers of nested organization \vspace{10 pt}\end{minipage}   &  Defining layers of organization, e.g., in irrigation systems, for users in the same canal branch, same headworks, same river basin, and so on    &\\  \cline{1-2}
\end{tabular}
\caption{Design principles illustrated by long-enduring CPR institutions (as identified by Ostrom \cite{ostrom1990governing})}
\label{tab:ostrom_principles}
\end{table}
Ostrom continues by asserting that while the aforementioned models of inevitable doom are extremely interesting and powerful they are also extremely dangerous. What makes them dangerous is the manner in which they are used in policy prescription. Instead of being used as a guiding tool for the policy-makers the conclusions of the models are taken as a starting point, a dogma to say the least, and it is accepted without question that the community itself is unable to sustain its resources without interference of the governing body. It is not taken into account that the hard constraints that are in place in the assumed setting of the models are not in fact hard constraints in empirical settings. For instance, the prisoners in the prisoners dilemma might be able to negotiate beforehand, reaching a binding agreement to always play the cooperate strategy.

The metaphorical use of these models in Ostrom's point of view eventually leads to prescribing solutions that are not appropriate in practice. Consider for instance the commonly prescribed solutions to the CPR problem: privatization and state regulation. While hectares of land are possible to privatize, how would one effectively do this in the case of fisheries? Ostrom posits that for effective governance, the resources must belong to institutions that are collectively owned by the consumers themselves. Such institutions are more likely to succeed in governance as compared to institutions that are owned by the state or private enterprises. This is because the users (as stakeholders) themselves have more incentives to make the arrangement successful. Furthermore they are less likely to make informational errors e.g. under or over-estimating the carrying capacity of the resource and the wrong identification of cooperators and defaulters, which may cause the institution to fail in its objective. An example of such policy making is the nationalization of forests in Third World countries, which has been advocated on the grounds that the local villagers are unable to manage their forests effectively. In order to regulate the forests, the governments usually deploy officials with salaries that are low enough to propel them to accept bribes (this is of course not intentional but an implication of the state of the national economies). This has in many cases resulted in the forests being effectively transformed to an open-access resource which had previously been limited-access.

The highlight of Ostrom's work is the enumeration of several communities from diverse geographical and demographical regions of the world that have devised solutions to sustain their commons which neither fall in to the category of state-ownership nor privatization. In all the studied cases, the local communities were able to avoid tragedies without any external coercion by coming up with regulating institutions of their own. She then presents the common characteristics of these institutions as guidelines for governance of CPRs. These characteristics have been reproduced in Table \ref{tab:ostrom_principles}. An adaptation of the design principles to irrigation systems is given by Ostrom in \cite{ostrom1993design} and have been upheld by successful irrigation systems in multiple parts of the world \cite{lam2010analyzing, sarker2001design}. The principles have also been studied in the context of other resources such as forests \cite{gibson2000people}, fisheries \cite{schlager1999property} and rangelands \cite{mwangi2009top} among various others.

Ostrom's work contains an important lesson: successful management of open-access resources is possible through local institutions without the coercion of a central authority. Knowing that feasible solutions to a problem exist serves as a stimulus for the quest of those solutions. The case studies of real-world communities living in harmony with their respective resources provide exactly the inspiration needed as we contemplate a scientific enquiry of the NRM problem in the subsequent chapters of this dissertation.


\chapter{Eyeing CPRs through the Systems Lens}

\label{chap:sa}




Common pool resources, like all natural resources, are vital components of the complex ecosystem that sustains the development of human society. Resources are used in the provision of food \& drink, housing \& infrastructure, mobility and so many other necessities for our survival. As we saw in Chapter \ref{chap:res_gov} they can be a cause to initiate war and even spur the collapse of civilizations. Resources effect the way we live, move and even think. On the other hand, humans effect the resource through consumption and conservation acts. Hence common pool resources, like all natural resources, are intricately tied to the human society. Neither can be studied in isolation without considering the effects of the other. CPRs and society thus constitute what is known in scientific terms as a \emph{system}. In this chapter, after defining the construct of a system, we study the method of inquiry commonly employed to examine complex systems. This method of inquiry is known as \emph{Systems Analysis} (or SA for short).

\section{The Concept of a System}

With the expansion of science and human knowledge, we are being confronted with systems that are increasingly complex to understand and eventually manage and control. Methods for scientific inquiry have evolved over the years with the goal of understanding such systems. Since we will henceforth be employing the concept of a system extensively, it is beneficial to reflect on what we actually mean by the term.

A system can be thought of as a whole that consists of more than one component (these components may also be systems in their own right). According to Ackoff \cite{johnson1997mechanistic}, each component of a system must have the following characteristics
\begin{enumerate}
	\item Each component is capable of influencing the behavior of the whole system.
	\item The way each component influences the overall system is dependent on the state of the other components. In other words, the components are all interconnected.
	\item Any set of  components also has the characteristics (as mentioned in the above two points) of the individual components.
\end{enumerate}
The definition above serves to underpin a very critical property of a system. Namely, the system holds properties that none of the individual components have. In other words, a system is more than just a sum of its individual parts.

We now move on to analysis and synthesis, the two modes of inquiry for understanding a system, both of which form the foundation for Systems Analysis.

\section{Analysis and Synthesis: The Fundamental Ingredients of Scientific Inquiry}

Analysis stems from the Greek \emph{analusis} which means `` to loosen up". As a method of inquiry it dictates that in order to understand any system, we must decompose it into its parts and gain an understanding of the individual parts in isolation. For instance, in order to comprehend the working of a computer we decompose it into parts until we get a CPU, hard drives, CD ROM, external ports, data and memory buses etc. In order to understand the computer an understanding of all these parts in isolation is necessary. Thus in order to understand life, we must understand a cell, the fundamental unit of all living organisms; and to understand the working of the universe, we must first understand an atom, the fundamental unit of all matter. 

Analysis is hence a process by which a system is taken apart, and the understanding of the individual parts is then aggregated to gain an understanding of the whole. Consider the example of the computer again. We already discussed that in order to understand the computer, we must gain an understanding of the CPU, hard drives, CD ROM, external ports etc. However, if we take all these things individually and simply put them together in one place in front of us, we will surely not get a working computer. The ultimate purpose of the computer is to perform calculations (admittedly a very crude delineation of the computer's purpose, but nonetheless sufficient for the point of this discussion). But if we take the memory unit and pull it out from the computer, it will cease to serve its purpose.  Analysis prompts us to take the system apart in order to understand it. But as we have just seen, taking apart the system (in this case a computer) takes away its essential properties. For  it is not only the individual components that make the system, but it is also the interconnections between them that are equally (if not more) important in its formation. Synthesis enables us to understand the system with respect to these interconnections. 

Synthesis comes from the Greek \emph{sunthesis} which means ``to place together" and as the literal meaning indicates, it is the opposite of analysis. Let us look at synthesis more closely by comparing it with analysis. In analysis, we first disintegrate the system into its components, understand each component in isolation and then combine the isolated understanding to form an understanding of the whole. In synthesis, we first view the system as a part of the larger system which contains it, understand that system and then disintegrate the understanding of the larger system into that of the system under investigation by identifying its role in the larger system.

Let us return to the example of the computer. We already discussed that in order to understand the computer via analysis, we must gain an understanding of its various parts which include the data, memory and control buses. If we look at each bus in isolation, we would find them to be identical. However if we understand the larger system i.e. the system bus, of which all these are a part of, and see how it connects with the CPU, memory unit and input/output we will be able to identify and understand the individual buses through these interconnections. Here it is the role each individual bus plays as part of the larger system  (the system bus) that truly defines its purpose. 

It is important to note that synthesis is not ``better" than analysis for any common notion of the word. In fact analysis and synthesis are not even comparable, they are complementary to each other. Let us consider the example of a research institute. If we insist on employing analysis to understand the working of the institute we disintegrate it into its constituent programs. But in order to understand the programs we must use analysis again and so we break them into their constituent clusters. Each cluster then consists of individual staff members and to understand each staff member we would again have to disintegrate perhaps along the lines of that member's professional profile. Each profile would indicate association with a particular field of study and so on and so forth in a never-ending process of continuous disintegration. On the other hand, if we try to understand the institute on the basis of synthesis, we look at the larger system that institute belongs to. Perhaps it is part of a university. But the university is again a part of a larger system which is the educational system, which is in turn part of the governing body and we continue in this manner ad infinitum. Analysis or synthesis alone cannot furnish a complete understanding of a system.

We shall now return our focus to natural resource systems. In this dissertation, a resource system is understood to be composed of two components: human society (or the fraction of society consuming the resource) and the resource itself. Thus if we intend to dissect CPR systems for the purpose of analysis, we would obtain at the first step the resource and the society. The resource itself is a complex system. For example, forests consist of trees, wildlife and other vegetation. Trees in turn are composed of leaves, branches, roots, etc. We could delve even deeper till we reach the biological cell and even then, the process does not end here. The same holds for the consuming society, which is divided into small communities. The communities consist of households which in turn consist of individual human beings. Now the human being itself is such a complex entity that one cannot hope to even begin to understand it fully until the disintegrating process is stopped somewhere and the context of the study is brought into play to simplify matters. On the other side, to understand CPR systems by synthesis we search for the containing system. Conceivably, the CPR system is part of a larger economy belonging to a particular geographical location. That economy is a part of the global economy which in turn is a part of our planet's ecosystem, which is again part of a larger system, and so on. CPRs as systems comprise of human society and a natural resource which are both extremely complex systems in their own right. This is what makes the understanding of CPRs so challenging. 

So how does one then actually understand a system? The answer, as we shall see next, is a blend of analysis and synthesis.

\section{Systems Analysis: Tackling Complex Systems with an Integrative Perspective}

When we apply analysis to any system, what we obtain is a set of the individual components. As Ackoff \cite{johnson1997mechanistic} would say, it yields an understanding of the system's structure.  When we apply synthesis to a system we understand the role it plays in the larger system of which it is a part. It thus yields an understanding of the system's behavior. Ritchey \cite{ritchey1991analysis} distinguishes between two levels of abstraction of a system. One level views the system as a primary functioning unit, a ``black box" and what this presents is information on the system's behavior (thus the level synthesis is concerned with). Another level views the system as a set of components working together to achieve some common purpose, yielding the overall construction or structure of the system (thus the level analysis is concerned with). As we saw in the case of the research institute, it is only possible to understand the system at one of these levels. So then the question really is: what type of understanding do we want to gain about the system?

Perhaps one of the first applications of this intricate relationship between analysis and synthesis is included in Riemann's study ``The Mechanism of the Ear" \cite{riemannmechanism}. The study is concerned with how the sound sensation of the human ear works and is one of the assignments Riemann was engaged in just prior to his death. The departing point for the study is the late 19th century work of Hermann Helmholtz \cite{von2005sensations} who used the analytical method of scientific inquiry to develop a theory of the human ear, and subsequently applied it to the theory of musical composition. While Helmholtz's work contains many significant contributions and has gained the status of a classic in human physiology, it also contained many flaws. Going into specific details is beyond the scope of this dissertation however it is sufficient to mention that one of the biggest flaws concerned the role of the middle ear. Riemann begins his study from a synthetic point of view i.e. the purpose of the ear. He does not conduct his study with perspective of the anatomy of the human ear but with the perspective of what the ear actually accomplishes. Indeed this is the same Georg Friedrich Bernhard Riemann who is primarily known for his significant contributions in mathematical analysis, differential geometry and number theory. Despite the incomplete form of Riemann's manuscript, it includes important conclusions for the principles of human anatomy. While Riemann did not actually use the term ``system" he had unknowingly employed what would later come to be known as systems analysis.

Without the knowledge of the ear's anatomy that Helmholtz's work contained, Riemann would not have been able to complete his study. For in his words, ``every synthesis rests upon the results of a preceding analysis, and every analysis requires a subsequent synthesis". It is exactly this thought that is at the very core of the Systems approach to scientific inquiry.

Barton and Haslett \cite{barton2007analysis} describe the scientific method as a dialect between analysis and synthesis. The purpose of Systems Theory is to frame this dialect. Thus, as Ackoff posits \cite{johnson1997mechanistic}, Systems Theory is a  fusion of both analysis and synthesis. How much reduction to apply to the system via analysis before performing the final synthesis is a question of what we wish to gain from our understanding of the system and must be inferred from context of the inquiry. The early proponents of the systems approach asserted that every system has certain characteristics that are independent of the discipline that particular system is associated with and sought to develop a theory of those characteristics. Ludwig Von Bertalanffy in his famous paper \cite{von1950outline} laid the foundations for General Systems Theory which later on came to be known as Systems Science. Even with the diversity of systems that scientists encounter, ranging from living organisms to social systems to man made machines, there exist several universal properties that all systems possess. Via his General Systems Theory, Bertalanffy gave mathematical descriptions of some of these properties such as wholeness, growth, mechanization, centralization, sum and finality to name just a few. 

The term Systems Theory has grown to encompass several related terms such as Systems Thinking, Systems Science, Systems Analysis, Systems Dynamics, Systems Engineering and so on (Skyttner \cite{skyttner2005general} gives an excellent review on the subject). We next give an overview of how the Systems perspective has been utilized by those engaged in natural resource systems. 

\section{Systems Analysis and Coupled Human Natural Systems}

\label{sec:chans_sa}

The immense significance of TOCs in the future of human development has been the primary focus of Chapter \ref{chap:res_gov}. At this point, we hope the reader has begun to appreciate that the nature of TOC problems are approachable only through an integrative, multidisciplinary approach (which is exactly what Systems Analysis has to offer). The problem of the Commons is one of the major global challenges that scientists working in the environmental and related sciences have confronted through systems analysis. These include issues such as climate change, population growth, environmental pollution and resource management in general \cite{marsh1965man, grant1997ecology, ison1997systems, watt2013systems}. Simon Levin \cite{Levin2015current} considers TOCs as one the three grand challenges of systems theory. The complete list is enumerated as follows: \emph{1)}: robustness and resilience to critical transitions, \emph{2)}: scaling from the microscopic to the macroscopic, and \emph{3)}: dealing with problems of the Commons. We now give a brief overview of the application of Systems Analysis to TOC problems as part of the more general problem of natural resource management.

Natural resource systems, as coupled systems between resources and society, represent a larger class of systems, namely, Coupled Human Natural Systems (or CHANS). CHANS are characterized by settings in which people interact with natural components. These components may be natural resources or other elements of the environment which include wildlife, vegetation, the atmosphere, landscapes, climate, etc. CHANS are characterized by complex feedbacks between social and ecological components, high non-linearities and significant time-lags. As such, understanding the system is of critical importance before any kind of intervention is undertaken. The law of unintended consequences often comes into play in such systems. For instance, in a bid to increase crop productivity, the ``Four-Pests" campaign of China \cite{shapiro2001mao} sought to eradicate the Eurasian Tree Sparrow (along with mosquitoes, flies and rats) which ate grain, seed and fruit. It was later realized that besides grain, the sparrows also ate many insects harmful to rice, and as a result, the rice yield actually decreased. Another example is the policy to eradicate wildfires in American forests \cite{calkin2015negative}. The policy did lead to fewer fires, but, also led to growth conditions such that when fires did occur, they were much larger and more damaging. As a result the total acreage of forests affected by fires increased. 

The lesson of the Four-Pest campaign and the American wildfire policy dictates a full understanding of the system before any intervention is done. For such systems, traditional ecological research often excludes human impacts while traditional sociological research often neglects ecological effects. However, as Liu et. al. \cite{liu2007complexity} assert, any effective study on CHANS must incorporate not only variables related to the social and ecological components, but also those related to their connections (such as resource consumption and provision of ecosystem services). The systems perspective is thus a natural way to approach CHANS and an increasing number of studies are now integrating the ecological and social sciences in order to study these systems. 

Due to the applicability and effectiveness of systems thinking for analyzing CHANS problems, numerous interdisciplinary frameworks have emerged to study such systems with an integrative perspective. Binder et. al. \cite{binder2013comparison} provide a comprehensive comparison of the different frameworks used to analyze coupled human-natural systems \footnote{In the cited publication, the authors refer to CHANS as social-ecological systems. Such interchangeable terminology is common in the literature.}. The authors classify ten different frameworks according to three criteria \emph{1)}: whether the framework conceptualizes the interactions between the social and ecological components as uni- or bi-directional, \emph{2)}: whether the ecological system is defined according to its utility to humans (anthropocentric) or its internal functioning, and \emph{3)}: whether the framework is intended to provide a common language for formulating and approaching research problems (analysis-oriented), or with the goal of acting upon or intervening in the system (action-oriented).

\begin{table}
\centering
\begin{tabularx}{\linewidth}{| >{\centering}m{0.3\linewidth} | >{\small \centering}m{0.21\linewidth} | >{\small \centering}m{0.20\linewidth} | >{\small \centering}m{0.21\linewidth} | m{0pt}}
\hhline{|*{4}{-}}
\centering {\cellcolor[HTML]{9B9B9B} { \color[rgb]{1 1 1} \normalsize \bf Framework}} & \centering{\cellcolor[HTML]{9B9B9B} \color[rgb]{1 1 1} \normalsize \bf Interactions} &  \hspace{0pt} \centering { \cellcolor[HTML]{9B9B9B} \color[rgb]{1 1 1} \normalsize \bf Conceptualization} &  \centering {\cellcolor[HTML]{9B9B9B}} \color[rgb]{1 1 1} \normalsize \bf Orientation & \\[1cm] \hhline{|*{4}{-}} 
\flushleft \begin{minipage}{\linewidth} \textbf{DPSIR} \\ Driver, Pressure, State, Impact, Resource \cite{carr2007applying} \vspace{10 pt}\end{minipage}   & Social $\rightarrow$ Ecological &Anthropocentric&Action Oriented&\\  \hhline{|*{4}{-}}
\flushleft \begin{minipage}{\linewidth} \textbf{ESA} \\ Earth Systems Analysis \cite{schellnhuber1999earth} \vspace{10 pt}\end{minipage}   &Social$\rightarrow$Ecological&Ecocentric&Analysis Oriented&\\  \hhline{|*{4}{-}}
\flushleft \begin{minipage}{\linewidth} \textbf{ES} \\ Ecosystems Services \cite{de2002typology}  \vspace{10 pt}\end{minipage}   &Social$\rightarrow$Ecological&Ecocentric&Analysis Oriented&\\  \hhline{|*{4}{-}}
\flushleft \begin{minipage}{\linewidth} \textbf{HES} \\ Human Environment Systems Framework \cite{scholz2004principles}  \vspace{10 pt}\end{minipage} &Social$\rightleftarrows$Ecological&Anthropocentric&Analysis Oriented&\\  \hhline{|*{4}{-}}
\flushleft \begin{minipage}{\linewidth} \textbf{MEFA} \\  Material and Energy Flow Analysis \cite{haberl2004progress} \\  \vspace{10 pt}\end{minipage}   &Social$\rightarrow$Ecological&Ecocentric&Analysis Oriented&\\  \hhline{|*{4}{-}}
\flushleft \begin{minipage}{\linewidth} \textbf{MTF} \\ Management and Transition Framework \cite{pahl2009conceptual}  \vspace{10 pt}\end{minipage}   &Social$\rightleftarrows$Ecological&Anthropocentric&Analysis Oriented&\\  \hhline{|*{4}{-}}
\flushleft \begin{minipage}{\linewidth} \textbf{SESF} \\ Social-Ecological Systems Framework \cite{ostrom2009general}  \vspace{10 pt}\end{minipage}   &Social$\rightleftarrows$Ecological&Anthropocentric&Analysis Oriented&\\  \hhline{|*{4}{-}}
\flushleft \begin{minipage}{\linewidth} \textbf{SLA} \\ Sustainable Livelihood Approach \cite{ashley1999sustainable} \vspace{10 pt}\end{minipage}   &Social$\leftarrow$Ecological&Anthropocentric&Action Oriented&\\  \hhline{|*{4}{-}}
\flushleft \begin{minipage}{\linewidth} \textbf{TNS} \\ The Natural Step \cite{burns1999natural} \vspace{10 pt}\end{minipage}   &Social$\rightarrow$Ecological&Ecocentric&Action Oriented&\\  \hhline{|*{4}{-}}
\flushleft \begin{minipage}{\linewidth} \textbf{TVUL} \\ Turner's Vulnerability Framework \cite{turner2003framework} \vspace{10 pt}\end{minipage}   &Social$\leftarrow$Ecological&Anthropocentric&Action Oriented&\\  \hhline{|*{4}{-}}
\end{tabularx}
\caption{Integrative frameworks for analyzing coupled human natural systems \cite{binder2013comparison}}
\label{tab:chans_fw}
\end{table}

Table \ref{tab:chans_fw} enumerates the list of frameworks and their categorization as given by Binder et. al. \cite{binder2013comparison}. Three of these frameworks conceptualize the interactions between the ecological and social sub-systems in both directions: the Human Environment Systems (HES) Framework \cite{scholz2004principles}, the Management and Transition Framework (MTF) \cite{pahl2009conceptual} and the Social-Ecological Systems Framework (SESF) \cite{ostrom2009general}. The three are similar as far as the conceptualization of the ecological subsystem and the orientation of the frameworks' goal are concerned. All three are analysis-oriented with the ecological subsystem defined according to its utility to humans (anthropocentric). However among the frameworks, there exist subtle differences in the disciplines they evolved from, their theoretical origins, fundamental research questions and the settings to which they have been applied to.

The Management and Transition Framework (MTF) was developed for the application to water governance problems with a focus on studying the transitional behavior under regime change. Binder et. al. state that MTF has no specific disciplinary origin but it may be attributed to complex systems science. The theoretical foundations of MTF are related to CPR theory, institutional analysis \& development and social psychology. As can be seen in Table \ref{tab:chans_fw}, the MTF conceptualizes the resource in an anthropocentric manner which means that only those aspects are considered which impact humans significantly. The MTF goes one step further than the other anthropocentric frameworks and explicitly considers environmental hazards which pose a significant threat to human well-being. While MTF was originally intended for water governance systems, it has also been applied to risk governance and integrated landscape management. In the authors' opinion, the framework may be tailored for application to other domains as well.

The Human Environment Systems (HES) framework, as its description in Table \ref{tab:chans_fw} suggests, was developed as a tool to structure CHANS research. The framework is heuristic in nature and provides operative concepts to study human-centric environmental problems (which may include resource governance). The HES framework considers the details of the ecological system only with reference to the social system. The standard procedure for analysis is to select a state-of-the-art model for the ecological system and then set its scale according to the corresponding social system. The framework has emerged from the environmental decision making and psychology community and is theoretically based on systems science, decision theory, game theory and sustainability science. It is applicable to any system where the interactions between humans and the environment play a significant role e.g., energy, water, waste, etc.

The Social Ecological Systems Framework {SESF} specifies a set of multi-hierarchical variables relevant to CHANS. The framework has been proposed by Ostrom \cite{ostrom2009general}, who selects the variables on the basis that they proved to be pivotal in her studies of successful CPR governance schemes. She places a significant emphasis on hierarchies (note principle no. 8: nested enterprises, in Table \ref{tab:ostrom_principles} of features of successful CPR governance schemes), with multiple tiers of the variables corresponding to multiple levels of organization. This aspect of the SESF enables it to support analysis on multiple scales. The SESF also stands out from the frameworks of Table \ref{tab:chans_fw} with respect to the degree to which the social and ecological systems are teated in equal depth. As Binder et.al. \cite{binder2013comparison} note, while the ecocentric frameworks incorporate the ecological system in more detail and the anthropocentric frameworks incorporate the society in more detail, SESF incorporates both systems in equal detail. The framework has emerged from political sciences and at its foundation, includes theories such as collective choice, CPR and natural resource management. The framework has been developed for resources that are common-pool in nature and has been applied mainly to forests, pastures, fisheries and water.

The three frameworks HES, MTF and SESF described above have either been developed specifically for natural resource governance, or can potentially be applied to resource governance problems. The rest of the frameworks in Table \ref{tab:chans_fw} also address environmental problems (in which humans play a significant role) of immense importance. The interested reader is referred to the cited references of each framework for specific details.  Each framework includes explicit consideration of the non-linear nature of resource and social dynamics, uncertainties for key variables, imperfect knowledge of human actors, effects of social linkages, and so on. As applications of the Systems approach to critical environmental problems, they touch on many disparate disciplines each of which contribute different aspects to the study of the coupled human natural system. 

Perhaps one of the most fundamental proposition of the systems approach is an abandonment of cause and effect as the only possible relationship between two entities, whether viewed independently or as components of a larger system. This of course is a reference to the feedback loop, a concept at the foundation of what is known as cybernetics. Cybernetics is a manifestation of systems theory that was conceived almost parallel to Systems Science. While the concept has somewhat faded away from the forefront of the scientific community, the engineering community, motivated by recent advances in technology, is now bringing it back into focus. The development of cybernetics and its potential application to NRM is what shall capture our attention for the next chapter.


\chapter{From Control of Machines to Cybernetics of Resources}

\label{chap:cyber}

The engineering community has achieved great technological advancement by application of the concepts of control and communication to machines and technical systems. The philosophy of control and communication is however quite general and is manifested not only in technical systems but can also be observed in living organisms, society and the object of interest for this dissertation: natural resource systems. Natural resource systems consist of an intricate coupling between natural and social systems and as such, have immense potential to benefit from different lines of research in the ecological, social and engineering communities. The conception of smart cities and the Internet of Things (IOT) have greatly stimulated the interest of engineers towards coupled systems of humans and technology. The growing pervasiveness of such systems has resulted in the emergence of what has come to be known as Human In The Loop (HITL) control. 

In this chapter, after discussing the convoluted relationship between technology and the environment, we introduce the fundamental concepts of cybernetics that have spawned the disciplines of communication and control. We then give an account of how recent progress in technology has led to more and more human-centric realizations of control, including resource related applications. We then consider a cybernetic perspective for NRM, thus framing the research presented in the remaining part of this dissertation. 

\section{Technical Advancement and the Environment}

The Industrial Revolution of the 19th century saw great technological advancement and led to an (at that time) unfathomable improvement in the quality of the common individuals' life. However the subsequent gains in economic prosperity have come at a heavy price which the natural environment has had to pay in the form of increased global emissions, melting glaciers, depleting ozone, deforestation and so on. The complete list of these effects is quite long, well-known and depressing. The 20th century thus saw a growing awareness of the negative impact of technology on the environment and ultimately our collective living condition. Publications such as Forrester's ``World Dynamics" \cite{forrester1971world} and Meadows and colleagues' ``Limits to Growth" \cite{meadows1972limits} also enhanced the view of technology (and the economic growth it enabled) as the factor responsible for our deteriorating habitat. Thus the contemporary environmental movement saw technology as a major source of environmental degradation and a continuous cause for unintended and unwanted side effects.

Interestingly, one of the major objections to the LTG world models is the conceptualization of technology. As Foray and Gr{\"u}bler \cite{foray1996technology} assert, more technology in the future does not necessarily imply more of today's technology or simply more technological ``hardware". Moreover, the temporal scale over which we consume our resources is not determined by their geological abundance, but is determined through technological, social and economical factors which are all subject to a process of continuous change. Technology enables us to positively influence the environment by \emph{1)}: increasing efficiency in production, \emph{2)}: augmenting our natural resources by unearthing new reservoirs, and \emph{3)}: enabling new trends in material and resource use. For example, through technology we have been able to reduce vehicle emissions by manufacturing more efficient automobiles. Greenhouse gas emissions have been reduced by adopting new technologies such as fuel cells \& single-gas turbines and by shifting from coal and petroleum to renewable energy sources.  Geographical Information Systems (GIS) technology has enabled the remote detection of sources for environmental deterioration. Genetically engineered micro-organisms are now used to treat toxic chemicals in waste material. Precision agriculture and the revolution in information technology has greatly enhanced the farmer's capability to effectively monitor, treat and make better decisions about crop production. It has eventually been realized now that feasible solutions to our environmental problems call for more proficiency in technology and not less \cite{ausubel1989technology}. Ironically, technology has demonstrated a capacity to both inflict and remedy damage to the environment \cite{gray1989paradox} and as such can be perceived as a ``double edged sword".

The potential of technology in redeeming the environment is not limited to the three roles mentioned above. An increased emphasis is now being placed on the fact that systemic changes in the economical system are required. For example, the ``circular economy" \cite{schlesinger2014technology} promotes a regenerative economy in which consumer goods are designed to be of such nature that  after the end of their useful life cycle, certain components could be restored and made to re-enter the biosphere instead of being wasted. This is in contrast to the current make-dispose-waste linear model of the economy. 
Such a thing is now possible due to the availability of new materials and sophisticated manufacturing techniques.

Although the effectiveness of technological solutions to environmental problems may seem promising at first, the practical world is full of examples of unintended consequences and, as such, caution must be practiced at all times. As Moriarty and Honnery assert \cite{moriarty2015reliance}, technical advances are sometimes responsible in reinforcing the same effects that they were brought about to mitigate. For example, increasing forested area in an attempt to reduce global warming, risks reducing the Earth's capacity to reflect part of the incident solar energy back into space \cite{keller2014potential}. Thus the Earth would absorb more solar energy, partly offsetting the intended reduction in global warming. An apparent technical solution to one problem may also produce unwanted effects in another domain. For instance, the increased production of biofuels is forecasted to compete with existing agricultural goods which can result in a rise in food prices. Indeed the use of corn to produce ethanol has occasionally caused the global price of corn to spike in the past \cite{moriarty2015reliance}.

The unintended consequences of forestation and bioenergy are similar to those of the Four-Pest campaign and the American wildfire policy we saw in Section \ref{sec:chans_sa}. This reinforces the notion that any intervention, including those technological in nature, in coupled human and natural systems must not be made without fully understanding the system in context of the issue at hand. Such an understanding requires a systems perspective. Due to the potential impact of technology in environmental problems, the engineering community is now showing an increased interest in CHANS. It is thus natural to extend their traditional way of approaching systems (which, up until recently, have been mostly technical in nature) to such settings. Here we are referring to Cybernetics, an old paradigm of Systems Theory that has shaped the foundations of communication \& control.

\section{Cybernetics and Concepts Defining Feedback Control}

Rosenblueth, Wiener and Bigelow \cite{rosenblueth1943behavior} present a classification of system behavior in an essay published in 1943. Behavior is defined as ``any change of an entity with respect to its surroundings". Interestingly, the authors use the term \emph{behavioristic study} to describe the method of study which is akin to the synthetic method. The behavioristic approach is contrasted with \emph{functional analysis}, which as the term suggests is akin to what we have dubbed simply analysis. Behavior can be classified as active or passive. Active behavior is one where the system is the source of output energy generated as a consequence of its operation. In passive behavior either no output energy is produced to begin with, or energy flows from the system input(s) directly to its output(s). Active behavior is further divided into purposeful and non-purposeful behavior, where purpose denotes the existence of a specific objective or goal for the system. Purposeful behavior is categorized as either feedback (teleological) or non-feedback (non-teleological) with feedback including both negative and positive feedback. Rosenblueth et. al. state that all purposeful behavior can be considered as requiring negative feedback. Feedback may be extrapolative (or predictive) so that when a falcon dives towards a rabbit, its aim is not towards where the rabbit is at the time the dive is executed but where the rabbit is conjectured to be when the falcon will reach it. Feedback may also be non-extrapolative (or non-predictive) just as the amoeba is able to move towards light but is not able to track a moving source. Thus purposeful feedback behavior can be further divided into predictive and non-predictive behavior. The order of predictive behavior denotes the complexity of prediction, for instance, the falcon needed only to predict the path of the rabbit (hence first order prediction), whereas an archer hunting the same rabbit would have to predict both the paths of the rabbit and the arrow (thus second order prediction). The classification of behavior just described above is depicted in Figure \ref{fig:behavior}.

\begin{figure}[h!]
	\captionsetup{font=normal,width=0.8\textwidth}
	\begin{center}
		\includegraphics[width=0.8\linewidth]{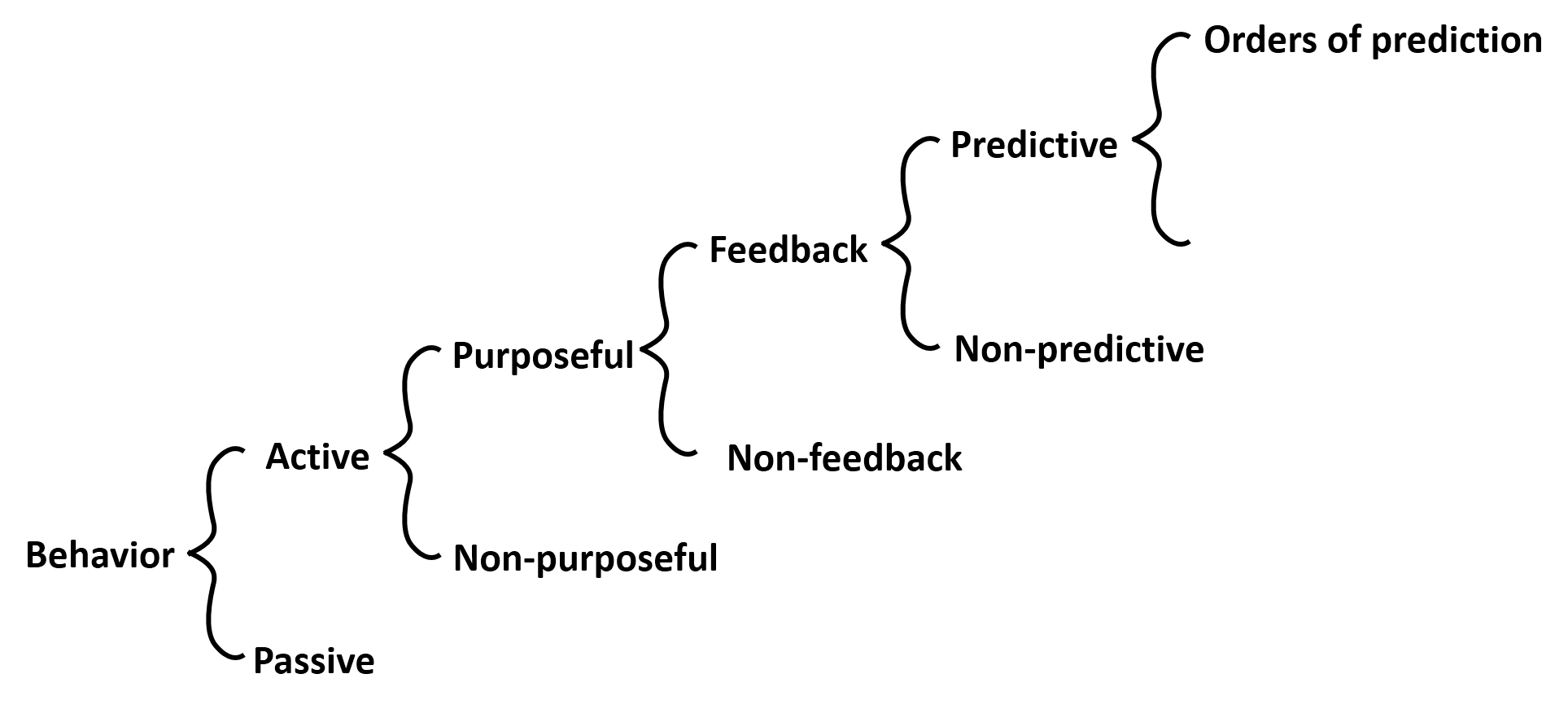}
	\end{center}
	\caption{A classification of behavior (reproduced from \cite{rosenblueth1943behavior}).} 
	\label{fig:behavior} 
\end{figure}

The postulate that all purposeful behavior requires feedback lays the foundation of Norbert Wiener's \emph{Cybernetics}. Weiner introduced the term in his acclaimed book \cite{wiener1961cybernetics} in 1948, a couple of years before Bertalanffy published his paper on General Systems Theory \cite{von1950outline}. The cybernetics movement has roots that are quite distinct from those of systems theory. While Bertalanffy's ideas stemmed from biology, cybernetics took inspiration from technology, using the feedback circuit as its basic model. However there exists a significant overlap in the interests of both movements. Both challenge the purely analytical methodology of the pre-systems era and stress the importance of viewing systems from a holistic angle. Although both were developed in parallel, cybernetics became a part of systems science shortly after its inception. On a related note, systems thinking has given rise to a number of traditions each of which are now established theories in their own right. Early examples include Operations Research, Systems Dynamics, Organizational Learning and Cybernetics (refer to \cite{umpleby1999origins} for further details). More recent developments include (but are not limited to) Control and Communication Theory, Signals and Systems, Multi-agent systems and Network Science. All the above can be thought of as different traditions of systems theory, each having its own (often overlapping) language, methodology and outlook on the systems perspective. 

Cybernetics stems from the Greek \emph{kubeman} which means ``to steer". Indeed one of the earliest examples of control technology is the steering mechanism of a ship's rudder. In his published works \cite{wiener1961cybernetics,wiener1954human} Wiener formalized the concept of cybernetics by abstracting the principles of feedback, control and communication from machines to systems in general. These systems include living organisms, society, organizations and of course machines. Through his discourse, Weiner was able to show that the concepts of feedback control and communication were applicable to all the above mentioned systems. Thus cybernetics was able to unify seemingly disparate disciplines such as psychology, anthropology, sociology and engineering, all of which were (and continue to be) subject to an increasing amount of specialization.

Cybernetics is concerned with the study of all systems which exhibit purposeful (or teleological) behavior. The purpose of the system is specified through some objective or goal that must be reached. This objective is usually translated to a certain state of the system output(s) with the desired state being called the reference. The output of the system is then compared with the reference in order to apply a corrective action depending on how far the system is in reaching its goal. Thus the basic ingredients of a purposeful system are

\begin{figure}[t!]
	\captionsetup{font=normal,width=0.8\textwidth}
	\begin{center}
		\includegraphics[width=0.8\linewidth]{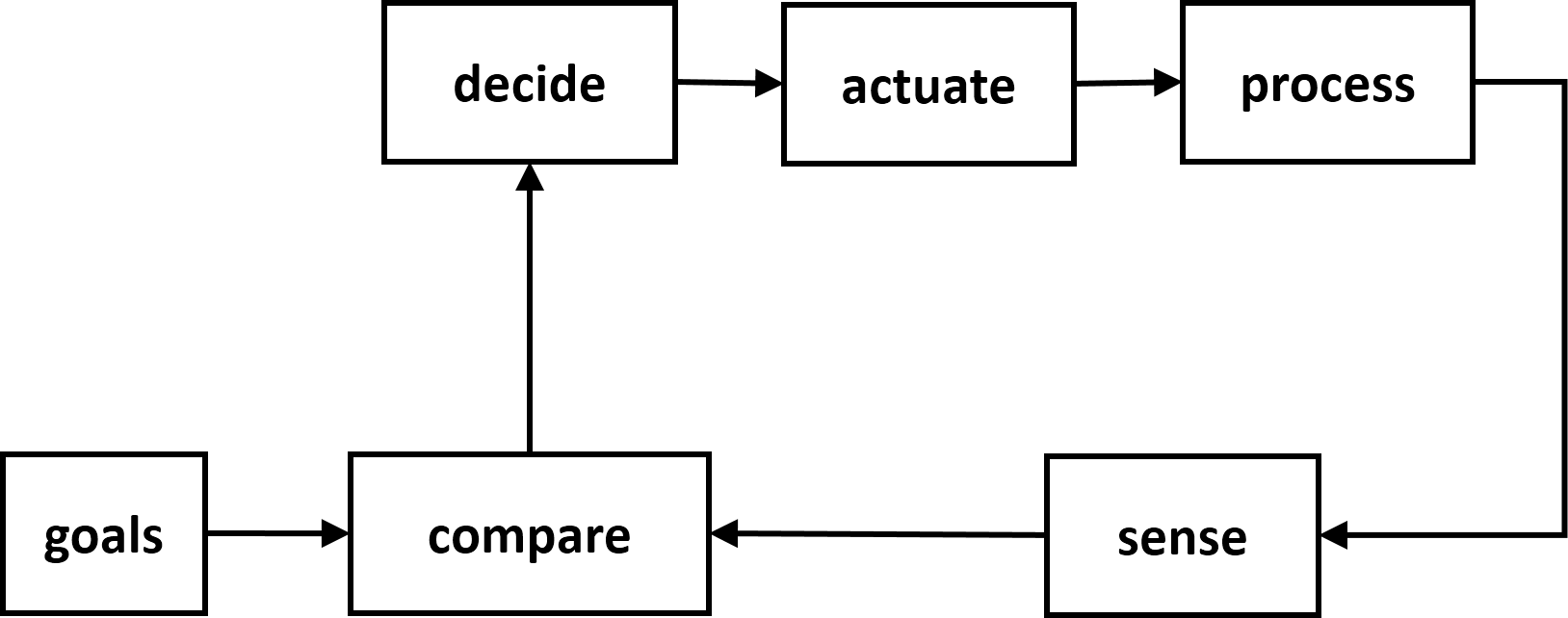}
	\end{center}
	\caption{The feedback control loop.} 
	\label{fig:fb_loop} 
\end{figure}

\begin{enumerate}
	\item The {\bf goal-setter} which determines the system objective and translates it into a desired state for the system output. This corresponds to the ``goals" block of the feedback loop in Figure \ref{fig:fb_loop}. 
	\item The {\bf receptor} which receives information on the output of the system and forwards it for comparison with the desired output. This corresponds to the ``sense" block in Figure \ref{fig:fb_loop}.
	\item The {\bf comparator} which compares the desired output specified by the goal-setter with the output as received from the receptor and formulates a difference or error. This corresponds to the ``compare" block in Figure \ref{fig:fb_loop}.
	\item The {\bf control unit} which decides a corrective action to be taken in order to reduce the error. This corresponds to ``decide" block in Figure \ref{fig:fb_loop}.
	\item The {\bf effector} which implements the action determined by the control unit. This corresponds to the ``actuate" block of Figure \ref{fig:fb_loop}.
\end{enumerate}
In graphical representations of the feedback loop, the ``compare" and ``goals" blocks are sometimes omitted (like Figure \ref{fig:cybernet_pic}). However their presence is always understood.

This type of corrective behavior can be seen, as Wiener posits, in every purposeful system. A person driving a car will continuously read the deviation of the car from the center of the road and apply the necessary corrective action through the steering wheel in order to keep the car in the center. Two people engaged in conversation also exhibit the same behavior. The conversation has a purpose and if one of the individuals deviates from that purpose, someone applies a corrective action by steering the conversation in the required direction. Living organisms exhibit this behavior in the process of homeostasis, which maintains the parameters of the body within a nominal range of values. Needless to say, the principles of cybernetics are at play all around us.

The concepts of cybernetics, as a paradigm of Systems Analysis, has had a profound influence in the development of both control and communication \cite{gao2014engineering}. Over the passage of time, although the term cybernetics has gone out of fashion, it continues to live on in the sense that the discourse in many diverse fields are still being broken down into systems, information, feedback and networks. As Asaro \cite{asaro2010ever} puts it, ``... cybernetics is alive and well. Of course, nobody calls it cybernetics anymore, they simply call it control theory or information theory." However, as Asaro himself also notes, the original ideology of cybernetics is much broad and is limited not just to machines. As technology has advanced, the proliferation of humans in control applications continues to grow. In what follows we look at how control theorists have incorporated humans, society and the environment in their cybernetic formalism of systems. The umbrella term for such settings is called Human In The Loop (HITL) control. The discourse is presented to build towards how CPR systems may be perceived from the cybernetic approach.

\section{Human-in-the-loop Control}

The involvement of human beings in control systems can be broadly categorized into the following levels \cite{wilfrid2017social}

\begin{enumerate}
	\item {\bf Human machine symbiosis:} Symbiosis is a state of two different organisms living together in a close interactive manner. Examples of such systems include prosthetics, exoskeletons, neurostimulators, etc.
	\item {\bf Human as operators or decision makers:} These systems include those in which regular intervention of a human operator may be expected e.g., self-driving vehicles, remote control applications, auto-pilot systems. They also include systems in which humans are involved in the high level decision making or goal setting such as town municipalities, supply chains and resource management.
	\item{\bf Humans as agents in multi-agent systems:} Such systems include automated transportation and pedestrian management, semi autonomous manufacturing, human vs machine games and so on.
	\item{\bf Humans as the objects of control:} These systems include home comfort systems, networks of opinion formation and social media to name just a few.
\end{enumerate}

Resource management systems are special in the sense that they involve human beings at two levels of involvement. Firstly, the humans are the objects of control as it is their consumption which must be regulated. Secondly, humans are involved in the decision making process of any control that may be subsequently applied. This may include setting discount rates, devising effective means to implement the control, engaging in philosophical exercises of formulating social welfare and so on.

While economists, ecologists and sociologists have all been working on coupled human natural systems for a significant time now, the engineering community (in particular the control faction) have recently gained interest in these systems through the increasing integration of technology in society and the opportunities that it offers. A small outline of the journey is given below.  

\subsection{Cyber-Physical Systems and the Internet of Things}
 
Cyber Physical Systems (or CPSs for short) pertain to the application of computing systems to the control of physical processes. The overall behavior of the system is then determined by both the computing element and the physical element. Early applications of engineered systems for the control of intricate physical processes may be traced back to the steam engine governor and the industrial revolution initiated by it. In recent decades the advancement in computing and communication technologies have given rise to the Information Technology (IT) revolution. This has resulted in a proliferation of computing elements and the world wide web at the core of control design for complex physical processes. The term Cyber-Physical Systems, coined by Helen Gill \cite{baheti2011cyber} in 2006 at the National Science Foundation, thus lies at the interface between control, communication and computing technology. Kim and Kumar \cite{kim2012cyber} identify multiple pathways leading to this area of interest via multiple scientific communities which has also led to the birth of closely related terms such as Ubiquitous Computing, Wireless Sensor Networks and the Internet of Things. The root of the term CPS is, widely and befittingly, considered to be Wiener's Cybernetics \cite{lee2015past}. 

The Internet of Things (or IoT for short) relates to the extension of the internet and its integration in the physical realm \cite{miorandi2012internet}. The notion of IoT was conceived by the computer science community and envisions a world of ``smarts" i.e., smart transport, smart phones, smart infrastructure\footnote{see Elsevier's special issue on smart infrastructures \cite{elsevier2017smart}} and smart cities \cite{salim2015urban}. The IoT concept shares many similarities with CPS and both terms are often used as synonyms \cite{bojanova2014imagineering}. Both envision a strong coupling of computation, intelligence and the environment in the form of a large scale distributed computing system of systems. However both have evolved from different communities, and as such, there exist subtle differences between the philosophy of both. The IoT focuses more on virtual problems such as architecture, security, protocols, openness and privacy \cite{stankovic2014research}. CPS however places more focus on control, automation and the underlying physical process, often formulated as a closed loop control system. While IoT consists of the internet at its core technology, CPS consists of intelligent embedded systems in general. Thus IoT may be viewed as an enabling technology for CPS. The IoT has also be portrayed as a CPS connected to the internet \cite{jazdi2014cyber}. 

The application of CPS based technologies are numerous, ranging from robotics, factory automation and smart grids, to medical systems, citizen actuation and environmental control. Such diverse and convoluted applications pose many challenges for research in theories for analysis and design, integration of cross-disciplinary works, developing technologies for control actuation and sensing, meeting real-time computing requirements and abstracting the complexities of physical systems to name just a few \cite{baheti2011cyber, kim2012cyber, lee2015past, rajkumar2010cyber}.

\subsection{Cyber Physical Social Systems: Addressing the social component in CPSs}

Cyber-Physical Social Systems (or CPSS) brings together the cyber, physical and social components of a controlled process. The social dimension of CPSSs is rather broad and encompasses not only human societies per se, but also mental capabilities and human knowledge. Thus one realizes that all CPSs include a social dimension in one form or another. Robots do not operate in their own world, but in the physical world where interaction with humans is no longer a possibility but a necessity. Auto pilot systems and self-driving vehicles rely on high level inputs from a human operator. Even highly automated control systems which may seem to operate without any human intervention are frequently effected by human behavior, either in the form of disturbances, or high level decision making and goal setting. The goal of CPSSs is to effectively add and address this human component in CPSs \cite{wang2010emergence}. The fundamental difference then between a CPS and a CPSS is that in the latter humans are modeled as fundamental units of the system and not objects that exist outside the system boundaries. While CPS and IoT make connections across the physical space  and the cyber space, CPSSs connect the physical and cyber spaces with the social space. This space not only includes humans in their physical form, but also includes their cognitive behavior, social interactions and collective institutions \footnote{A parallel concept is the extension of IoT to the Internet of Everything (IoE) which goes beyond just ``things" to incorporate humans and society in what is envisioned to be a global sensing and actuating utility connected to the internet \cite{yang2017internet}.}. It has been argued that since acknowledging human behavior is a key component of CPS design, a CPS should infact be more accurately referred to as a CPSS \cite{cassandras2016smart}.

\subsection{The Imperative to Understand Human Behavior}

\label{sec:cps_challenges}

As social applications of control and automation increase with the advancement of technology, we are faced with many new challenges and research opportunities. Stankovic et. al. \cite{stankovic2014research, munir2013cyber} identify the following CPS challenges for human in the loop control of CPSs

\begin{enumerate}
	\item {\bf Challenge 1: } Gaining an understanding of the complete spectrum of human-in-the-loop control systems.
	\item {\bf Challenge 2: } Extending system identification techniques to adequately capture human behavior.
	\item {\bf Challenge 3: } Incorporating human behavior models into the formal methodology of feedback control.
\end{enumerate}

The common aspect in all three challenges listed above is a focus on understanding human behavior in the formalism of control. Once it is understood that a system of concern is HITL, the first challenge is to model the involvement of human beings. As stated previously, it is absolutely essential that the social elements are modeled not as external disturbances or noise but as primary components of the system. Using the formalisms that exist separately in the social, technical and computational domains is not sufficient to deal with these systems as each model highlights only certain aspects relevant to that domain and ignores others. Such an approach to CPS design may prove to be fatal for verifying the overall correctness and safety of design at the system level \cite{baheti2011cyber}. Thus an integrative approach is required which combines the research from social sciences to engineering in a formalism that is coherent with the language of feedback control.

\section{The Cybernetics of Natural Resource Systems}

In Chapter \ref{chap:sa}, we discussed the development of systems analysis and its application to the field of natural resource management. Despite the aforementioned advancements however, NRM has yet to benefit from the full potential of the cybernetic perspective. Accordingly there have been attempts to draw attention to the application of the cybernetics philosophy to environmental problems \cite{hoffman2010cybernetic, ben2012cybernetics}, but they have largely gone unnoticed. Cybernetics focuses on the key question of how systems regulate themselves, how they evolve and adapt to disturbances and the underlying structure and mechanisms that govern their dynamics. Moreover, the framework is mathematically intensive and, as such, permits rigorous deductions and confirmation (or refusal) of theory. The focus of the cybernetic approach on regulating system outputs and states to desired values uncovers profound insights on the required system structure and behavior in order to yield the desired results. Needless to say, NRM can benefit greatly from the cybernetics perspective especially due to a growing interest in the field shown by the control and automation community.

The second aspect of the relevance of cybernetics to natural resource management lies with the controls community itself. The involvement of human beings in control and automation systems is growing rapidly with new applications being realized everyday. Such applications have been made possible through the advancement of technology. New and improved sensors, actuators and the evolution of design techniques has transformed the concept of smart cities and IoE from mere aspirations to a reality within reach. Many researchers have pointed out that the missing link in the way forward is the seamless integration of human behavior with the technical aspects of system design \cite{lamnabhi2017systems}. One of the most important factors responsible for the existent gap between the social sciences and control and automation is the dearth of mathematical models describing social dynamics \cite{proskurnikov2017tutorial}. While significant progress is being made in control of opinion formation and related problems \cite{friedkin2015problem}, much ground has yet to be covered in control and automation for natural resource management. Caragliu et. al. \cite{caragliu2011smart} define a smart city as one where ``investments in human and social capital and traditional (transport) and modern (ICT) communication infrastructure fuel sustainable economic growth and a high quality of life, with a wise management of natural resources, through participatory governance". Thus a smart city cannot be realized without effective management of its natural resource base. Natural resources are coupled human-natural systems and in order to integrate them effectively in the formalism of feedback control, we must first translate their dynamics in a form that is amenable to  to the cybernetic way of thinking.

\subsection{Imagineering the Cybernetic Picture}

Cybernetics concerns itself with systems as regulators that actively compensating for any disturbances that might occur. As such, while applying cybernetics to any system, we must address the following questions: What is the goal or objective of the system in context of the analysis? What factors and decisions influence these goals? What information is required by the analyst (or observer) and decision makers? What technologies are needed to implement the system? In the following, we address the above and similar questions in the context of NRM. A synopsis is given by Figure \ref{fig:cybernet_pic}.

\subsubsection{Defining System Goals}

The research enclosed in this dissertation strives towards uncovering favorable conditions for sustainable development. This relates to the setting of goals for the system and although the ``goals" block is not shown explicitly in Figure \ref{fig:cybernet_pic}, it is a crucial part in the overall control implementation. Here we adopt the following definition of sustainability, contained in the well known report of the Brundtland Commission titled ``Our Common Future" \cite{brundtland1987report}

\begin{quotation}
``Sustainable development is development that meets the needs of the present without compromising the ability of future generations to meet their own needs"
\end{quotation}

The above definition requires further specification of two elements. First we must specify what is meant by needs. In the context of NRM we interpret needs as consumption from the natural resource stock. Second we interpret the ability to meet ones needs as the availability of the resource stock for consumption. In the mathematical language which we employ to frame the system model, this implies a steady-state with finite resource stock and consumption level (see \cite{valente2005sustainable, beltratti1993sustainable} for similar examples).

\subsubsection{Factors and Decisions Influencing Achievement of the Goals}

The sustainability of natural resources is threatened by the overexploitation of its stocks through phenomena similar to the TOC. Since sustainability of the stock implies an exploitation that is at most equal to the resource regeneration, it is effected both by resource and consumption dynamics. Resource dynamics for renewable resources are specified by environmental parameters such as carrying capacity and growth rate \cite{perman2003natural}. Dynamics of human consumption are specified by the cognitive decision-making process of resource use \cite{mosler2003integrating}. It is important to realize that there exist a plethora of variables and parameters that effect consumption patterns in the real world. However it is not possible to represent all influencing factors and keep the analysis tractable at the same time. Therefore one must select only those factors which are most relevant to the underlying study. This pertains to understanding of the physical process in context of the problem at hand (and therefore the social and environmental factors are depicted alongside the social-ecological system in Figure \ref{fig:cybernet_pic}). Moreover, as discussed previously in Section \ref{sec:chans_sa} for CHANS, it is important to not only consider the ecological and social components separately, but also to consider the various interactions between them. 
 
\begin{figure}[t!]
	\captionsetup{font=normal,width=0.9\textwidth}
	\begin{center}
		\includegraphics[width=\linewidth]{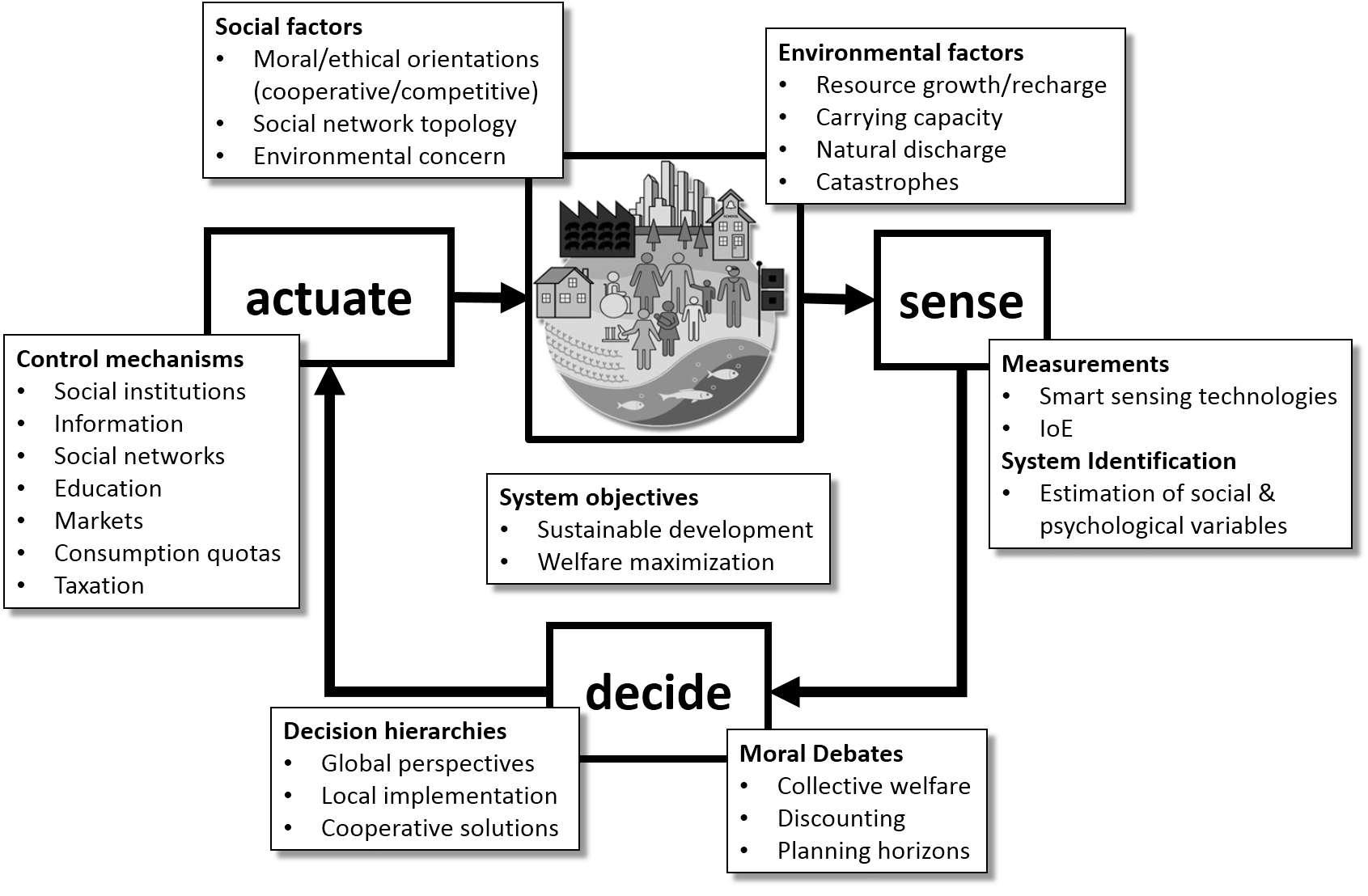}
	\end{center}
	\caption{The cybernetic perspective of NRM. The boldly outlined blocks represent the basic components of the feedback control loop. The other blocks list the relevant factors and possible realizations for the different components as discussed in the text.} 
	\label{fig:cybernet_pic} 
\end{figure}

\subsubsection{Mathematical Models of Human Behavior}

As stated previously perhaps the factor most responsible for the gap between research in social sciences and engineering is the lack of mathematical models of human behavior. Depending on the type of analysis, human behavior may be incorporated in objective functions in optimal control problems \cite{acemoglu2008introduction} or in the system dynamics themselves \cite{anderies2000modeling}. Depending on the scale of the problem, the model may be a block model (representing tightly knit communities as a single entity) \cite{friedkin2006structural} or an agent-based model where the dynamics of each individual consumer are represented explicitly \cite{roopnarine2013ecology}. Note that obtaining mathematical models of human behaviors is not easy, in fact, this is exactly one of the challenges for HITL control of CPSs we saw in Section \ref{sec:cps_challenges}. Nonetheless, a mathematical model (whether conceptual or data driven) is essential for the rigorous framework of control.

It is important to note here, a common misconception about modeling i.e., the goal of modeling is always prediction. This is a tendency that is commonly found in both technical and social disciplines; if a model has been made, then it is eventually going to be used to make projections, or at least be ``calibrated" or ``validated" using some data. Epstein  treats these enduring misconceptions in his acclaimed lecture \cite{epstein2008model} quite judiciously. Aside from prediction, he lists 16 different uses to build a model. Any model can in principle be used for prediction, but it is important to recognize the limitations of the model and the resulting influence it has on the validity of the predictions. We do not  advocate the use of our model (see Chapter \ref{chap:model}) to predict resource consumption patterns in actual scenarios, as this is not the intent with which the model is created. Indeed there exist many models that have been created for this exact purpose (for instance, see the discussion in \cite{bots2008participatory, brugnach2008broadened}). We hope to demonstrate in the material that follows, the use of our model as a potential tool to illuminate core dynamics of the system, highlight trade-offs in decision making, and most importantly, take a step towards incorporating CHANS dynamics effectively in control processes. The findings of the model might be also be used to guide the collection of data and even discipline policy dialog for natural resource management.

\subsubsection{Identifying the Mechanisms of Control}

There exist many forms of control for NRM. While some are relatively straight-forward to identify, others may not be so due to the complex nature of social dynamics and its intricate coupling with the environment. Examples include devising social institutions of collective behavior \cite{ostrom1990governing}, manipulating dissemination of information to alter consumption patterns \cite{stephan2002environmental}, government interventions in local markets \cite{fligstein2002architecture}, and influencing the social network so that consumers are connected in some desirable way \cite{bodin2011social}. There also exists long-term measures such as educating the masses to nurture a certain type of mindset \cite{arbuthnot1977roles} and more coercive methods such as enforcing quotas, steering prices or imposing taxes \cite{hardin1968tragedy}. The main challenge of course is to incorporate these controls in models of human behavior. Control mechanisms fall in the ``actuate" block of Figure \ref{fig:cybernet_pic} which also lists the aforementioned mechanisms of control.

\subsubsection{Measurements and System Identification}

In order to evaluate performance, we must have some representation of the current state of the system. This is achieved by the observation of relevant outputs of the process being controlled. Thus the ``decide" block (Figure \ref{fig:cybernet_pic}) is preceded by a ``sense" block in the overall feedback loop of the control system. Measurement and sensing technologies for physical quantities such as resource stocks and individual consumption have matured well beyond their infancy. So much so that we are now living in a world of ``smarts" \cite{bohn2004living}. Portable hand-held devices such as smart phones contain dozens of sensors which include the capability to log and keep track of behavioral patterns of each individual. Smart meters record and send data of resource stock and flows from extremely remote locations with time resolutions that far exceed the time scales of environmental processes \cite{soloman2009sensors}. The bottleneck in sensing and system identification for NRM then lies in the parameters concerning human psychology such as environmental awareness and social orientations \cite{munir2013cyber}. While sociologists are able to estimate such psychological data in controlled experiments \cite{rutte1987scarcity}, we are still much behind in devising system identification techniques for such parameters in real world CPSSs.

\subsubsection{Evaluating System Performance}

Evaluating system performance is critical for regulation and control. While obtaining benchmarks for machines is relatively straight forward, it is not the case for NRM. Evaluating how well a society is doing requires intense philosophical and moral dialogue, for example, in establishing notions of social welfare and selecting discount rates for future generations \cite{sen1997choice}. Such issues are heavily debated in the economics literature \cite{baujard2015economic, perman2003natural} and must be addressed before any type of social control is envisioned. In the overall control system this step may be seen as part of the decide process as shown in Figure \ref{fig:cybernet_pic}, which implicitly includes the ``compare" block from Figure \ref{fig:fb_loop}.

\section{Cybernetics and Society: A Word of Caution}

In his original text \cite{wiener1961cybernetics}, Norbert Weiner advises caution in the use of the cybernetics philosophy for social systems. His major concern is the non-isolation of the observer from the system. While in machines, the observer is external and is able to investigate the system without disturbing it, it is not so in the case of social systems where the observer is part of the system she analyzes (for instance, any attempt to investigate the stock exchange has the potential to upset the stock exchange itself). Thus the principles governing the control and communication of machines (subsequently called first-order cybernetics) are different from those of society (subsequently called second-order cybernetics). Weiner further goes on to state that this approach of investigations in the social sciences can never furnish us with a quantity of verifiable, significant information which can even be compared with what we have learned to expect in the natural sciences. 

Although philosophical advancements in second-order cybernetics have gained significant ground \cite{von2003cybernetics}, the cybernetics of social systems (also called socio-cybernetics\cite{mancilla2013introduction}) is far from yielding the rigorous theory of control and communication that has spawned from its first order antecedent. Heylighen and Joslyn  \cite{heylighen2001cybernetics} go as far to state that ``the emphasis on the irreducible complexity of the various system-observer interactions and on the subjectivity of modeling has led many to abandon formal approaches and mathematical modeling altogether, limiting themselves to philosophical or literary discourses".

In the material of the following chapters, a model for natural resource consumption is derived from psychological roots and formulated as a dynamical system. Keeping in mind Wiener's cautionary remark, the formulation of the model and the analysis that follows has been done by placing a focus on qualitative results. Accordingly, an emphasis has been placed on the relationship between variables and uncovering sustainable trends in the system. The association of numerics with social and psychological parameters has been kept to a minimum, and where possible, avoided altogether. We hope that the research enclosed in this dissertation will serve as a small step and an inspiration towards a fully mathematical framework for NRM in the cybernetic context.

\part{Modeling the Dynamics of Resource Based Systems}
\clearpage

\chapter{A Psychologically Relevant Mathematical Model of Consumer Behavior}

\label{chap:model}

In Chapter \ref{chap:cyber}, we discussed the relevance of cybernetics to natural resource systems and how it may provide a common ground for engineers to analyze and study the various societal and environmental applications of their technologies. Conant and Ashby \cite{conant1970every} argue that a mathematical model of any system is essential in order to achieve any purposeful behavior on its part. Whether the model is conceptual or data-driven (see \cite{schmidhuber2015deep} for an advanced application of empirical models), it is always understood to exist at the core of any regulating mechanism. In this chapter, we introduce a mathematical model of resource consumption, derived conceptually from social psychological research on consumer behavior. After discussing the first principles used to derive the model, we present the model in mathematical form, followed by a reduction in dimensionality for the sake of ease in analysis and interpretations carried out in the subsequent chapters of this dissertation.

\section{Modeling Paradigms in Socio-ecological Systems} 

Understanding and taking into account consumer behavior in socio-ecological systems is of prime importance for designing an effective solution to phenomena, which are usually referred to as “The Tragedy of the Commons” \cite{hardin1968tragedy}. Social psychologists have carried out a wealth of research to identify factors driving the decision processes of individuals. For instance, Festinger \cite{festinger1954theory} hypothesizes, that human beings are continuously driven to evaluate their decisions, and, in the absence of any objective non-social means, they do so by comparison with decisions of other individuals. Thus, if a consumer is uncertain about the state of a resource, she may make her decision based on the information gathered from other consumers. Simon postulates \cite{simon1959administrative}, that individuals seldom optimize their outcomes over different available alternatives, mainly due to their limited cognitive resources, and instead act according to automated habitual behavior. This implies, that although a consumer might be aware of the consequences of over-utilization of a natural resource, she may still do so according to habit, as long as the outcome is satisfying to her for the time-being (assuming that there is enough of the resource to permit her over-utilization). 

These theoretical propositions have been supplemented by various laboratory and field experiments of common-pool resource settings. Samuelson et al. \cite{samuelson1984individual} conduct an experiment to observe how consumers respond to information on the overall consumption of the resource. They find that individual consumptions tend to increase over time, however there is little or no increase in consumption when the resource is being overused. Rutte et al. \cite{rutte1987scarcity} demonstrate, through an experiment, that consumer behavior is determined by whether the society or the environment is held responsible for the scarcity or abundance of resource. They conclude that when the environment is held responsible, consumers give ecological information more preference than social information, for evaluation of their consumption levels. Conversely, when the society is held responsible, social information is given more preference than ecological information. Brucks and Mosler \cite{brucks2011information} conduct an experiment to find what type of information is important to consumers while making decisions on consumption. They observe, among other findings, that when the resource availability is low, the importance of the information regarding the situation of the resource increases, while the importance of the information regarding consumption of other individuals decreases.

Building on the aforementioned psychological research, scientists have proposed various computational models that simulate consumer behavior during resource crises. Much of these models have been developed under the umbrella of “multi-agent simulations”, or “agent-based modeling” (these terms are often used interchangeably), in which several heterogeneous individuals (agents) are programmed according to some psychological rules. Such a framework can be used to identify key behavioral elements that shape large-scale societal transitions. Deadman \cite{deadman1999modelling} presents one of the first agent-based models of the tragedy of the commons. Since then, various computational agent-based models of resource dilemmas have been developed – see, for instance, \cite{jager2000behaviour, feuillette2003sinuse, jager2007simulating}. These models differ in the principles used to simulate the behavior of individual agents, as each group of authors, depending upon the posed research question, attempts to incorporate different psychological findings into the behavior rules of agents. Bosquet and Le Page \cite{bousquet2004multi} give an extensive review of the development of this field and early applications to ecosystem management.

Despite the power of computational agent-based models to reveal collective phenomena from individual interactions, they lack rigor, generality and elegance that is provided by mathematical models. Notable achievements on the latter front include, for example, Anderson \cite{anderson1974model} who describes the coupled dynamics of a regenerating commons and the physical capital, as a system of a few ordinary differential equations, and uses calculus to show that the model exhibits Hardin’s conclusions (Hardin 1968) explicitly. Anderies \cite{anderies2000modeling} applies bifurcation analysis to two separate models of resource exploitation: (1) the slash-and-burn agricultural system of the Tsembaga tribesmen of New Guinea, and (2) the exploitation of palm forests on Easter Island by the Polynesians. The analysis is used to study the long-term behavior of both systems under different management/ behavioral regimes. The study suggests that successful institutional designs are highly site-specific and that a careful understanding of the “geometry” of the system is necessary for successful resource governance. Roopnarine \cite{roopnarine2013ecology} suggests simple differential equation models which separately illuminate three different aspects of the tragedy of the commons:  (1) the compulsion of individual users to act based on the action of other users, (2) the opportunities and limitations imposed by the networked nature of consumers, and (3) mutualisms as solutions to tragedies. Thus, these mathematical models are able to demonstrate and explain a certain phenomenon in a clear and tractable fashion, where a major focus is usually placed on the dynamics of consumption. Although no objective function or decision variables are explicitly included in their formulation, they typically assume some notion of rationality prevalent in the consumers. However, the theoretical findings provided by social psychologists are seldom incorporated in these models. 

The commons dilemma has also been studied extensively in the framework of game theory. In her seminal work, Ostrom states that all institutional arrangements that can be used to avert the commons dilemma are expressible as games in extensive form, and presents several such examples \cite{ostrom1990governing, ostrom1994rules}. While she expresses institutionalization for resource governance as a non-cooperative game, a large number of analysts have also used cooperative game theory to study environmental resource issues (see, for example, \cite{parrachino2006cooperative}). Ostrom’s research has stimulated a vast amount of work, concerning use of the game theoretic framework for modelling strategic interactions between consumers of a natural resource (see for instance \cite{madani2010game, cole2010institutions, diekert2012tragedy}). Similar to system dynamic models, game-theoretic models also assume a notion of rationality, which is made explicit by including objective functions and decision variables in the problem formulation. However, most game-theoretic models, either seek an optimal strategy of resource extraction, or seek to design models that yield strategies similar to those observed in real-world scenarios, rather than focusing on strategies that accurately depict the cognitive process of the consumers’ decision making. We argue, that in order to increase the relevance of such models to real-world scenarios, it is necessary to incorporate the cognitive principles that govern consumer behaviour, as revealed by the psychological research.

In this chapter, we present a dynamic model of natural resource consumption, taking into account the psychology of consumer behavior in an open-access setting. We depart from the computational model by Mosler and Brucks \cite{mosler2003integrating} and put forward a stylized version that is based on the same psychological principles employed in the original model. Essential psychological variables have been maintained in our model, which include the scarcity thresholds as perceived by consumers, the extent to which the consumers hold nature (or society) responsible for the state of the resource, and the social value of the consumers. While the original model assumes unlimited growth of the resource, we assume standard logistic growth, which is more realistic. The logistic growth model and its variants are commonly used to model many renewable resources that saturate at a certain carrying capacity. Thus the resources which lie in the scope of our study include (but are not limited to) fisheries \cite{gordon1954economic}, forests \cite{fekedulegn1999parameter}, vegetation \cite{birch1999new}, foliage \cite{werker1997modelling}, saffron \cite{torabi2014evaluating}, and so on. We do not, in this study, incorporate the effects of uncertainty in our model. Thus we assume, that perfect information is available to each consumer, regarding the resource quantity and the consumption of other individuals present in the system. While this is a strong and restrictive assumption, incorporating uncertainty is beyond the scope of this paper, as it would considerably increase the complexity of the model, and must be dealt with separately. 


\section{The Mathematical Model}

Our starting point is the computational consumer behavior model of Mosler and Brucks \cite{mosler2003integrating}, which we stylize by formalizing it into a mathematical form. The resource is supposed to regenerate according to the classical logistic growth model \cite{perman2003natural}. To begin with, we consider a society of $\n$ individual consumers, each of whom have open-access to the resource which they consume by exerting some effort. The dynamics of the consumers' effort are modeled based on our understanding of the cognitive process depending on certain psychological characteristics each consumer. These characteristics depict: (1) the environmentalism of the consumer (the perceived benchmark quantity that the consumer uses to evaluate whether the resource is in abundance or in scarcity), (2) the social values of the consumer (how cooperative is the consumer), and (3) causal attributions of the consumer (to what extent does the consumer attribute the current condition of the resource to nature-induced reasons, as compared to society-induced reasons). The model is specified through the dynamics of the resource (the ecological sub-model) and those of the individual consumption efforts for each consumer (the social sub-model). We describe each sub-model in turn as follows.


\subsection{The Ecological Sub-model}

Consider a society with $\n$ consumers, each having open-access to a single renewable resource. We assume that the resource quantity $R(\tau)$, available for consumption by the society at time $\tau$, has an associated growth function, which is logistic in nature. We further assume, that in the absence of consumption, $R(\tau)$ increases over time at an intrinsic (positive, constant) growth rate $\r$ and saturates at the given carrying capacity $\Rmax$ . The individual consumers, each identified by their respective index $i \in \{1,\dots, \n \}$, exert effort $e_i (\tau)$ \footnote{see \cite{perman2003natural} for interpretations of the word ``effort" in this context} to consume the resource. The resource growth is, therefore, given by
\begin{align}
\label{eq:ec_mod} 
	\frac{d R(\tau)}{d \tau} = \r\, R(\tau)\left( 1 - \frac{R(\tau)}{\Rmax} \right) - \sum_{i=1}^{\n} e_i(\tau)\,R(\tau),
\end{align}
where each individual's harvest of the resource equals $e_i (\tau)R(\tau)$. Note that Equation \eqref{eq:ec_mod} is equivalent to the standard Gordon-Schaefer model \cite{gordon1954economic}, with the catch coefficient set to unity.

Following \cite{perman2003natural}, we use a general notion of ``efforts": ``all the different dimensions of harvesting activity can be aggregated into one magnitude called effort". For example, in fisheries, factors which define efforts may include ``the number of boats deployed and their efficiency, the number of days when fishing is undertaken and so on". While positive efforts refer to the extraction of a resource, negative efforts can be, for instance, growing trees or breeding fish, and also beyond these – namely, any effort exerted for the sustenance of the resource. In fisheries, the latter can be related to the restriction of fishing gear, closed area management, awareness programs and so on \cite{worm2009rebuilding}; in forest management it can be related to the restoration of soil fertility, preservation of remnant vegetation, and promoting community forest enterprises \cite{chazdon2003tropical, bray2003mexico}. One can imagine similar realizations for other renewable resources.

An important implication of the aggregate consumption effort being negative, is that the resource may grow over the natural carrying capacity, i.e., $R(\tau)$ may take on values greater than $\Rmax$, which in the context of the model means that the resource grows beyond the carrying capacity. However even then there is still a limit to how much beyond the natural carrying capacity the resource quantity can be stretched. In the case of forests, for example, using fertilizers and planting trees can act as negative extraction efforts leading to the forest biomass growing over the natural carrying capacity. However, the limited amount of light coming onto the Earth, competition for light between trees and simply the space constraint will not allow the biomass to grow unboundedly and thus there will be some limit to it; similar examples can be envisioned for other resources as well. 

\subsection{The Social Sub-model}

Here we define the dynamics of the effort $e_i (\tau)$ over time. We assume that the change in consumption of each individual, is based on their weighing of two different factors, one pertaining to information about the resource quantity (the ecological factor), and second pertaining to information about the use of others (the social factor) \cite{brucks2011information}. Which factor gains precedence over the other in the cognitive process of harvesting decisions depends on the individual characteristics of each consumer, depicted here through multiple psychological variables. The final change in consumption is the sum of the ecological and social impacts, where we define the impact as the product of the corresponding factor and its weight. Thus the change in effort is given by
\begin{align*}
	\frac{d e_i(\tau)}{d \tau}=\text{ecological weight}_i \times \text{ecological factor}_i + \text{social weight}_i \times \text{social factor}_i. 
\end{align*}
In what follows, we specify the ecological \& social factors, and the psychological variables that determine the relative weighing of these factors to determine the change in effort $e_i (\tau)$. The final equation is given by \eqref{eq:cons_mod}.

Rutte et al. conclude from their study \cite{rutte1987scarcity} that consumers harvest more from the resource in abundance than in scarcity. They further observe that harvests for each individual tend to increase over time, except when the resource is scarce, in which case it decreases. We implement this effect in our model by comparing the current resource stock with a perceived-by-the-consumer ``scarcity threshold", denoted as $\R_i \in \mathbb{R}$. Thus we define the ecological factor for $i$ as the difference $R(\tau) - \R_i$. A positive ecological factor represents an abundant state of the resource, whereas a negative factor represents a scarce state. A scarce resource in the perception of individual $i$ results in a decrease in consumption, whereas an abundant resource results in an increase. Note that that abundance or scarcity represent only the beliefs of the consumers and may or may not depict the objective state of the resource, making it possible for $\R_i$ to lie outside the interval $[0, \Rmax]$. A negative value of $\R_i$ simply implies that $i$ considers the resource to always be in abundance no matter how low the actual stock is. Increasingly negative values amplify this effect via the factor $R(\tau)-\R_i$ (a similar argument holds for $\R_i > \Rmax$). While Mosler and Brucks \cite{mosler2003integrating} assume the scarcity threshold to be constant for the entire population, we generalize by allowing it to be different for each consumer (thereby ascribing more influence of the heterogeneity of consumers). Indeed, consumers with different characteristics (age, sex, social class, political orientation, etc.) also tend to have different levels of environmental concern \cite{liere1980social}. 

In the same study, Rutte et al. \cite{rutte1987scarcity} show that consumer behavior differs, depending on the extent, to which they attribute scarcity of the resource to nature-induced reasons relative to society-induced reasons. For example in the case of a forest, nature-induced reasons may include less rainfall, volcanic eruption, and so on. Society-induced reasons may include extensive overuse by the consumers, pollution, and other practices that are harmful to the forest. We represent this attribution of $i$ as $\a_i \geq 0$; where $\a_i=0$ represents a consumer, who associates the scarcity of the resource entirely with society, with increasing values of $\a_i$ representing increasing association of the resource scarcity to nature. As suggested by the aforementioned study, the consumers that attribute the condition of the resource more to nature, tend to give ecological information more importance. Thus we define the ecological weight to be equal to the attribution $\a_i$. The ecological impact is therefore given by $\a_i (R(\tau)-\R_i )$, which means that a consumer, who attributes the condition of the resource more to nature (thus having a high value of $\a_i$) will manifest a higher ecological impact, than a consumer, who attributes the condition of the resource to society (having a low value of $\a_i$). 

In his theory of social comparison processes, Festinger \cite{festinger1954theory} postulates that people are less attracted to situations, where others are very divergent from them, than to situations, where others are close to them regarding both abilities and opinions. Therefore, we define the social factor by the term $\sum_{j=1}^\n \w_{ij} ( e_j - e_i)$ which represents the difference in $i$'s consumption w.r.t all the other connected agents\footnote{Henceforth, the terms individuals, consumers and agents are used interchangeably.}. Here $\w_{ij} > 0$ is the directed tie-strength between $i$ and $j$; it represents the influence that $j$ has on $i$'s consumption. Furthermore $\omega_{ii} = 0$ and $\sum_{j} \omega_{ij} = 1 \,\, \forall \,\, i$.. Thus the social factor is a measure of equality, with a low magnitude representing a society with equal consumption (dictating a lesser change in consumption) and a high magnitude representing unequal consumption (dictating a greater change in consumption). Furthermore, a negative social factor represents a higher consumption of the target individual relative to the other consumers (dictating a decrease in the target individual's consumption) and a positive social factor represents a relatively lower consumption of the target individual (dictating an increase in consumption).

The social value of consumer $i$ is represented by $\s_i \geq 0$. Here $\s_i = 0$ represents an extremely non-cooperative individual, with increasing values of $\s_i$ representing an increasingly cooperative individual. A consumer with higher social value will weigh equality more heavily than a consumer with lower social value. The influence of the social value of consumer $i$, represented by $\s_i \geq 0$, is based on the assumption that cooperative individuals are more concerned with maximizing equality and respond with anger to violations in equality regardless of the effect on their own outcomes \cite{van2013psychology}. Thus the social impact is given by the product $\s_i \left(\sum_{j=1}^\n \w_{ij} ( e_j - e_i)\right)$. 

Thus in our assumed society of $\n$ consumers, the following equation gives the change in effort of $i$ as a function of her individual characteristics, the current stock level and the effort of the other consumers.
\begin{align}
\label{eq:cons_mod} 
	\hspace{0pt}\frac{d e_i(\tau)}{d \tau} = \a_i \left(R(\tau) - \R_i\right) + \s_i  \sum_{j=1}^\n  \w_{ij}  \left(e_j(\tau) -  e_i(\tau)\right),
\end{align} 
where $i \in \{1,\dots, \n \}$. The first term on the right hand side of \eqref{eq:cons_mod} represents the weighted ecological factor, where $\left(R(\tau) - \R_i\right)$  is the difference between the perceived optimum and actual level of the stock, which is weighed by the attribution $\a_i$. The second term on the right hand side of \eqref{eq:cons_mod} represents the weighted social factor, where $\sum_{j=1}^\n \w_{ij} ( e_j - e_i)$  is the difference in effort between the consumers, which is weighed by the social value $\s_i$.

\subsection{The Coupled Socio-ecological System}
\label{sec:ses}

Together \eqref{eq:ec_mod} and \eqref{eq:cons_mod} describe the overall dynamics of the considered socio-ecological system of the consuming population harvesting an open-access resource.  In what follows, we undertake some additional transformations, after which the model is able to capture the first principles more accurately, and the new variables have clearer interpretations in terms of the underlying theory. The transformations also reduce the overall dimensionality of the parameter space.

Let $x(\tau)$ be the quantity of the resource relative to the carrying capacity $\Rmax$  of the environment and $y_i (\tau)$ be $i$'s effort relative to the intrinsic growth rate $\r$: $x(\tau) = \displaystyle \frac{R(\tau)}{\Rmax}$ and $y_i (\tau) = \displaystyle \frac{e_i (\tau)}{\r}$. Define $\uprho_i=\displaystyle \frac{\R_i}{\Rmax}$ as $i$'s scarcity threshold level relative to $\Rmax$. Next define $t= \r \tau$ as the new, non-dimensional time scale. We can now express the system as follows
\begin{align}
\hspace{0pt}
\label{eq:ses}
\begin{split}
	&\dot{x} =  (1-x)x - x\sum_{i=1}^{\n}  y_i,\\
	&\dot{y}_i = \b_i \Big( \upalpha_i(x -\uprho_i)- \upnu_i \sum_{j=1}^{\n} \w_{ij}\left( y_i - y_j \right) \Big),
\end{split}
\end{align}
where $\displaystyle \b_i = \displaystyle \frac{(\a_i \Rmax + \r \s_i)}{\r^2}$, $\displaystyle \upalpha_i  =  \frac{\a_i \Rmax}{(\a_i \Rmax + \r \s_i)}$, $\displaystyle \upnu_i = \frac{\r \s_i}{(\a_i \Rmax + \r \s_i)}$ and the over-dot represents the derivative with respect to time $t$. Although the original weights $\a_i$ and $\s_i$ have different dimensions and thus are incomparable, the new weights $\upalpha_i$ and $\upnu_i$ become dimensionless and can be compared. Furthermore, $\upalpha_i+\upnu_i=1$, which means that $\upalpha_i$ and $\upnu_i$ are complements of each other. This reflects the bipolarity in social and physical dimensions as described by the theory underlying the original model \cite{mosler2003integrating}. Thus $\upalpha_i$ can be interpreted as the relevance that $i$ assigns to ecological information and $\upnu_i$ can be interpreted as the relevance that $i$ assigns to social information. The state variables $x, y_i$ and parameters $\uprho_i$, $\b_i$ also become dimensionless. Henceforth, we call $\upalpha_i$ and $\upnu_i$ the ecological and social relevance of $i$ respectively, and $\uprho_i$ the environmentalism of $i$, where $\uprho_i\leq0$ represents an extremely non-environmental individual and increasing values of $\uprho_i$ correspond to increasing levels of environmental concern. Here $\b_i$ can be interpreted as the overall susceptibility \cite{friedkin2006structural} of $i$ to change in her consumption. All subsequent analysis in this dissertation will be carried out on model \eqref{eq:ses}.

\section{Discussion}

Leon Festinger's theory on Social Comparison Processes \cite{festinger1954theory} provides the rationale for Social-Ecological Relevance (SER), which is the core theoretical concept of the model in \cite{mosler2003integrating}. SER is one-dimensional and so the preference given to social factors is the complement of that given to ecological factors, which represents the bipolarity between social and physical dimensions. The model that we put forward here also captures this concept as $\upalpha_i$ and $\upnu_i$ can be interpreted as the relative preferences given to the ecological and social factors respectively and as they are the complements of each other they depict the bipolarity inherent in the underlying social comparison process. 

It is important to note that there also exist some limitations in the model as presented here. First, the psychological parameters of the system have been assumed to be constant over time. Although in some cases a society may exhibit unvarying characteristics \cite{friedkin2006structural} for short enough time span, it is understood that in general, societies in the real-world are in a constant state of change. In Chapter \ref{chap:learning}, we study the dynamics of the environmentalism under the theory of learning in games. Furthermore, the assumption of perfect access of each consumer group, to information on the consumption of the other groups, is obviously restrictive of a true representation of reality. Future work can be directed towards relaxing this assumption, and observing the implications it may have, on the conclusions drawn from the model we put forward here. In the subsequent chapters, we explore the potential applications of the model via different mathematical frameworks in the context of NRM.

%
%
%
%
%
%
%

\clearpage

\chapter{Exploiting Community Structure to Create Block-models of Resource Consumption}

\label{chap:lump}

Previously in Chapter \ref{chap:model}, we introduced a mathematical model of human decision making in a coupled resource-consumer setting. Here we present a mechanism whereby groups of similar consumers can be aggregated into a single unit provided that certain conditions of homogeneity hold withing the network. Such block models are studied comprehensively for similar processes in sociology \cite{borgatti2009network}, as they not only simplify the analysis but also utilize the community structure of the network to present a concise description of the society. After outlining the conditions on the structure of the population that allows us to make these simplifications, we consider societies where those conditions may not be fulfilled. For such cases, we explore an approximate block model that exibits deviations from the true model in the transient, but converges to the exact behavor in the steady state.

\section{Obtaining Simplified Models of Aggregate Consumption}

The coupled dynamics of the resource and individual consumptions is given by Equation \eqref{eq:ses} which has been reproduced below.
\begin{align}
\hspace{0pt}
\label{eq:ses2}
\begin{split}
	&\dot{x} =  (1-x)x - x\sum_{i=1}^{\n}  y_i,\\
	&\dot{y}_i = \b_i \Big( \upalpha_i(x -\uprho_i)- \upnu_i \sum_{j=1}^{\n} \w_{ij}\left( y_i - y_j \right) \Big),
\end{split}
\end{align}
where $ i\in \{1, \dots , \n\}$ with $\n$ being the total number of consumers. Recall from Chapter \ref{chap:model} that $\upalpha_i\in(0, 1)$ and $\upnu_i = 1- \alpha_i$ are the weights that $i$ gives to the ecological and social factors respectively. Furthermore $\w_{ij}\geq 0$ is the social influence that node $j$ exerts on node $i$ (see Figure \ref{fig:soc_inf}), with the constraint that $\sum_{j=1}^{\n} \w_{ij} = 1$ and $\w_{ii} = 0$.
\begin{figure}[h!]
	\captionsetup{font=small,width=0.75\textwidth}
	\begin{center}
		\includegraphics[width=0.4\linewidth]{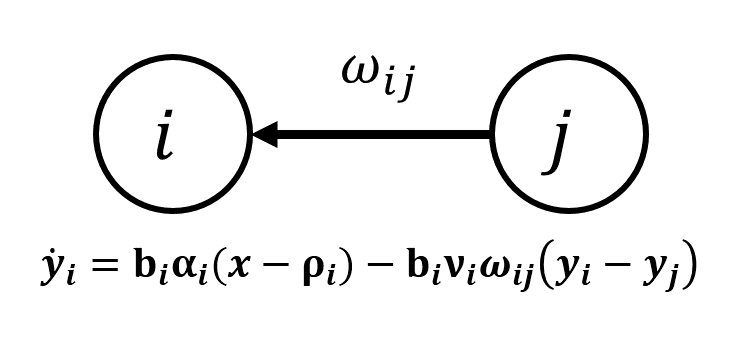}
	\end{center}
	\vspace{-20pt}
	\caption{The consumption of $i$ is influenced by $j$ through $\w_{ij}$. The overall effect is given by $\b_i \, \upnu_i \, \w_{ij}$ where $\upnu_i$ is the weight that $i$ gives to social factors in the decision making process.}
	\label{fig:soc_inf} 
\end{figure}
The $\n$ equations for the individual consumption rates are given as follows
\begin{align}
\begin{split}
	\label{eq:sys} 
	&\dot{y}_1 = \b_1 \left( \upalpha_1( x-\uprho_1) - \upnu_1\sum_{j=1}^\n \w_{1j}\left( y_1 - y_j \right) \right) \\
	&\dot{y}_2 = \b_2 \left( \upalpha_2( x-\uprho_2) - \upnu_2\sum_{j=1}^\n \w_{2j}\left( y_2 - y_j \right) \right) \\
	& \hspace{80pt} \vdots \\
	&\dot{y}_\n = \b_\n \left( \upalpha_\n( x-\uprho_\n) - \upnu_\n\sum_{j=1}^\n \w_{\n j}\left( y_n - y_j \right) \right)
\end{split}
\end{align}
The summation term can be expanded as
\begin{align}
\begin{split}
\hspace{-20pt}
	&\hspace{-20pt}\dot{y}_1 = \b_1 \left( \upalpha_1( x-\uprho_1) - \sum_{j=1}^\n \w_{1j} \, \upnu_1 \, y_1 + \w_{11} \, \upnu_1 \, y_1 + \w_{12} \, \upnu_1 \, y_2 + \cdots + \w_{1\n} \, \upnu_1 \, y_\n \right) \\
	&\hspace{-20pt}\dot{y}_2 = \b_2 \left( \upalpha_2( x-\uprho_2) - \sum_{j=1}^\n \w_{2j} \, \upnu_2 \, y_2 + \w_{21} \, \upnu_2 \, y_1 + \w_{22} \, \upnu_2 \, y_2 + \cdots + \w_{2\n} \, \nu_2 \, y_\n \right) \\
	& \hspace{-20pt}\hspace{80pt} \vdots \\
	&\hspace{-20pt}\dot{y}_\n = \b_\n \left( \upalpha_\n( x-\uprho_\n) - \sum_{j=1}^\n \w_{\n j} \, \upnu_\n \, y_\n + \w_{\n 1} \, \upnu_\n \, y_1 + \w_{\n 2} \, \upnu_2 \, y_\n + \cdots + \w_{\n\n} \, \upnu_\n \, y_\n \right)
\end{split}
\end{align}
Summing up all equations we get
\begin{align}
\begin{split}
\hspace{-20pt}
	\sum_{i=1}^\n \dot{y}_i = \sum_{i=1}^\n \b_i\upalpha_i(x-\uprho_i) - &\sum_{j=1}^\n \w_{1j} \, \b_1\upnu_1 \, y_1 - \sum_{j=1}^\n \w_{2j} \, \b_2\upnu_2 \, y_2 - \cdots - \sum_{j=1}^\n \w_{\n j} \,\b_\n\upnu_\n \, y_\n \\
	+ & \sum_{j=1}^\n \w_{j1} \, \b_j\upnu_j \, y_1 + \sum_{j=1}^\n \w_{j2} \, \b_j \upnu_j \, y_2 + \cdots + \sum_{j=1}^\n \w_{j \n} \, \b_j \upnu_j \, y_\n
\end{split}
\end{align}
the summations containing the same consumption rate term can be collected as
\begin{align}
\begin{split}
	\sum_{i=1}^\n \dot{y}_i =& \sum_{i=1}^\n \b_i\upalpha_i(x-\rho_i) -  \sum_{j=1}^\n \big( \w_{1j} \, \b_1\upnu_1 - \w_{j1} \, \b_j\upnu_j \big)y_1\\ -& \sum_{j=1}^\n \big( \w_{2j} \, \b_2\upnu_2 - \w_{j2} \, \b_j\upnu_j \big)y_2 - \cdots
	- \cdots \sum_{j=1}^\n \big( \w_{\n j} \, \b_\n\nu_\n - \w_{j\n} \, \b_j\upnu_j \big)y_\n
\end{split}
\end{align}
and can be written in a more compact manner by summing the terms on the R.H.S over a second index as
\begin{align}
\label{eq:hom1} 
\begin{split}
	\sum_{i=1}^\n \dot{y}_i = \sum_{i=1}^\n \b_i\upalpha_i(x-\uprho_i) + \sum_{i=1}^\n \Big( \sum_{j=1}^\n \big( \w_{ji} \, \b_j\upnu_j - \w_{ij} \, \b_i\upnu_i \big) \Big) y_i 
\end{split}
\end{align}
In order to transform this equation into one for $\sum y_i$, the following condition must hold true
\begin{align}
\label{eq:con_hom}
\begin{split}
	\sum_{j=1}^\n \big( \w_{ji} \, \b_j\upnu_j - \w_{ij} \, \b_i\upnu_i \big) = \mathrm{c} ; \quad \forall \, i
\end{split}
\end{align}

\subsection{Influence and Leadership in the Consumer Network}

In  order to understand the condition in (\ref{eq:con_hom})  better, let us analyze the individual terms in the L.H.S of the condition
\begin{align}
\begin{split}
	\mathbf{\textcolor{black}{\sum_{j=1}^\n \w_{ji} \, \b_j \, \upnu_j} - \textcolor{black}{\sum_{j=1}^\n \w_{ij} \, \b_i \, \upnu_i} }
\end{split}
\label{eq:con_hom2} 
\end{align}

To develop some intuition about the individual terms, recall from Figure \ref{fig:soc_inf} that the effective weight of the edge directed from $i$ to $j$ in the consumption network is $\w_{ji} \, \b_j \, \upnu_j$. If this is so then the first term in \eqref{eq:con_hom2} can be viewed as the aggregate social influence that $i$ has on all other members of the network. Let us call this the \emph{out-influence} of $i$.  Similarly, the second term in \eqref{eq:con_hom2} can be viewed as the aggregate influence on $i$ from all other members of the network. Let us call this the $in-influence$ of $i$. The \emph{net-influence} of $i$ is then simply her out-influence minus her in-influence and can be viewed as representing $i$'s role in the network as either a leader (positive net-influence), a follower (negative net-influence) or neutral (zero net-influence). Thus (\ref{eq:con_hom}) states that for all nodes present in the network, their respective net-influence is constant.
\begin{figure}[h!]
	\captionsetup{width=0.6\textwidth}
	\begin{center}
		\includegraphics[width=0.8\linewidth]{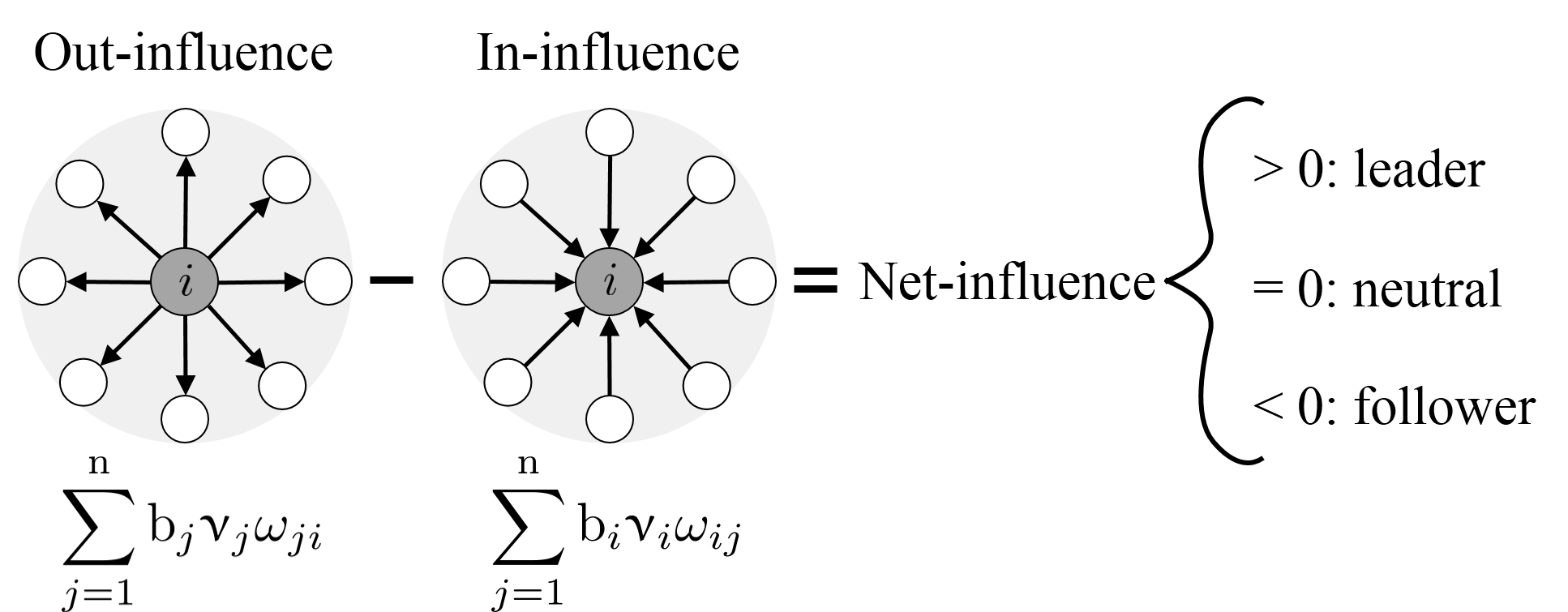}
	\end{center}
	\caption{Graphical depiction of net-influence.}
	\label{fig:net_inf} 
\end{figure}
We now show that condition \eqref{eq:con_hom} can hold only for $\mathrm{c} = 0$.
\begin{lemma}
\label{th:lem_1}
	The following condition 
	\begin{align*}
		\sum_{j=1}^\n\! \big( \w_{ij} \, \b_i\upnu_i \!-\! \w_{ji} \, \b_j\upnu_j \big) = \mathrm{c},  \quad \forall \,\, i \in \{1,\dots,\n\},
	\end{align*}
	can hold only for $\mathrm{c} = 0$.
\end{lemma}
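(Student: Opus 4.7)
The plan is to sum the defining condition over all indices $i$ and exploit the antisymmetric structure of the two terms inside the sum. Concretely, if we denote the hypothesis by
\begin{equation*}
\sum_{j=1}^{\n}\bigl(\w_{ij}\,\b_i\upnu_i - \w_{ji}\,\b_j\upnu_j\bigr) = \mathrm{c}, \qquad \forall\, i\in\{1,\dots,\n\},
\end{equation*}
then summing over $i$ yields $\n\,\mathrm{c}$ on the right-hand side, while on the left-hand side the double sum $\sum_{i=1}^{\n}\sum_{j=1}^{\n}\w_{ij}\b_i\upnu_i$ is identical to $\sum_{i=1}^{\n}\sum_{j=1}^{\n}\w_{ji}\b_j\upnu_j$ after swapping the dummy indices $i\leftrightarrow j$. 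The two contributions therefore cancel exactly, giving $0 = \n\,\mathrm{c}$, and since $\n\geq 1$ we conclude $\mathrm{c}=0$.

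The only step requiring any care is the index relabelling: the first term $\w_{ij}\b_i\upnu_i$ has its out-edge index attached to the weights $\b_i\upnu_i$, while the second term $\w_{ji}\b_j\upnu_j$ has its in-edge index $j$ attached to the weights $\b_j\upnu_j$; these become manifestly the same object once the summation is over the full pair $(i,j)\in\{1,\dots,\n\}^2$. I would present the argument by writing the two double sums explicitly side by side and invoking a one-line renaming of the summation variables, so the cancellation is transparent.

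Interpretively, this is the statement that aggregate out-influence and aggregate in-influence over the whole network are always equal (every edge contributes once as out-influence for its source and once as in-influence for its target); hence the sum of net-influences across all nodes vanishes, and if each individual net-influence equals the same constant $\mathrm{c}$, that constant must be zero. No further ingredients (such as positivity of $\b_i$, $\upnu_i$, or the stochasticity of $\w_{ij}$) are needed; the lemma is a purely algebraic identity on weighted directed graphs. I do not anticipate any real obstacle here.
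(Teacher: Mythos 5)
Your proposal is correct and follows essentially the same route as the paper: sum the condition over all $i$, relabel the dummy indices in one of the two double sums so the contributions cancel, and conclude $\n\,\mathrm{c}=0$, hence $\mathrm{c}=0$. The added interpretive remark about total out-influence equaling total in-influence is consistent with the paper's discussion surrounding the lemma, and no further assumptions are needed, as you note.
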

\begin{proof}
	Assuming that the condition holds true, we can expand the expression as follows
	\begin{align*}
		\sum_{j=1}^\n \w_{ij} \, \b_i\upnu_i - \sum_{j=1}^\n \w_{ji} \, \b_j\upnu_j  = \mathrm{c} .
	\end{align*}
	Summing over all $i$
	\begin{align*}
		\sum_{i=1}^\n \sum_{j=1}^\n \w_{ij} \, \b_i\upnu_i - \sum_{i=1}^\n \sum_{j=1}^\n \w_{ji} \, \b_j\upnu_j  = \n \mathrm{c}.
	\end{align*}
	Switching the index variables for the second term on the L.H.S
	\begin{align*}
		&\sum_{i=1}^\n \sum_{j=1}^\n \w_{ij} \, \b_i\upnu_i - \sum_{j=1}^\n \sum_{i=1}^\n \w_{ij} \, \b_i\upnu_i  = \n \mathrm{c} ,\\
		&\sum_{i=1}^\n \sum_{j=1}^\n \big( \w_{ij} - \w_{ij} \big)\b_i\upnu_i = \n\mathrm{c} \Rightarrow \mathrm{c} = 0,
	\end{align*}
	which concludes the proof.
\end{proof}
Thus in order for condition \eqref{eq:con_hom} to hold i.e., for a network with constant net-influence, there can exist no leaders or followers in the networks. We call such networks as \emph{self-directed} networks. With this stipulation in hand, we are now able to define our first canonical network.

\subsection{The Canonical Networks}
\label{sec:lumped}
Previously, the network of the consuming population has been seen at an individual level. Now, we explore the structural effects of the network at two additional levels of abstraction. The \emph{homogeneous consumer network} allows the abstraction of individual consumptions by viewing the society as a whole. Needless to say, this abstraction comes with a loss of information (in the form of individual consumptions) and additional constraints (explored in the following) that the network must satisfy. The \emph{semi-homogeneous} and \emph{symmetric semi-homogeneous consumer networks} abstract the individual consumptions at the group level, where it is assumed that the society consists of homogeneous communities with certain regularities prevailing in the influences across groups. In what follows, we define the homogeneous, semi-homogeneous and symmetric semi-homogeneous consumer networks and examine the regularities enforced by each on the structure of the consuming population. Together, we call these three constructs collectively as the \emph{canonical} consumer networks.

\subsubsection{The Homogeneous Consumer Network}
We understand a homogeneous network as one in which \eqref{eq:con_hom} holds, along with some other requirements stated below. Note that due to Lemma \ref{th:lem_1}, condition \eqref{eq:con_hom} can hold only for $\mathrm{c}=0$ (i.e., a self-directed network). The homogeneous network is then given by the following definition.
\begin{definition}
\label{def:hom}
	{\bf Homogeneous consumer network:} A consuming population is said to comprise a  homogeneous consumer network if the following conditions hold 
	\begin{enumerate}
		\item All agents have uniform attribution $\a_i$ and social value orientation $\s_i$ i.e., there exist $\Beta, \Alpha$ and $\Nu$ such that,
		\begin{align*} &\b_i \upalpha_i = \frac{\a_i \Rmax}{\r^2} = \Beta \, \Alpha \,\, \forall \,\, i \in \{1, \dots, \n\}\text{, \,\,and}\\
				& \b_i \upnu_i = \frac{\s_i}{\r} = \Beta \, \Nu \,\, \forall \,\, i \in \{1, \dots, \n\}. 
		\end{align*}
		\item All agents have uniform environmentalism i.e., there exists $\Rho$ such that, 
		\begin{equation*} \uprho_i = \Rho \,\, \forall \,\, i \in \{1, \dots, \n\}. \end{equation*}
		\item There are no leaders or followers in the network i.e.,
		\begin{equation*} \sum_{j=1}^\n \big( \w_{ji} \, \b_j\upnu_j - \w_{ij} \, \b_i\upnu_i \big) = 0  \quad \forall \, i \in \{1, \dots, \n\}. \end{equation*}
	\end{enumerate}
\end{definition}
Moving forward from \eqref{eq:hom1} and using Definition \ref{def:hom}, the consumption dynamics of the homogeneous network can simply be represented as
\begin{align}
\label{eq:hom_sys} 
\begin{split}
	&\dot{x} = (1-x)x - x Y,\\
	&\dot{Y} = \n \Beta \, \Alpha \, (x - \Rho),
\end{split}
\end{align}
where $Y = \sum_{i=1}^{\n} y_i$ represents the aggregate consumption of the network. Thus the system is now represented by a two-dimensional system which gives a macro-level view of the society as one single unit. It is important to realize here that due to this aggregation, we lose information on the individual consumptions.

\subsubsection{The Semi-homogeneous Consumer Network}

We now assume that the network has $\m$ non-overlapping consumer groups with populations given by $\n_k, k \in \{1\dots \m\}$ respectively, and $\n = \sum_{k=1}^\m \n_k$. The groups are classified on basis of their social and ecological relevances, and their environmentalisms. For clarity of notation, let us define the set $\N_k = \{k_1, k_2, \dots, k_{\n_k} \}$ to contain the indices of all consumers belonging to group $k$. Further more let $\N = \{1,\dots, \n\}$ be the set containing the indices of the whole network such that $\N_1 \cup \N_2 \cup \dots \cup \N_{\m} = \N$. Now consider the $\n_k$ equations representing the consumption of Group $k$. 
{\allowdisplaybreaks
\fontsize{9.5}{11.5}\selectfont
\begin{align}
\begin{split}
	\label{eq:sys} 
	&\dot{y}_{k_1} = \b_{k_1} \left( \upalpha_{k_1}( x-\uprho_{k_1}) - \upnu_{k_1}\sum_{j\in \N} \w_{{k_1}j}\left( y_{k_1} - y_j \right) \right) \\
	&\dot{y}_{k_2} = \b_{k_2} \left( \upalpha_{k_2}( x-\uprho_{k_2}) - \upnu_{k_2}\sum_{j\in \N} \w_{{k_2}j}\left( y_{k_2} - y_j \right) \right) \\
	& \hspace{80pt} \vdots \\
	&\dot{y}_{k_{\n_k}} = \b_{k_{\n_k}} \left( \upalpha_{k_{\n_k}}( x-\uprho_{k_{\n_k}}) - \upnu_{k_{\n_k}}\sum_{j\in \N} \w_{{k_{\n_k}}j}\left( y_{k_{n_k}} - y_j \right) \right)
\end{split}
\end{align}}
Note that we have not yet asserted the homogeneity of the group and have maintained the parameters in their original form for now. The summation term of each equation can now be expanded as
{\allowdisplaybreaks
\fontsize{9.5}{11.5}\selectfont
\begin{align}
\begin{split}
	&\begin{aligned}\dot{y}_{k_1} = \b_{k_1} \Bigg( \upalpha_{k_1}( x-\uprho_{k_1}) &- \sum_{j \in \N} \w_{{k_1}j} \, \upnu_{k_1} \, y_{k_1} + \w_{{k_1}{1}} \, \upnu_{k_1} \, y_{1} \\& \hspace{50pt}+ \w_{{k_1}{2}} \, \upnu_{k_1} \, y_{2} + \cdots + \w_{{k_1}\n} \, \upnu_{k_1} \, y_\n \Bigg) \end{aligned}\\
	&\begin{aligned}\dot{y}_{k_2} = \b_{k_2} \Bigg( \upalpha_{k_2}( x-\uprho_{k_2}) &- \sum_{j \in \N} \w_{{k_2}j} \, \upnu_{k_2} \, y_{k_2} + \w_{{k_2}{1}} \, \upnu_{k_2} \, y_{1} \\&\hspace{50pt}+ \w_{{k_2}{2}} \, \upnu_{k_2} \, y_{2} + \cdots + \w_{{k_2}\n} \, \upnu_{k_2} \, y_\n \Bigg) \end{aligned} \\
	& \hspace{80pt} \vdots \\
	&\begin{aligned} \dot{y}_{k_{n_k}} = \b_{k_{n_k}} \Bigg( \upalpha_{k_{n_k}}( x-\uprho_{k_{n_k}}) &- \sum_{j \in \N} \w_{{k_{n_k}}j} \, \upnu_{k_{n_k}} \, y_{k_{n_k}} + \w_{{k_{n_k}}{1}} \, \upnu_{k_{n_k}} \, y_{1} \\ &\hspace{40pt}+ \w_{{k_{n_k}}{2}} \, \upnu_{k_{n_k}} \, y_{2} + \cdots + \w_{{k_{n_k}}\n} \, \upnu_{k_{n_k}} \, y_\n \Bigg) \end{aligned}
\end{split}
\end{align}}
summing up all equations we get
\begin{align}
\begin{split}
	\sum_{i\in \N_k} \dot{y}_i =& \sum_{i\in \N_k} \b_i \upalpha_i (x-\uprho_i) \\- &\sum_{j \in \N} \w_{{k_1}j} \, \b_{k_1} \, \upnu_{k_1} \, y_{k_1} - \sum_{j \in \N} \w_{{k_2}j} \, \b_{k_2} \, \upnu_{k_2} \, y_{k_2} - \cdots - \sum_{j \in \N} \w_{{k_{n_k}}j} \, \b_{k_{n_k}} \upnu_{k_{n_k}} \, y_{k_{n_k}} \\
	+ & \sum_{j\in \N_k} \w_{j1} \, \b_j\upnu_j \, y_1 + \sum_{j\in \N_k} \w_{j2} \, \b_j\upnu_j \, y_2 + \cdots + \sum_{j\in \N_k} \w_{j\n} \, \b_j\upnu_j \, y_\n
\end{split}
\end{align}
the summations containing the same consumption rate term can be collected as
\begin{align}
\begin{split}
	\sum_{i\in \N_k} \dot{y}_i =& \sum_{i \in \N_k} \b_i \upalpha_i (x-\uprho_i) \\
	- &\Bigg(\sum_{j\in \N_k} \big( \w_{{k_1}j} \, \b_{k_1}\upnu_{k_1} - \w_{j{k_1}} \, \b_j\upnu_j \big) + \sum_{j \in \N \setminus \N_k} \w_{{k_1}j} \, \b_{k_1}\upnu_{k_1} \Bigg) y_{k_1}\\
	- &\Bigg(\sum_{j\in \N_k} \big( \w_{{k_2}j} \, \b_{k_2}\upnu_{k_2} - \w_{j{k_2}} \, \b_j\upnu_j \big) + \sum_{j \in \N \setminus \N_k} \w_{{k_2}j} \, \b_{k_2}\upnu_{k_2} \Bigg) y_{k_2}\\
	\\ & \hspace{90pt}\vdots \\
	- &\Bigg(\sum_{j\in \N_k} \big( \w_{{k_{n_k}}j} \, \b_{k_{n_k}}\upnu_{k_{n_k}} - \w_{j{k_{n_k}}} \, \b_j\upnu_j \big) + \sum_{j \in \N \setminus \N_k} \w_{{k_{n_k}}j} \, \b_{k_{n_k}}\upnu_{k_{n_k}} \Bigg) y_{k_{n_k}}\\
	+ & \sum_{i\in \N\setminus \N_k} \Bigg( \sum_{j\in \N_k} \w_{ji} \, \b_j\upnu_j \Bigg) y_i 
\end{split}
\end{align}
combining the terms containing the consumption rates of Group $k$ over a second index we get
\begin{align}
\begin{split}
	\sum_{i\in \N_k} \dot{y}_i =& \sum_{i \in \N_k} \b_i \upalpha_i (x-\uprho_i) \\
	- & \sum_{i\in\N_k} \Bigg(\sum_{j\in \N_k} \big( \w_{ij} \, \b_i \upnu_i - \w_{ji} \, \b_j\upnu_j \big) + \sum_{j \in \N \setminus \N_k} \w_{ij} \, \b_i\upnu_i \Bigg) y_i\\
	+ & \sum_{i\in \N\setminus \N_k} \Bigg( \sum_{j\in \N_k} \w_{ji} \, \b_j\upnu_j \Bigg) y_i 
\end{split}
\end{align}
Repeating the procedure for all the groups, we get
{\allowdisplaybreaks
\fontsize{9.5}{11.5}\selectfont
\begin{align}
\begin{split}
	\sum_{i\in \N_1} \dot{y}_i =& \sum_{i \in \N_1} \b_i \upalpha_i (x-\uprho_i) \\
	- & \sum_{i\in\N_1} \Bigg(\sum_{j\in \N_1} \big( \w_{ij} \, \b_i \upnu_i - \w_{ji} \, \b_j\upnu_j \big) + \sum_{j \in \N \setminus \N_1} \w_{ij} \, \b_i\upnu_i \Bigg) y_i\\
	+ & \sum_{i\in \N\setminus \N_1} \Bigg( \sum_{j\in \N_1} \w_{ji} \, \b_j\upnu_j \Bigg) y_i \\
	& \hspace{90pt}\vdots \\
	& \hspace{90pt}\vdots \\
	\sum_{i\in \N_\m} \dot{y}_i =& \sum_{i \in \N_\m} \b_i \upalpha_i (x-\uprho_i) \\
	- & \sum_{i\in\N_\m} \Bigg(\sum_{j\in \N_\m} \big( \w_{ij} \, \b_i \upnu_i - \w_{ji} \, \b_j\upnu_j \big) + \sum_{j \in \N \setminus \N_\m} \w_{ij} \, \b_i\upnu_i \Bigg) y_i\\
	+ & \sum_{i\in \N\setminus \N_\m} \Bigg( \sum_{j\in \N_\m} \w_{ji} \, \b_j\upnu_j \Bigg) y_i 
\end{split}
\end{align}}
Expanding the last summation in each equation with respect to each group, we get the following system
{\allowdisplaybreaks
\fontsize{9.5}{11.5}\selectfont
\begin{align}
\label{eq:sem_hom_1}
\begin{split}
	\dot{Y}_1 =& \sum_{i \in \N_1} \b_i \upalpha_i (x-\uprho_i) \\
	- & \sum_{i\in\N_1} \Bigg(\sum_{j\in \N_1} \big( \w_{ij} \, \b_i \upnu_i - \w_{ji} \, \b_j\upnu_j \big) + \sum_{j \in \N \setminus \N_1} \w_{ij} \, \b_i\upnu_i \Bigg) y_i\\
	+ & \sum_{i\in \N_2} \Bigg( \sum_{j\in \N_1} \w_{ji} \, \b_j\upnu_j \Bigg) y_i + \dots + \sum_{i\in \N_\m} \Bigg( \sum_{j\in \N_1} \w_{ji} \, \b_j\upnu_j \Bigg) y_i  \\
	& \hspace{90pt}\vdots \\
	& \hspace{90pt}\vdots \\
	\dot{Y}_\m =& \sum_{i \in \N_\m} \b_i \upalpha_i (x-\uprho_i) \\
	- & \sum_{i\in\N_\m} \Bigg(\sum_{j\in \N_\m} \big( \w_{ij} \, \b_i \upnu_i - \w_{ji} \, \b_j\upnu_j \big) + \sum_{j \in \N \setminus \N_\m} \w_{ij} \, \b_i\upnu_i \Bigg) y_i\\
	+ & \sum_{i\in \N_1} \Bigg( \sum_{j\in \N_\m} \w_{ji} \, \b_j\upnu_j \Bigg) y_i + \dots + \sum_{i\in \N_{\m-1}} \Bigg( \sum_{j\in \N_\m} \w_{ji} \, \b_j\upnu_j \Bigg) y_i  \\
\end{split}
\end{align}}
where $Y_k = \sum_{i \in \N_k} y_i$ is the aggregate consumption of Group $k$ where $k \in \{1, \dots , \m\}$. Note that the above system, though simplified, is still not expressed purely in-terms of the $Y_k$'s. It is also easy to see that applying the condition \eqref{eq:con_hom} alone is not sufficient to achieve this. In the following definition, we describe the additional requirements to do so. The resulting network is called the semi-homogeneous consumer network.

\begin{definition}
\label{def:semi-hom}
	{\bf Semi-homogeneous consumer network:} A consuming population is said to comprise a  semi-homogeneous consumer network if the following conditions hold 
	\begin{enumerate}
		\item There exist $\m$ groups in the network, each of which individually fulfill the conditions for the homogeneous consumer network according to Definition \ref{def:hom}.
		\item The total in-influence of one group on another group should be uniformly distributed across all nodes of the influenced group i.e., for all $i$ $\in$ $\N_s$, there exists $\d^-_{sr}$ such that
		\begin{equation}\label{eq:lump_cond1}\sum_{j\in \N_r} \b_i \upnu_i \w_{ij} = \Beta_s \Nu_s \sum_{j \in \N_r} \w_{ij} = \Beta_s \Nu_s \d^-_{sr},\end{equation}
		where $\Beta_s$ and $\Nu_s$ represent the common sensitivity and social relevance of Group $s$ respectively.
		\item The total out-influence of one group from another group should be uniformly distributed across all nodes of the influencing group i.e., for all $j$ $\in$ $\N_r$, there exists $\d^+_{sr}$ such that,
		\begin{equation}\label{eq:lump_cond2} \sum_{i\in \N_s} \b_i \upnu_i \w_{ij} = \Beta_s \Nu_s \sum_{i \in \N_s} \w_{ij} = \Beta_s \Nu_s \d^+_{sr}. \end{equation}
	\end{enumerate}
\end{definition}

\begin{figure}[t!]
    \begin{center}
    \begin{subfigure}[t]{0.5\linewidth}
        \captionsetup{width=0.8\linewidth}
        \includegraphics[width = 0.9\linewidth]{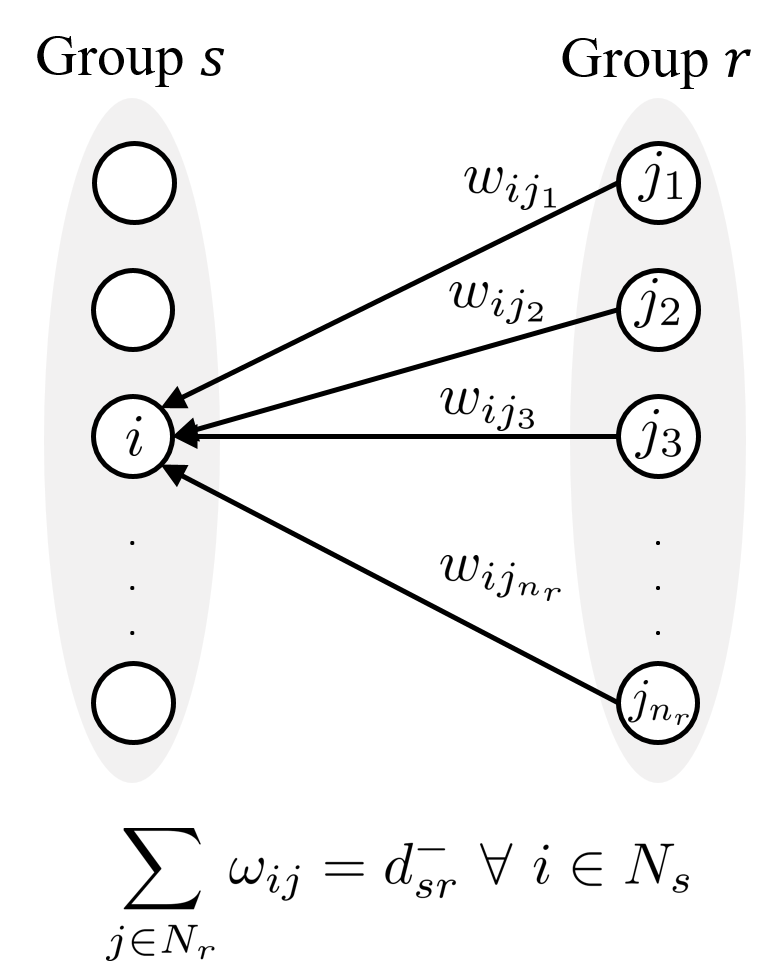}
        \caption{Pictorial representation of \eqref{eq:lump_cond1} for a general network.}
        \label{fig:cond1}
    \end{subfigure}%
    \begin{subfigure}[t]{0.5\linewidth}
        \captionsetup{width=0.8\linewidth}
        \includegraphics[width = 0.9\linewidth]{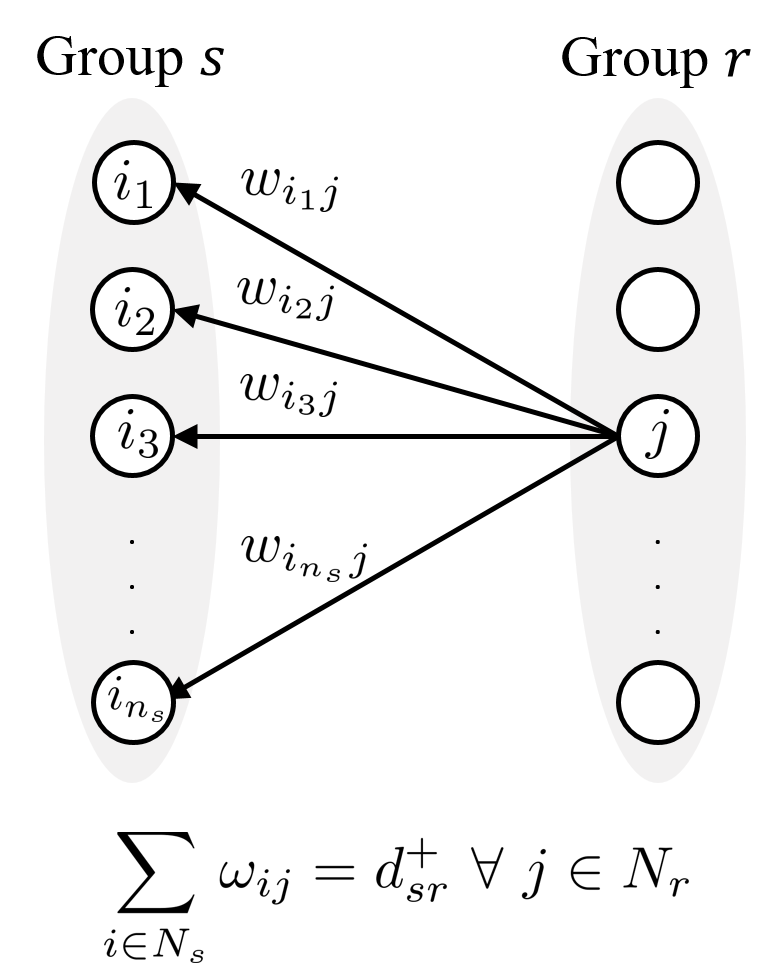}
        \caption{Pictorial representation of \eqref{eq:lump_cond2} for a general network.}
        \label{fig:cond2}
    \end{subfigure}
    \caption{Conditions on cross-influence between groups in the semi-homogeneous network }
   \end{center}
\end{figure}

Moving forward from Equation \eqref{eq:sem_hom_1} and using Definition \ref{def:semi-hom}, the consumption dynamics of the semi-homogeneous network are expressible as

\begin{align*}
\begin{split}
	&\dot{Y}_1 =  \begin{aligned} &  \n_1 \Beta_1 \Alpha_1 (x-\Rho_1) - \Beta_1 \Nu_1\Big( \n_2 \d^-_{12} + \dots + \n_\m \d^-_{1\m}\Big)   Y_1 \\ &+\,\, \n_1 \Beta_1 \Nu_1 \d^+_{12}Y_2 + \dots + \n_1 \Beta_1 \Nu_1 \d^+_{1\m} Y_\m, \end{aligned}
\\
	&\,\,\vdots \hspace{0.45 \linewidth} \vdots\\
	&\dot{Y}_\m   =    \begin{aligned} &  \n_{ \m} \Beta_{ \m} \Alpha_{ \m}  ( x - \Rho_{  \m} )  -  \Beta_\m    \Nu_{ \m}   \Big(   \n_1 \d^-_{\m 1}   +   \dots  +  \n_{\m - 1} \d^-_{\m (\m - 1)} \Big)    Y_\m \\ &+\,\, \n_\m \Beta_\m \Nu_\m \d^+_{\m1}Y_1 + \dots + \n_\m \Beta_\m \Nu_\m \d^+_{\m(\m - 1)} Y_{\m - 1}, \end{aligned}
\end{split}
\end{align*}

where $\Beta_k,\Alpha_k, \Nu_k$ and $\Rho_k$ represent the psychological characteristics of Group $k$. Note that although the system dynamics are expressed completely in-terms of the $Y_k$'s, they still do not appear as a scaled version of \eqref{eq:ses2}. The next canonical network includes a further simplification that enables this.

\subsubsection{The Symmetric Semi-Homogeneous Network}

\begin{definition}
\label{def:sym_semi-hom}
	{\bf Symmetric semi-homogeneous \\consumer network:} A consuming population is said to comprise a symmetric  semi-homogeneous consumer network if the following conditions hold 
	\begin{enumerate}
		\item There exist $\m$ homogeneous groups in the network. Together the $\m$ sub-groups fulfill the conditions for the semi-homogeneous consumer network as specified by Definition \ref{def:semi-hom}.
		\item The size of all connected subgroups are equal, the resulting implication being that the aggregated out-influences and in-influences as defined by \eqref{eq:lump_cond1} and \eqref{eq:lump_cond2} are equal for each pair of sub-groups, i.e., 
		\begin{equation}\label{eq:lump_cond3}  \d^-_{sr} = \d^{+}_{sr} = \mathrm{\W}_{sr} \text{ for all }  s,r \in \{1,\dots ,\m\}. \end{equation}
	\end{enumerate}
\end{definition}

The coupled dynamics for the resource and consumption can now be expressed for the symmetric semi-homogeneous network as follows
\begin{align}
\label{eq:sym_hom_sys}
\begin{split}
	&\dot{x} = (1-x)x - x \sum_{i=1}^\m Y_i,\\
	&\dot{Y}_i = \n_i \Beta_i \, \Alpha_i \, (x - \Rho_i) - \Beta_i\Nu_i \sum_{j=1}^\m \mathrm{\W}_{ij} (Y_i - Y_j),
\end{split}
\end{align}
where $i \in \{1,\dots ,\m\}$. Note that the homogeneous consumer network given by \eqref{eq:hom_sys} is a special case of the symmetric semi-homogeneous network with $m=1$. The utility of the block model \eqref{eq:sym_hom_sys} is that  it presents a picture of the society at a community level which can aid in informed decision making at the macro-level. The block model includes the effects of community sizes and the linkages between them, similar to a regular network except that the nodes are not individual consumers but groups of consumers with similar characteristics. Obtaining \eqref{eq:sym_hom_sys} in the same form as the original model \eqref{eq:hom_sys} shows that the block model and aggregation scheme are scalable, and also carries forward the interpretation of system variables and parameters.

\subsection{Bonding and Bridging in the Symmetric Semi-\newline homogeneous Network}

Here we discuss interpretations of the lumped social ties $\mathrm{W}_{ij}$ in the symmetric semi-homogeneous consumer network defined by \eqref{eq:sym_hom_sys}. From the definition of $\mathrm{W}_{ij}$ we have that $\sum_j \mathrm{W}_{ij} = 1$, for all $i \in \{1,\dots,\m\}$, however $\mathrm{W}_{ii} \geq 0 \,\, \forall \,\, i$, as opposed to $\w_{ii}$ which is zero by definition. $\mathrm{W}_{ii}$ is defined by
\begin{equation*}
	\mathrm{W}_{ii} = 1 - \sum_{\substack{j=1\\j\neq i}}^\m \mathrm{W}_{ij} = \sum_{s \in \N_i} \omega_{rs} \,\, \text{, for any} \,\, r \in \N_i.
\end{equation*}
Thus $\mathrm{W}_{ij}$ gives the tie-strength between members of group $i$, which represents the \emph{bonding} capital of the group. Conversely $\mathrm{W}_{ij}, i \neq j$ represents the \emph{bridging} capital between the groups. Both these quantities are complementary i.e., a group that has strong bonding will have weak bridging and vice versa which also agrees with observations of real-world communities (see \cite{easley2010networks}). 

It is also interesting to note that the aggregated model \eqref{eq:sym_hom_sys} allows the possibility of isolated nodes in the network, which did not exist in the original model. Such a node represents a group that has no bridging capital and maximum bonding capital.

\section{Aggregation Mechanisms for Non-Canonical Consumer Networks}
Here we look at aggregation mechanisms for populations that do not satisfy the requirements for the canonical networks presented in Section \ref{sec:lumped}. We first consider communities that have no leaders or followers as given by condition (iii) of Definition 1. We call the underlying network for such populations a \emph{self-directed} network. We show for such networks, that one can construct a block-model without any inaccuracies by intelligently selecting the values for the lumped parameters. We then consider aggregation when accurate information about the consumer characteristics may not be available in a self-directed network.

\label{sec:approx}
\subsection{An Aggregation Mechanism for Self-directed Networks}
Assume a self-directed network, i.e., a population of $\n$ consumers in which there are no leaders or followers. Thus the following condition holds
\begin{equation} 
\label{eq:no_lf}
	\sum_{j=1}^\n\! \big(  \w_{ji} \, \b_j\upnu_j -\w_{ij} \, \b_i\upnu_i \big) = 0  \quad \forall \, i \in \{1, \dots, \n\}. \end{equation}

Note that no assumptions have been made yet on the ecological weights $\upalpha_i$ and the environmentalisms $\uprho_i$. In what follows, we show that the aggregate consumption dynamics for such a network can be described similar to \eqref{eq:hom_sys} if the choices for the aggregate environmentalism and ecological weight are made appropriately. If \eqref{eq:no_lf} holds, then the aggregate consumption evolves as
\begin{align*}
	\dot{Y} &= \sum_{i = 1}^\n \b_i \upalpha_i (x - \uprho_i)= \sum_{i=1}^\n \b_i \upalpha_i x - \sum_{i=1}^\n \b_i \upalpha_i \uprho_i, \\
		&= \n \left( \frac{1}{\n}\sum_{i=1}^\n \b_i \upalpha_i \right) \left( x - \sum_{i=1}^\n \frac{\b_i \upalpha_i}{\,\,\,\,\text{\raisebox{-2pt}{$\sum_{j=1}^{\n}\b_j \upalpha_j $}}\,\,}  \,\,\uprho_i \right),\\[5pt]
		&= \n \,\hat{\Beta}\hat{\Alpha}\big(x - \hat{\Rho}\,\big),
\end{align*}
where the lumped parameters $\hat{\Beta}, \hat{\Alpha}$ and $\hat{\Rho}$ have been chosen such that
\begin{align}
\label{eq:hom2_cond}
	\hat{\Beta} \hat{\Alpha} = \frac{1}{\n}\sum_{i=1}^\n \b_i \upalpha_i; \quad \hat{\Rho} = \sum_{i=1}^\n \frac{\b_i \upalpha_i}{\,\,\,\,\text{\raisebox{-2pt}{$\sum_{j=1}^{\n}\b_j \upalpha_j $}}\,\,}  \,\,\uprho_i. 
\end{align}

Thus $\hat{\Beta}$ and $\hat{\Alpha}$ are chosen such that the lumped product $\hat{\Beta} \hat{\Alpha}$ is the average of the individual products $\b_i \upalpha_i$ and the lumped environmentalism $\hat{\Rho}$ is chosen as a convex combination of the individual environmentalisms $\uprho_i$. It is easy to see that this combination is equal to the equilibrium value of the resource and so $\hat{\Rho} = \bar{x}$, where $\bar{x} = \lim_{t\rightarrow \infty}x(t)$ (if the limit exists). Given this aggregation mechanism, the coupled resource and aggregate consumption dynamics are expressible as
\begin{align}
\label{eq:hom_sys2} 
\begin{split}
	&\dot{x} = (1-x)x - x Y,\quad \dot{Y} = \n \,\hat{\Beta} \, \hat{\Alpha} \, (x - \hat{\Rho}),
\end{split}
\end{align}
where $\hat{\Beta}, \hat{\Alpha}$ and $\hat{\Rho}$ are chosen in accordance to \eqref{eq:hom2_cond}.

\subsection{Aggregation for Self-directed Populations with \newline Unknown Characteristics}
Consider a self-directed consumer network whose dynamics are given by \eqref{eq:hom_sys2}. If information of the population characteristics are not available, one may aggregate the system as
\begin{align}
\label{eq:un_obs} 
\begin{split}
	&\dot{\tilde{x}} = (1-\tilde{x})\tilde{x} - \tilde{x} \tilde{Y},\quad \dot{\tilde{Y}} = \n \,\tilde{\Beta} \, \tilde{\Alpha} \, (x - \tilde{\Rho}),
\end{split}
\end{align}
where $\tilde{\Beta}$, $\tilde{\Alpha}$ and $\tilde{\Rho}$ are chosen at the aggregator's discretion. One can then define an error vector $e = [e_x \, e_Y]^T$ such that $e_x = x - \tilde{x}$ and $e_Y = Y - \tilde{Y}$. From \eqref{eq:un_obs} and \eqref{eq:hom_sys2}, the steady state error is expressible as follows
\begin{align*}
	&\bar{e}_x = \bar{x} - \tilde{\Rho}, \quad \bar{e}_Y = (1-\bar{x}) - (1-\tilde{\Rho}),
\end{align*}
which shows that if $\tilde{\Rho}$ is chosen to be equal to $\bar{x}$, then the steady state error is zero regardless of the choice of $\tilde{B}$ and $\tilde{\Alpha}$. Figure \ref{fig:compare} shows the behavior of this approximation scheme when the initial conditions are known perfectly. It can be seen that the error exists only during the transient.

\begin{figure}[t!]
    \begin{center}
    \captionsetup{width=\linewidth}
    \begin{subfigure}[t]{0.5\linewidth}
        \captionsetup{width=0.9\linewidth}
        \includegraphics[width = \linewidth]{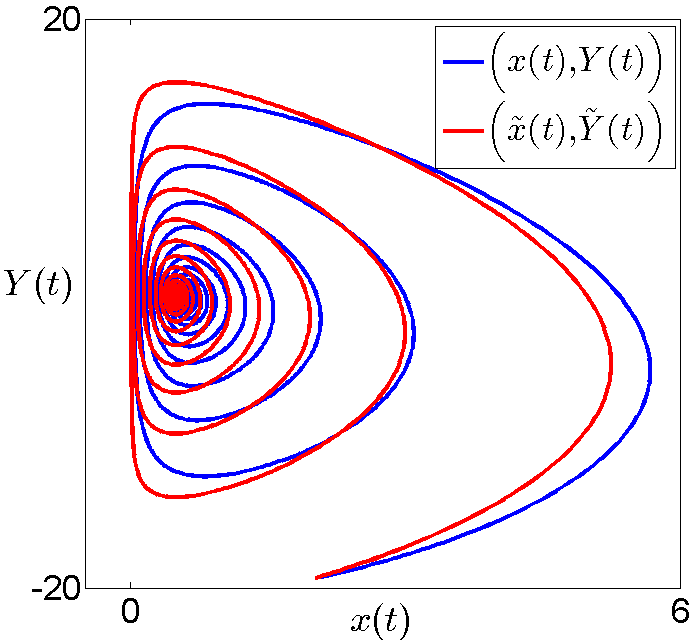}
    \end{subfigure}%
    \begin{subfigure}[t]{0.5\linewidth}
        \captionsetup{width=0.9\linewidth}
        \includegraphics[width = \linewidth]{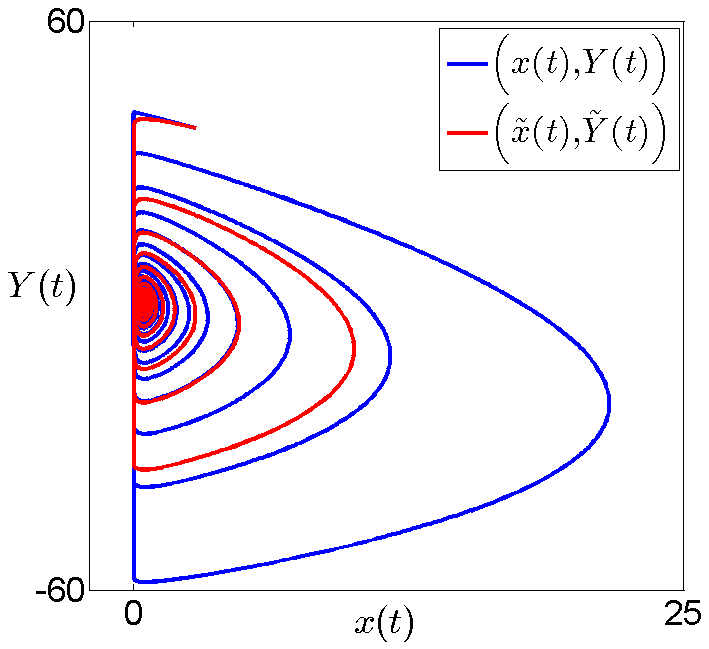}
    \end{subfigure}
   \end{center}
\caption{Trajectories of original and approximated aggregated systems when $\tilde{\Rho}=\bar{x}$, for two different initial conditions. Here $\n = 100$, $\b_i = 1 \,\, \forall \,\, i$, $\w_{ij} = 1/\n \,\, \forall \,\, i \neq j$, $\upalpha_i \sim \mathrm{U}(0,1)$ and $\uprho_i \sim \mathrm{U}(0,1)$.}
\label{fig:compare}
\end{figure}

This suggests that for systems at steady state for which information on the resource is readily available, one may aggregate such that $\tilde{\Rho} = \bar{x}$ and choose any estimate for parameters $\tilde{\Beta}$ and $\tilde{\Alpha}$. In the event of any disturbances, it is then guaranteed that the error in aggregation will exist only during the transient.

\section{Discussion}

In this chapter we have studied the structural effects of the resource consumption model under an aggregation mechanism via homogeneous communities present in the population. The aggregation yields a concise representation of extremely complex networks, which not only makes the analysis tractable, but can also aid in informed decision making at the macro-level. While such block models are frequently employed in practice, in most cases the condition of homogeneity within groups is more of an approximation than ground reality. We have also studied an approximate aggregation scheme for such cases. Building on such techniques, it may be possible to quantify the error of such approximations so that an appropriate level of confidence may be attached to the resulting decisions. The analysis of this chapter has also revealed that the lumped parameter model is simply a scaled version of the original model. This allows us to view the agents in the network generically as groups of consumers since a single individual is simply a group with one member. 

The question on how different individual actions can be aggregated to obtain social dynamics at the societal level is still an open one, particularly in the context of natural resource governance. The problem however, has attracted the attention of the economics community for quite a while now. One of the first mathematical treatments on the subject is widely attributed to Henri Theil, who published his book ``Linear Aggregation of Economic Relations" in 1954 \cite{theil1954linear}. There he considers the relationship between macro and micro variables, where the micro variables are expressed via certain ``observation" variables through a linear relationship. Theil then goes on to determine the implied linear relationship between the macro variables  i.e., the sums of the observations and the micro variables. He also considers the errors that may result from aggregation (also see some related publications \cite{grunfeld1960aggregation, misra1969note}). A detailed account of the aggregation problem in economics which includes recent developments is given by Green in \cite{green2015aggregation}, where a distinction is made between aggregation in the context of production functions, economic relations and welfare maximization. A widely acclaimed result in the latter is given by Arrow's Impossibility Theorem \cite{maskin2014arrow} where Arrow shows that it is impossible to convert individual utilities into a single objective function that simultaneously ``satisfies" all individuals.  In this chapter, we have studied the aggregation problem for our model of natural resource consumption with a special focus on the underlying social network, an aspect that has not received much attention in the work mentioned above. The work can thus be viewed as a small step towards bridging mathematical network models \cite{bodin2011social} (that typically use only a small number of nodes) and agent based models \cite{anderies2000modeling}(that lack tractability) in the domain of natural resource governance.

With this, we conclude the modeling exercise of Part \RN{2} of this dissertation. Part \RN{3} explores the potential of the model through the application of various mathematical frameworks. As we will see, the exercise yields stimulating insights regarding the system with interesting implications and guidelines for policy making in NRM problems.

\clearpage

\part{Cybernetic Insights, Applications and Extensions}
\clearpage
\chapter{Understanding Open-loop System Behavior}

\label{chap:open}

The preceding chapters have led to a mathematical model of our assumed social-ecological system. Once such a model is obtained, there exist various methods to analyze system performance \cite{franklin1994feedback, katsuhiko1970modern}. In designing an effective control for the system, it is important to have a basic understanding of the process before the application of any feedback mechanism. This not only gives an indication of the required complexity of the control, but also provides a common basis for performance comparison of various control mechanisms. In this chapter, we undertake a time-domain analysis of the system at two different levels of abstraction: the single agent network and the dual agent network (the $\n$-agent network is the focus of Chapter \ref{chap:n}). Where applicable, we discuss various social and environmental implications of the results that the undertaken analysis yields.

\section{The Single-Agent/Homogeneous Network}

Here we consider the single agent consumer network. The single agent may be thought of as a society with a single consumer, or a homogeneous society (see Chapter \ref{chap:lump}) lumped into a single node. We reproduce the coupled resource and consumption dynamics as follows
\begin{align}
\label{eq:hom_sys1} 
\begin{split}
	&\dot{x} = (1-x)x - x y,\\
	&\dot{y} = \b \, \upalpha \, (x - \uprho).
\end{split}
\end{align}
Here we have dropped the indexed notation for $y_1$ and have simply used $y$ instead since there is only one equation for the social sub-system.

\subsection{Fixed-point Analysis}
Natural resources have been essential inputs to economic development, which is considered necessary to overcome world poverty. It has been argued for decades now that ``development" cannot imply infinite economic growth, as it has obvious restrictions due to the finite limits of the environment, which houses our economic system \cite{perman2003natural,meadows1972limits}. Many influential economists have championed the concept of a steady-state economy, where the question is not that of how to achieve maximum growth, but of how to realize the most attractive equilibrium. In the past, there has been extensive debate on the viability and usefulness for the practice of the concept of the steady-state economy. Some recent studies suggest to interpret the economy's steady-state as an ``unattainable-goal" \cite{kerschner2010economic} and to view its analysis as an efficient tool, which is useful to guide the design of the long-term policy for real-world economies. This gives special importance to examining the steady states of mathematical models of economic growth and resource consumption, for providing long-term strategies especially in the context of sustainability. In this spirit, we analyze the equilibrium of \eqref{eq:hom_sys1}, which is found out to be unique and given as
\begin{align}
\label{eq:hom_equi}
\begin{split}
	&\xbar = \uprho,\\
	&\ybar = 1-\uprho.
\end{split}
\end{align}
Note that according to the definition of the model in Chapter \ref{chap:model}, $\uprho \in \mathbb{R}$. However, from the form of the resource dynamics (see Equation \ref{eq:hom_sys1}), it can be easily seen that for a positive initial resource stock, the stock remains non-negative for all time i.e., if $x(0) \geq 0$, then $x(t) \geq 0 \,\, \forall \,\, t$. Thus, the equilibrium ($\xbar$, $\ybar$) exists only if $\uprho \geq 0$, otherwise no equilibrium exists. In the analysis that follows we assume that this condition holds. 

From Equation \ref{eq:hom_equi} we further note that the resource stock at equilibrium simply equals the environmentalism $\uprho$. Thus, the higher the environmentalism, the higher the equilibrium stock level. However from \ref{eq:hom_equi}, we also see that higher environmentalism reduces $\ybar$ to the point that if $\uprho > 1$ then $\ybar < 0$. This means that if the environmentalism is beyond the natural carrying capacity, the stock level will settle at that value, however it will be at the cost of a negative aggregate consumption effort for the society. In order for the society to harvest a positive resource stock at steady state, it is then necessary that $\uprho \in (0,1)$. 

\subsection{Local Behavior}
\begin{figure*}[t!]
    \centering
    \captionsetup{width=0.85\textwidth}
    \begin{subfigure}[t]{0.45\textwidth}
        \captionsetup{width=0.9\textwidth}
        \centering
        \includegraphics[trim = 30mm 70mm 30mm 70mm, clip, width = \linewidth]{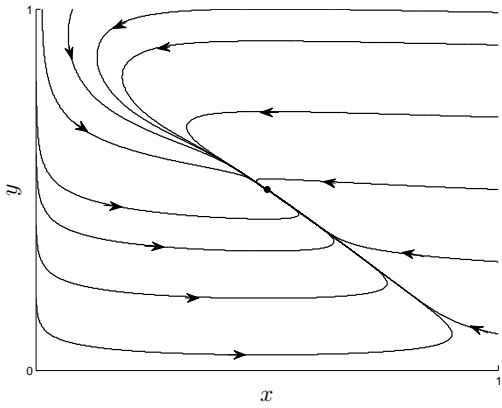}
        \caption{A stable node. Here $\b = 0.1$, $\uprho = 0.5$ and $\upalpha = 0.5$.}
        \label{fig:1agent_node}
    \end{subfigure}%
	\hspace{0.05\textwidth}
    \begin{subfigure}[t]{0.45\textwidth}
        \captionsetup{width=0.9\textwidth}
        \centering
        \includegraphics[trim = 30mm 70mm 30mm 70mm, clip, width = \linewidth]{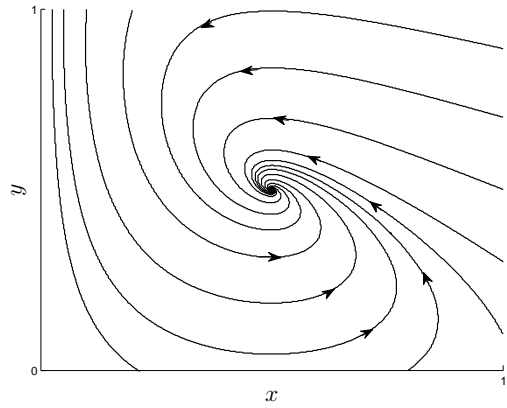}
        \caption{A stable spiral. Here $\b = 1$, $\uprho = 0.5$ and $\upalpha = 0.5$.}
        \label{fig:1agent_spiral}
    \end{subfigure}
    \begin{subfigure}[t]{0.45\textwidth}
        \captionsetup{width=0.9\textwidth}
        \centering
        \includegraphics[trim = 30mm 70mm 30mm 70mm, clip, width = \linewidth]{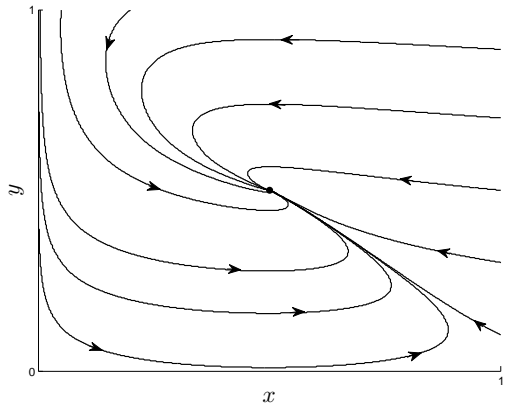}
        \caption{A stable degenerate node. Here $\b = 0.5$, $\uprho = 0.5$ and $\upalpha = 0.25$.}
        \label{fig:1agent_spiral}
    \end{subfigure}
    \caption{The nature of the fixed point ($\xbar, \ybar$), as the system parameters are varied.}
\label{fig:1agent_equi}
\end{figure*}
The linearized system around ($\xbar$, $\ybar$) is given by the matrix
\begin{align*}
	\left[ \begin{array}{cc} -\uprho & -\uprho \\ \b \upalpha & 0 \end{array} \right],
\end{align*}
whose eigenvalues are given by
\begin{align*}
	\uplambda_{1,2} = - \frac{\uprho}{2} \pm \frac{\sqrt{\uprho^2 - 4\b \upalpha \uprho}}{2}.
\end{align*}
As $\uprho > 0$, the eigenvalues have negative real part which implies that ($\xbar$, $\ybar$) is locally stable. The discriminant of the characteristic equation determines how the system approaches the steady state. This categorizes the behavior of the system into three cases (see \cite[Chapter 9]{boyce1969elementary})
\begin{itemize}
	\item \emph{Case 1):} The eigenvalues are real and distinct. In this case $\uprho>4\b \upalpha$ and the equilibrium is a stable node.
	\item \emph{Case 2):} The eigenvalues are complex conjugates. In this case $\uprho<4\b \upalpha$ and the equilibrium is a stable spiral.
	\item \emph{Case 3):} The eigenvalues are real and repeated. In this case $\uprho = 4 \b\upalpha$ and the linearized system has a single independent eigenvector $[-2\,\,1]^T$. In this case, the equilibrium is a stable degenerate node. 
\end{itemize}
This suggests that oscillatory system behavior emerges in the case of low environmental concern and high levels of sensitivity \& preference to ecological information. The same may be observed from Figure \ref{fig:1agent_osc}.
\begin{figure}[b!]
	\captionsetup{width=0.8\textwidth}
	\begin{center}
		\includegraphics[width=\linewidth]{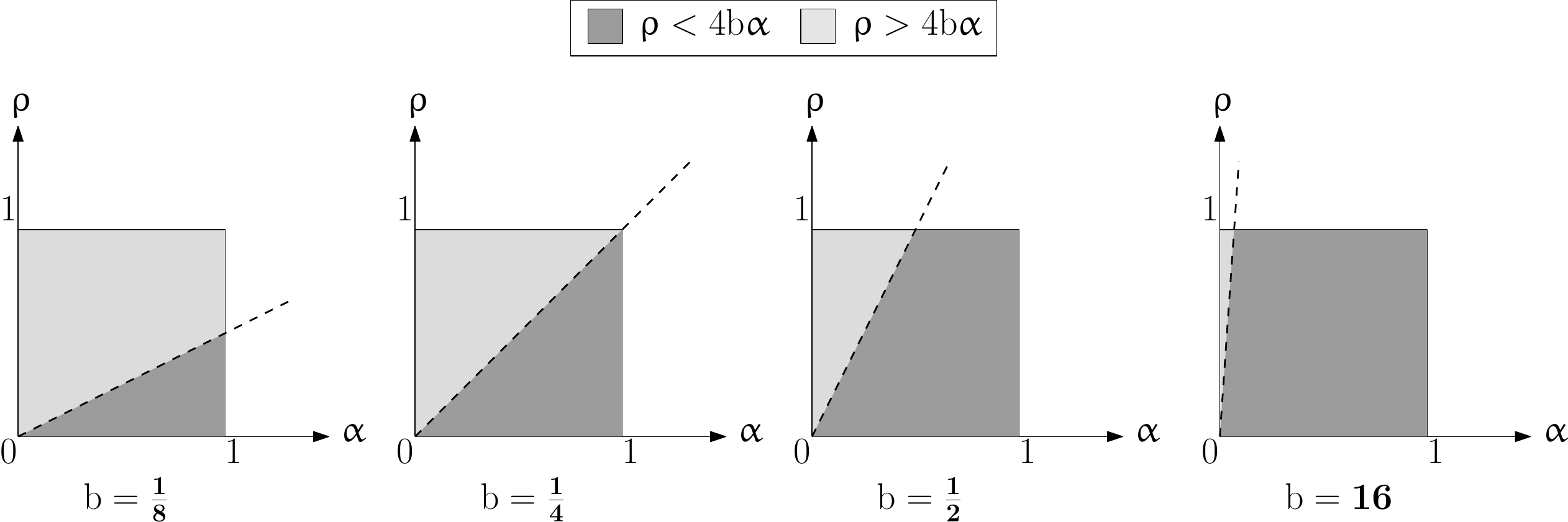}
	\end{center}
	\caption{Regions in the parameter space depecting the nature of the equilibrium for system \eqref{eq:hom_sys1}. The darkly shaded area represents the oscillatory region whereas the lightly shaded area represents the nodal region.}
	\label{fig:1agent_osc}
\end{figure}

\subsection{Global Behavior}
Above, we studied the local stability of the equilibrium which implies that there exists at least some region of the phase space, such that if the system is initiated from any point in this region the dynamics asymptotically approach the equilibrium. This region exists as a non-empty neighborhood of the equilibrium. Since there is only a single fixed point of system \eqref{eq:hom_sys1}, it is possible for this region to cover the entire two-dimensional space. However, since the system is non-linear, there exists the possibility of limit cycles \cite[Chapter 9]{boyce1969elementary}, which can be determined through a global stability analysis.

Introduce new variable $z = \ln x$. System \eqref{eq:hom_sys1} is then transformed as follows
\begin{align*}
\begin{split}
	&\dot{z} = 1 - e^{z} - y, \quad
	\dot{y} = \b \upalpha(e^{z}-\uprho).
\end{split}
\end{align*}
This system has a unique equilibrium which is given by
\begin{align*}
	&\zbar = \ln \uprho, \quad \ybar = 1 - \uprho.
\end{align*}
Introducing further new variables $p = z - \zbar$ and $q = y - \ybar$ to bring the equilibrium point to the origin, we obtain the following system
\begin{align}
\begin{split}
\label{sys_stability1}
	&\dot{p} = \uprho (1-e^p) - q,\\
	&\dot{q} = -\b\upalpha\uprho(1-e^p).\\
\end{split}
\end{align}
\begin{theorem}
	The system \eqref{sys_stability1} is globally stable with all solutions tending to zero.
\end{theorem}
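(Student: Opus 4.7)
The plan is to prove global asymptotic stability via a Lyapunov function, since the standard linearization already guarantees local stability and the structure of the nonlinearity ($1-e^p$ appearing in both equations) suggests a well-chosen energy-like function should work globally.

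First, I would try a candidate of the form
\begin{equation*}
V(p,q) = A\bigl(e^{p} - 1 - p\bigr) + B\, q^{2},
\end{equation*}
with constants $A,B>0$ to be chosen. The term $e^{p}-1-p$ is the natural convex ``potential'' associated with the nonlinearity $e^{p}-1$, is strictly positive for $p\neq 0$, vanishes at $p=0$, and grows to infinity as $|p|\to\infty$; together with $Bq^{2}$ this makes $V$ positive definite and radially unbounded on $\mathbb{R}^{2}$. Differentiating along trajectories of \eqref{sys_stability1} gives
\begin{equation*}
\dot V = A(e^{p}-1)\bigl[\uprho(1-e^{p})-q\bigr] + 2Bq\bigl[-\b\upalpha\uprho(1-e^{p})\bigr],
\end{equation*}
which rearranges to
\begin{equation*}
\dot V = -A\uprho(e^{p}-1)^{2} + (e^{p}-1)q\bigl[-A + 2B\b\upalpha\uprho\bigr].
\end{equation*}
Choosing $B=1$ and $A=2\b\upalpha\uprho$ cancels the cross term exactly, leaving
\begin{equation*}
\dot V = -2\b\upalpha\uprho^{2}(e^{p}-1)^{2} \leq 0.
\end{equation*}

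The second step is to rule out nontrivial invariant sets on $\{\dot V=0\}$. Since $\dot V=0$ forces $e^{p}=1$, i.e.\ $p=0$, on this set the first equation of \eqref{sys_stability1} reduces to $\dot p=-q$; for $p$ to remain at zero we need $q=0$. Thus the largest invariant subset of $\{\dot V=0\}$ is the single point $(0,0)$, and LaSalle's invariance principle (together with radial unboundedness of $V$) yields global asymptotic stability of the origin for system \eqref{sys_stability1}, hence of the original equilibrium $(\bar z,\bar y)$ and, upon undoing $z=\ln x$, of $(\bar x,\bar y)$ for \eqref{eq:hom_sys1}.

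I expect the only real difficulty to be the initial guess of $V$: the standard quadratic $\tfrac{1}{2}(ap^{2}+bq^{2})$ leaves a stubborn cross term because the $p$-nonlinearity is $e^{p}-1$ rather than $p$. Recognising that one should integrate the nonlinearity, i.e.\ replace $\tfrac{1}{2}p^{2}$ by $e^{p}-1-p$, is the key idea; once that substitution is made the coefficient matching, the sign of $\dot V$, and the LaSalle step are all routine.
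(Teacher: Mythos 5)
Your proposal is correct and follows essentially the same route as the paper: the paper's Lyapunov function $V=(e^{p}-p-1)+\tfrac{1}{2\b\upalpha\uprho}q^{2}$ is exactly your candidate divided by $A=2\b\upalpha\uprho$, yielding the same $\dot V=-\uprho(e^{p}-1)^{2}$ up to that scalar. Your LaSalle step is the paper's observation that on the $q$-axis the vector field is $(-q,0)$, transverse to the axis, so the only invariant subset of $\{\dot V=0\}$ is the origin.
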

\begin{proof}
	In order to prove the theorem we construct a Lyapunov function of the following form
\begin{align*}
	V = (e^z - z - 1) + \P q^2,
\end{align*}
which is positive for positive $\P$ everywhere except the origin. Thus if $\dot{V}$ is shown to be negative outside the origin for some positive $\P$,, it would imply global stability of \eqref{sys_stability1}. Differentiating $V$ along the vector field of system \eqref{sys_stability1} we get
\begin{align*}
	\dot{V} &= e^z\dot{z} - \dot{z} + 2 \P q\dot{q}.
\end{align*}
Choose $\displaystyle \P = \frac{1}{2\b\upalpha\uprho}$. Note that as $\uprho$ has been restricted to be positive, $\P>0$ so it does not violate the status of $V$ as a Lyapunov function candidate. Substituting this and \eqref{sys_stability1} in the above equation gives us
\begin{align*}
	\dot{V}= -\uprho {(e^{z}-1)}^2.	
\end{align*}
Now note that the derivative $\dot{V}<0$ outside the $q$-axis, and on this axis the vector field  of \eqref{sys_stability1} has value $(-q,0)$, which is not tangent to the axis. This implies that the system is globally stable, which completes the proof.
\end{proof}

\section{The Dual-Agent/Semi-Homogeneous Network}

\label{sec:open_dual}

We now consider consumer network with two agents. The agents may represent individual consumers or groups of consumers with homogeneous characteristics as described in Chapter~\ref{chap:lump}. The consumption dynamics for the two-agent society are given by
\begin{align}
\hspace{0pt}
\label{eq:two_com_sys}
\begin{split}
	&\dot{x} =  (1-x)x - (y_1+y_2)x,\\
	&\dot{y}_1 = \b_1 \Big( (1-\upnu_1)(x -\uprho_1)- \upnu_1\left( y_1 - y_2 \right) \Big),\\
	&\dot{y}_2 = \b_2 \Big( (1-\upnu_2)(x -\uprho_2)- \upnu_2\left( y_2 - y_1 \right) \Big),
\end{split}
\end{align}
where $\b_i$ and $\upnu_i$ have been redefined during the non-dimensionalization process to include $\w_{ij}$. Thus $\displaystyle \b_i = \frac{(\n_i\a_i \Rmax + \r \s_i\w_{ij})}{\r^2}$ and $\displaystyle \upnu_i = \frac{\r \s_i \w_{ij}}{(\n_i\a_i \Rmax + \r \s_i \w_{ij})}$.

\subsection{Fixed-point Analysis}
By solving the system given by $\dot{x}= \dot{y}_1= \dot{y}_2=0$ (Equation \ref{eq:two_com_sys}) we find that the system has a unique equilibrium which is given by
\begin{align}
\label{eq:equi-2comp}
\begin{split}
	&\xbar =   \frac{\upalpha_1 \upnu_2\uprho_1 + \upalpha_2 \upnu_1 \uprho_2}{\upalpha_2 \upnu_1 + \upalpha_1\upnu_2}, \\[10pt]
	& \ybar_1 = \frac{(1-\uprho_1)\upalpha_1\upnu_2 + (1-\uprho_2)\upalpha_2\upnu_1 - (\uprho_1-\uprho_2)\upalpha_1\upalpha_2}{2 \left( \upalpha_2 \upnu_1 + \upalpha_1\upnu_2 \right)}, \\[10pt]
	& \ybar_2 = \frac{(1-\uprho_1)\upalpha_1\upnu_2 + (1-\uprho_2)\upalpha_2\upnu_1  - (\uprho_2-\uprho_1)\upalpha_1\upalpha_2}{2 \left( \upalpha_2 \upnu_1 + \upalpha_1\upnu_2 \right)}. 
\end{split}
\end{align}
Note here that $\xbar$ is a convex combination of the $\uprho_i$'s. If the point representing this combination has a negative value, then due to \eqref{eq:two_com_sys}, $\xbar$ is not reachable if $x(0)>0$, and so, no equilibrium exists. Thus, similar to our argument for the single-agent case in the previous section, we will restrict our attention to the case $\uprho_1, \uprho_2 \in (0,1)$. This results in $\xbar$ turning out to lie between $0$ and $1$ for all admissible parameter values, which ensures a non-negative resource stock that does not exceed the natural carrying capacity at steady state. The efforts $\ybar_i$ can be either both positive, or have different signs, the interpretation of which has been discussed in Chapter \ref{chap:model}. If equilibrium \eqref{eq:equi-2comp} is stable, the society converges to a stable resource stock and stable consumption rate regardless of the initial state of the system. 

Before proceeding with the analysis, let us point out that there are two degenerate cases of the equilibrium \eqref{eq:equi-2comp}. First, if both groups have extremely low social value, i.e., if $\upnu_1=\upnu_2=0$, then an infinite number of equilibria exist if the groups are equally environmental ($\uprho_1=\uprho_2$), otherwise no equilibrium exists. Second, if both groups have extremely high social value, i.e., if $\upnu_1=\upnu_2=1$, then infinite equilibria exist for all values of $\uprho_1$ and $\uprho_2$. We ignore these pathological cases in the subsequent analyses.

\subsubsection{Comparative Statics}

While the dependence of the equilibrium on the various system parameters was relatively straightforward for the single-agent network, such is not the case for the two-agent network. Here we inspect for system \eqref{eq:two_com_sys}, how the magnitudes of the steady-state resource quantity $\xbar$ and the steady-state consumption efforts $\ybar_1$ and $\ybar_2$ change in response to changes in the parameters $\uprho_1$, $\uprho_2$, $\upnu_1$, and $\upnu_2$. We do this by conducting a comparative statics analysis \cite{perman2003natural}, which is undertaken by obtaining the first-order partial derivatives of each of the three equations in \eqref{eq:equi-2comp}, and determining whether or not an unambiguous sign can be attached to each partial derivative. Table \ref{tab:comp_stat} shows the results for the analysis. A plus sign means that the derivative is positive, a minus sign means that the derivative is negative, ``0" means that the derivative is zero and a ``?" means that no sign can be assigned unambiguously to the derivative. 

\begin{table}[t]
\centering
\caption{Comparative statics results for system \eqref{eq:two_com_sys}}
\begin{mdframed}
\vspace{10pt}
\hspace{5pt}
\begin{tabular}{c|c|c|c|c|}
	\cline{2-5}
	 & & & & \\[-1em]
	& $\upnu_1$ & $\upnu_2$ & $\rho_1$ & $\uprho_2$ \\ \cline{1-5}
	\multicolumn{1}{ |c| }{ } & & & & \\[-1em]
	\multicolumn{1}{ |c| }{$\xbar$} & 0 & 0 & + & +     \\ \cline{1-5}
	\multicolumn{1}{ |c| }{ } & & & & \\[-1em]
	\multicolumn{1}{ |c| }{$\ybar_1$} & 0 & 0 & - & ? \\ \cline{1-5}
	\multicolumn{1}{ |c| }{ } & & & & \\[-1em]
	\multicolumn{1}{ |c| }{$\ybar_2$} & 0 & 0 & ? & - \\ \cline{1-5}
	\multicolumn{5}{ c }{ } \\
	\cline{2-4}
	\multicolumn{1}{ c }{ } & \multicolumn{3}{ |c| }{$\uprho_1 = \uprho_2$} & \multicolumn{1}{ c }{ } \\ 	\cline{2-4}
\end{tabular}
\hspace{15pt}
\begin{tabular}{c|c|c|c|c|}
	\cline{2-5}
	 & & & & \\[-1em]
	& $\upnu_1$ & $\upnu_2$ & $\rho_1$ & $\uprho_2$ \\ \cline{1-5}
	 \multicolumn{1}{ |c| }{ } & & & & \\[-1em]
	\multicolumn{1}{ |c| }{$\xbar$} & - & + & + & +     \\ \cline{1-5}
	 \multicolumn{1}{ |c| }{ } & & & & \\[-1em]
	\multicolumn{1}{ |c| }{$\ybar_1$} & + & ? & - & ? \\ \cline{1-5}
	 \multicolumn{1}{ |c| }{ } & & & & \\[-1em]
	\multicolumn{1}{ |c| }{$\ybar_2$} & ? & - & ? & - \\ \cline{1-5}
	\multicolumn{5}{ c }{ } \\
	\cline{2-4}
	\multicolumn{1}{ c }{ } & \multicolumn{3}{ |c| }{$\uprho_1 > \uprho_2$} & \multicolumn{1}{ c }{ } \\ 	\cline{2-4}
\end{tabular}
\hspace{15pt}
\begin{tabular}{c|c|c|c|c|}
	\cline{2-5}
	 & & & & \\[-1em]
	& $\upnu_1$ & $\upnu_2$ & $\rho_1$ & $\uprho_2$ \\ \cline{1-5}
	 \multicolumn{1}{ |c| }{ } & & & & \\[-1em]
	\multicolumn{1}{ |c| }{$\xbar$} & + & - & + & +     \\ \cline{1-5}
	 \multicolumn{1}{ |c| }{ } & & & & \\[-1em]
	\multicolumn{1}{ |c| }{$\ybar_1$} & - & ? & - & ? \\ \cline{1-5}
	 \multicolumn{1}{ |c| }{ } & & & & \\[-1em]
	\multicolumn{1}{ |c| }{$\ybar_2$} & ? & + & ? & - \\ \cline{1-5}
	\multicolumn{5}{ c }{ } \\
	\cline{2-4}
	\multicolumn{1}{ c }{ } & \multicolumn{3}{ |c| }{$\uprho_1 < \uprho_2$} & \multicolumn{1}{ c }{ } \\ 	\cline{2-4}
\end{tabular}
\vspace{10pt}
\end{mdframed}
\label{tab:comp_stat}
\end{table}

From Table \ref{tab:comp_stat}, we see that an increase (decrease) in the levels of environmentalism $\uprho_1$ and $\uprho_2$, always increases (decreases) the steady-state resource stock $\xbar$. Thus higher environmentalism results in a larger resource stock at steady-state. In contrast, an increase in the environmentalism $\uprho_i$ of individual $i$, decreases the steady-state consumption $\ybar_i$ of $i$, which means that environmental individuals consume less at steady-state (the same can be seen in Figure \ref{fig:2agent_rho}). It can also be noted that the effect of $\uprho_i$ on $\ybar_j$, where $i\neq j$, is ambiguous and not much can be said by simply observing the first-order partial derivatives.
 
The effect of the social relevances $\upnu_1$ and $\upnu_2$ on the steady state varies according to the relative magnitudes of the environmentalisms $\uprho_1$ and $\uprho_2$. When both groups are equally environmental $\uprho_1=\uprho_2$ then the social values do not affect the steady-state. When this is not the case, the social value of the relatively environmental (non-environmental) group has a negative (positive) effect on the steady-state resource stock. Furthermore, the social value of the relatively environmental (non-environmental) group has a positive (negative) effect on the group's own steady-state consumption effort (the same can be observed in Figure \ref{fig:2agent_nu}). We see that the effects of the parameters on the relationship between the steady-state consumption efforts are not entirely revealed by this exercise, and require additional analysis to uncover, which is done in the next section. 

\begin{figure*}[t!]
    \begin{center}
    \captionsetup{width=0.9\linewidth}
    \begin{subfigure}[t]{0.5\linewidth}
        \captionsetup{width=0.9\linewidth}
        \includegraphics[width = \linewidth]{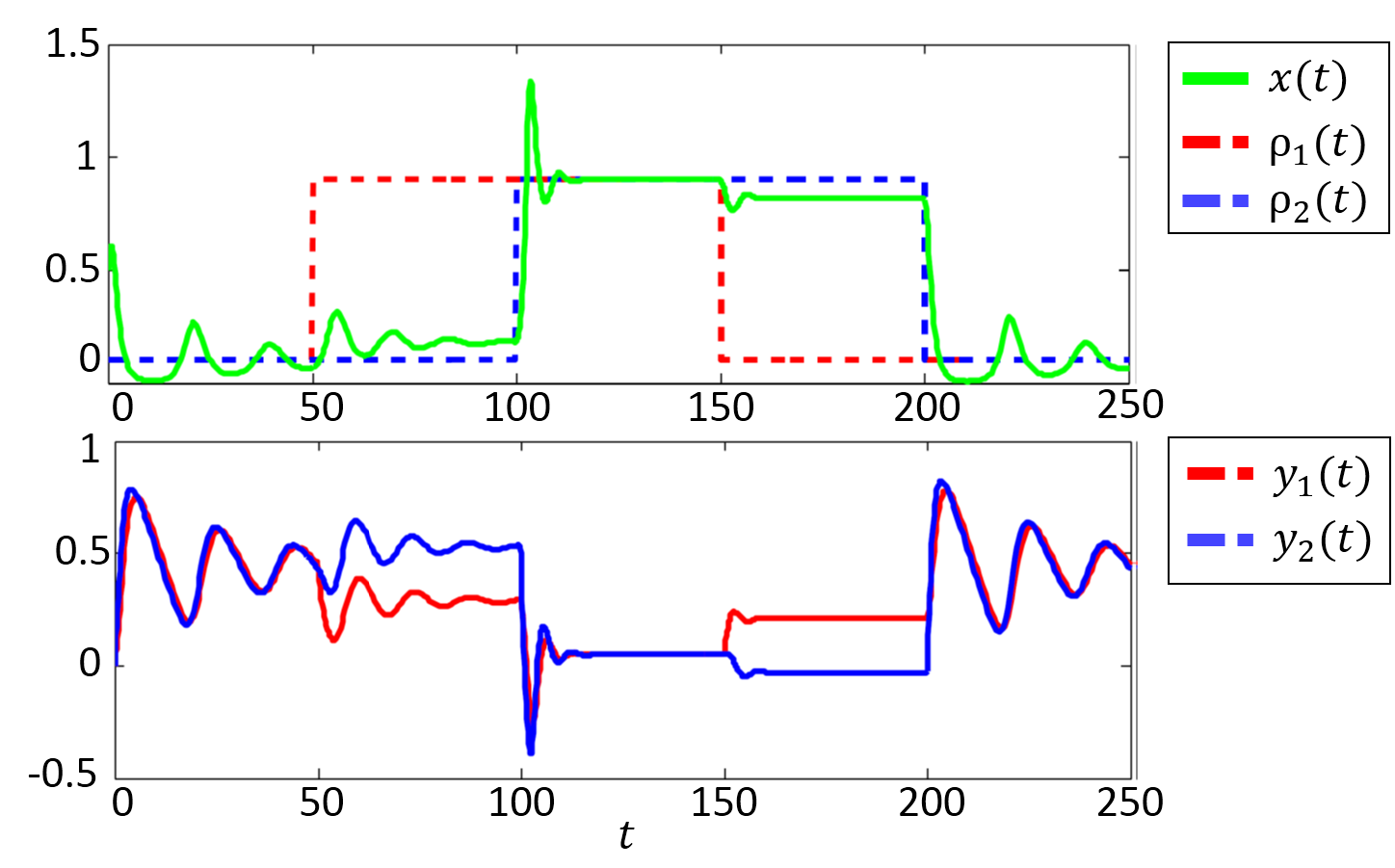}
        \caption{Change in the stock and consumption as the environmentalisms $\uprho_1$ and $\uprho_2$ are varied. It can be seen that due to group 2's higher ecological relevance, the resource follows $\uprho_2$ more closely. It can also be seen that the relatively environmental group 1 consumes less than the non-environmental group 2.}
        \label{fig:2agent_rho}
    \end{subfigure}%
    \begin{subfigure}[t]{0.5\linewidth}
        \captionsetup{width=0.9\linewidth}
        \includegraphics[width = \linewidth]{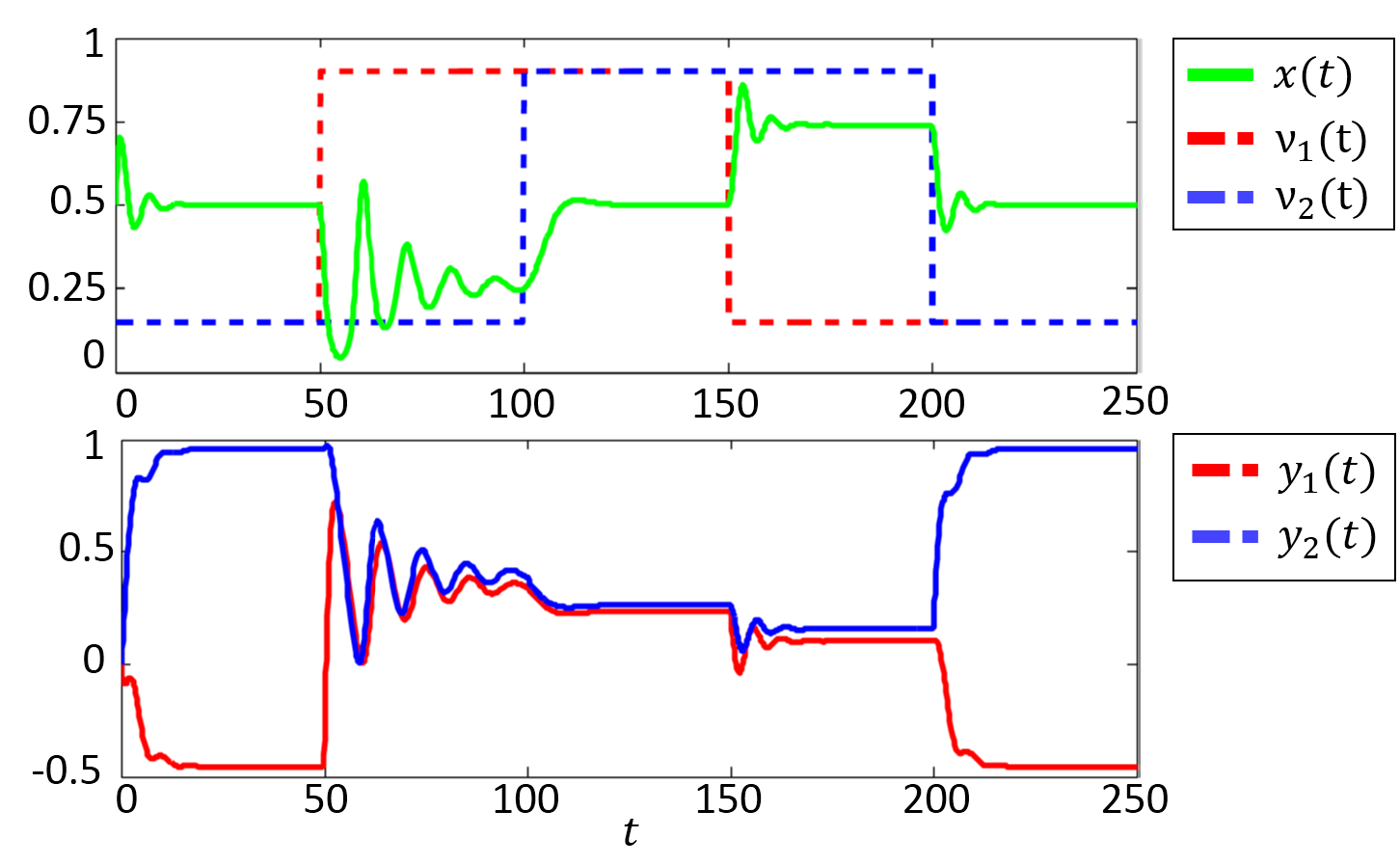}
        \caption{Change in stock and consumption as the social relevances $\upnu_1$ and $\upnu_2$ are varied. It can be seen that when the social relevances are identical, the resource converges at the average of the set-points with the non-environmental group enjoying a higher consumption. However when one of the group's social-relevance is relatively lower, the resource converges at the environmentalism of that group.}
        \label{fig:2agent_nu}
    \end{subfigure}
    \caption{Change in steady state stock and consumption efforts in the two-community model, as the population characteristics are varied. The nominal values selected for this simulation are $\b_1 = \b_2 = 1$, $\upnu_1=0.75$, $\upnu_2=0.25$, $\uprho_1 = 0.75$, $\uprho_2 = 0.25$ and $x(0) = 0.5$ and $y_1(0) = y_2(0) = 0$.}
\label{fig:2agent_openloop}
   \end{center}
\end{figure*}

\subsubsection{The Free-riding Phenomenon}

Ostrom's work \cite{ostrom1990governing} reports several real-world examples of free-riding in open-access resource settings, which is a major cause of concern for the successful governance of such resources. The free-riding phenomenon refers to situations where certain individuals (the self-reliants) sacrifice their consumption for the sustenance of the resource, while simultaneously, other individuals (the free-riders) continue to enjoy the benefits of consumption, which would not have been possible without the sacrifice of the self-reliants in the first place. Thus the free-riders enjoy a ``free lunch" that is being served in-directly by the self-reliants. System \eqref{eq:two_com_sys} captures this through two possible types of equilibrium. We label these two types as the ``self-reliant" equilibrium and the ``free-riding" equilibrium. The self-reliant equilibrium represents both consumer groups harvesting at a positive rate (both $\ybar_1$ and $\ybar_2$ of \eqref{eq:equi-2comp} are positive). The free-riding equilibrium represents one of the groups harvesting at a positive rate and the other harvesting at a negative rate (one of $\ybar_1$ and $\ybar_2$ is positive, while the other negative). Thus one of the groups exerts effort into increasing the resource quantity, while the other free-rides and enjoys a positive consumption.
\begin{figure}[h!]
	\captionsetup{width=0.8\textwidth}
	\begin{center}
		\includegraphics[width=0.9\linewidth]{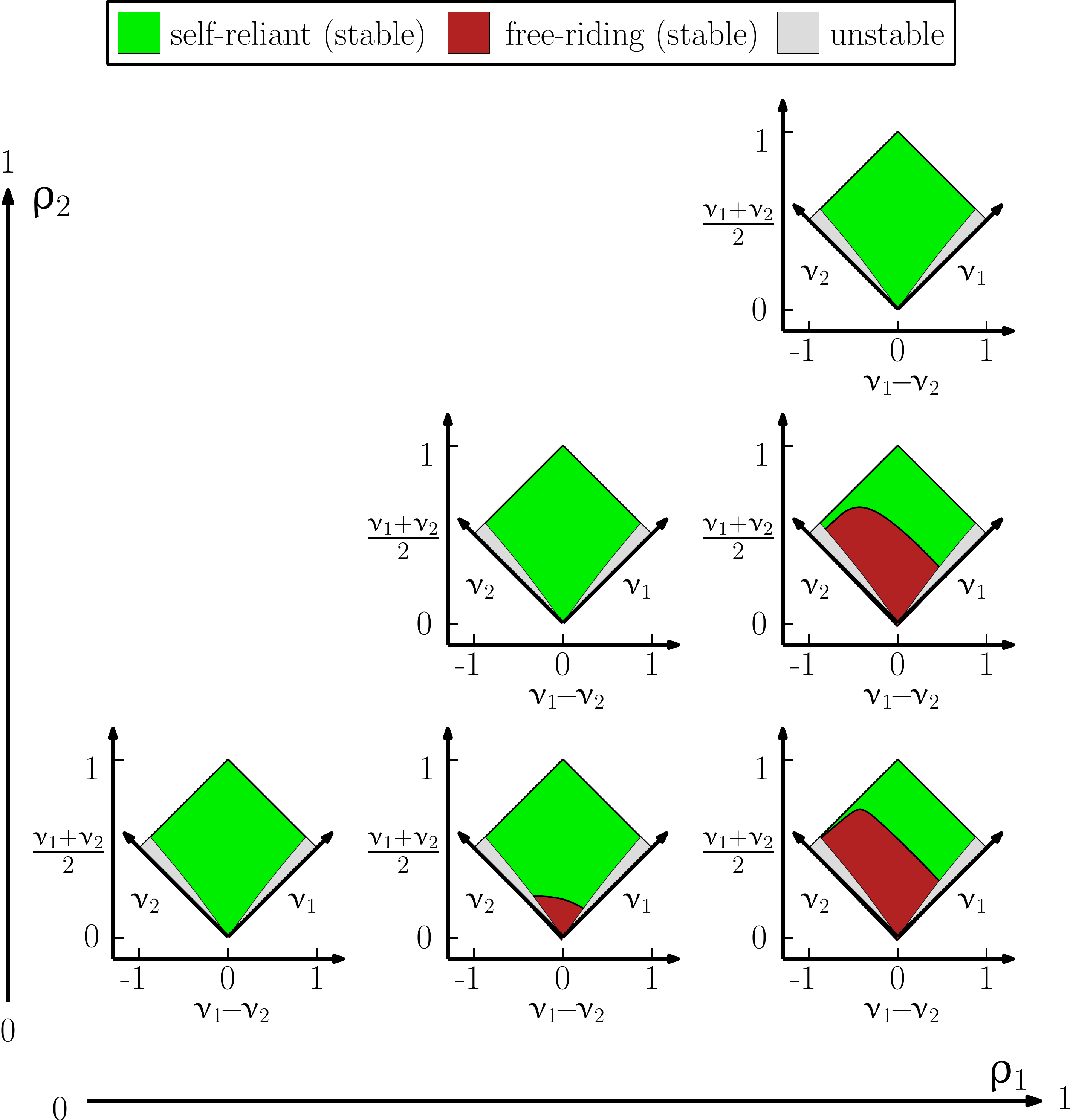}
	\end{center}
	\caption{Positivity of the equilibrium \eqref{eq:equi-2comp} in the $\upnu_1$, $\upnu_2$ plane for different values of $\uprho_1$ and $\uprho_2$. The plots are symmetric in $\uprho_1$ and $\uprho_2$and so the missing plots are simply a reflection of the existing plots about the symmetric axis. The free-riding equilibrium corresponds to $\ybar_1<0<\ybar_2$.}
	\label{fig:free_ride} 
\end{figure}
An exhaustive simulation of the equilibrium consumption values in the parameter space is shown in Figure \ref{fig:free_ride}. We see that free-riding is excluded only when both groups have the same environmentalism level $\uprho_i$. When groups are different in what concerns their environmentalism, the equilibrium is self-reliant if the group corresponding to the higher level of environmentalism has a social relevance beyond a certain level. This level can be interpreted as the reluctance of the relatively environmental group to subsidize the consumption of the relatively non-environmental group. The reluctance increases with increasing levels of social relevance, and indirectly with the level of cooperativeness of the environmental group. Thus the more cooperative a group becomes (consequently possessing a higher social relevance) the more reluctant it is to subsidize the non-environmental group. This is consistent with the postulate that cooperative individuals promote equality, and favor outcomes which maximize joint benefits (contrasted with altruistic individuals, who willingly sacrifice their own benefit in order to maximize the benefit of others). Interestingly, it has been observed that cooperative individuals are also likely to have high levels of environmentalism and vice-versa \cite{de2010relationships, garling2003moderating}, which in the context of our model implies that it is really the social relevance of the groups that determines whether they end up in a self-reliant or free-riding equilibrium.

\subsection{Local Behavior}
The local stability of the equilibrium point of system \eqref{eq:two_com_sys} can be found by examining the eigenvalues of the linearized system about that point. The Jacobian matrix is given by
\begin{align*}
	\left[ \begin{array}{ccc} 1-2x-y_1-y_2 & -x & -x \\ \b_1 \upalpha_1 & -\b_1 \upnu_1 & \b_1 \upnu_1 \\ \b_2 \upalpha_2 & \b_2 \upnu_2 & -\b_2 \upnu_2 \end{array} \right].
\end{align*}
Evaluation at the equilibrium ($\bar{x}$, $\bar{y}_1$, $\bar{y}_2$) results in
\begin{align*}
	\left[ \begin{array}{ccc} -\frac{\upalpha_1\upnu_2\uprho_1+\upalpha_2\upnu_1\uprho_2}{\upalpha_2\upnu_1+\upalpha_1\upnu_2} & -\frac{\upalpha_1\upnu_2\uprho_1+\upalpha_2\upnu_1\uprho_2}{\upalpha_2\upnu_1+\upalpha_1\upnu_2} & -\frac{\upalpha_1\upnu_2\uprho_1+\upalpha_2\upnu_1\uprho_2}{\upalpha_2\upnu_1+\upalpha_1\upnu_2} \\ \b_1 \upalpha_1 & -\b_1 \upnu_1 & \b_1 \upnu_1 \\ \b_2 \upalpha_2 & \b_2 \upnu_2 & -\b_2 \upnu_2  \end{array} \right],
\end{align*}
whose eigenvalues are given by the roots of the following characteristic polynomial
\begin{align}
	\begin{split}
	\!\!\!\!p(\uplambda)=&\uplambda^3 \quad+ \frac{1}{\upalpha_2\upnu_1+\upalpha_1\nu_2} \Big((\b_1\upnu_1 +\b_2\upnu_2) (\upalpha_2\upnu_1+\upalpha_1\upnu_2) + \upalpha_2\upnu_1\uprho_2+\upalpha_1\upnu_2\uprho_1 \Big) \uplambda^2 \\
	+ &\frac{\upalpha_2\upnu_1\uprho_2+\upalpha_1\upnu_2\uprho_1}{\upalpha_2\upnu_1+\upalpha_1\upnu_2} \Big(\b_1 + \b_2\Big)\uplambda \quad + 2\b_1\b_2(\upalpha_2\upnu_1\uprho_2+\upalpha_1\upnu_2\uprho_1).
	\end{split}
\end{align}
the final expressions for the roots of this polynomial are not simple enough to work with analytically. It can be noted however that all the coefficients of $p(\uplambda)$ are positive, which is a necessary (but not sufficient) condition for all roots to be negative and hence the linearized system to be stable. Whether or not the equilibrium is indeed stable can be checked by Routh's Stability Criterion \cite{franklin1994feedback}, which for cubic polynomials of the form $ax^3 + bx^2 + cx + d$ is given by $bc > ad$. For $p(\uplambda)$, this inequality is given as
\begin{align*}
	\begin{split}
	\frac{\upalpha_2\upnu_1\uprho_2+\upalpha_1\upnu_2\rho_1}{(\upalpha_2\upnu_1+\upalpha_1\upnu_2)^2} (\b_1 + \b_2)\Big((\b_1\upnu_1 + \b_2\upnu_2) (\upalpha_2\upnu_1+\upalpha_1\upnu_2) + \upalpha_2\upnu_1\uprho_2+\upalpha_1\upnu_2\uprho_1 \Big)\\
	> 2\b_1\b_2(\upalpha_2\upnu_1\uprho_2+\upalpha_1\upnu_2\uprho_1),
	\end{split}
\end{align*}
which is simplified to
\begin{align}
\label{eq:stab_1} 
	\frac{\b_1 + \b_2}{(\upalpha_2\upnu_1+\upalpha_1\upnu_2)^2}\Big((\b_1\upnu_1 +\b_2\upnu_2) (\upalpha_2\upnu_1+\upalpha_1\upnu_2) + \upalpha_2\upnu_1\uprho_2+\upalpha_1\upnu_2\uprho_1 \Big) - 2b_1b_2 > 0.
\end{align}
It can be checked that \eqref{eq:stab_1} does not hold for all $\b_1, \b_2, \uprho_1, \uprho_2, \upnu_1, \upnu_2$. An example of one such point in the parameter space is $\b_1 = 0.2, \b_2 = 0.1, \uprho_1 = 0.001, \uprho_2 = 0.1, \upnu_1 = 0.01, \upnu_2 = 0.9$. On the other hand, the following inequality
\begin{align}
\label{eq:stab_1.5}
\frac{(\b_1 + \b_2)(\b_1\upnu_1 +\b_2\upnu_2)}{(\upalpha_2\upnu_1+\upalpha_1\upnu_2)}> 2\b_1\b_2,
\end{align}
clearly offers a sufficient condition for \eqref{eq:stab_1} to hold. It can be simplified to 
\begin{align}
\label{eq:stab_2}
	\mathrm{q}(b) = \upnu_1 b^2 + \big( \upnu_1(2\upnu_2-1) + \upnu_2(2\upnu_1 -1) \big) b + \upnu_2 > 0,
\end{align}
where $b = \b_1/\b_2$. Note that the right hand side of \eqref{eq:stab_2} is a quadratic polynomial in $b>0$. As $\upnu_i>0$, the graph of $\mathrm{q}(\cdot)$ points upwards and the roots are either complex conjugates or both real and positive. In what follows we examine the implications of each case in detail.
\begin{figure}[t!]
	\captionsetup{width=0.8\textwidth}
	\begin{center}
		\includegraphics[width=0.5\linewidth]{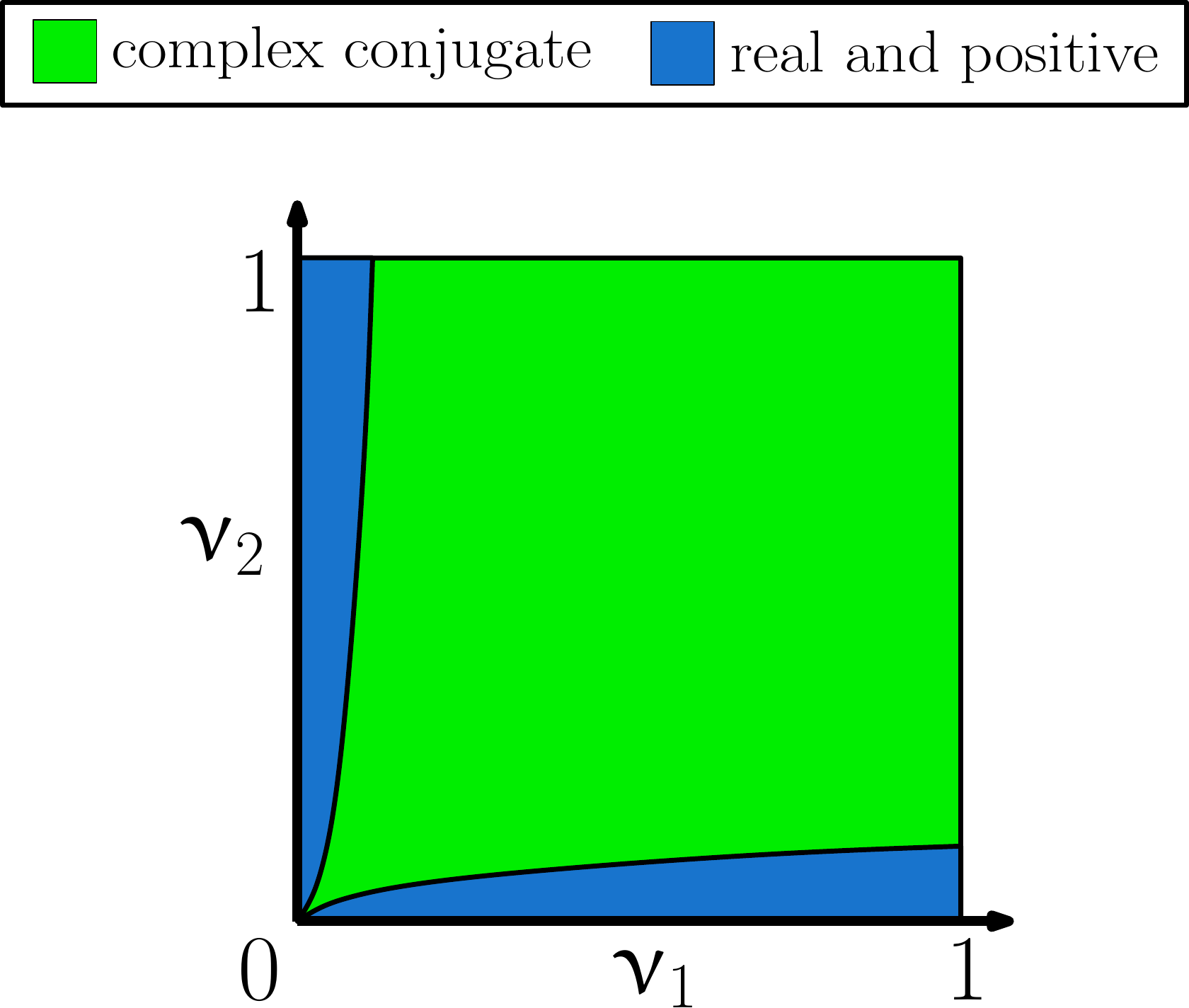}
	\end{center}
	\caption{Evaluation of the roots of $\mathrm{q}(\cdot)$.}
	\label{fig:stab_1} 
\end{figure}
\begin{enumerate}
	\item Case 1: $\mathrm{q}(\cdot)$ has complex conjugate roots i.e., the discriminant is negative.
	\begin{align}
	\label{eq:stab_3}
		{\big( \upnu_1(2\upnu_2-1) + \upnu_2(2\upnu_1 -1) \big)}^2 - 4\upnu_1\upnu_2 < 0.
	\end{align}
	In this case \eqref{eq:stab_2} will always hold as the graph of the quadratic polynomial will lie above the horizontal axis. 		
	\item Case 2: $\mathrm{q}(\cdot)$ has both roots real and positive. This happens if
	\begin{align}
	\label{eq:stab_3.5}
		{\big( \upnu_1(2\upnu_2-1) + \upnu_2(2\upnu_1 -1) \big)}^2 - 4\upnu_1\upnu_2 \geq 0,
	\end{align}
	and
	\begin{align}
	\label{eq:stab_4}
		- \upnu_1(2\upnu_2-1) - \upnu_2(2\upnu_1 -1) > 0
	\end{align}
	{\setlength{\parindent}{0cm} In this case \eqref{eq:stab_2} does not hold for those $b$ which lie between the~roots of~$\mathrm{q}(\cdot)$.}
\end{enumerate}
\begin{figure}[b!]
	\captionsetup{width=0.8\textwidth}
	\begin{center}
		\includegraphics[width=0.8\linewidth]{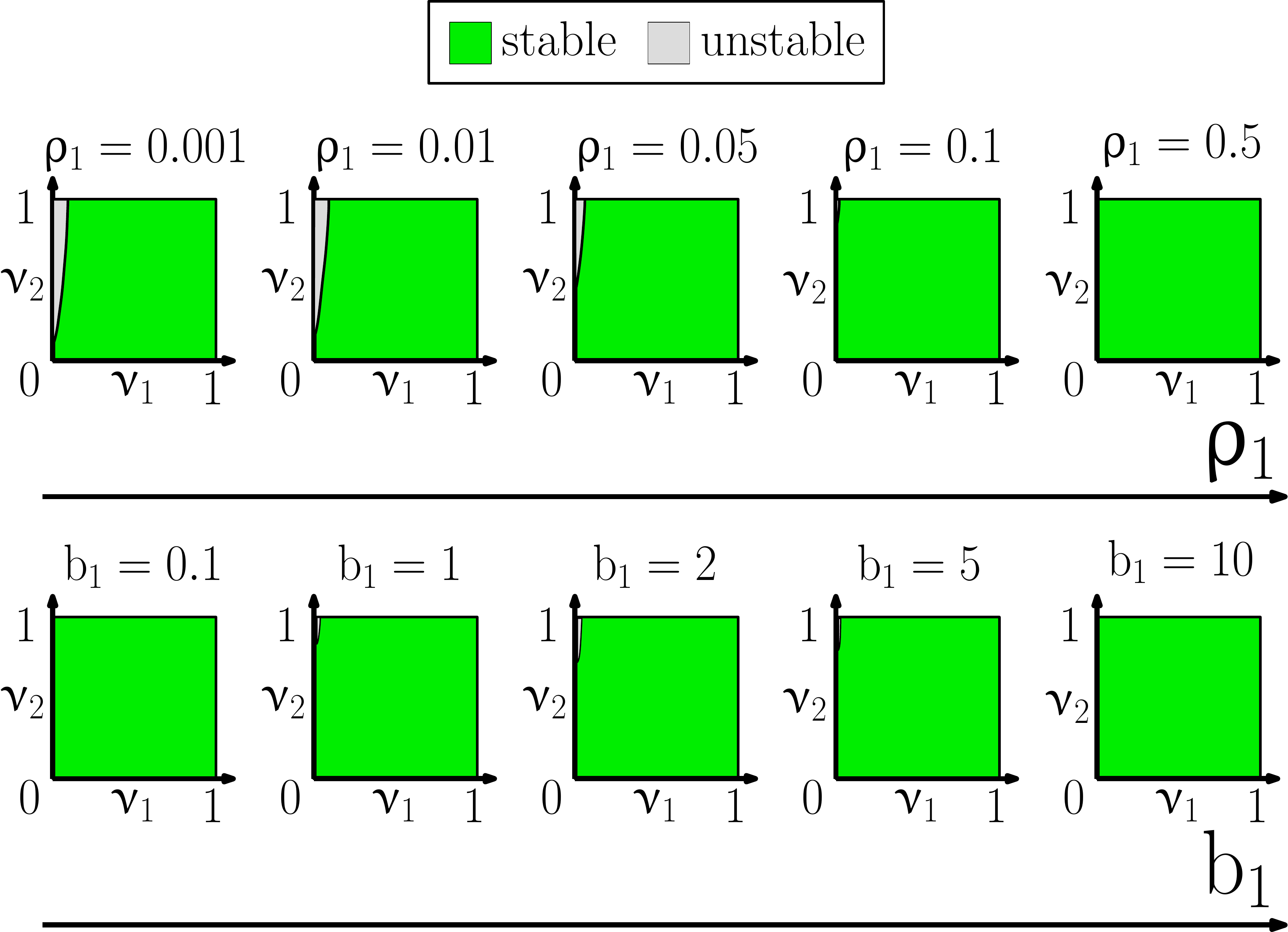}
	\end{center}
	\caption{Here we evaluate the local stability of \eqref{eq:two_com_sys}, through inequality \eqref{eq:stab_1}, as the parameters are varied. The nominal values for the parameters are $\b_1=0.2$, $\b_2=0.1$,  and $\uprho_1=\uprho_2=0.1$. The plots obtained by interchanging the indices of the groups are symmetric to the ones displayed above and thus are not shown here.}
	\label{fig:stab_2} 
\end{figure}
Thus \eqref{eq:stab_3} is a sufficient condition for local stability. However if \eqref{eq:stab_3} does not hold, then the system is stable if \eqref{eq:stab_3.5} and \eqref{eq:stab_4} hold true and $b$ does not lie between the roots of $\mathrm{q}(\cdot)$. The region in the parameter space where these inequalities hold true are shown in Figure \ref{fig:stab_1}.

Figure \ref{fig:stab_2} shows a numerical simulation in the space $\upnu_1, \upnu_2 \in (0,1)$ to determine when condition \eqref{eq:stab_1} holds. We see that stability is guaranteed in a major portion of the $(\upnu_1, \upnu_2)$ space regardless of the values of the rest of the parameters. Also note that for small values of $\uprho_1, \uprho_2$, \eqref{eq:stab_1} may be approximated by \eqref{eq:stab_1.5}.

\subsection{Global Behavior}
Here we consider the global stability properties of the fixed point \eqref{eq:equi-2comp} for system \eqref{eq:two_com_sys}. We do this by transforming the system to an appropriate form and prove the validity of a Lyapunov function candidate. We proceed as follows.

 Introduce new variables $\tilde{z} = \ln x$, $\tilde{u} = y_1+y_2$ and $\tilde{w} = y_1 - y_2$. System \eqref{eq:two_com_sys} is then transformed as follows
\begin{align*}
\begin{split}
	&\dot{\tilde{z}} = 1 - e^{\tilde{z}} - \tilde{u}, \\
	&\dot{\tilde{u}} = \A(e^{\tilde{z}}-1) + \D + \b\tilde{w},\\
	&\dot{\tilde{w}} = \a(e^{\tilde{z}}-1) - \d -\B\tilde{w},
\end{split}
\end{align*}
where $\A = \b_1(\!1\!-\!\upnu_1\!) + \b_2(\!1\!-\!\upnu_2\!)$, $\a = \b_1 (\!1\!-\!\upnu_1\!) - \b_2 (\!1\!-\!\upnu_2\!)$, $\B = \b_1\upnu_1+\b_2\upnu_2$, $\b = -\b_1\upnu_1 + \b_2\upnu_2$, $\D = \b_1(\!1\!-\!\upnu_1\!)(1-\uprho_1) + \b_2(\!1\!-\!\upnu_2\!)(1-\uprho_2)$ and $\d = -\b_1(\!1\!-\!\upnu_1\!)(1-\uprho_1) + \b_2(\!1\!-\!\upnu_2\!)(1-\uprho_2)$.
It is easy to verify the following relations between the parameters which we will make use of later on
\begin{align}
\label{eq:cond}
	|\a| < \A, \quad |\b| < \B, \quad \text{and} \quad |\d| < \D.
\end{align}
Now, introduce the new variable $\hat{u} = \tilde{u} + \frac{\b}{\B} \tilde{w}$. The system is then expressible as
\begin{align*}
\begin{split}
	&\dot{\tilde{z}} = 1 - e^{\tilde{z}} - \hat{u} + \frac{\b}{\B} \tilde{w},\\
	&\dot{\hat{u}} = \left(\A+\frac{\b\a}{\B}\right)(e^{\tilde{z}}-1) + \left(\D - \frac{\d\b}{\B} \right),\\
	&\dot{\tilde{w}} = \a(e^{\tilde{z}}-1) - \d - \B\tilde{w}.\\
\end{split}
\end{align*}
This system has a unique equilibrium which is given by
\begin{align*}
	&\zo = \ln\left(1 - \frac{\D\B - \b\d}{\A\B+\b\a}\right), \,\,\, \wo = \frac{1}{\B}\left( \!-\a\frac{\D\B - \b\d}{\A\B+\b\a} - \d \right),\\  &\uo = \frac{\D\B - \b\d}{\A\B+\b\a} + \frac{\b}{\B}\wo.
\end{align*}
Introducing further new variables $z = \tilde{z} - \zo$, $u = \hat{u} - \uo$ and $w = \tilde{w} - \wo$ to bring the equilibrium point to the origin, we obtain the following system
\begin{align}
\begin{split}
\label{sys_stability}
	&\dot{z} = e^{\zo}(1\!-\!e^z) \!-\! u \!+\! \frac{\b}{\B}w,\\
	&\dot{u} = \left(\A\!+\!\frac{\b\a}{\B}\right)e^{\zo}(e^z \!-\! 1),\\
	&\dot{w} = \a e^{\zo}(e^{z}-1) - \B w.
\end{split}
\end{align}
\begin{theorem}
	The system \eqref{sys_stability} is globally stable with all solutions tending to zero if the following condition holds true
	\vspace{-10pt}\[\hspace{100pt} \B^2 > \a\b.\]
\end{theorem}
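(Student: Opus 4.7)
The plan is to extend the Lyapunov construction of the one-agent theorem to the three-dimensional system \eqref{sys_stability}. I will try a candidate of the form
\[V(z, u, w) \;=\; P\,(e^{z} - z - 1) + u^{2} + Q\, w^{2},\]
with positive constants $P$ and $Q$ to be chosen below. The scalar piece $e^{z} - z - 1$ is non-negative, vanishes only at $z = 0$ and is radially unbounded, while $u^{2} + Q w^{2}$ is positive definite in $(u, w)$ whenever $Q > 0$, so $V$ inherits those properties.

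Differentiating along \eqref{sys_stability}, setting $g = e^{z} - 1$ and $M = \A + \b\a/\B$, a direct computation yields
\[\dot{V} \;=\; -P e^{\zo} g^{2} \;+\; \bigl(-P + 2 M e^{\zo}\bigr)\, g u \;+\; \Bigl(\tfrac{P \b}{\B} + 2 Q \a\, e^{\zo}\Bigr) g w \;-\; 2 Q \B\, w^{2}.\]
Inequalities \eqref{eq:cond} force $|\a \b| < \A \B$, hence $M > 0$, so I can set $P = 2 M e^{\zo}$ to annihilate the $g u$ cross term. What remains is the negative of a quadratic form $\Phi(g, w)$ with positive diagonal entries, and its positive semi-definiteness reduces to the single inequality
\[\Bigl(M \b/\B + Q \a\Bigr)^{2} \;\leq\; 4 Q\, \B\, M,\]
a quadratic inequality in $Q$ alone.

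The crux, which I expect to be the main algebraic obstacle, is to show that this last inequality admits some $Q > 0$ precisely when $\B^{2} > \a \b$. Viewing the difference between the two sides as a quadratic in $Q$ with leading coefficient $\a^{2}$, constant term $M^{2} \b^{2}/\B^{2}$, and middle coefficient $2 M (\a \b - 2 \B^{2})/\B$, its discriminant simplifies after straightforward manipulation to $16 M^{2} (\B^{2} - \a \b)$, which is strictly positive exactly under the hypothesis. Moreover the product of the two roots in $Q$ equals $M^{2} \b^{2}/(\B \a)^{2} \geq 0$ and their sum equals $2 M (2 \B^{2} - \a \b)/(\B \a^{2}) > 0$, so both roots are positive and any $Q$ strictly between them yields $\dot V < 0$ off the set $\Sigma = \{z = 0,\; w = 0\}$.

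To conclude I would invoke LaSalle's invariance principle. On $\Sigma$ the first equation of \eqref{sys_stability} reduces to $\dot z = -u$, so any trajectory that stays in $\Sigma$ must also satisfy $u = 0$; hence the largest invariant subset of $\{\dot V = 0\}$ is the origin. Combined with positive definiteness and radial unboundedness of $V$, this gives global asymptotic stability, and in particular every solution of \eqref{sys_stability} tends to zero.
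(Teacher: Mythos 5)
Your proposal is correct and follows essentially the same route as the paper's proof: the same Lyapunov candidate (yours is the paper's scaled by $2Me^{\zo}$), the same choice of the coefficient to annihilate the $gu$ cross term, the same reduction to a quadratic inequality in $Q$ whose discriminant works out to $16M^2(\B^2-\a\b)$, and your LaSalle argument is just a slightly more formal rendering of the paper's closing remark that $\dot V<0$ off the $u$-axis while the vector field $(-u,0,0)$ on that axis is not tangent to it. One small caveat: your sum/product-of-roots analysis divides by $\a^2$, so the degenerate case $\a=0$ (where the inequality is linear in $Q$ and any sufficiently large $Q>0$ works, the hypothesis $\B^2>\a\b$ being automatic) should be treated separately, as the paper does.
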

\begin{proof}
	In order to prove the theorem we construct a Lyapunov function of the following form
\begin{align*}
	V = (e^z - z - 1) + \P u^2 + \Q w^2,
\end{align*}
which is positive for positive $\P$, $\Q$ everywhere except the origin. Thus if $\dot{V}$ is shown to be negative outside the origin for some positive $\P$, $\Q$, it would imply global stability of \eqref{sys_stability}. Differentiating $V$ along the vector field of system \eqref{sys_stability} we get
\begin{align*}
	\dot{V} &= e^z\dot{z} - \dot{z} + 2 \P u\dot{u} + 2 \Q w\dot{w}.
\end{align*}
Choose $\displaystyle \P = {e^{-\zo}}/{(2\left(\A+ \b\a/\B\right))}$. Due to \eqref{eq:cond}, $\P>0$ so it does not violate the status of $V$ as a Lyapunov function candidate. Substituting this and \eqref{sys_stability} in the above equation gives us
\begin{align*}
	\dot{V}= -e^{\zo}{(e^{z}-1)}^2 \!+\! w(e^z-1)\left(\frac{\b}{\B} + 2 \Q \a e^{\zo}\right) \!-\! 2 \Q \B w^2.	
\end{align*}
Completing squares, we get
\begin{align*}
	\dot{V} =  &-{\left(e^{\zo/2}(e^z - 1) - \frac{we^{-\zo/2}}{2}\left(\frac{\b}{\B} + 2 \Q \a e^{\zo}\right)\right)}^2 \\
		&- \left(2 \Q \B - \frac{e^{-\zo}}{4}{\left(\frac{\b}{\B} + 2 \Q \a e^{\zo}\right)}^2\right)w^2.
\end{align*}
Now the condition $\displaystyle 2 \Q \B - \frac{e^{-\zo}}{4}{\left(\frac{\b}{\B} + 2 \Q \a e^{\zo}\right)}^2 > 0$ gives the derivative $\dot{V}<0$ outside the $u$-axis, and on this axis the vector field  of \eqref{sys_stability} has value $(-u,0,0)$, which is not tangent to the axis. Hence we get the desired global stability if this condition takes place, i.e., the above inequality  is true for some positive $\Q$, $\B$. This condition can be expressed as 
\begin{align*}
	\p\left(e^{\zo} \Q \right) := 4 \a^2{\left(e^{\zo} \Q \right)}^2 \!+\! \left( - 8 \B \!+\! 4\frac{\a \b}{\B} \right)\left(e^{\zo} \Q \right) \!+\! \frac{\b^2}{\B^2} < 0.
\end{align*}
Now, if $\a=0$, the inequality holds for any $\B>0$ by choosing $\Q$ to be sufficiently large. However if $\a\neq 0$ then due to $4\a^2>0$ and $\b^2/\B^2 \ge 0$, the inequality holds true for some positive $\Q$, $\B$ only if the vertex of the respective 
parabola belongs to the fourth quadrant, i.e.,
\[
\displaystyle \vspace{0pt} -\frac{1}{8 \a^2}\left( - 8 \B + 4\frac{\a\b}{\B} \right)>0, \quad \text{and} \quad -\frac{\Delta}{16\a^2}<0, 
\]
where $\Delta$\vspace{0pt} is the discriminant of $\p\left(e^{\zo}\Q\right)$. This translates to the following two conditions: \emph{1)} $\displaystyle -2\B^2 + \a \b < 0$, and \emph{2)} $\displaystyle \B^2 - \a \b > 0$, or simply
\begin{align}
\label{eq:cond_stab}
	\B^2 > \a\b,
\end{align}
which completes the proof.
\end{proof}
Substituting the values of $\a$, $\b$ and $\B$ in condition \eqref{eq:cond_stab} gives us the following
\begin{align*}
	&{(\b_1\upnu_1\!+\!\b_2\upnu_2)}^2 > \big(\b_1(\!1\!-\!\upnu_1) \!-\! \b_2(\!1\!-\!\upnu_2)\big)(-\b_1\upnu_1\!+\!\b_2\upnu_2).
\end{align*}
After some simplification and realization that $\b_1$, $\b_2$, $\upnu_1$ and $\upnu_2$ are all positive numbers, a sufficient condition for the above condition to hold is
\begin{align}
\label{eq:con_stab1}
	(\b_1-\b_2)(\b_1\upnu_1-\b_2\upnu_2) + 4\b_1\upnu_1\b_2\upnu_2 > 0.
\end{align}

Figure \ref{fig:2agent_limcyc} depicts an instance in the parameter space where the system \eqref{eq:two_com_sys} is unstable. Indeed, we see the existence of a stable limit cycle in this case.
\begin{figure*}[t!]
    \centering
    \captionsetup{width=0.85\textwidth}
    \begin{subfigure}[t]{0.5\textwidth}
        \captionsetup{width=0.9\textwidth}
        \centering
        \includegraphics[trim = 30mm 70mm 30mm 70mm, clip, width = \linewidth]{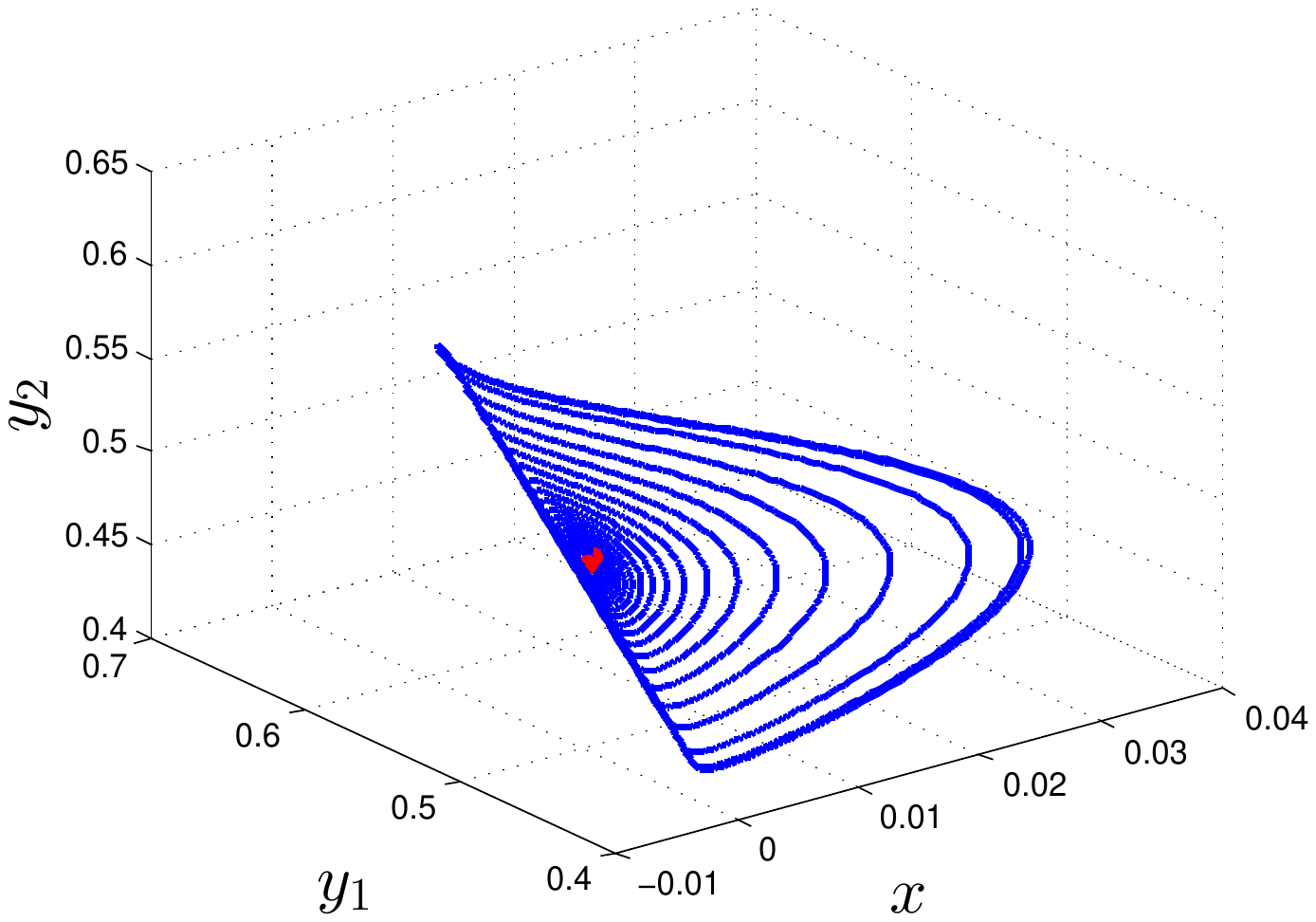}
        \caption{Starting from inside the limit cycle. Here $x(0) = 0.001$, $y_1(0) = 0.5$ and $y_2(0) = 0.5$.}
        \label{fig:2agent_limcyc_in}
    \end{subfigure}%
    \begin{subfigure}[t]{0.5\textwidth}
        \captionsetup{width=0.9\textwidth}
        \centering
        \includegraphics[trim = 30mm 70mm 30mm 70mm, clip, width = \linewidth]{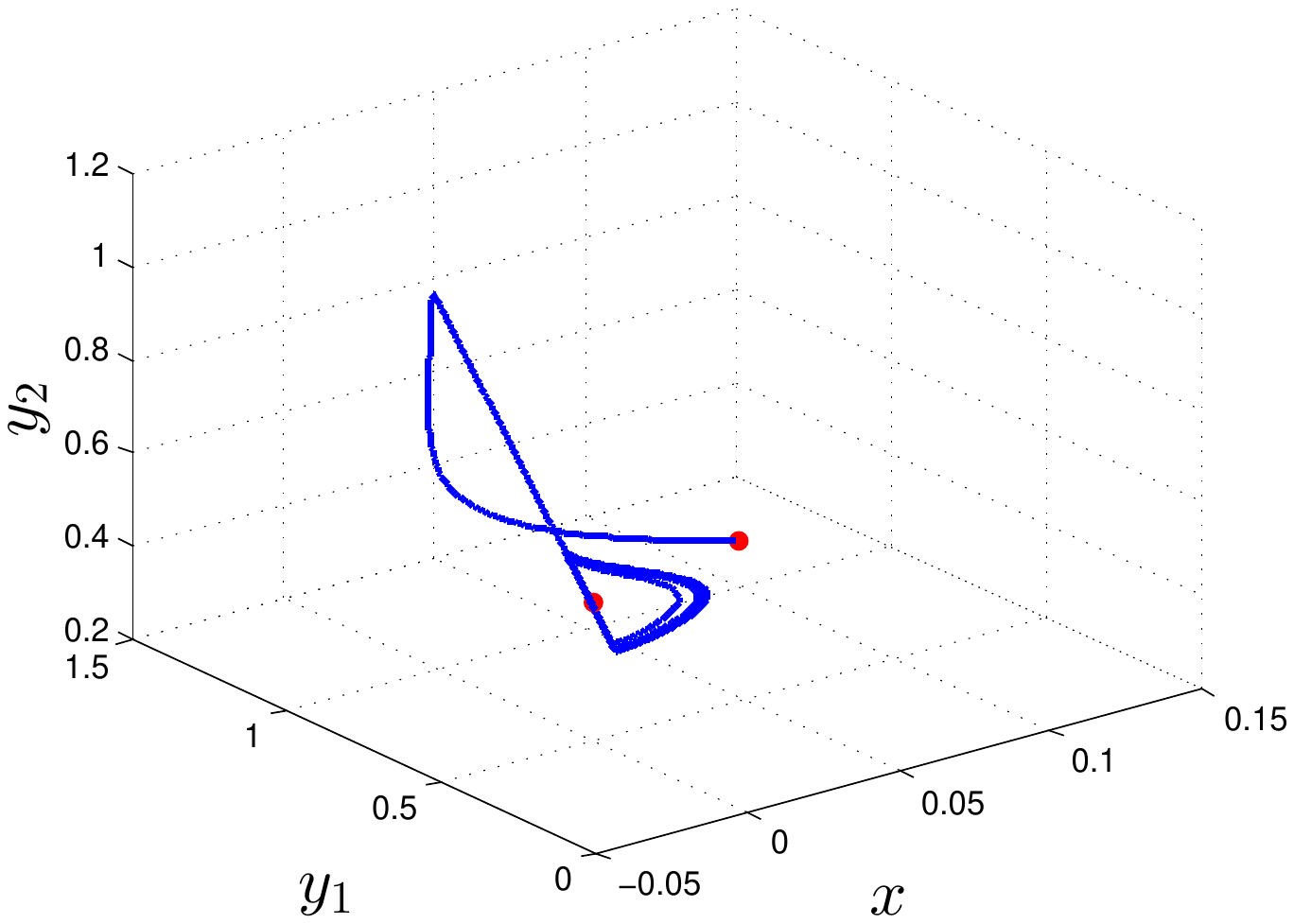}
        \caption{Starting from outside the limit cycle. Here $x(0) = 0.1$, $y_1(0) = 1$ and $y_2(0) = 0.3$.}
        \label{fig:2agent_limcyc_out}
    \end{subfigure}
    \caption{The limit cycle of system \eqref{eq:two_com_sys}. Here $\b_1 = 0.2$, $\b_2 = 0.1$, $\upnu_1=0.01$, $\upnu_2=0.9$, $\uprho_1 = 0.001$ and, $\uprho_2 = 0.1$. The red marked points indicate the fixed point and the initial point for the respective simulations.}
\label{fig:2agent_limcyc}
\end{figure*}

\section{Discussion}
This chapter presented a time-domain analysis of the social ecological system introduced in Chapters \ref{chap:model} and \ref{chap:lump}. We studied first, the single-agent network as an instance of the homogeneous consumer network. We saw how the level of environmentalism determined a trade-off between the levels of resource stock and consumption at steady state and the role of various parameters influenced the manner in which the system approached the steady state. In Chapter \ref{chap:optimal} we will continue with the same model by examining the optimal consumption paths under a defined notion of optimality. We then, in this chapter, studied the two agent network, as an instance of the symmetric semi-homogeneous network with two consumer groups. We saw that the analysis was mathematically more involved than the single agent case. The addition of another agent in the single agent network brought about the phenomenon of free-riding, something which was not possible in the single agent case. We saw how free-riding could be discouraged and also examined the local and global behavior of the system. We return to this network in Chapter \ref{chap:game} where we analyze strategic interactions between the two agents in a game-theoretic context. As indicated by the analysis in this chapter, the analysis for the $\n$-agent network is mathematically more involved, and has been deferred to Chapter \ref{chap:n} as a possible extension.

\clearpage

\chapter{Optimal Consumption for a Homogeneous Society}

\label{chap:optimal}

In this chapter we return to the most basic block model, namely the homogeneous consumer network of Chapter \ref{chap:lump} with the consuming society aggregated as one single agent. The time-domain behavior of the open-loop system has already been studied in Chapter \ref{chap:open}. We now turn our attention to the ``best" consumption pattern on behalf of the consumers. We do this by framing the system as an optimal growth model for a single resource based economy.  More precisely, the system is formulated as an infinite-horizon optimal control problem with unbounded set of control constraints and non-concave Hamiltonian. We then characterize the optimal paths for all possible parameter values and initial states by applying the appropriate version of the Pontryagin maximum principle. Our main finding is that only two qualitatively different types of behavior of sustainable optimal paths are possible depending on whether the social discount rate is greater than a specific threshold or not. An analysis of these behaviors yields a general criterion for sustainable and strongly sustainable optimal growth (w.r.t. the corresponding notions of sustainability defined herein).

\section{Considerations for Studying Optimal Behavior}

Following the first analysis conducted by Ramsey \cite{ramsey1928mathematical}, the mathematical problem of inter-temporal resource allocation\footnote{The problem of distributing the resource fairly among the present and all future generations} has attracted a significant amount of attention over the past decades, and has driven the evolution of first exogenous, and then endogenous growth theory (see \cite{acemoglu2008introduction, barro1995economic}). Employed growth models are typically identified by the production of economic output, the dynamics of the inputs of production, and the comparative mechanism of alternate consumption paths. Our framework considers a renewable resource, whose reproduction is logistic in nature, as the only input to production. The relationship of the resource with the output of the economy is explained through a Cobb-Douglas type production function with an exogenously driven knowledge stock. Alternate consumption paths are compared via a discounted utilitarian approach.  The question that we concern ourselves with for our chosen framework, is the following: what are the conditions of sustainability for optimal development?

In the context of sustainability, the discounted utilitarian approach may propose undesirable solutions in certain scenarios. For instance, discounted utilitarianism has been reported to force consumption asymptotically to zero even when sustainable paths with non-decreasing consumption are feasible \cite{asheim2010sustainability}. The Brundtland Commission defines sustainable development as development that meets the needs of the present, without compromising the ability of future generations to meet their own needs \cite{brundtland1987report}. In this spirit, we employ the notion of sustainable development, as a consumption path ensuring a non decreasing welfare for all future generations. This notion of sustainability is natural, and has also been used by various authors in their work. For instance, Valente \cite{valente2005sustainable} evaluates this notion of sustainability for an exponentially growing natural resource, and derives a condition necessary for sustainable consumption.  We extend this model by allowing the resource to grow at a declining rate (the logistic growth model). We build on the work presented previously in \cite{manzoor2014optimal} which proves the existence of an optimal path only in the case when the resource growth rate is higher than the social discount rate and admissible controls are uniformly bounded.

Our model is formulated as an infinite-horizon optimal control problem with logarithmic instantaneous utility. The problem involves unbounded controls and the non-concave Hamiltonian. These preclude direct application of the standard existence results and Arrow's sufficient conditions for optimality. We transform the original problem to an equivalent one with simplified dynamics and prove the general existence result. Then we apply a recently developed version of the maximum principle \cite{aseev2012maximum, aseev2014needle, aseev2015maximum} to our problem and describe the optimal paths for all possible parameter values and initial states in the problem. Our analysis of the Hamiltonian phase space reveals that there are only two qualitatively different types of behavior of the sustainable optimal paths in the model. In the first case the instantaneous utility is a non-decreasing function in the long run along the optimal path (we call such paths \textit{sustainable}). The second case corresponds to the situation when the optimal path is sustainable and in addition the resource stock is asymptotically nonvanishing (we call such paths \textit{strongly sustainable}). We show that a strongly sustainable equilibrium is attainable only when the resource growth rate is higher than the social discount rate. When this condition is violated, we see that the optimal resource exploitation rate asymptotically follows the Hotelling rule of optimal depletion of an exhaustible resource \cite{hotelling1931economics}. In this case optimal consumption is sustainable only if the the depletion of the resource is compensated by appropriate growth of the knowledge stock and/or decrease of the output elasticity of the resource.

\section{Formulating the Consumption Problem in an Optimal Control Setting}
\label{sec:prob}
Here we depart from the lumped parameter model of Section \ref{sec:lumped} where the aggregate consumption effort of the society is contained in the single variable $y(t)$. Thus there exists only a single agent where the agent may represent a single consumer, or a homogeneous community in the sense defined in Chapter \ref{chap:lump}. In this chapter we will frequently refer to this agent as ``society" with the understanding that the case of the single consumer is also captured by this term (a single consumer is simply a society with one member). Maintaining the previous notation, the normalized (w.r.t the carrying capacity) level of the resource stock at time $t$ is given by $x(t)>0$ and is governed by the standard model of logistic growth. The society consumes the resource by exerting some effort, the normalized (w.r.t the resource growth rate) level of which is represented by $y(t)>0$\footnote{Note that while in the original model given by \eqref{eq:ses} the efforts $y_i(t)$ may take on negative values, here we restrict $y(t)$ to be non-negative. This is because in the context of sustainable growth, we are only interested in cases where society collectively consumes positive quantities of the resource (although individual consumers may consume negative quantities).}. The dynamics of the resource stock are then given by the following equation:
\begin{align}
    \dot{x}(t) = x(t)\left(1-x(t)\right) - y(t)x(t), \hspace{20pt} y(t)\in (0,\infty).
\end{align}
The initial stock of the resource is $x(0)=x_0>0$.

We assume that the economy is a single resource economy whose output is represented by $P(t)>0$  at instant $t\geq 0$. The output  is related to the resource by a Cobb-Douglas type production function (see Appendix \ref{app:resecon}) as follows
\begin{align}\label{Y}
	P(t) = S(t){\big(y(t)x(t)\big)}^{\upbeta}, \hspace{20pt} \upbeta \in (0,1].
\end{align}
Here $S(t)>0$ represents an exogenously driven knowledge stock at time $t\geq 0$. We assume $\dot S(t)\leq\upmu S(t)$, where $\upmu\geq 0$ is a constant, and  $S(0)=S_0>0$. Furthermore, it is assumed that the whole output $P(t)$ produced at each instant $t\geq 0$ is consumed. The total discounted value of the \emph{growth} of the production function $P(.)$ along $(x(.),y(.) )$ is given by
\begin{align*}
	\tilde{J}(x(.),y(.))=\int_0^\infty e^{-\delta t} \frac{\dot{P}(t)}{P(t)} \,\, dt = -\ln P(0) + \delta \int_0^\infty e^{-\delta t} \ln P(t) \, \, dt,
\end{align*}
where $\del>0$ is the subjective discount rate. Thus maximizing the total discounted value of the growth of output $P(t)$ is equivalent to maximizing the following objective function
\begin{align*}
	J'(x(.),y(.)) &= \int_0^\infty e^{-\delta t} \ln P(t) \, \, dt,\\
	&= \int_0^\infty e^{-\delta t} \ln S(t)\,dt + \upbeta \int_0^\infty e^{-\delta t}\left[\ln x(t)+\ln y(t)\right]\,dt,
\end{align*}
Neglecting constant term $\int_0^\infty e^{-\delta t} \ln S(t)\,dt$ and scalar multiplier $\upbeta$, our final objective is thus set up as
\begin{equation*}
	J(x(\cdot),y(\cdot)) =\int_0^{\infty} e^{-\del t}\left[\ln x(t) +\ln y(t)\right].
\end{equation*}

\subsection{The Optimal Control Problem}

The above exposition leads to the following optimal control problem $(P1)$:
\begin{align}
\label{eq1}
	J(x(\cdot),y(\cdot)) &=\int_0^{\infty} e^{-\del t}\left[\ln x(t) +\ln y(t)\right]\, dt\to\mbox{max},\\
\label{eq2}
	\dot x(t) &= x(t)\left(1-x(t)\right)-y(t)x(t),\qquad x(0)=x_0,\\
\label{eq3}
	y(t)&\in (0,\infty),
\end{align}
Now, by an \emph{admissible control} in problem $(P1)$ we mean a Lebesgue measurable locally bounded function $y\colon
[0,\infty)\mapsto\mathbb{R}^1$ which satisfies the control constraint~\eqref{eq3} for all $t\geq 0$. By definition, the corresponding to $y(\cdot)$ \emph{admissible trajectory} is a (locally) absolutely continuous function $x(\cdot): [0, \infty) \mapsto\mathbb{R}^1$ which is a Caratheodory solution (see \cite{filippov2013differential}) to the Cauchy problem~\eqref{eq2} on the whole infinite time interval $[0, \infty)$. Due to the local boundedness of the admissible control $y(\cdot)$ such admissible trajectory $x(\cdot)$ always exists and is unique (see~\cite[Section~7]{filippov2013differential}). A pair $(x(\cdot), y(\cdot))$ where $y(\cdot)$ is an admissible control and $x(\cdot)$ is the corresponding admissible trajectory is called an \emph{admissible pair} in problem $(P1)$.

\subsection{Establishing the Notion of Optimality}
Due to~\eqref{eq2} for any admissible trajectory $x(\cdot)$ the following property holds:
\begin{align}
\label{eq4_1}
	x(t)\leq x_{\text{max}}=\max\{x_0,1\},\qquad t\geq 0.
\end{align}
The integral in~\eqref{eq1} is understood in improper sense, i.e.
\begin{align}
\label{eq1_1}
	J(x(\cdot),y(\cdot)) =\lim_{T \to \infty} \int_0^T \mathrm{e}^{-\del t}\left[ \ln x(t)+\ln y(t) \right] \, dt
\end{align}
if the limit exists. Now, in order to establish a notion of optimality, we must first prove that the limit in the above expression exists. For this, we will require the following intermediate result.

\begin{lemma}
\label{lem:int_cvg}
	There is a decreasing function $\omega :[0,\infty)\mapsto (0,\infty)$ such that $\omega(t)\to +0$ as $t\to\infty$ and for any admissible pair $(x(\cdot),y(\cdot))$ the following inequality holds:
	\begin{align}
	\label{eq1_2}
	        \int_T^{T'}e^{-\del t}\left[ \ln x(t)+\ln y(t)\right]\, \mathrm{d}t <\omega(T),\qquad 0\leq T< T'.
	\end{align}
\end{lemma}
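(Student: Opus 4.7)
The strategy rests on converting the logarithmic integrand, which is problematic because admissible controls $y(\cdot)$ are unbounded above, into a quantity controlled directly by the state dynamics. The plan is to apply the elementary inequality $\ln z \leq z - 1$, valid for all $z > 0$, to $z = x(t)y(t)$, thereby obtaining pointwise
\[
\ln x(t) + \ln y(t) \;=\; \ln\bigl(x(t)y(t)\bigr) \;\leq\; x(t)y(t) - 1.
\]
From the dynamics \eqref{eq2} one has $x(t)y(t) = x(t)(1-x(t)) - \dot x(t)$, and since $x(1-x) \leq 1/4$ holds universally on $\mathbb{R}$ (the parabola has vertex $1/4$), one obtains the key pointwise estimate $\ln x(t) + \ln y(t) \leq -3/4 - \dot x(t)$, valid along every admissible pair.

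Next, I would multiply this bound by $e^{-\delta t}$ and integrate over $[T, T']$. The constant part contributes $-\tfrac{3}{4\delta}(e^{-\delta T} - e^{-\delta T'})$. For the $\dot x$ part, the main tool is integration by parts with $u = e^{-\delta t}$ and $dv = \dot x(t)\, dt$, giving
\[
\int_T^{T'}\! e^{-\delta t}\dot x(t)\, dt \;=\; e^{-\delta T'}x(T') - e^{-\delta T}x(T) + \delta\!\int_T^{T'}\! e^{-\delta t}x(t)\, dt.
\]
Since $x(t) > 0$ along any admissible trajectory and is bounded above by $x_{\max}$ thanks to \eqref{eq4_1}, dropping the positive terms on the right yields
\[
-\!\int_T^{T'}\! e^{-\delta t}\dot x(t)\, dt \;\leq\; e^{-\delta T}x(T) \;\leq\; x_{\max}\, e^{-\delta T}.
\]

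Assembling these contributions one arrives at
\[
\int_T^{T'}\! e^{-\delta t}\bigl[\ln x(t) + \ln y(t)\bigr]\, dt \;\leq\; x_{\max}\, e^{-\delta T} - \frac{3}{4\delta}\bigl(e^{-\delta T} - e^{-\delta T'}\bigr) \;<\; x_{\max}\, e^{-\delta T},
\]
where the last inequality is strict whenever $T < T'$ because $e^{-\delta T} - e^{-\delta T'} > 0$. It then suffices to set $\omega(T) := x_{\max}\, e^{-\delta T}$, which is manifestly positive, strictly decreasing, depends only on $T$ and the data $x_0$ (through $x_{\max}$), and satisfies $\omega(T) \to 0$ as $T \to \infty$, thereby furnishing a uniform upper bound in the required form \eqref{eq1_2}.

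The main obstacle is that the control constraint set $(0,\infty)$ is unbounded, so $\ln y(t)$ has no a priori pointwise upper bound in terms of data; a naive estimate of the integrand in terms of $e^{-\delta t}$ alone is therefore unavailable. The plan circumvents this by using the state equation to trade the unknown quantity $y(t)$ for $\dot x(t)$, which, though not pointwise bounded either, can be integrated by parts against $e^{-\delta t}$ and controlled via the known pointwise bound on $x(t)$ itself. This is the only genuinely delicate step; once the substitution is in place, everything reduces to elementary inequalities.
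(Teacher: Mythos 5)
Your proposal is correct and follows essentially the same route as the paper: bound $\ln x+\ln y=\ln(xy)$ by $xy$ (you use the slightly sharper $\ln z\le z-1$), use the state equation to replace $xy$ by $x(1-x)-\dot x$, integrate by parts against $e^{-\del t}$, and control the boundary term via $x(t)\le x_{\max}$. The only differences are cosmetic (you bound $x(1-x)\le 1/4$ where the paper uses $x(1-x)\le x\le x_{\max}$), yielding $\omega(T)=x_{\max}e^{-\del T}$ instead of the paper's $\omega(T)=\frac{(1+\del)x_{\max}}{\del}e^{-\del T}$, which is immaterial for the lemma.
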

\begin{proof}
	Indeed, due to inequality $\ln x < x$, $x>0$,  for arbitrary $0\leq T< T'$ we have
\begin{align*}
	\int_T^{T'}e^{-\del t}\left[ \ln x(t)+\ln y(t)\right]\, dt<\int_T^{T'}e^{-\del t}y(t)x(t)\, dt.
\end{align*}
Hence, substituting expression of $x(t)y(t)$ from~\eqref{eq2} in the inequality above we get
\begin{align*}
	\int_T^{T'}e^{-\del t}\left[\ln x(t))+\ln y(t)\right]\, dt <\int_T^{T'}e^{-\del t}\left[ x(t)\left(1-x(t)\right)-\dot{x}(t)\right]\, dt.
\end{align*}
This implies
\begin{align*}
	\int_T^{T'}&e^{-\del t}\left[\ln x(t)+\ln y(t)\right]\, dt < \int_T^{T'}e^{-\del t}\left[x(t)-\dot x(t)\right]\, dt,\\
	&\leq x_{\text{max}}\int_T^{T'}e^{-\del t}\, dt - \int_T^{T'}e^{-\del t}\dot x(t)\, dt,\\
	&=\frac{x_{\text{max}}}{\del}\left(e^{-\del T}-e^{-\del T'}\right)- e^{-\del t}x(t)\bigg|_{T}^{T'} - \del \int_T^{T'}e^{-\del t}x(t)\, dt,\\
	&<\frac{x_{\text{max}}}{\del}e^{-\del T} + e^{-\del T}x_{\text{max}} =\frac{(1+\del)x_{\text{max}}}{\del}e^{-\del T} = \omega(T),
\end{align*}
which completes the proof.
\end{proof}

Now, let us show that for any admissible pair $(x(\cdot),y(\cdot))$ the limit in~\eqref{eq1_1} exists.

\begin{lemma}
\label{lim-exist}
	For any admissible pair $(x(\cdot),y(\cdot))$ the limit in~\eqref{eq1_1} exists and is either finite or equals $-\infty$.
\end{lemma}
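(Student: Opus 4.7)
The plan is to decompose the oscillating integrand $e^{-\del t}[\ln x(t)+\ln y(t)]$ into a piece whose integral is manifestly monotone and hence converges in $[0,+\infty]$, plus a remainder whose integral converges to a finite limit. This reduces the existence of the improper integral to a simple algebraic manipulation rather than requiring any delicate oscillation estimates.

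First, I would revisit the pointwise bound implicit in the proof of Lemma~\ref{lem:int_cvg}: from $\ln z<z$ and the state equation \eqref{eq2},
\begin{equation*}
\ln x(t)+\ln y(t)=\ln(x(t)y(t))<x(t)y(t)=x(t)(1-x(t))-\dot x(t)\leq x(t)-\dot x(t).
\end{equation*}
Hence the function $\phi(t):=e^{-\del t}\big[x(t)-\dot x(t)-\ln x(t)-\ln y(t)\big]$ is a.e.\ nonnegative, so $\Phi(T):=\int_0^T\phi(t)\,dt$ is nondecreasing in $T$ and admits a limit $\Phi^\ast\in[0,+\infty]$ by monotone convergence.

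Next I would show that the auxiliary integral $A(T):=\int_0^T e^{-\del t}[x(t)-\dot x(t)]\,dt$ has a finite limit. Integration by parts (which is legitimate since $x(\cdot)$ is absolutely continuous) gives
\begin{equation*}
A(T)=(1-\del)\int_0^T e^{-\del t}x(t)\,dt \;-\; e^{-\del T}x(T)+x_0.
\end{equation*}
By \eqref{eq4_1} we have $0<x(t)\leq x_{\max}$, so $\int_0^T e^{-\del t}x(t)\,dt$ is nondecreasing and bounded above by $x_{\max}/\del$, hence convergent; and $e^{-\del T}x(T)\to 0$. Thus $A^\ast:=\lim_{T\to\infty}A(T)$ exists and is finite.

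Finally, writing $F(T):=\int_0^T e^{-\del t}[\ln x(t)+\ln y(t)]\,dt = A(T)-\Phi(T)$ and passing to the limit yields $\lim_{T\to\infty}F(T)=A^\ast-\Phi^\ast$, which is finite if $\Phi^\ast<+\infty$ and equals $-\infty$ if $\Phi^\ast=+\infty$; in either case the limit in \eqref{eq1_1} exists in $[-\infty,+\infty)$, proving the lemma. The main (and really only) obstacle is the bookkeeping for the integration by parts and verifying that the pointwise bound on $\ln x+\ln y$ holds for a.e.\ $t$; both are straightforward given the regularity of admissible trajectories established earlier.
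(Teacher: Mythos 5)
Your proof is correct, but it takes a different route from the paper's. The paper proves the lemma by invoking Lemma~\ref{lem:int_cvg} as a black box: it picks sequences $\{\zeta_i\}$ and $\{\tau_i\}$ realizing the $\limsup$ and $\liminf$ of $J_T$, writes $J_{\zeta_i}=J_{\tau_i}+\int_{\tau_i}^{\zeta_i}$, and uses the uniform tail bound $\int_{\tau_i}^{\zeta_i}e^{-\del t}[\ln x+\ln y]\,dt<\omega(\tau_i)\to 0$ to conclude $\limsup\leq\liminf$, with finiteness-or-$-\infty$ coming from the bound $\omega(0)$. You instead reuse the same pointwise inequality $\ln(xy)<xy=x(1-x)-\dot x\leq x-\dot x$ that underlies Lemma~\ref{lem:int_cvg}, but repackage it as a decomposition $F(T)=A(T)-\Phi(T)$ with $A(T)$ convergent to a finite limit (integration by parts plus $0<x\leq x_{\max}$) and $\Phi(T)$ nondecreasing, so the limit exists immediately as $A^\ast-\Phi^\ast\in[-\infty,\infty)$; note that even if the negative part of $\ln y$ fails to be integrable on some finite interval, $A(T)$ stays finite, so the extended-arithmetic subtraction never produces $\infty-\infty$ and the argument goes through. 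What your approach buys is a more direct, sequence-free argument that exhibits the limit explicitly and yields the upper bound on $J$ as a by-product; what the paper's approach buys is economy, since the estimate $\omega(\cdot)$ of Lemma~\ref{lem:int_cvg} is needed elsewhere (condition (A3) and the bound \eqref{eq1_2}) and is simply reused here rather than re-derived. One cosmetic point: the limit $\Phi(T)\to\Phi^\ast$ follows from monotonicity of $T\mapsto\Phi(T)$, not from the monotone convergence theorem per se, but this does not affect correctness.
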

\begin{proof}  
Let $(x(\cdot),y(\cdot))$ be an arbitrary admissible pair.  For any $T>0$ define $J_T(x(\cdot),y(\cdot))$ as follows:
\begin{align*}
	J_T(x(\cdot),y(\cdot)) = \int_0^T \mathrm{e}^{-\del t}\left[ \ln x(t)+\ln y(t))\right]\, dt.
\end{align*}
Let $\{ \zeta_i \}_{i=1}^{\infty}$ be a sequence of positive numbers such that $\zeta_i \to \infty$ as $i\to\infty$ and
\begin{align*}
	\lim_{i\to\infty}J_{\zeta_i}(x(\cdot),y(\cdot))= \limsup_{T \to \infty}\int_0^T e^{-\del t}\left[ \ln x(t)+\ln y(t) \right]\, dt.
\end{align*}
Due to Lemma~\ref{lem:int_cvg} we have the following estimate
\begin{equation}
\label{eq1_2}
	\lim_{i\to\infty}J_{\zeta_i}(x(\cdot),y(\cdot))\leq\omega(0).
\end{equation}
Analogously, let  $\{ \tau_i \}_{i=1}^{\infty}$ be a sequence of positive numbers such that  $\tau_i \to \infty$ as $i\to\infty$
and
\begin{align*}
	\lim_{i\to\infty}J_{\tau_i}(x(\cdot),y(\cdot)) = \liminf_{T \to \infty}\int_0^T e^{-\del t}\left[ \ln x(t)+\ln y(t) \right]\, dt.
\end{align*}
Without loss of generality one can assume  that $\tau_i < \zeta_i$, $i=1,2,\dots$. Then we have
\begin{align*}
	J_{\zeta_i}(x(\cdot),y(\cdot))=J_{\tau_i}(x(\cdot),y(\cdot))+\int_{\tau_i}^{\zeta_i}e^{-\del t}\left[ \ln x(t)+\ln y(t)\right]\, dt,\quad i=1,2,\dots.
\end{align*}
Due to Lemma~\ref{lem:int_cvg} this implies
\begin{align*}
	J_{\zeta_i}(x(\cdot),y(\cdot))< J_{\tau_i}(x(\cdot),y(\cdot)) +\omega(\tau_i),\quad i=1,2,\dots.
\end{align*}
Since $\omega(\tau_i)\to 0$  as $i\to\infty$ taking the limit in the last inequality as $i\to\infty$ we get
\begin{align*}
	\limsup_{T\to\infty}\int_0^T \mathrm{e}^{-\del t}\left[\ln x(t)+\ln y(t))\right]\, dt\leq\liminf_{T\to\infty}\int_0^T \mathrm{e}^{-\del t}\left[\ln x(t)+\ln y(t))\right]\, dt.
\end{align*}
As far as the opposite inequality
\begin{align*}
	\liminf_{T\to\infty}\int_0^T \mathrm{e}^{-\del t}\left[\ln x(t)+\ln y(t))\right]\, dt\leq\limsup_{T\to\infty}\int_0^T \mathrm{e}^{-\del t}\left[\ln x(t)+\ln y(t))\right]\, dt
\end{align*}
is always true, the limit~\eqref{eq1_1} exists, and due to~\eqref{eq1_2} this limit is either finite or $-\infty$.
\end{proof}


Due to~\eqref{eq1_2} for any admissible pair $(x(\cdot), y(\cdot))$ the value $\sup_{(x(\cdot),y(\cdot))} J(x(\cdot),y(\cdot))$ is finite. This allows us to understand the optimality of an admissible pair $(x_*(\cdot), y_*(\cdot))$ in the strong sense \cite{carlson2012infinite}. By definition, an admissible pair $(x_*(\cdot), y_*(\cdot))$ is \emph{strongly optimal} (or, for brevity, simply \emph{optimal}) in the problem $(P1)$ if the functional~\eqref{eq1} takes the maximal possible value on this pair , i.e.
\begin{align*}
	J(x_*(\cdot), y_*(\cdot)) = \sup_{(x(\cdot),y(\cdot))} J(x(\cdot),y(\cdot))<\infty.
\end{align*}
Notice, that the set of control constraints in problem $(P1)$ (see~\eqref{eq3}) is nonclosed and unbounded. Due to this circumstance the standard existence theorems (see e.g. \cite{balder1983existence,cesari2012optimization}) are not applicable to problem $(P1)$ directly. Moreover, the situation is complicated here by the fact that the Hamiltonian of problem $(P1)$ is non-concave in the state variable $x$. These preclude the usage of Arrow's sufficient conditions for optimality (see \cite{carlson2012infinite}).

Our analysis below is based on application of the recently developed normal form version of the Pontryagin maximum principle \cite{pontryagin1987mathematical} for infinite-horizon optimal control problems with adjoint variable specified explicitly via the Cauchy type formula (see \cite{aseev2012maximum, aseev2014needle, aseev2015maximum}). However, such approach assumes that the optimal control exists. So, the proof of the existence of an optimal admissible pair $(x_*(\cdot), y_*(\cdot))$ in problem $(P1)$ and establishing of the corresponding version of the maximum principle will be our primary goal in the next section.


\section{Reduction to a Problem with Linear Dynamics}

Let us transform problem $(P1)$ into a more appropriate equivalent form.

Due to~\eqref{eq2} along any admissible pair $(x(\cdot),y(\cdot))$ we have
\begin{align*}
	\frac{d}{dt}\left[e^{-\del t}\ln x(t)\right]\stackrel{\text{a.e.}}{=} -\del e^{-\del t}\ln x(t) + e^{-\del t} -e^{-\del t}\left(x(t)+ y(t)\right),\quad t>0.
\end{align*}
Integrating this equality on arbitrary time interval $[0,T]$, $T>0$, we obtain
\begin{multline*}
	\int_0^Te^{-\del t}\ln x(t)\, dt =\frac{\ln x_0-e^{-\del T}\ln
x(T)}{\del}\\
	+\frac{1}{\del^2}\left(1-e^{-\del T}\right) - \int_0^Te^{-\del t}\left(\frac{1}{\del}x(t)+ \frac{y(t)}{\del}\right)\, dt.
\end{multline*}
Hence, for any admissible pair  $(x(\cdot),y(\cdot))$ and arbitrary $T>0$ we have
\begin{multline}
\label{eq4}
	\int_0^Te^{-\del t}\left[ \ln x(t)+\ln y(t)\right]\, dt =\frac{\ln x_0-e^{-\del T}\ln x(T)}{\del} +\frac{1}{\del^2}\left(1-e^{-\del T}\right)\\
	-  \frac{1}{\del}\int_0^Te^{-\del t}x(t)\, dt + \int_0^Te^{-\del t}\left(\ln y(t) - \frac{y(t)}{\del}\right)\, dt.
\end{multline}
Here due to estimate~\eqref{eq1_2} limits of the both sides in~\eqref{eq4} as $T\to\infty$  exist and equal either a finite number or $-\infty$ simultaneously, and due to~\eqref{eq4_1} either $(i)$ $\lim_{T\to\infty}e^{-\del T}\ln x(T) =0$ or $(ii)$ $\lim\inf_{T\to\infty}e^{-\del T}\ln x(T)<0$. In the case $(i)$ passing to the limit in~\eqref{eq4} as $T\to\infty$  we get
\begin{multline}
\label{eq5}
	\int_0^{\infty}e^{-\del t}\left[ \ln x(t)+\ln y(t)\right]\, dt =\frac{\ln x_0}{\del} +\frac{1}{\del^2}\\ 
	-  \frac{1}{\del}\int_0^{\infty}e^{-\del t}x(t)\, dt + \int_0^{\infty}e^{-\del t}\left(\ln y(t) - \frac{y(t)}{\del}\right)\, dt,
\end{multline}
where both sides in~\eqref{eq5} are equal to a finite number or $-\infty$ simultaneously.

In the case $(ii)$ condition $\lim\inf_{T\to\infty}e^{-\del T}\ln x(T)<0$ implies
\begin{align*}
	\int_{0}^{\infty}e^{-\del t}\left[\ln x(t)+\ln y(t)\right]\, dt=\lim_{T\to\infty}\int_0^T e^{-\del t}\left[\ln x(t)+\ln y(t)\right]\, dt=-\infty
\end{align*}
(see \cite{aseev2016optimal} for details). Hence, in the case $(ii)$~\eqref{eq5} also holds as $-\infty=-\infty$.

Neglecting now the constant terms in the right-hand side of~\eqref{eq5} we obtain the following optimal control problem $(\tilde P1)$ which is equivalent to $(P1)$:
\begin{align}
	\tilde J(x(\cdot),y(\cdot)) &=\int_0^{\infty} e^{-\del t}\left[\ln y(t) - \frac{y(t)}{\del} - \frac{1}{\del}x(t)\right]\, dt\to\mbox{max},\\
\label{eq8}
	\dot x(t) &= x(t)\left(1-x(t)\right)-y(t)x(t),\qquad x(0)=x_0,\\
\label{eq9}
	y(t)&\in (0,\infty).
\end{align}

Further, the function  $y\mapsto ln\, y-y/\del$ is increasing on $(0,\del]$ and it reaches the global maximum at point $y_*=\del$. Hence, any optimal control $y_*(\cdot)$ in $(\tilde P1)$ (if such exists) must satisfy the inequality $y_*(t)\geq \del$ for almost all $t\geq 0$. Hence the control constraint \eqref{eq9} in $(\tilde P1)$  (and also the control constraint~\eqref{eq3} in $(P1)$) can be replaced by the control constraint $y(t)\in [\del,\infty)$. Thus we arrive to the following (equivalent) problem $(P2)$:
\begin{align}
	J(x(\cdot),y(\cdot)) &=\int_0^{\infty} e^{-\del t}\left[\ln y(t) + \ln x(t)\right]\, dt\to\mbox{max},\\
	\dot x(t) &= x(t)\left(1-x(t)\right)-y(t)x(t),\qquad x(0)=x_0,\\
\label{eq12}
	y(t)&\in [\del,\infty).
\end{align}
Here the class of admissible controls in problem $(P2)$ consists of all locally bounded functions $y(\cdot)$ satisfying the control
constraint~\eqref{eq12} for all $t\geq 0$.

To simplify dynamics in $(P2)$ let us introduce the new state variable $z(\cdot)$: $z(t)=1/x(t)$, $t\geq 0$. As can be verified directly, in terms of the state variable $z(\cdot)$ problem $(P2)$ can be rewritten as the following (equivalent) problem $(P3)$:
\begin{align}
\label{eq21}
	J(z(\cdot),y(\cdot)) &=\int_0^{\infty} e^{-\del t}\left[ \ln y(t)-\ln z(t)\right]\, dt\to\mbox{max},\\
\label{eq22}
	\dot z(t) &= \left[y(t)-1\right]z(t) + 1,\qquad z(0)=z_0=\frac{1}{x_0},\\
\label{eq23}
	y(t)&\in [\del,\infty).
\end{align}
The class of admissible controls $y(\cdot)$ in $(P3)$  consists of all measurable locally bounded functions $y\colon [0,\infty)\mapsto [\del,\infty)$.

Notice, that due to linearity of \eqref{eq22} for arbitrary admissible control $y(\cdot)$ the corresponding trajectory $x(\cdot)$ can be expressed via the Cauchy formula (see \cite{hartman1970ordinary}):
\begin{equation}
\label{eq-22}
	z(t)=z_0e^{\int_0^ty(\xi)\,d\xi-t}+e^{\int_0^ty(\xi)\,d\xi-t}\int_0^te^{-\int_0^sy(\xi)\,d\xi+s}\,ds, \qquad t\geq 0.
\end{equation}
Since the problems $(P1)$, $(P2)$ and $(P3)$ are equivalent we will focus our analysis below on problem $(P3)$ with simplified dynamics (see~\eqref{eq22}) and the closed set of control constraints (see~\eqref{eq23}).

\section{Existence of an Optimal Control and the Maximum Principle}\label{sec:exist}

The constructed problem $(P3)$ is a particular case of the following autonomous infinite-horizon optimal control problem $(P4)$ with exponential discounting:
\begin{align}
	J(z(\cdot),y(\cdot))&=\int_{0}^{\infty}e^{-\del t}g(z(t),y(t))\,dt\to\max,\\
\label{g-Cauchy}
	\dot z(t)&=f(z(t),y(t)),\qquad z(0)=z_0,\\
	y(t)&\in U.
\end{align}
Here $U$ is a nonempty closed subset of $\mathbb{R}^m$, $z_0\in G$ is an initial state, $G$ is an open convex subset of $\mathbb{R}^n$, $\del>0$ is the discount rate, and  $f : G\times U \mapsto\mathbb{R}^n$ and $g : G \times U\mapsto\mathbb{R}^m$ are given functions. The class of admissible controls in $(P4)$ consists of all measurable locally bounded functions $y\colon [0,\infty)\mapsto U$. The optimality of admissible pair $(z_*(\cdot), y_*(\cdot))$ is understood in the strong sense \cite{carlson2012infinite}.

\subsection{Proving the Existence of an Optimal Pair}

Problems of type $(P4)$ were intensively studied in last decades (see \cite{aseev2015adjoint, aseev2015boundedness, aseev2016existence, aseev2012rm, aseev2004pontryagin, aseev2007pontryagin, aseev2012maximum, aseev2014needle, aseev2015maximum}). Here we will employ the existence result and the variant of the Pontryagin maximum principle for problem $(P4)$ developed in \cite{ aseev2015boundedness, aseev2016existence} and \cite{aseev2012maximum, aseev2014needle, aseev2015maximum} respectively.

We will need to verify validity of the following conditions (see~\cite{aseev2015boundedness, aseev2016existence, aseev2012rm, aseev2012maximum, aseev2014needle, aseev2015maximum}). \vskip3mm

\noindent{\bf (A1)} {\it The functions $f(\cdot,\cdot)$ and $g(\cdot,\cdot)$ together with their partial derivatives $f_z(\cdot,\cdot)$ and $g_z(\cdot,\cdot)$ are continuous  and locally bounded on $G\times U$.}
\vskip3mm

\noindent{\bf (A2)} {\it There exists a number $\mathrm{c} > 0$ and a nonnegative integrable function $\lambda: [0,\infty)\mapsto\mathbb{R}^1$ such that for every $\zeta \in G$ with $\|\zeta - z_0 \| < \mathrm{c}$ equation $(\ref{g-Cauchy})$ with $y(\cdot) = y_*(\cdot)$ and initial condition $z(0) = \zeta$ (instead of $z(0) = z_0$) has a solution $z(\zeta;\cdot)$ on $[0,\infty)$ in $G$, and
\[
	\max_{\theta\in [z(\zeta;t),z_*(t)]} \; \Big|e^{-\del t}\langle g_z(\theta,y_*(t)),z(\zeta;t)-z_*(t)\rangle\Big| \, \stackrel{a.e.}{\leq} \, \|\zeta-z_0\|\lambda(t).
\]
Here $[z(\zeta;t),z_*(t)]$ denotes the line segment with vertices $z(\zeta;t)$ and $z_*(t)$.}
\vskip3mm

\noindent{\bf (A3)} {\it  There is a positive function $\omega(\cdot)$ decreasing on $[0,\infty)$ such that $\omega(t)\to +0$ as $t\to\infty$ and for any admissible pair $(z(\cdot),y(\cdot))$ the following estimate holds:
\[
  \int_T^{T'}e^{-\del t}g(z(t),y(t))\,dt\le\omega(T), \qquad   0\leq T\leq T'.
\]
} \vskip3mm

Obviously, condition $(A1)$ is satisfied because $f(z,y)=\left[y-1\right]z + 1$, $g(z,y)=\ln y-\ln z$, $f_z(z,y)=y-1$ and $g_z(z,y) = -1/z$, $z>0$, $y\in [\del,\infty)$, in $(P3)$.

Let us show that  $(A2)$ also holds for any admissible pair $(z_*(\cdot), y_*(\cdot))$ in $(P3)$. Set $\mathrm{c} = z_0/2$ and define the nonnegative integrable function $\lambda: [0,\infty)\mapsto \mathbb{R}^1$ as follows: $\lambda(t)=2e^{-\del t}/z_0$, $t\geq 0$. Then, as it can be seen directly, for any real $\zeta$: $|\zeta - z_0 | < \mathrm{c}$, the Cauchy problem $\eqref{eq22}$ with $y(\cdot) = y_*(\cdot)$ and the initial condition $z(0) = \zeta$ (instead of $z(0) = z_0$) has a solution $z(\zeta;\cdot)$ on $[0,\infty)$ and
\[
	\max_{\theta\in [z(\zeta;t),z_*(t)]} \; \Big|   e^{-\del t}g_z(\theta,y_*(t))\left(z(\zeta;t)- z_*(t)\right)\Big| \, \stackrel{\text{a.e.}}{\leq} \, |\zeta-z_0|\lambda(t).
\]
Hence, for any admissible pair $(z_*(\cdot),y_*(\cdot))$ condition $(A2)$ is also satisfied.

Validity of $(A3)$ for any admissible pair $(z_*(\cdot),y_*(\cdot))$ follows from \eqref{eq1_2} directly.

For an admissible pair $(z(\cdot),y(\cdot))$ consider the following linear system:
\begin{equation}\label{eq-17}
      \dot q(t) = -\left[f_z(z(t),y(t)) \right]^* q(t)=\left[-y(t)+1\right]q(t).
\end{equation}
The normalized fundamental solution  $Q(\cdot)$ to equation~\eqref{eq-17}  is defined as follows:
\begin{equation}
\label{eq-19}
	Q(t)=e^{-\int_0^t y(\xi)\,d\xi + t}, \qquad t\geq 0.
\end{equation}
Due to~\eqref{eq-22} and~\eqref{eq-19} for any admissible pair $(z(\cdot),y(\cdot))$ we have
\begin{multline*}
	\left| e^{-\del t}Q^{-1}(t)g_z(z(t),y(t))\right|\\
 	=\left|\frac{e^{-\del t}e^{\int_0^t y(\xi)\,d\xi - t}}{z_0 e^{\int_0^t y(\xi)\,d\xi-t} + e^{\int_0^t y(\xi)\,d\xi-t}\int_0^t e^{-\int_0^s y(\xi)\,d\xi+s}\,ds} \right|\leq\frac{e^{-\del t}}{z_0},\quad t\geq 0.
\end{multline*}
Hence, for any $T > 0$  the function $\psi_T:\, [0,T]\mapsto\mathbb{R}^1$ defined as
\begin{multline}
\label{psiT}
	\psi_T(t)=Q(t)\int_t^{T}e^{-\del s}Q^{-1}(s)g_z(z(s),y(s))\, ds\\
	=-e^{-\int_0^t y(\xi)\,d\xi + t}\int_t^{T}\frac{e^{\int_0^s y(\xi)\,d\xi - s}e^{-\del s}}{z(s)}\, ds,\qquad t\in [0,T],
\end{multline}
is absolutely continuous, and the function $\psi\colon [0,\infty)\mapsto\mathbb{R}^1$ defined as
\begin{multline}
\label{psi}
	\psi(t)=Q(t)\int_t^{\infty}e^{-\del s}Q^{-1}(s)g_z(z(s),y(s))\, ds\\
	=-e^{-\int_0^t y(\xi)\,d\xi + t}\int_t^{\infty}\frac{e^{\int_0^s y(\xi)\,d\xi - s}e^{-\del s}}{z(s)}\, ds,\qquad t\geq 0,
\end{multline}
is locally absolutely continuous.

Define the normal form Hamilton-Pontryagin function $\mathcal{H}:[0,\infty)\times (0,\infty)\times [\del,\infty)\times \mathbb{R}^1\mapsto \mathbb{R}^1$ and the normal-form Hamiltonian $H :[0,\infty)\times (0,\infty)\times \mathbb{R}^1\mapsto \mathbb{R}^1$ for problem $(P3)$ in the standard way:
\[
     \mathcal{H}(t,z,y,\psi) = \psi f(z,y) +e^{-\del t}g(z,y) =\psi[(y-1)z+1] +e^{-\del t}[ \ln y - \ln z],
\]
\[
	H(t,z,\psi) =\sup_{y\geq\del}\mathcal{H}(t,z,y,\psi),
\]
\[
	t\in [0,\infty),\quad z\in (0,\infty),\quad y\in [\del,\infty),\quad  \psi\in \mathbb{R}^1.
\]

\begin{theorem}
\label{thm-exist}
	There is an optimal admissible control $y_*(\cdot)$ in problem $(P3)$. Moreover, for any optimal admissible pair $(z_*(\cdot),y_*(\cdot))$ we have
	\begin{equation}\label{eq-20_1}
		y_*(t)\stackrel{\text{a.e.}}{\leq}\left(1+\frac{1}{z_*(t)}\right)(1+\del),\qquad t\geq 0.
	\end{equation}
\end{theorem}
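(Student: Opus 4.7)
The proof naturally splits into establishing existence first and then verifying the pointwise bound. For existence, I would invoke the general existence theorem for infinite-horizon problems of type $(P4)$ developed in the cited Aseev-Kryazhimskii-Veliov line of work. Hypotheses (A1)--(A3) have just been verified in this section for every admissible pair of $(P3)$; combined with the concavity of $g(z,y)=\ln y -\ln z$ in $y$ and the linearity of $f(z,y)=(y-1)z+1$ in $(z,y)$, this is exactly the setting those theorems cover, and they deliver a strongly optimal pair $(z_*(\cdot),y_*(\cdot))$.

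With existence in hand, the normal-form maximum principle from the same references applies and produces a locally absolutely continuous adjoint $\psi(\cdot)$ given by the explicit Cauchy formula \eqref{psi}, together with the pointwise maximum condition on the Hamilton-Pontryagin function $\mathcal{H}(t,z,y,\psi)=\psi[(y-1)z+1]+e^{-\delta t}[\ln y-\ln z]$. Since $y\mapsto\mathcal{H}(t,z_*(t),y,\psi(t))$ is strictly concave on $(0,\infty)$ with derivative $\psi(t)z_*(t)+e^{-\delta t}/y$, its maximum over $[\delta,\infty)$ is attained either at the boundary $y=\delta$ or at the interior stationary point $y=-e^{-\delta t}/(\psi(t)z_*(t))$, the latter requiring $\psi(t)<0$. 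In the boundary case, the bound $\delta\le(1+\delta)(1+1/z_*(t))$ is immediate.

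The interior case is where the real work lies. Setting $A(t)=\int_0^t(y_*(\xi)-1)\,d\xi$ and $B(t)=z_0+\int_0^t e^{-A(u)}\,du$, the Cauchy representation \eqref{eq-22} gives $z_*(s)=e^{A(s)}B(s)$, so $e^{A(s)}/z_*(s)=1/B(s)$. Substituting this into \eqref{psi} together with $y_*(t)=-e^{-\delta t}/(\psi(t)z_*(t))$ yields the closed form
\begin{equation*}
y_*(t)=\frac{e^{-\delta t}}{B(t)\,\int_t^{\infty}e^{-\delta s}/B(s)\,ds}.
\end{equation*}
Differentiating $e^{-\delta s}/B(s)$ and integrating on $[t,\infty)$ produces the key identity
\begin{equation*}
\frac{e^{-\delta t}}{B(t)}=\delta\int_t^{\infty}\frac{e^{-\delta s}}{B(s)}\,ds+\int_t^{\infty}\frac{e^{-\delta s}}{z_*(s)B(s)}\,ds.
\end{equation*}
Bounding the second integral from above in terms of $1/z_*(t)$ --- via the a priori estimate $x_*(s)=1/z_*(s)\le x_{\max}$ from \eqref{eq4_1} together with the monotonicity $B(s)\ge B(t)$ and the dynamics of $z_*$ --- gives a lower bound on $\int_t^\infty e^{-\delta s}/B(s)\,ds$ of the form $e^{-\delta t}\,z_*(t)/[(1+\delta)B(t)(z_*(t)+1)]$, which is exactly what is needed to conclude $y_*(t)\le(1+\delta)(1+1/z_*(t))$.

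The main obstacle is the last step: extracting the \emph{sharp} state-dependent factor $(1+1/z_*(t))$ from the identity above, rather than a coarser purely a~priori constant. Because the control set is unbounded and the Hamiltonian is non-concave in $z$, standard uniform estimates are not available, so one has to exploit the specific integral structure of the adjoint together with the ODE \eqref{eq22} satisfied by $z_*$, carefully tracking how $z_*(s)$ evolves for $s\ge t$ relative to $z_*(t)$. This is the step where the proof relies most heavily on the concrete form of problem $(P3)$ and will require the most careful bookkeeping.
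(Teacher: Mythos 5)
There is a genuine gap, and it sits in the existence step. The control set $[\del,\infty)$ is unbounded, and the existence results you appeal to do not follow from (A1)--(A3) plus concavity of $g$ in $y$ and affineness of the dynamics alone: for unbounded control sets they additionally require a dominance condition guaranteeing that sufficiently large controls can be strictly improved --- precisely the existence of functions $M(\cdot)$, $\phi(\cdot)$ satisfying \eqref{eq-20}. Verifying this is the bulk of the paper's proof of the theorem: one bounds the finite-horizon adjoint $\psi_T$ from below (estimate \eqref{eq19}), shows that the pointwise maximizer $y_T(t)$ of the Hamilton--Pontryagin function stays below the barrier $M_\phi(t)$ of \eqref{M-delta}, and proves monotonicity in $T$ of the gap $\Phi(T)$ so that the infimum over $T\ge t+\phi$ in \eqref{eq-20} is positive. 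Saying ``this is exactly the setting those theorems cover'' skips this hypothesis entirely, and the paper itself stresses that standard existence theorems are inapplicable here because of the unbounded control set. Note also that in the paper the bound \eqref{eq-20_1} comes out of this same machinery with no maximum principle at all: the existence theorem returns $y_*\le M_\phi$, letting $\phi\to\infty$ gives \eqref{u_*-bound}, and the $z_0$-dependent exponential bound is converted into the state-dependent one by the time-shift argument (the restriction of an optimal pair to $[\tau,\infty)$ is optimal for the problem restarted at $z_*(\tau)$).

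Your route for the bound, by contrast, is genuinely different from the paper's (maximum principle plus the explicit adjoint formula \eqref{psi}) and it does work; moreover the ``main obstacle'' you flag closes more easily than you fear, and there is no circularity in using Theorem~\ref{thm2}, since its proof needs only (A1)--(A2) and an already existing optimal pair. With $A(s)=\int_0^s(y_*(\xi)-1)\,d\xi$ and $B(s)=z_0+\int_0^s e^{-A(u)}\,du$, the Cauchy formula \eqref{eq-22} gives $B'(s)/B(s)=1/z_*(s)=x_*(s)$, so $B(s)=B(t)\exp\bigl(\int_t^s x_*(u)\,du\bigr)$, and your integral identity exhibits $y_*(t)-\del$ as a weighted average of $x_*(s)$, $s\ge t$, with weights $e^{-\del s}/B(s)$; hence $y_*(t)\le\del+\sup_{s\ge t}x_*(s)$. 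The missing ingredient is not the global estimate \eqref{eq4_1} (which only produces the $z_0$-dependent constant $\max\{x_0,1\}$) but the same logistic comparison restarted at time $t$: since $y_*\ge\del>0$, one has $\dot x_*\le x_*(1-x_*)$, so $x_*(s)\le\max\{x_*(t),1\}\le 1+1/z_*(t)$ for all $s\ge t$. This yields $y_*(t)\le 1+\del+1/z_*(t)\le\bigl(1+1/z_*(t)\bigr)(1+\del)$, in fact slightly sharper than \eqref{eq-20_1}. So your proposal, once the existence verification is repaired, gives a valid and somewhat cleaner proof of the pointwise bound; as written, however, it proves the bound only conditionally on existence and therefore does not yet prove the theorem.
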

\begin{proof}  
	Let us show that there is a continuous function $M\colon [0,\infty)\mapsto\mathbb{R}^1$, $M(t)\geq 0$, $t\geq 0$, and a function $\phi\colon\, [0,\infty)\mapsto\mathbb{R}^1$, $\phi(t)>0$, $t\geq 0$, $\lim_{t\to\infty}\left(\phi (t)/t\right)=0$, such that for any admissible pair $(z(\cdot),y(\cdot))$, satisfying on a set $\mathfrak{M}\subset[0,\infty)$, $\meas\mathfrak{M}>0$, to inequality $y(t)>M(t)$, for all $t\in\mathfrak{M}$  we have
	\begin{equation}
	\label{eq-20}
		\inf_{T:\, T-\phi(T)\geq t}\left\{\sup_{y\in [\del,M(t)]}\,\mathcal{H}(t,z(t),y,\psi_T(t))- \mathcal{H}(t,z(t),y(t),\psi_T(t))\right\}>0,
	\end{equation}
	where the function $\psi_T(\cdot)$ is defined on $[0,T]$, $T>0$, by equality \eqref{psiT}.

	Let $(z(\cdot),y(\cdot))$ be an arbitrary admissible pair in $(P3)$. Then due to \eqref{eq-22} and \eqref{eq-19}, for any $T>0$ and arbitrary $t\in [0,T]$ we get (see~\eqref{psiT})
	\begin{multline}
	\label{eq19}
		-z(t)\psi_T(t) = \left[z_0 + \int_0^te^{-\int_0^s y(\xi)\,d\xi+s}\,ds\right]\int_t^{T}\frac{e^{-\del s}}{z_0 + \int_0^s e^{-\int_0^\tau y(\xi)\,d\xi + \tau}\, d\tau}\, ds\\
		\geq z_0\int_t^{T}\frac{e^{-\del s}}{z_0 + \int_0^s e^{\tau}\, d\tau}\, ds \geq \frac{z_0e^{-(1+\del)t}}{(z_0+1)(1+\del)}\left[1-e^{-(1+\del)(T-t)}\right].
	\end{multline}

	For a $\phi >0$  define the function $M_{\phi}\colon [0,\infty)\mapsto\mathbb{R}^1$ by equality
	\begin{equation}
	\label{M-delta}
		M_\phi(t)=\frac{(z_0+1)(1+\del)}{z_0\left[1-e^{-(1+\del)\phi}\right]}e^{t}+\frac{1}{\phi},\qquad t\geq 0.
	\end{equation}
	Then for any $T$: $T-\phi>t$ and arbitrary $(z(\cdot),y(\cdot))$ the function $y\mapsto \mathcal{H}(t,z(t),y,\psi_T(t))$ reaches its maximal value on $[\del,\infty)$ at the point (see~\eqref{eq19})
	\begin{equation}
	\label{eq-21}
		y_T(t) = -\frac{e^{-\del t}}{z(t)\psi_T(t)}\leq\frac{(z_0+1)(1+\del)}{z_0\left[1-e^{-(1+\del)(T-t)}\right]}e^{t}\leq M_{\phi}(t)-\frac{1}{\phi}.
	\end{equation}

	Now, set $\phi(t)\equiv\phi$ and $M(t)\equiv M_{\phi}(t)$, $t\geq 0$. Let $(z(\cdot),y(\cdot))$ be an admissible pair such that inequality $y(t)>M_{\phi}(t)$ holds on a set $\mathfrak{M}\subset[0,\infty)$, $\meas\mathfrak{M}>0$. For arbitrary $t\in \mathfrak{M}$ define the function $\Phi\colon [t+\phi,\infty)\mapsto\mathbb{R}^1$ as follows
	\begin{multline*}
		\Phi(T)=\sup_{y\in [\del, M(t)]}\mathcal{H}(t,z(t),y,\psi_T(t))-\mathcal{H}(t,z(t),y(t),\psi_T(t))\\
		=\psi_T(t)y_T(t)z(t) +e^{-\del t}\ln y_T(t)-\left[\psi_T(t)y(t)z(t) + e^{-\del t}\ln y(t)\right],\qquad T\geq t+\phi.
	\end{multline*}
	Due to~\eqref{eq-21} we have
	\begin{multline*}
		\Phi(T)=-e^{-\del t}+e^{-\del t}\left[ -\del t -\ln (-\psi_T(t)) -\ln z(t)\right]\\
		-\left[\psi_T(t)y(t)z(t) + e^{-\del t}\ln y(t)\right],\qquad T\geq t+\phi.
	\end{multline*}
	Hence, due to~\eqref{psiT} and~\eqref{eq-21}  for a.e. $T\geq t+\phi$ we get
	\begin{multline*}
		\frac{d}{dT}\Phi(T)=-\frac{e^{-\del t}}{\psi_T(t)}\frac{d}{dT}\left[\psi_T(t)\right] -y(t)z(t)\frac{d}{dT}\left[\psi_T(t)\right]\\
		=z(t)\frac{d}{dT}\left[\psi_T(t)\right]\left[ \frac{e^{-\del t}}{-\psi_T(t)z(t)} -y(t)\right]= z(t)\frac{d}{dT}\left[\psi_T(t)\right](y_T(t)-y(t))>0.
	\end{multline*}
	Hence,
	\begin{multline*}
		\inf_{T>0:\, t\leq T-\phi}\left\{\sup_{y\in [\del,M(t)]}\, \mathcal{H}(t,z(t),y,\psi_T(t))- \mathcal{H}(t,z(t),y(t),\psi_T(t))\right\}\\
		=\inf_{T>0:\, t\leq T-\phi}\Phi(T) =\Phi(t+\phi)>0.
	\end{multline*}
	Thus, for any $t\in \mathfrak{M}$ inequality~\eqref{eq-20} is proved.

	Since the instantaneous utility in~\eqref{eq21} is concave in $y$, the system~\eqref{eq22} is affine in $y$,  the set $U$ is closed (see~\eqref{eq23}), conditions $(A1)$ and $(A3)$ are satisfied, and  since $(A2)$ also holds for any admissible pair $(z_*(\cdot),y_*(\cdot))$ in $(P3)$, all conditions of the existence result in \cite{aseev2016existence} are fulfilled (see also \cite[Theorem~1]{aseev2015boundedness}). Hence, there is an optimal admissible control $y_*(\cdot)$ in $(P3)$ and, moreover, $y_*(t)\stackrel{a.e.}{\leq} M_{\phi}(t)$, $t\geq 0$. Passing to a limit in this inequality as $\phi\to\infty$  we get (see~\eqref{M-delta})
	\begin{equation}
	\label{u_*-bound}
		y_*(t)\stackrel{\text{a.e.}}{\leq}\left(1+\frac{1}{z_0}\right)(1+\del)e^{ t},\qquad t\geq 0.
	\end{equation}

	Further, it is easy to see that for any $\tau>0$ the pair $(\tilde z_*(\cdot),\tilde y_*(\cdot))$ defined as $\tilde z_*(t)=z_*(t+\tau)$, $\tilde y_*(t)=y_*(t+\tau)$, $t\geq 0$, is an optimal admissible pair in the problem $(P3)$ taken with initial condition $z(0)=z_*(\tau)$. Hence, using the same arguments as above we get the following inequality for $(\tilde z_*(\cdot),\tilde y_*(\cdot))$ (see~\eqref{u_*-bound}):
	\[
		\tilde y_*(t)\stackrel{\text{a.e.}}{\leq}\left(1+\frac{1}{\tilde z_*(0)}\right)(1+\del)e^{t},\qquad t\geq 0.
	\]
	Hence, for arbitrary fixed $\tau>0$ we have
	\[
		y_*(t)=\tilde y_*(t-\tau)\stackrel{\text{a.e.}}{\leq}\left(1+\frac{1}{z_*(\tau)}\right)(1+\del)e^{t-\tau},\qquad t\geq\tau.
	\]
	Due to arbitrariness of $\tau>0$ this implies~\eqref{eq-20_1}. 
\end{proof}

\subsection{Application of the Maximum Principle}

\begin{theorem}
\label{thm2}
	Let $(z_*(\cdot),y_*(\cdot))$ be an optimal admissible pair in problem $(P3)$. Then the function $\psi: [0,\infty)\mapsto\mathbb{R}^1$ defined for pair $(z_*(\cdot),y_*(\cdot))$ by formula~\eqref{psi} is (locally) absolutely continuous and satisfies the conditions of the normal form maximum principle, i.e. $\psi(\cdot)$ is a solution of the adjoint system
	\begin{equation}
	\label{dpsi}
		\dot\psi (t) =  - \mathcal{H}_z\left(z_*(t), y_*(t),\psi(t)\right),
	\end{equation}
	and the maximum condition holds:
	\begin{equation} 
	\label{max-psi}
		\mathcal{H}(z_*(t),y_*(t),\psi(t))\stackrel{\text{a.e.}}{=} H(z_*(t),\psi(t)).
	\end{equation}
\end{theorem}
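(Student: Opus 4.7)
The plan is to invoke the normal-form maximum principle for infinite-horizon problems developed in \cite{aseev2012maximum, aseev2014needle, aseev2015maximum}, which is tailored precisely to autonomous problems of type $(P4)$ with exponential discounting and which specifies the adjoint variable explicitly via the Cauchy-type formula \eqref{psi}. The verification hypotheses $(A1)$, $(A2)$, $(A3)$ required by that version of the principle have already been checked for $(P3)$ earlier in the paper (just before Theorem~\ref{thm-exist}), and Theorem~\ref{thm-exist} guarantees the existence of the optimal admissible pair $(z_*(\cdot),y_*(\cdot))$ itself. So, on the face of it, Theorem~\ref{thm2} should follow by a direct citation of that machinery.

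However, to make the argument self-contained I would proceed in three steps. First, I would verify that the integral defining $\psi(t)$ in \eqref{psi} actually converges absolutely for each $t\geq 0$ along the optimal pair, so that $\psi(\cdot)$ is well defined and locally absolutely continuous. Using \eqref{eq-22} and \eqref{eq-19} as in the estimate right before Theorem~\ref{thm-exist}, one has
\[
\left| e^{-\del s}Q^{-1}(s)g_z(z_*(s),y_*(s))\right|=\frac{e^{-\del s}}{z_0+\int_0^s e^{-\int_0^\tau y_*(\xi)d\xi+\tau}d\tau}\leq \frac{e^{-\del s}}{z_0},
\]
so that the integrand in \eqref{psi} is exponentially dominated and the integral converges. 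Differentiating \eqref{psi} and using $\dot Q(t)=(1-y_*(t))Q(t)$ gives
\[
\dot\psi(t)\stackrel{\text{a.e.}}{=}(1-y_*(t))\psi(t)+\frac{e^{-\del t}}{z_*(t)}=-\mathcal{H}_z(z_*(t),y_*(t),\psi(t)),
\]
which is precisely the adjoint equation \eqref{dpsi}.

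Second, I would establish the maximum condition \eqref{max-psi}. This is the substantive output of the cited Aseev-type maximum principle; the key point is that $\psi(\cdot)$ constructed by \eqref{psi} is not postulated a priori but is \emph{derived} as the correct adjoint via the variational/needle-variation machinery in \cite{aseev2014needle, aseev2015maximum}. I would state that, since $(A1)$--$(A3)$ hold for the optimal pair (they have been verified in the paragraphs preceding Theorem~\ref{thm-exist}), the Cauchy-formula version of the maximum principle applies and yields \eqref{max-psi} directly. In concrete terms, the supremum in $H$ is attained at the interior point $y_T(t)=-e^{-\del t}/(z_*(t)\psi(t))$ when this quantity exceeds $\del$, while otherwise the optimizer is $y=\del$; both scenarios are consistent with the maximum condition.

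Third, I would tie the two preceding steps together and note that no transversality condition beyond what is already encoded in the Cauchy formula for $\psi(\cdot)$ is needed, because the cited framework dispenses with an explicit limit condition at infinity in favour of the explicit integral representation. The main obstacle I anticipate is not the derivation of the adjoint equation (which is essentially an exercise in differentiating under the integral) but rather ensuring that the hypotheses of the specific cited version of the maximum principle are met in the non-standard setting of $(P3)$: the control set $[\del,\infty)$ is unbounded, the instantaneous utility $\ln y-\ln z$ is unbounded both above and below, and the state $z_*(t)$ could a priori approach the boundary $z=0$ of the admissible state region $G=(0,\infty)$. The bound \eqref{eq-20_1} from Theorem~\ref{thm-exist} together with the convergence estimate above already rules out the pathologies that would obstruct application of the principle, so the remaining verification is routine.
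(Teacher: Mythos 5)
Your proposal is correct and follows essentially the same route as the paper: the paper's own proof consists precisely of noting that $(A1)$ and $(A2)$ have already been verified for any admissible pair in $(P3)$ and then invoking the normal-form maximum principle with the Cauchy-type adjoint formula from \cite{aseev2012maximum, aseev2014needle, aseev2015maximum} to obtain \eqref{dpsi} and \eqref{max-psi}. Your additional steps (the convergence estimate for the integral in \eqref{psi} and the direct differentiation yielding the adjoint equation) just reproduce material the paper establishes in the text immediately preceding the theorem, so they are harmless, correct elaborations rather than a different argument.
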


\begin{proof}
	As it already have been shown above condition $(A1)$ is satisfied and $(A2)$ holds for any admissible pair $(z_*(\cdot),y_*(\cdot))$ in $(P3)$. Hence, due to the variant of the maximum principle developed in~\cite{aseev2012maximum, aseev2014needle, aseev2015maximum} the function $\psi: [0,\infty)\mapsto\mathbb{R}^1$ defined for pair $(z_*(\cdot),y_*(\cdot))$ by formula~\eqref{psi} satisfies the conditions~\eqref{dpsi} and~\eqref{max-psi}. \end{proof}

Notice, that the Cauchy type formula~\eqref{psi} implies (see~\eqref{eq-22} and \eqref{eq-19})
\begin{multline}
\label{psi-}
	\psi(t)=-e^{-\int_0^t y_*(\xi)\, d\xi + t}\int_t^{\infty}\frac{e^{-\del\tau}e^{\int_0^{\tau}y_*(\xi)\, d\xi - \tau}}{e^{\int_0^{\tau}y_*(\xi)\, d\xi - \tau}\left[z_0+ \int_0^{\tau}e^{-\int_0^{\theta}y_*(\xi)\, d\xi + \theta}\, d\theta \right]}\, d\tau\\
	> -\frac{e^{-\int_0^t y_*(\xi)\, d\xi + t}}{z_0+ \int_0^{t}e^{-\int_0^{\theta} y_*(\xi)\, d\xi + \theta}\, d\theta}\int_t^{\infty}e^{-\del\tau}\, d\tau = -\frac{e^{-\del t}}{\del z_*(t)},\qquad t\geq 0.
\end{multline}
Thus, due to~\eqref{psi}the following condition holds:
\begin{equation}
\label{g-transv-psi}
	0< -\psi(t)z_*(t)<\frac{e^{-\del t}}{\del},\qquad t\geq 0.
\end{equation}

Note also, that due to~\cite[Corollary to Theorem~3]{aseev2015adjoint} formula~\eqref{psi} implies the following stationarity condition for the Hamiltonian (see~\cite{aseev2007pontryagin, michel1982transversality})):
\begin{equation}
\label{stat}
	H(t,z_*(t),\psi(t))=\del\int_t^{\infty}e^{-\del s}g(z_*(s),y_*(s))\, ds,\qquad t\geq 0.
\end{equation}

It can be shown directly that if an admissible pair (not necessary optimal) $(z(\cdot),y(\cdot))$ together with an adjoint variable $\psi(\cdot)$ satisfies the core conditions~\eqref{dpsi} and~\eqref{max-psi} of the maximum principle and $\lim_{t\to\infty}H(t,z(t),\psi(t))=0$ then condition~\eqref{stat} holds for the  triple  $(z(\cdot),y(\cdot),\psi(\cdot))$  as well (see~\cite[Section 3]{aseev2007pontryagin}).

Further, due to the maximum condition~\eqref{max-psi} for a.e. $t\geq 0$ we have
\[
	y_*(t) ={\arg\max}_{y\in [\del,\infty)}\left[\psi(t)z_*(t)y+e^{-\del t}\ln y\right].
\]
This implies (see~\eqref{g-transv-psi})
\begin{equation}
\label{eq:max_prncpl_1}
	y_*(t)\stackrel{a.e.}{=}-\frac{e^{-\del t}}{\psi(t)z_*(t)}>\del,\qquad t\in[0,\infty).
\end{equation}

\subsection{The Hamiltonian System}

Substituting the formula \eqref{eq:max_prncpl_1} for $y_*(\cdot)$ in~\eqref{eq22} and in~\eqref{dpsi} due to Theorem~\ref{thm2} we get that any optimal trajectory $z_*(\cdot)$ together with the corresponding adjoint variable $\psi(\cdot)$  must satisfy to the Hamiltonian system of the maximum principle:
\begin{align}
\label{ham-psi}
\begin{split}
	&\dot z(t) =  -z(t)-\frac{e^{-\del t}}{\psi(t)} + 1,\\
	&\dot\psi(t) =\psi (t)+\frac{2e^{-\del t}}{z(t)}.
\end{split}
\end{align}
Moreover, estimate~\eqref{g-transv-psi} and condition~\eqref{stat} must hold as well. In the terms of the current value adjoint variable $\lambda(\cdot)$, $\lambda(t)=e^{\del t}\psi(t)$,  $t\geq 0$, one can rewrite system~\eqref{ham-psi}  as follows:
\begin{align}
\label{ham-lamb}
\begin{split}
	&\dot z(t) =  -z(t)-\frac{1}{\lambda(t)} + 1,\\
	&\dot\lambda(t) =(\del+1)\lambda (t)+\frac{2}{z(t)}.
\end{split}
\end{align}
In terms of variable  $\lambda(\cdot)$ estimate~\eqref{g-transv-psi} takes the following form:
\begin{equation}
\label{g-transv-lamb}
	0< -\lambda(t)z_*(t)<\frac{1}{\del},\qquad t\geq 0.
\end{equation}

Accordingly, the optimal control $y_*(\cdot)$ can be expressed as follows (see~\eqref{eq:max_prncpl_1}):
\begin{equation}
\label{eq:max_prncpl}
	y_*(t)\stackrel{a.e.}{=}-\frac{1}{\lambda(t)z_*(t)},\qquad t\geq 0.
\end{equation}

Define the normal form current value Hamiltonian $M : (0,\infty)\times \mathbb{R}^1\mapsto \mathbb{R}^1$ for problem $(P3)$ in the standard way (see~\cite[Section~3]{aseev2007pontryagin}):
\begin{equation}
\label{current-H}
	M(z,\lambda)=e^{\del t}H(t,z,\psi),\quad z\in (0,\infty),\quad \lambda\in \mathbb{R}^1.
\end{equation}
Then in the current value terms the stationarity condition~\eqref{stat} takes the form
\begin{equation}
\label{stat-lamb}
	M(z_*(t),\lambda(t))=\del e^{\del t}\int_t^{\infty}e^{-\del s}g(z_*(s),y_*(s))\, ds,\qquad t\geq 0.
\end{equation}

In the next section we will analyze the system~\eqref{ham-lamb} coupled with the estimate~\eqref{g-transv-lamb} and the stationarity condition~\eqref{stat-lamb}. We will show that there are only two qualitatively different types of behavior of the optimal paths that are possible. If $\del<1$ then the optimal path asymptotically approaches an optimal nonvanishing steady state while the corresponding optimal control tends to $(1+\del)/2$ as $t\to\infty$. If $\del \geq 1$ then the optimal path $z_*(\cdot)$ goes to infinity, while the corresponding optimal control $y_*(\cdot)$  tends to $\del$ as $t\to\infty$, i.e. asymptotically it follows the Hotelling rule of optimal depletion of an exhaustible resource~\cite{hotelling1931economics}.

\section{Analysis of the Hamiltonian system}\label{sec:ham}

Due to Theorem~\ref{thm2} it is sufficient to analyze the behavior of system~\eqref{ham-lamb} only in the open set $\Gamma=\left\{\, (z,\lambda)\colon z>0,\lambda<0\right\}$ in the phase plane ${\mathbb R}^2$.

Let us introduce functions $v_1\colon\, (1,\infty)\mapsto (-\infty,0)$ and $v_2\colon\, (0,\infty)\mapsto (-\infty,0)$ as follows 
\[
	v_1(z)=\frac{1}{1-z},\quad z\in \left(1,\infty\right),\qquad v_2(z)=-\frac{2}{(\del+1)z},\quad z\in (0,\infty).
\]
Due to~\eqref{ham-lamb} the curves $\gamma_1 =\{(z,\lambda)\colon\, \lambda=v_1(z), z\in (1,\infty)\}$ and $\gamma_2 =\{ (z,\lambda)\colon\, \lambda=v_2(z), z\in (0,\infty)\}$ are the nullclines at which the derivatives of variables $z(\cdot)$ and $\lambda(\cdot)$ vanish respectively.

Two qualitatively different cases are possible: $(i)$ $\del < 1$ and $(ii)$ $\del \geq 1$.

\subsection{The Sustainable Case}
Here we consider the case where $\del < 1$. In this case the nullclines $\gamma_1$ and $\gamma_2$ have a unique intersection point $(\hat z,\hat\lambda)$ which is a unique equilibrium of system~\eqref{ham-lamb} in $\Gamma$:
\begin{equation} 
\label{eq_pt}
	\hat z=\frac{2}{1-\del},\qquad \hat\lambda=\frac{\del-1}{\del+1}.
\end{equation}
The corresponding equilibrium control $\hat y(\cdot)$ is
\begin{equation}
\label{eq-u}
	\hat y(t)\equiv\hat y=\frac{\del+1}{2},\qquad t\geq 0.
\end{equation}

The eigenvalues of the system linearized around the equilibrium are given by
\begin{align*}
	\sigma_{1,2} = \frac{\del}{2} \pm \frac{1}{2}\sqrt{2 - \del^2},
\end{align*}
which are real and distinct with opposite signs when $\del<1$. Hence, by the Grobman-Hartman theorem  in a neighborhood $\Omega$ of the equilibrium state $(\hat z,\hat\lambda)$ the system~\eqref{ham-lamb} is of saddle type (see~\cite[Chapter 9]{hartman1970ordinary}).

The nullclines $\gamma_1$ and  $\gamma_2$ divide the set $\Gamma$ in four open regions:
{
\allowdisplaybreaks
\setlength{\mathindent}{0cm}
\begin{align*}
	&\Gamma_{-,-}=\Big\{ (z,\lambda)\in\Gamma\colon\, \lambda< v_1(z), 1<z\leq\hat z\Big\}\bigcup\Big\{(z,\lambda)\in\Gamma\colon\, \lambda< v_2(z), \hat z<z<\infty\Big\},\\
	&\Gamma_{+,-} = \Big\{  (z,\lambda) \in \Gamma\colon\, \lambda <  v_2(z), 0 < z \leq 1 \Big\} \bigcup \Big\{ (z,\lambda) \in \Gamma\colon\, v_1(z) < \lambda <  v_2(z),1 < z < \hat z \Big\},\\
	&\Gamma_{+,+} = \Big\{  (z,\lambda) \in \Gamma\colon\, v_2(z) < \lambda < 0, 0 < z \leq \hat z\Big\} \bigcup \Big\{ (z,\lambda) \in \Gamma\colon\, v_1(z) < \lambda < 0, \hat z < z < \infty \Big\},\\
	&\Gamma_{-,+}=\Big\{(z,\lambda)\in\Gamma\colon\,v_2(z)<\lambda<v_1(z), z>\hat z\Big\}.
\end{align*}
}
Any solution $(z(\cdot),\lambda(\cdot))$ of~\eqref{ham-lamb} in $\Gamma$ has  definite signs of derivatives of its $(z,\lambda)$-coordinates in the sets $\Gamma_{-.-}$, $\Gamma_{-.+}$, $\Gamma_{+,+}$, and $\Gamma_{-,+}$. These signs are indicated by the corresponding subscripts.

The behavior of the flows is shown in Figure~\ref{fig:phase_sust} through the phase portrait.

\begin{figure*}[h!]
    \centering
        \captionsetup{width=0.5\textwidth}
        \centering
        \includegraphics[width = 0.6\linewidth]{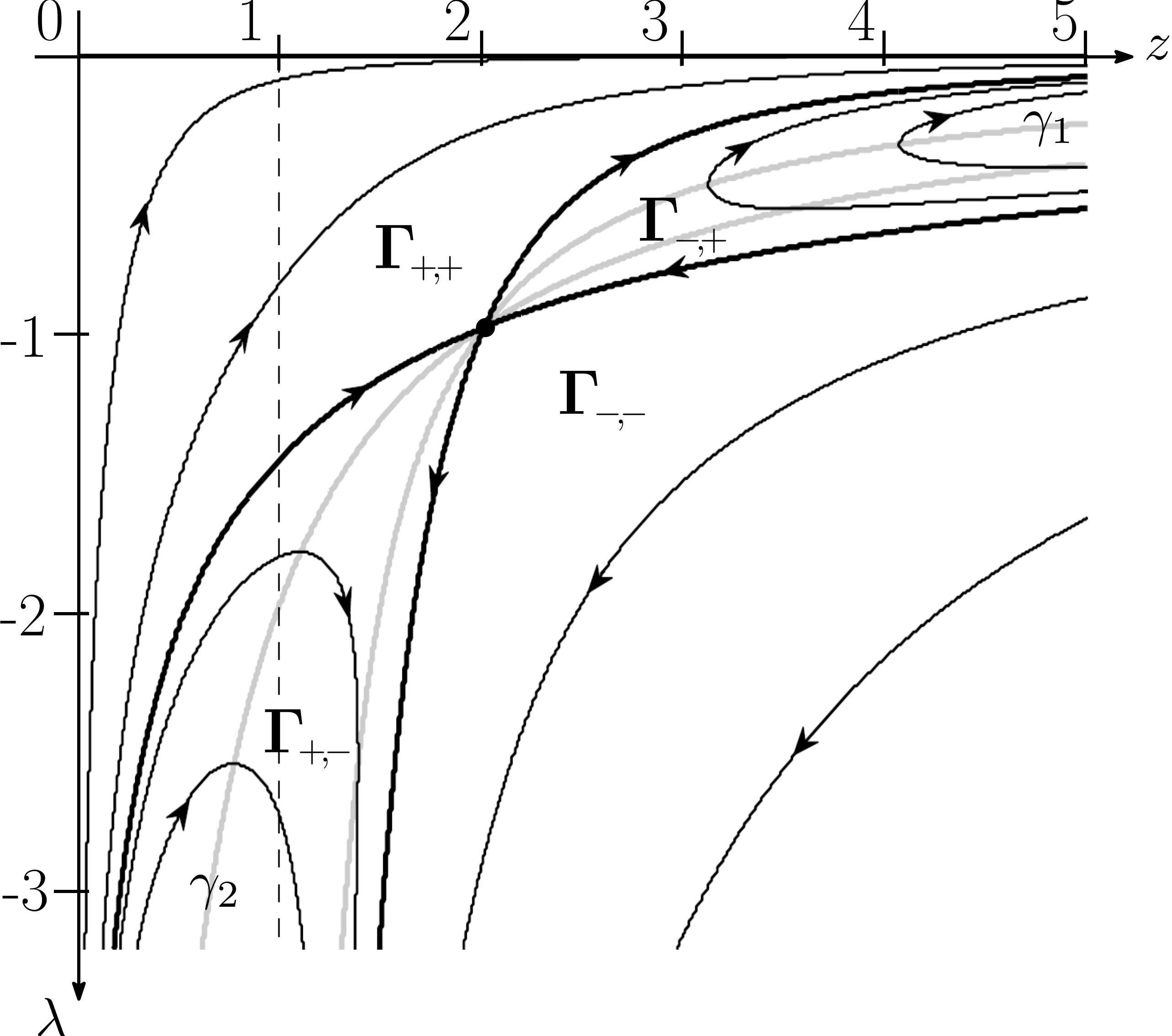}
        \caption{Phase portrait of \eqref{ham-lamb} around ($\hat{z},\hat{\lambda}$), where $\del = 0.01$}
        \label{fig:phase_sust}
\end{figure*}

For any initial state $(\xi,\beta)\in\Gamma$ there is a unique solution $(z_{\xi,\beta}(\cdot),\lambda_{\xi,\beta}(\cdot))$ of the system~\eqref{ham-lamb} satisfying initial conditions $z(0)=\xi$, $\lambda(0)=\beta$, and due to the standard extension result this solution is defined on some maximal time interval $[0,T_{\xi,\beta})$ in $\Gamma$ where $0<T_{\xi,\beta}\leq\infty$ (see~\cite[Chapter 2]{hartman1970ordinary}).

Let us consider behaviors of solutions $(z_{\xi,\beta}(\cdot),\lambda_{\xi,\beta}(\cdot))$ of system~\eqref{ham-lamb} in $\Gamma$ for all possible initial states $(\xi,\beta)\in\Gamma$ as $t\to T_{\xi,\beta}$. The standard analysis of system~\eqref{ham-lamb} shows that only three types of behavior of solutions $(z_{\xi,\beta}(\cdot),\lambda_{\xi,\beta}(\cdot))$ of~\eqref{ham-lamb} in $\Gamma$ as $t\to T_{\xi,\beta}$  are possible: \vskip3mm

{\bf 1)} $(z_{\xi,\beta}(t),\lambda_{\xi,\beta}(t))\in\Gamma_{-,-}$ or $(z_{\xi,\beta}(t),\lambda_{\xi,\beta}(t))\in\Gamma_{+,-}$ for all sufficiently large times  $t$. In this case $T_{\xi,\beta}=\infty$ and $\lim_{t\to\infty} \lambda_{\xi,\beta} (t)= -\infty$ while $\lim_{t\to\infty} z_{\xi,\beta} (t)= 1$. Due to Theorem~\ref{thm2} such asymptotic behavior does not correspond to an optimal path because it contradicts the necessary condition~\eqref{g-transv-lamb}. \vskip3mm

{\bf 2)} $(z_{\xi,\beta}(t),\lambda_{\xi,\beta} (t))\in\Gamma_{+,+}$ for all sufficiently large times $t$. In this case  $\lim_{t\to T_{\xi,\beta}}z_{\xi,\beta}(t)=\infty$ and $\lim_{t\to T_{\xi,\beta}}\lambda_{\xi,\beta}(t)=0$. If $(z_{\xi,\beta}(\cdot),\lambda_{\xi,\beta} (\cdot))$ corresponds to an optimal pair $(z_*(\cdot),y_*(\cdot))$ in $(P3)$  then due to Theorem~\ref{thm2} $z_*(\cdot)\equiv z_{\xi,\beta}(\cdot)$, $T_{\xi,\beta}=\infty$, $\lim_{t\to\infty}z_*(t)=\infty$,\\ and  $\lim_{t\to\infty}\lambda_{\xi,\beta}(t)~=~0$. Set $\lambda_*(\cdot)\equiv\lambda_{\xi,\beta}(\cdot)$ in this case and define the function $\phi_*(\cdot)$ by equality $\phi_*(t)=\lambda_*(t)z_*(t)$, $t\in [0,\infty)$.

By direct differentiation for a.e. $t\in [0,\infty)$ we get (see~\eqref{ham-lamb})
\[
	\dot\phi_*(t)\stackrel{\text{a.e.}}{=}(\del+1)\lambda_*(t)z_*(t)+2- \lambda(t)z_*(t)-1+\lambda_*(t)=\del\phi_*(t)+1+\lambda_*(t).
\]
Hence,
\begin{equation}
\label{phi_*b}
	\phi_*(t)= e^{\del t}\left[\phi_*(0)+\int_0^te^{-\del s}\left(1+\lambda_*(s)\right)\, ds\right],\qquad t\in [0,\infty).
\end{equation}
Since $\lim_{t\to\infty}\lambda_*(t)=0$ the improper integral $\int_0^\infty e^{-\del s}\left( 1+\lambda_*(s)\right)\, ds$ converges, and due to~\eqref{g-transv-lamb} we have $0>\phi_*(t)=\lambda_*(t)z_*(t)>-1/\del$ for all $t>0$. Due to~\eqref{phi_*b} this implies
\[
	\phi_*(0)= -\int_0^\infty e^{-\del s}\left(1+\lambda_*(s)\right)\, ds=-\frac{1}{\del} -\int_0^\infty e^{-\del s}\lambda_*(s)\, ds.
\]
Substituting this expression for $\phi_*(0)$ in~\eqref{phi_*b} we get
\[
	\phi_*(t)= -\frac{1}{\del}-e^{\del t}\int_t^{\infty}e^{-\del s}\lambda_*(s)\, ds,\qquad t\in [0,\infty).
\]
Due to the L'Hospital rule we have
\[
	\lim_{t\to\infty}e^{\del t}\int_t^{\infty}e^{-\del s}\lambda_*(s)\, ds =\lim_{t\to\infty}\frac{\int_t^{\infty}e^{-\del s}\lambda_*(s)\, ds}{e^{-\del t}}=\lim_{t\to\infty}\frac{\lambda_*(t)}{\del} =0.
\]
Hence,
\[
	\lim_{t\to\infty} y_*(t)=\lim_{t\to\infty}\frac{-1}{\lambda_*(t)z_*(t)}=\lim_{t\to\infty}\frac{-1}{\phi_*(t)}=\del.
\]
But due to the system~\eqref{ham-lamb} and the inequality $\del < 1$ this implies $\lim_{t\to\infty}z_*(t)\leq 1 <\infty$ that contradicts the equality $\lim_{t\to\infty}z_*(t)=\infty$. So, all these trajectories of~\eqref{ham-lamb} are the blow up ones. Thus, there are not any trajectories of~\eqref{ham-lamb} that correspond to optimal admissible pairs due to Theorem~\ref{thm2} in the case $2)$. \vskip3mm

{\bf 3)} $\lim_{t\to\infty}(z(t),\lambda(t))=(\hat z,\hat\lambda)$ as $t\to\infty$. In this case, since the equilibrium ($\hat{z},\hat{\lambda}$) is of saddle type, there are only two trajectories of~\eqref{ham-lamb} which tend to the equilibrium point $(\hat z,\hat\lambda)$ asymptotically as $t\to\infty$ and lying on the stable manifold of ($\hat{z},\hat{\lambda}$). One such trajectory $(z_1(\cdot),\lambda_1(\cdot))$ approaches the point $(\hat z,\hat\lambda)$ from the left from the set $\Gamma_{+,+}$ (we call this trajectory {\it the left equilibrium trajectory}), while the second trajectory $(z_2(\cdot),\lambda_2(\cdot))$ approaches the point $(\hat z,\hat\lambda)$ from the right from the set $\Gamma_{-,-}$ (we call this trajectory {\it the right equilibrium trajectory}). It is easy to see that both these trajectories are fit to estimate~\eqref{g-transv-lamb} and  stationarity condition~\eqref{stat-lamb}. Hence, $(z_1(\cdot),\lambda_1(\cdot))$, $(z_2(\cdot),\lambda_2(\cdot))$ and the stationary trajectory $(\hat z(\cdot),\hat\lambda(\cdot))$, $\hat z(\cdot)\equiv \hat z$, $\hat\lambda(\cdot)\equiv\hat\lambda$, $t\geq 0$, are unique trajectories of~\eqref{ham-lamb} which can correspond to the optimal pairs in problem $(P3)$ due to Theorem~\ref{thm2}.

Due to Theorem~\ref{thm-exist} for any initial state $z_0>0$ an optimal control $y_*(\cdot)$ in problem $(P3)$ exists. Hence, for any initial state $\xi\in (0,\hat z)$ there is a unique $\beta<0$ such that the corresponding trajectory $(z_{\xi,\beta}(\cdot),\lambda_{\xi,\beta}(\cdot))$ coincides (up to a shift in time) with the left equilibrium trajectory $(z_1(\cdot),\lambda_1(\cdot))$ on time interval $[0,\infty)$. Analogously, for any initial state $\xi >\hat z$ there is a unique $\beta<0$ such that the corresponding trajectory $(z_{\xi,\beta}(\cdot),,\lambda_{\xi,\beta}(\cdot))$ coincides (up to a shift in time) with the right equilibrium trajectory $(z_2(\cdot),\lambda_2(\cdot))$ on $[0,\infty)$. The corresponding optimal control is defined uniquely by~\eqref{eq:max_prncpl}. Thus, for any initial state $z_0>0$ the corresponding optimal pair $(z_*(\cdot),y_*(\cdot))$ in $(P3)$ is unique, and due to Theorem~\ref{thm2} the corresponding current value adjoint variable $\lambda_*(\cdot)$ is also unique.

\subsection{The Unsustainable Case}

Now, consider the case when $\del \geq 1$.  In this case $v_2(z)>v_1(z)$ for all $z>1$ and hence the nullclines $\gamma_1$ and $\gamma_2$ do not intersect in $\Gamma$. Accordingly, the system~\eqref{ham-lamb} does not have an equilibrium point in $\Gamma$.

The nullclines $\gamma_1$ and $\gamma_2$ divide the set $\Gamma$ in three open regions:
{
\allowdisplaybreaks
\setlength{\mathindent}{0cm}
\begin{align*}
	&\hat\Gamma_{-,-}=\Big\{ (z,\lambda)\in\Gamma\colon\, \lambda< v_1(z), z >1\Big\},\\
	&\hat\Gamma_{+,-} = \Big\{  (z,\lambda) \in \Gamma\colon\, \lambda< v_2(z), 0<z\leq 1 \Big\}  \bigcup \Big\{ (z,\lambda) \in \Gamma\colon\, v_1(z)<\lambda< v_2(z), z>1  \Big\},\\
	&\hat\Gamma_{+,+} = \Big\{  (z,\lambda) \in \Gamma\colon\,v_2(z) < \lambda < 0, 0 < z \leq \hat z \Big\} \bigcup \Big\{ (z,\lambda) \in \Gamma\colon\, v_1(z) < \lambda < 0, \hat z < z < \infty \Big\}.
\end{align*}
}
The behavior of the flows is shown in Figure~\ref{fig:phase_unsust} through the phase portrait.

\begin{figure*}[h!]
        \captionsetup{width=0.8\textwidth}
        \centering
        \includegraphics[width = 0.6\linewidth]{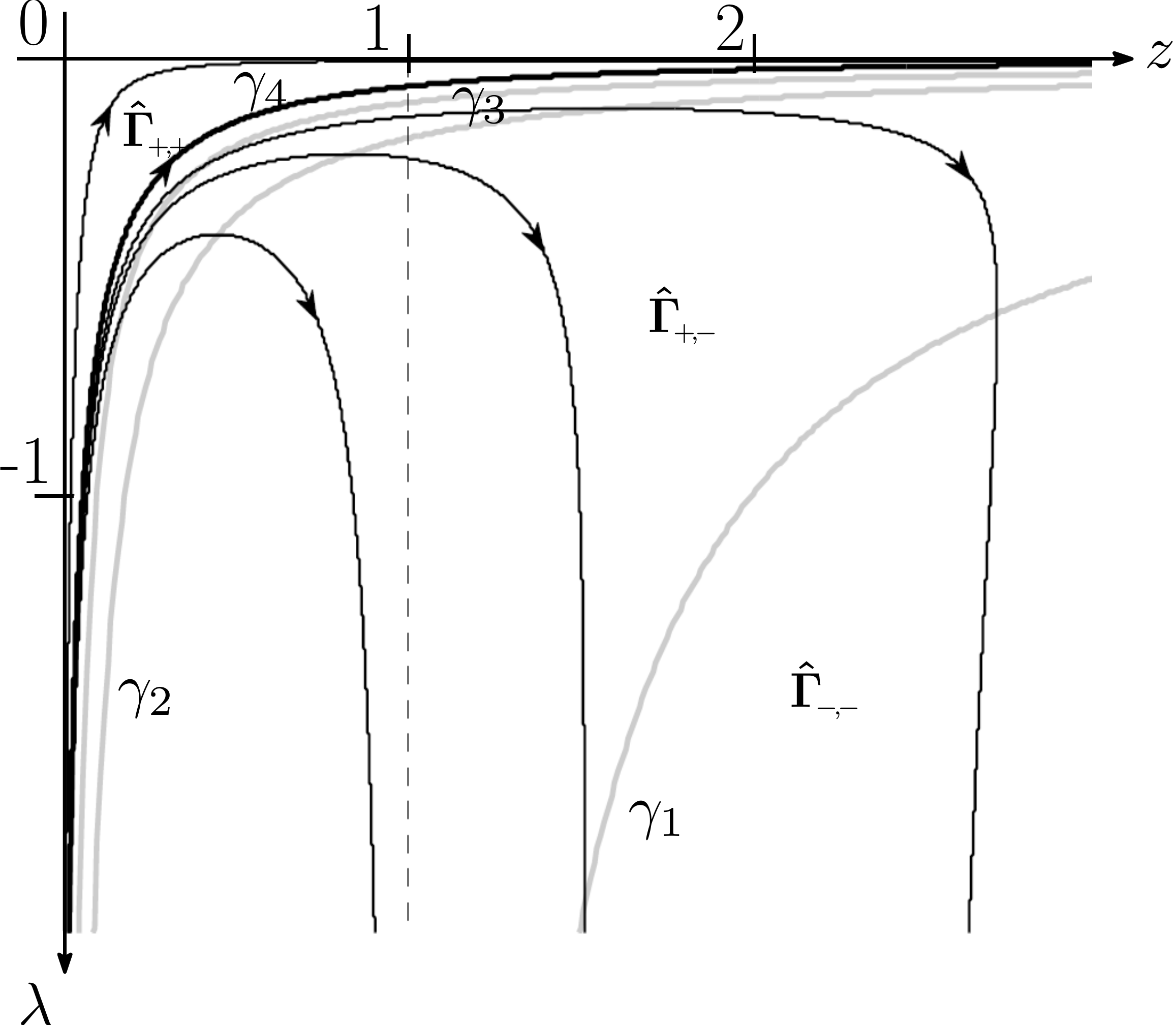}
        \caption{Phase portrait of \eqref{ham-lamb} in the case $\del = 10 \geq 1$.}
        \label{fig:phase_unsust}
\end{figure*}

Any solution $(z(\cdot),\lambda(\cdot))$ of~\eqref{ham-lamb} in $\Gamma$ has the definite signs of derivatives of its $(z,\lambda)$ coordinates in each set $\hat\Gamma_{-.-}$, $\hat\Gamma_{+,+}$, and $\hat\Gamma_{-,+}$ as indicated by the subscripts.

The standard analysis of the behaviors of solutions $(z(\cdot),\lambda(\cdot))$ of system~\eqref{ham-lamb} in each of sets $\hat\Gamma_{-.-}$, $\hat\Gamma_{+.-}$ and $\Gamma_{+,+}$ shows that there are only two types of asymptotic behavior of solutions $(z(\cdot),\lambda(\cdot))$ of~\eqref{ham-lamb} that are possible:

{\bf 1)} $\lim_{t\to\infty}z(t)=1$, $\lim_{t\to\infty}\lambda (t) =-\infty$. In this case $(z(t),\lambda (t))\in\hat\Gamma_{-,-}$ for all sufficiently large times $t\geq 0$. Due to Theorem~\ref{thm2} such asymptotic behavior does not correspond to an optimal admissible pair because in this case $\lim_{t\to\infty}\lambda(t)x(t)=-\infty$ that contradicts condition~\eqref{g-transv-lamb}. Thus this case can be eliminated from the consideration.

{\bf 2)} $\lim_{t\to\infty}z(t)=\infty$, $\lim_{t\to\infty}\lambda (t) =0$. In this case $(z(t),\lambda (t))\in\hat\Gamma_{+,+}$  for all $t\geq 0$. Since the case 1) can be eliminated from the consideration, we conclude that the case 2) is the only one that can be realized for an optimal admissible pair $(z_*(\cdot),y_*(\cdot))$ (which exists) in $(P3)$  due to the maximum principle (Theorem~\eqref{thm2}).

Let us consider behavior of trajectory $(z_*(\cdot),\lambda_*(\cdot))$ of system~\eqref{ham-lamb} that corresponds to the optimal pair $(z_*(\cdot),y_*(\cdot))$ in the set $\hat\Gamma_{+,+}$ in more details.

As in the subcase $(b)$ of case $(i)$ above, define the function $\phi_*(\cdot)$ as follows:
\[
	\phi_*(t)=\lambda_*(t)z_*(t),\qquad t\in [0,\infty).
\]
Repeating the calculations presented in the subcase $(b)$ of case $(i)$ we get
\[
	\phi_*(t)= -\frac{1}{\del}-e^{\del t}\int_t^{\infty}e^{-\del s}\lambda_*(s)\, ds,\qquad t\in [0,\infty).
\]
As in the subcase $(b)$ of case $(i)$ above,  due to the L'Hospital rule this implies
\[
	\lim_{t\to\infty}e^{\del t}\int_t^{\infty}e^{-\del s}\lambda_*(s)\, ds =\lim_{t\to\infty}\frac{\int_t^{\infty}e^{-\del s}\lambda_*(s)\, ds}{e^{-\del t}}=\lim_{t\to\infty}\frac{\lambda_*(t)}{\del} =0.
\]
Hence,
\[
	\lim_{t\to\infty}y_*(t)=\lim_{t\to\infty}\frac{-1}{\lambda_*(t)y_*(t)}=\lim_{t\to\infty}\frac{-1}{\phi_*(t)}=\del.
\]
Thus, asymptotically, any optimal admissible control $y_*(\cdot)$ satisfies the Hotelling rule~\cite{hotelling1931economics} of optimal depletion of an exhaustible resource in the case $\del \geq 1$.

Now let us show that the optimal control $y_*(\cdot)$ is defined  uniquely by Theorem~\ref{thm2} in the case $\del \geq 1$.

Define the function $v_3\colon (0,\infty)\mapsto\mathbb{R}^1$ and the curve $\gamma_3\subset\Gamma $ as follows:
\[
	v_3(z)=-\frac{1}{\del z},\quad z\in (0,\infty),\qquad \gamma_3=\left\{(z,\lambda)\colon \lambda=v_3(z), z\in (0,\infty)\right\}.
\]

It is easy to see that $v_3(z)\geq v_2(z)$ for all $z>0$  and $v_3(z)>v_1(z)$ for all  $z>1$  in the case $\del \geq 1$.  Hence, the curve $\gamma_3$ is located not below $\gamma_2$ and strictly above $\gamma_1$ in $\hat\Gamma_{+,+}$ (see Figure~\ref{fig:phase_unsust}). Notice that if $\del=1$ then $\gamma_3$ coincides with $\gamma_2$ while if $\del > 1$ then $\gamma_3$ lies strictly above $\gamma_2$ in $\hat\Gamma_{+,+}$. It can be demonstrated directly that any trajectory $(z(\cdot),\lambda(\cdot))$ of system~\eqref{ham-lamb} can intersect curve $\gamma_3$ only one time and only in the upward direction.

Due to~\eqref{g-transv-lamb} a trajectory $(z_*(\cdot),\lambda_*(\cdot))$ of system~\eqref{ham-lamb} that corresponds to the optimal pair $(z_*(\cdot),y_*(\cdot))$ lies strictly above $\gamma_3$. Since the system~\eqref{ham-lamb} is autonomous by virtue of the theorem on uniqueness of a solution of first-order ordinary differential equation (see~\cite[Chapter 3]{hartman1970ordinary}) trajectories of system~\eqref{ham-lamb} that lie above $\gamma_3$ do not intersect the curve $\gamma_4=\{(z,\lambda)\colon z=z_*(t), \lambda=\lambda_*(t), t\geq 0\}$ which is the graph of the trajectory $(z_*(\cdot),\lambda_*(\cdot))$.

Further, trajectory $(z_*(\cdot),\lambda_*(\cdot))$ is defined on infinite time interval $[0,\infty)$. This implies that all trajectories $\left(z_{z_0,\beta}(\cdot),\lambda_{z_0,\beta}(\cdot)\right)$, $\beta\in (-1/(\del z_0),\lambda_*(0))$, are also defined on the whole infinite time interval $[0,\infty)$, i.e. $T_{{z_0,\beta}}=\infty$ for all $\beta\in (-1/(\del z_0),\lambda_*(0))$. Thus, we have proved that there is a nonempty set (a continuum) of trajectories $\left\{(z_{z_0,\beta}(\cdot),\lambda_{z_0,\beta}(\cdot))\right\}$, $\beta\in (-1/(\del z_0),\lambda_*(0))$, $t\in [0,\infty)$, of system~\eqref{ham-lamb} lying strictly between the curves $\gamma_3$ and $\gamma_4$. All these trajectories are defined on the whole infinite time interval $[0,\infty)$ and, hence, all of them correspond to some admissible pairs $\left\{(z_{z_0,\beta}(\cdot),y_{z_0,\beta}(\cdot))\right\}$. Since these trajectories are located above $\gamma_3$ they satisfy also the estimate~\eqref{g-transv-lamb}.

Consider the current value Hamiltonian $M(\cdot,\cdot)$ for $(z,\lambda)$ lying above $\gamma_3$ in $\hat\Gamma_{+,+}$ (see~\eqref{current-H}):
\begin{multline}
\label{Ham-M}
	M(z,\lambda)=\sup_{y\geq\del}\left\{y\lambda z+\ln y\right\}+(1-z)\lambda -\ln z\\
	= -1-\ln (-\lambda z)+(1-z)\lambda -\ln z,\qquad -\frac{1}{\del z}<\lambda<0.
\end{multline}
For any trajectory $(z_{z_0,\beta}(\cdot),\lambda_{z_0,\beta}(\cdot))$  of system~\eqref{ham-lamb} lying above $\gamma_3$ in $\hat\Gamma_{+,+}$ we have
\[
	z_{z_0,\beta}(t)\geq e^{(\del-1)t}z_0,\quad t\geq 0.
\]
On the other hand for any trajectory $(z_{z_0,\beta}(\cdot),\lambda_{z_0,\beta}(\cdot))$ of system~\eqref{ham-lamb}  lying between $\gamma_3$ and $\gamma_4$ in $\hat\Gamma_{+,+}$ we have
\[
	\frac{1}{2(1+\del)}< -\lambda_{z_0,\beta}(t)z_{z_0,\beta}(t)<\frac{1}{\del}\quad\mbox{if}\quad z_{z_0,\beta}(t)>1.
\]
These imply that for any trajectory $(z_{z_0,\beta}(\cdot),\lambda_{z_0,\beta}(\cdot))$  of  system~\eqref{ham-lamb} lying between $\gamma_3$ and $\gamma_4$ in $\hat\Gamma_{+,+}$ and for corresponding adjoint variable $\psi_{z_0,\beta}(\cdot)$, $\psi_{z_0,\beta}(t) =e^{-\del t}\lambda_{z_0,\beta}(t)$, $t\geq 0$, we have
\[
	\lim_{t\to\infty}H(t,z_{z_0,\beta}(t),\psi_{z_0,\beta}(t))=\lim_{t\to\infty}\left\{e^{-\del t}M(z_{z_0,\beta}(t),\lambda_{z_0,\beta}(t))\right\}=0.
\]
Hence, for any such trajectory $(z_{z_0,\beta}(\cdot),\lambda_{z_0,\beta}(\cdot))$  of  system~\eqref{ham-lamb} we have (see~\eqref{stat-lamb})
\[
	M(z_{z_0,\beta}(t),\lambda_{z_0,\beta}(t))=\del e^{\del t}\int_t^{\infty}e^{-\del s}g(z_{z_0,\beta}(t),\lambda_{z_0,\beta}(t))\, ds,\qquad t\geq 0.
\]
Let $y_{z_0,\beta}(\cdot)$ be the control corresponding to $z_{z_0,\beta}(\cdot)$, i.e. $y_{z_0,\beta}(t)=-1/(z_{z_0,\beta}(t)\lambda_{z_0,\beta}(t))$. Then taking in the last equality $t=0$ we get
\[
	J(z_{z_0,\beta}(\cdot),y_{z_0,\beta}(\cdot))=\int_0^{\infty}e^{-\del s}g(z_{z_0,\beta}(t),\lambda_{z_0,\beta}(t))\, ds=\frac{1}{\del}M(z_{z_0,\beta}(0),\lambda_{z_0,\beta}(0)).
\]
For any $t\geq 0$ function $M(z_*(t),\cdot)$ (see~\eqref{Ham-M}) increases on $\left\{\lambda\colon -1/(\del z_*(t))<\lambda< 0\right\}$. Hence, $M(z_*(t),\cdot)$ reaches its maximal value in $\lambda$ on the set $\left\{ \lambda\colon -1/(\del z)<\lambda\leq\lambda_*(t)\right\}$ at the point $\lambda_*(t)$ that correspond to the optimal path $z_*(\cdot)$. Thus, all trajectories \\$(z_{z_0,\beta}(\cdot),\lambda_{z_0,\beta}(\cdot))$  of  system~\eqref{ham-lamb} lying between $\gamma_3$ and $\gamma_4$ in $\hat\Gamma_{+,+}$ do not correspond to optimal admissible pairs in $(P3)$.

From this we can also conclude that all trajectories $(z(\cdot),\lambda(\cdot))$  of system~\eqref{ham-lamb} lying above $\gamma_4$ also do not correspond to optimal admissible pars in $(P3)$. Indeed, if such trajectory $(z(\cdot),\lambda(\cdot))$ corresponds to an optimal pair $(z(\cdot),y(\cdot))$  in $(P3)$ then it must satisfy to condition~\eqref{stat-lamb}. But in this case we have $\lambda(0)>\lambda_*(0)$ and
\[
	J(z(\cdot),y(\cdot)) =\frac{1}{\del}M(z_0,\lambda(0))= \frac{1}{\del}M(z_0,\lambda_*(0)) = J(z_*(\cdot),\lambda_*(\cdot)),
\]
that contradicts the fact that function  $M(z_0,\cdot)$ increases on $\left\{ \lambda\colon -1/(\del z)<\lambda< 0\right\}$. Thus, for any initial state $z_0$ there is a unique optimal pair $(z_*(\cdot),y_*(\cdot))$ in $(P3)$ in the case $\del \geq 1$. 

In the next section we discuss the issue of sustainability  of optimal paths for different values of the parameters  in the model.

\section{Discussion: A Generalized Notion of Sustainability}\label{sec:discus}

Following Solow~\cite{solow1956contribution} we assume that the knowledge stock $S(\cdot)$ grows exponentially, i.e. $S(t)=S_0e^{\upmu t}$, $t\geq 0$, where $\upmu\geq 0$ and $S_0>0$ are constants.

Similar to Valente~\cite{valente2005sustainable} we say that an admissible pair $(x(\cdot),y(\cdot))$ is {\it sustainable} in our model if the corresponding instantaneous utility function $t\mapsto\ln P(t)$, $t\geq 0$, non-decreases in the long run, i.e.
\[
	\lim_{T\to\infty} \inf_{t\geq T}\frac{d}{dt}\ln P(t) =\lim_{T\to\infty}\inf_{t\geq T}\frac{\dot{P}(t)}{P(t)}\geq 0.
\]
Substituting $P(t)=S(t)\left(y(t)x(t)\right)^\upbeta$, $S(t)=S_0e^{\upmu t}$, $t\geq 0$,  (see~\eqref{Y}) we get the following characterization of sustainability of an admissible pair $(x(\cdot),y(\cdot))$:
\begin{equation}
\label{Y-sust}
	\frac{\mu}{\upbeta}+\lim_{T\to\infty}\inf_{t\geq T}\left[\frac{\dot y(t)}{y(t)}+\frac{\dot x(t)}{x(t)}\right]\geq 0.
\end{equation}

We  call an admissible pair $(x(\cdot),y(\cdot))$ {\it strongly sustainable} if it is sustainable and, moreover, the resource stock $x(\cdot)$ is non-vanishing in the long run, i.e.
\begin{equation}
\label{strong_Y-sust}
	\lim_{T\to\infty}\inf_{t\geq T}x(t) = x_{\infty}>0.
\end{equation}

Consider case $(i)$ when $\del < 1$. In this case due to Theorem~\ref{thm-synth} (see Chapter \ref{chap:learning}) there is a unique optimal equilibrium pair in the problem (see~\eqref{eq_pt} and~\eqref{eq-u}): $\hat u(t)\equiv\hat y=(1+\del)/2$, $\hat x(t)\equiv \hat x=(1-\del)/2>0$, $t\geq 0$, and for any initial state $x_0$ the corresponding optimal path $x_*(\cdot)$ approaches asymptotically to the optimal equilibrium state  $\hat x$  while the corresponding optimal exploitation rate $y_*(\cdot)$ approaches asymptotically to the optimal equilibrium value $\hat y$. Hence, both conditions~\eqref{Y-sust} and~\eqref{strong_Y-sust} are satisfied. Thus the optimal admissible pair $(x_*(\cdot),y_*(\cdot))$ is strongly sustainable in this case.

Consider case $(ii)$ when $\del \geq 1$. In this case due to Theorem~\ref{thm-synth} (see Chapter \ref{chap:learning}) for any initial state $x_0$ the corresponding optimal control $y_*(\cdot)$ asymptotically satisfies the Hotelling rule of optimal depletion of an exhaustible resource~\cite{hotelling1931economics}, i.e. $\lim_{t\to\infty}y_*(t)=\del$, and $\lim_{t\to\infty}\dot y_*(t)/y_*(t)=0$. The corresponding optimal path $x_*(\cdot)$ is asymptotically vanishing, and
\[
	\lim_{t\to\infty}\dot x_*(t)/x_*(t)=\lim_{t\to\infty}\left( 1-y_*(t) -x_*(t)\right)=1-\del.
 \]
Hence, in the case $(ii)$ the sustainability condition~\eqref{Y-sust} takes the following form: \vspace{-2pt}
\begin{equation}
\label{Y-sust(ii)}
	\frac{\upmu}{\upbeta} \geq \del - 1.
\end{equation}
It is interesting to point our here that in the case $\upbeta=1$ condition~\eqref{Y-sust(ii)} coincides with Valente's necessary condition for sustainability in his capital-resource model with a renewable resource growing exponentially ~\cite{valente2005sustainable}. This takes place if the resource dynamics are considered in their original form as given by~\eqref{eq:ec_mod} (see the concluding section of~\cite{aseev2016optimal}) as opposed to the non-dimensionalized dynamics considered here.  

Since in the case $(i)$  the inequality~\eqref{Y-sust(ii)} holds automatically we conclude that~\eqref{Y-sust(ii)} is a necessary and sufficient condition (a criterion) for sustainability of the optimal pair $(x_*(\cdot),y_*(\cdot))$ in our model while the stronger inequality
\[
	\del < 1
\]
gives a criterion for its strong sustainability.
\vskip3mm

The criterion \eqref{Y-sust(ii)} gives the following guidelines for sustainable optimal growth: \emph{1)} Increase ratio of growth rate of knowledge stock $\upmu$ to output elasticity $\upbeta$; and \emph{2)} Decrease social discount $\del$ i.e., plan long term. The sustainability criterion \eqref{Y-sust(ii)} gives a relationship between the state of technology (depicted by $\upbeta$),  accumulation of knowledge (depicted by $\upmu$) and  foresight of the social planner (depicted by $\del$). According to the guideline \emph{1)} above, it is the \emph{ratio} between $\upmu$ and $\upbeta$ that matters and not the individual quantities.

In the next chapter, we return our focus to the two-agent symmetric semi-homogeneous society. Examining the dual-agent network in the context of optimal control would lead to the analytically more complex methods of multivariable optimal control \cite{ostertag2011mono}. More importantly, the framework would not allow us to study the interactions between the consumer groups. We instead examine the dual-agent network via game theory which permits a more accurate representation of decentralized control mechanisms, where every agent optimizes her own objective. Moreover, the theory allows us to explicitly study the strategic interactions between the agents, something that was not present in the single agent model considered in this chapter.

\clearpage

\chapter{Strategic Interactions in the Game-theoretic Framework}

\label{chap:game}

The tragedy of the commons can be seen as one particular instance of the more general problem of eliciting cooperation between individuals, when there exists a temptation for individuals to defect. Hardin \cite{hardin1968tragedy} was one of the first who proposed to employ centralized mechanisms to coerce a society in order to achieve collectively optimal outcomes in such situations. He further asserted that any decentralized mechanism, regulated from within the society, must necessarily fail in achieving those outcomes. As discussed in Chapter~\ref{chap:res_gov}, this assertion was challenged by Ostrom \cite{ostrom1990governing}, who reported several real-world examples of successful institutions, conceived and regulated by the communities themselves, and used a game theoretic framework to conceive a new theory of collective action. Typical game theoretic formalizations, which have been used to depict tragedies, include the prisoner's dilemma, the assurance game and the snowdrift game. These games have been studied in both discrete and continuous action spaces \cite{doebeli2005models}. 

In this chapter we return to the dual-agent network whose open-loop characteristics were studied in Section \ref{sec:open_dual}. We formulate the system as static-two player game with the pay-offs defined as the resource harvested at steady state and the strategies defined as the psychological attributes of the consumers. We then develop a notion of ``tragedy" as representing the difference between the optimal and rational outcomes of the consumption game. The tragicness is studied for both continuous and discrete strategy spaces and its dependence on various system parameters is also explored. In the process we highlight various characteristics of non-tragic societies and draw attention to trends that may overcome tragic situations.

\section{The Continuous Consumption Game}
\label{sec:ct_game}
Here we formalize the socio-ecological system in \eqref{eq:two_com_sys} as a static, one-shot two-player game, in which two consumer groups (players) have access to a single common good (natural resource). Each player $i \in \{1,2\}$ chooses her level of environmentalism $\uprho_i$ and her social relevance $\upnu_i$ (note that $\upalpha_i=1-\upnu_i$ and so $\upalpha_i$ is determined by the choice of $\upnu_i$). Thus the strategy set for each individual i is given as $\tilde{\mathcal{S}}_i=\{\uprho_i, \upnu_i\}$. The payoff $\tilde{\uppi}_i$ that each individual receives is equal to the amount of resource $\xbar \ybar_i$ that individual harvests at steady state, where ($\xbar$, $\ybar_1$, $\ybar_2$) are given by \eqref{eq:equi-2comp}. Note that $\b_i$ does not affect the equilibrium and thus is not included in the strategy set $\tilde{\mathcal{S}}_i$. The game is then defined as a 3-tuple $\tilde{\mathcal{G}}=\langle \mathcal{I},(\tilde{\mathcal{S}}_i ),(\tilde{\uppi}_i )\rangle$ where $\mathcal{I}=\{1,2\}$ denotes the set of players, $\tilde{\mathcal{S}}_i= [0,1]\times[0, 1]$ ; $i\in \mathcal{I}$ is the strategy space for $i$ and $\tilde{\uppi}_i:\tilde{\mathcal{S}}_1 \times \tilde{\mathcal{S}}_2 \rightarrow \mathbb{R}$; $i\in \mathcal{I}$ is the payoff function for consumer $i$. 

\subsection{Deriving the Nash Equilibrium}

For the two-player game $\tilde{\mathcal{G}}$ defined above, the pay-off functions are given as $\tilde{\uppi}_1 (\uprho_1,\upnu_1,\uprho_2,\uprho_2) = \xbar\ybar_1$ and $\tilde{\uppi}_2 (\uprho_1,\upnu_1,\uprho_2,\uprho_2) = \xbar\ybar_2$, where
\begin{align*}
\begin{split}
	&\xbar(\uprho_1,\upnu_1,\uprho_2,\uprho_2) =   \frac{(1-\upnu_1) \upnu_2}{\upnu_1 \!+\! \upnu_2 - 2 \upnu_1 \upnu_2} \uprho_1 + \frac{(1-\upnu_2) \upnu_1}{\upnu_1 + \upnu_2 - 2 \upnu_1 \upnu_2} \uprho_2, \\
	& \ybar_1(\uprho_1,\upnu_1,\uprho_2,\uprho_2) = \frac{1}{2} - \frac{1-\upnu_1}{2\left( \upnu_1 \!\!+\!\! \upnu_2 \!-\!\! 2 \upnu_1 \upnu_2 \right)} \uprho_1 \!+\! \frac{(1\!-\!\upnu_2) \left(1 \!- \!2\upnu_1 \right)}{2\left( \upnu_1 \!\!+\!\! \upnu_2 \!-\!\! 2 \upnu_1 \upnu_2 \right)} \uprho_2, \\
	& \ybar_2(\uprho_1,\upnu_1,\uprho_2,\uprho_2) = \frac{1}{2} + \frac{(1\!-\!\upnu_1) \left(1 \!- \!2\upnu_2 \right)}{2\left( \upnu_1 \!\!+\!\! \upnu_2 \!-\!\! 2 \upnu_1 \upnu_2 \right)} \uprho_1 \!-\! \frac{1-\upnu_2}{2\left( \upnu_1 \!\!+\!\! \upnu_2 \!-\!\! 2 \upnu_1 \upnu_2 \right)} \uprho_2. 
\end{split}
\end{align*}
where we have used the formula $\upalpha_i + \upnu_i = 1$ to replace $\upalpha_i$. The ``best response" of player $i$ is the strategy ($\tilde{\uprho}_i$,$\tilde{\upnu}_i$) that maximizes $\tilde{\uppi}_i$ for a fixed strategy of the other player $j\neq i$. The best response functions are thus given as
{\setlength{\mathindent}{0cm}
\begin{align}
\label{eq:br}
	(\tilde{\uprho}_i,\tilde{\upnu}_i) = \argmax_{(\uprho_i,\upnu_i)}\tilde{\uppi}_i (\uprho_i,\upnu_i,\uprho_j,\uprho_j).
\end{align}}
By calculating the partial derivatives and putting them to zero, we find that for a fixed pair ($\uprho_j,\upnu_j)$, $\tilde{\uppi}_i (\cdot,\cdot,\uprho_j,\upnu_j)$ is maximized not at a single point but along the following curve
\begin{align*}
	\tilde{\uprho}_i = \left( \frac{(1-\upnu_j)(\uprho_j-\upnu_j(1-\uprho_j))}{2\upnu_j} \right) \frac{1}{\tilde{\upnu}_i - 1} + \frac{\uprho_j + \upnu_j(1-\uprho_j)(2\upnu_j-1)}{2 \upnu_j}; \quad i,j \in \mathcal{I}; i \neq j
\end{align*}
A realization of this curve is shown in Figure \ref{fig:nash}.
\begin{figure}[t!]
	\captionsetup{width=0.8\textwidth}
	\begin{center}
		\includegraphics[width=\linewidth]{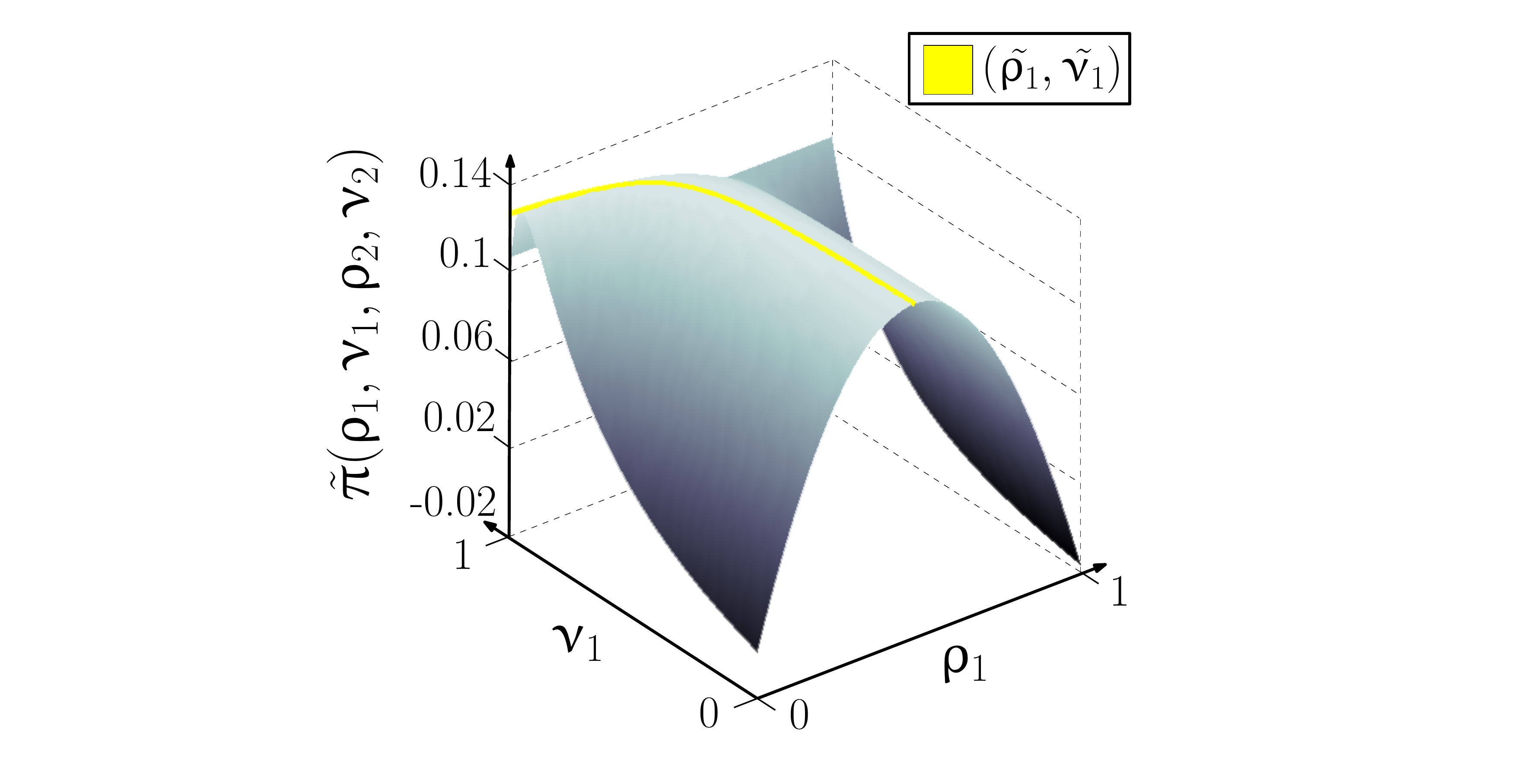}
	\end{center}
	\caption{An example of the payoff function of one player, given the other player's strategy. The yellow curve represents the best response. Here $\uprho_2 = 0.7$ and $\upnu_2 = 0.9$.}
	\label{fig:nash}
\end{figure}
The Nash equilibrium $(\uprho_1^\#,\upnu_1^\#,\uprho_2^\#,\upnu_2^\#)$, consists of all possible outcomes such that the strategy of each player is a best response to the other player's strategy. Thus at the Nash equilibrium \\$(\uprho_i^\#, \upnu_i^\#) = \argmax_{(\uprho_i,\upnu_i)}\tilde{\uppi}_i (\uprho_i,\upnu_i,\uprho_j^\#,\uprho_j^\#)$ and $(\uprho_j^\#, \upnu_j^\#) = \argmax_{(\uprho_j,\upnu_j)}\tilde{\uppi}_j (\uprho_i^\#,\upnu_i^\#,\uprho_j,\uprho_j)$. Solving this condition algebraically gives the following description of the Nash equilibria
\begin{align}
\label{eq:nash1}
	\uprho_i^\# = \frac{\upnu_i ^\#(3 \upnu_j^\# - \upnu_i^\# - 2 \upnu_i^\# \upnu_j^\# )}{(1-\upnu_i^\# )(\upnu_i^\#+\upnu_j^\#+2\upnu_i^\# \upnu_j^\# )} \quad   i,j =1,2, i \neq j.
\end{align}
which shows that there exist an infinite number of Nash equilibria in this formulation of the consumption game. For the purpose of the following analysis, we redefine the strategy set $\tilde{\mathcal{S}}$ to include only the environmentalisms ($\uprho_1$, $\uprho_2$) and treat ($\upnu_1$, $\upnu_2$) as given parameters instead of decision variables.This is because excluding the $\upnu_i$'s from the strategy set results in exactly the same Nash equilibrium as given above; so treating the $\upnu_i$ as exogenous parameters simplifies the further analysis and does not qualitatively affect the resulting findings.

Hence we redefine the consumption game as the 3-tuple $\mathcal{G} = \langle \mathcal{I}, \mathcal{S}_i, \uppi_i \rangle$, where $I = \{1,2\}$ denotes the set of players, $\mathcal{S}_i$, $i \in \mathcal{I}$ is the indexed strategy space for group $i$, and $\uppi_i: \mathcal{S}_1 \times \mathcal{S}_2 \rightarrow \mathbb{R}$, $i \in \mathcal{I}$ is the indexed payoff function for $i$, defined on $\mathcal{S}_i$. Each group is free to choose its level of environmentalism $\uprho_i \in \mathbb{R}$. The strategy set for each group $i$ is given as $\mathcal{S}_i=\{\uprho_i\}$. The payoff $\uppi_i(\uprho_i,\uprho_j)$ that each group receives is equal to the amount of resource $\xbar\ybar_i(\uprho_i,\uprho_j)$ the group $i$ harvests at steady state.

\begin{figure}[t]
	\captionsetup{width=0.8\textwidth}
	\begin{center}
		\includegraphics[width=\linewidth]{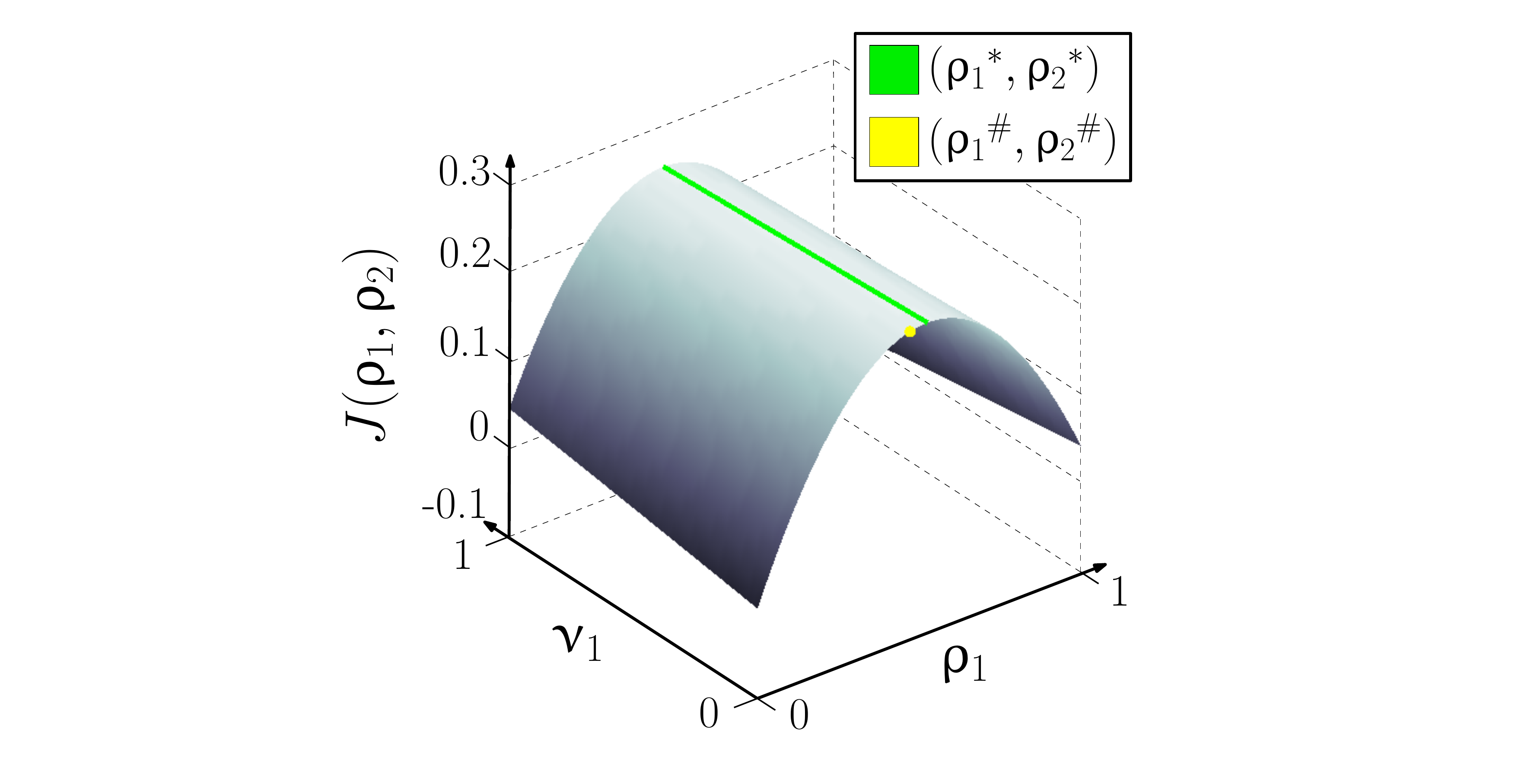}
	\end{center}
	\caption{An example of the social welfare function for $\upnu_1 = 0.3$ and $\upnu_2 = 0.9$. The green curve represents the collective optimal whereas the yellow point represents the Nash equilibrium.}
	\label{fig:optimal}
\end{figure}

\subsection{Collective Optimality and the Notion of Tragicness}

A property inherent to any representation of the tragedy of the commons is that individually rational behavior leads to outcomes that are collectively suboptimal. We thus compare the environmentalism of each group at the Nash equilibrium with one that maximizes the utilitarian welfare function of total consumption $J(\uprho_1,\uprho_2) = \uppi_1(\uprho_1,\uprho_2) + \uppi_2(\uprho_1,\uprho_2)$. We find that the latter optimum $(\uprho_1^*, \uprho_2^*)$ is not unique, and is given by the following curve
\begin{align} 
 2\upnu_2(1-\upnu_1)\uprho_1^* + 2\upnu_1(1-\upnu_2)\uprho_2^* - \upnu_1(1-\upnu_2) - \upnu_2(1-\upnu_1) = 0,
\end{align}
A single realization of this curve is shown in Figure \ref{fig:optimal}. with the Nash equilibrium situated at some distance from it (naturally determined by parameters $\upnu_1$ and $\upnu_2$). We call this distance the tragicness of the consumption game, which, more specifically, is defined here as the length of the shortest line joining the Nash equilibrium and a point on the optimal curve (see Figure \ref{fig:tragic}). Thus in a non-tragic game, the Nash equilibrium would lie exactly on the welfare optimal curve (resulting in tragicness equal to zero). All other games in which the Nash equilibrium does not lie on this curve are classified as tragic games, where the magnitude of the distance between the Nash equilibrium and the Pareto-optimal solutions is quantified through their tragicness. The concept of tragicness, which we introduce here, resembles the ``price of anarchy" \cite{koutsoupias2009worst}, which is a ratio between the cost of the worst possible Nash equilibrium and the optimum of a social welfare function as measure of effectiveness of the system. Another similar concept is the ``price of stability" \cite{anshelevich2008price}, which is a ratio between the cost of the best possible Nash equilibrium and the optimal solution. Both concepts have been applied widely in computer network design.
\begin{figure}[t!]
	\captionsetup{width=0.8\textwidth}
	\begin{center}
		\includegraphics[width=0.5\linewidth]{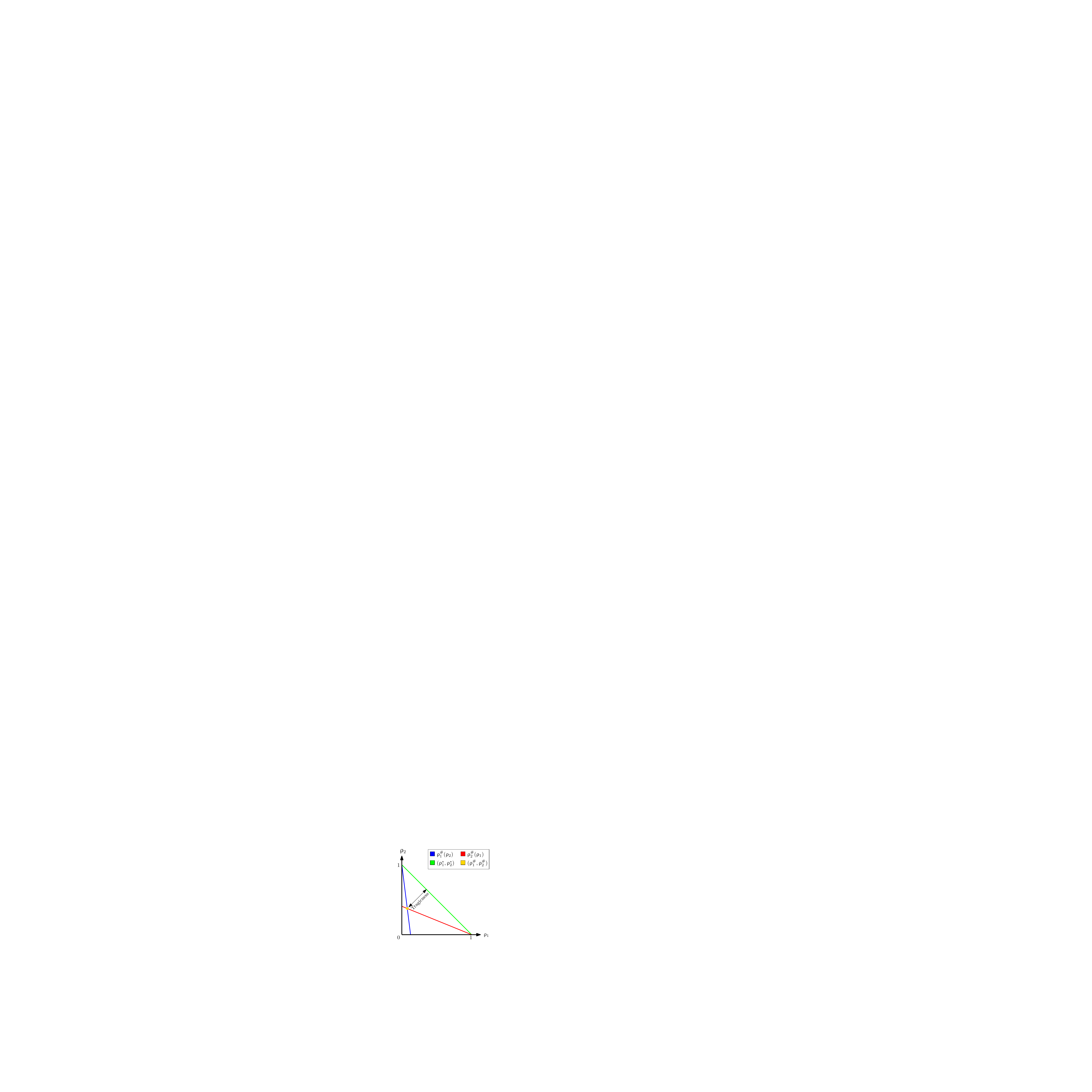}
	\end{center}
	\caption{An illustration of the definition of the game tragicness.}
	\label{fig:tragic}
\end{figure}

\subsection{The Consumption Game and Social-Ecological Relevance}
\label{sec:game_results}

Here we present insights gained by observing how the characteristics of the consumption game depend on the model parameters and what it entails. 

\subsubsection{The anti-tragic role of social relevance}
Cooperative behavior plays a key role in avoiding tragedies and increasing the steady state resource stock. In our model, this effect can be observed through Figure \ref{fig:game}(a)-(c). Figure \ref{fig:game}(a) shows that the tragicness of the consumption game decreases as we move away from the origin along any of the $\upnu_1$ or $\upnu_2$ axes. Furthermore, the tragicness declines with an increase in the average (over two groups) level of social relevance. Declining of the tragicness can indeed be seen as growing of coordination and cooperation (through increasing social relevance) between two groups. Games that are less tragic correspond to affluence in the steady state resource stock, which in-turn also allows for a higher resource consumption at steady state. Thus the same trends can be observed for the steady state resource stock and total consumption rate in Figure \ref{fig:game}(b) and (c) respectively. This shows that high social relevance not only decreases the game's tragicness, but also favors higher steady state resource stocks and consumption rates.
 
In the limit case of $\upnu_1=\upnu_2=1$, zero tragicness is achieved over the infinitely many equilibria, and it is interesting to note that in this most affluent case, the maximum resource stock $\xbar=0.5$ and maximum consumption rate $\xbar(\ybar_1+\ybar_2)=0.25$.
\begin{figure}[b!]
	\captionsetup{width=0.9\textwidth}
	\begin{center}
		\includegraphics[width=0.9\linewidth]{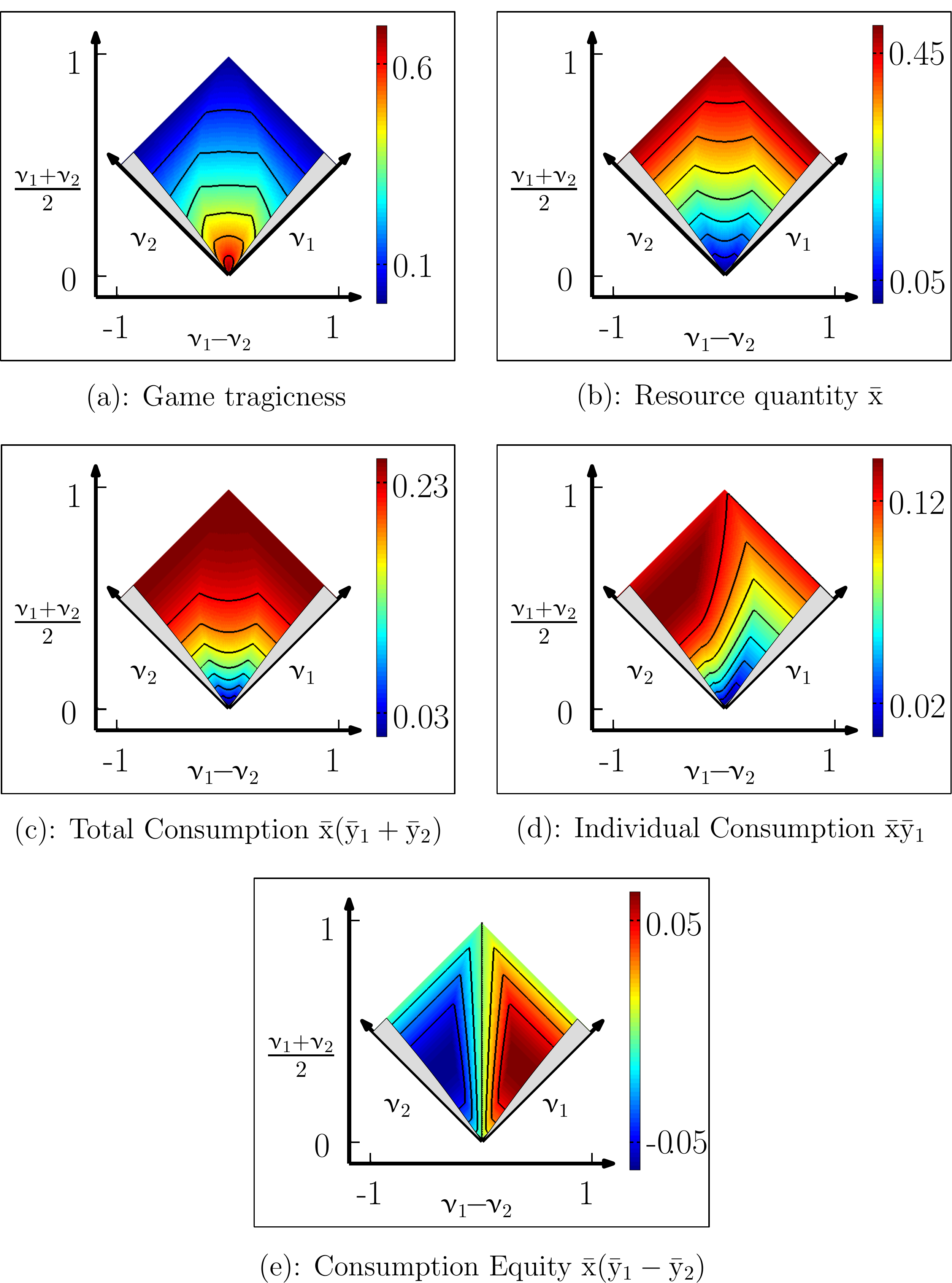}
	\end{center}
	\caption{Plots for different characteristics of the consumption game at Nash equilibrium, shown as the difference and average level of the social relevances $\upnu_1$, $\upnu_2$ are varied. For completeness the $\upnu_1$ and $\upnu_2$ axes have also been shown in each plot. The gray areas mark regions for which stability is not guaranteed. The plots for both groups are symmetric to each other, and so the conclusions drawn here for group 1 also apply to group 2.}
	\label{fig:game}
\end{figure}

\subsubsection{A positive effect of incongruity in social relevance}
The effect of incongruity or asymmetry in the social relevances on the game tragicness can also be observed in Figure \ref{fig:game}(a). Moving away from the line of zero asymmetry (the horizontal line corresponding to $\upnu_1 = \upnu_2$) horizontally in either direction results in a lower tragicness level. This effect of asymmetry on the tragicness is less pronounced for small deviations in asymmetry, which can be seen from the decreased slope of the level curves when they are near the line of zero asymmetry. When the asymmetry is large, this effect is more significant, which can be seen from the sudden slanting of the level curves when further away from zero asymmetry. Thus strong incongruity or heterogeneity in social relevance corresponds to less tragic games. 

Incongruity has a similar effect on the resource stock and total consumption rates, which can be seen in Figure \ref{fig:game}(b) and Figure \ref{fig:game}(c) respectively. In general, an increase in incongruity increases both. Note that the discontinuity of the level curves, which was present in Figure \ref{fig:game}(a) is also present here at exactly the same points in the parameter space. However the behavior of the curves is different in the center. They are now concave, which means that small deviations in asymmetry are detrimental to the resource stock and the consumed quantity. However, large deviations in asymmetry are beneficial. 

\subsubsection{Higher social relevance increases individual consumption}
In general, higher social relevance increases individual group consumption. Figure \ref{fig:game}(d) depicts how individual consumption of a group changes as the social relevances are varied. The level curves in Figure \ref{fig:game}(d) show three different behaviors. Near the symmetric line where $\upnu_1=\upnu_2$, the consumption of the focal group increases with respect to the social relevance of both groups. However, this increase is more pronounced with respect to changes in its own social relevance, than that of the other group. Variation in the other group's social relevance does not affect the group's consumption, when the social relevance of the other group is low. Conversely, for low values of its own social relevance, the group's consumption is affected only by changes in the social relevance of the other group.
 
Also note that a group's individual consumption increases monotonically with the other group's social relevance only when its own social relevance is high. When this is not the case, a mode is encountered in the region where the other group's social relevance is high enough (note the plateau in the upper right area of Figure \ref{fig:game}(d)).
 
Furthermore, a group with a higher social relevance consumes more, than a group with a lower social relevance. Figure \ref{fig:game}(e) depicts the difference in consumption $\xbar(\ybar_1-\ybar_2)$ which always stays positive to the right of the symmetric line ($\upnu_1=\upnu_2$) and always negative to the left of this line. Difference in consumption is exactly zero when social relevance for both consumer groups are the same. 

\subsubsection{Equity in social relevance induces equity in consumption}
Figure \ref{fig:game}(e) shows the behavior of the equity measure $\xbar(\ybar_1-\ybar_2)$ as a function of the average of the social relevances and their asymmetry. When the social relevances are close to equal (near the center of the plot), the isoclines are nearly vertical which shows that the average quantity of the social relevances does not affect equity in consumption. However, the equity is affected significantly by the asymmetry in social relevances. This can be traced by starting in the middle of the plot and observing the difference in consumption as we move horizontally in either direction. The difference in consumption is increased as we increase asymmetry until a peak is reached after which it decreases slightly. This shows that equity in consumption is the highest, when the social relevances are close to each other in value.

\section{The Discrete Consumption Game}
\label{sec:dt_game}

In Section \ref{sec:ct_game}, we formulated the interactions between two consumer groups as a continuous game. The utilities were assumed to be the amount of resource each group harvested at steady state. The variables $\upnu_i$ and $\uprho_i$ appeared to be natural choices for the action variables available to each group. However, we found that the game had an infinite number of rational outcomes, expressed in terms of the action variables in Equation \eqref{eq:nash1}. Consequently, we redefined the strategy set to include only one of the variables ($\uprho_i$) and the other ($\upnu_i$) to be retained as an exogenously specified parameter. This led to a single equilibrium specified in terms of the $\upnu_i$'s (thus essentially infinite in number if $\upnu_i$'s are allowed to vary as before) which resulted in a less tedious representation of the consumption game. 

Here we consider an alternative formulation of the assumed socio-ecological setting as a two player game with discrete actions spaces. As we will see, the formulation allows us to retain both psychological variables in the strategy set. After defining a parallel concept to tragicness in the continuous case, we then examine the dependence of the nature of the game on different variables.
 
We assume that the strategy of player $i$, $\{\uprho_i, \upnu_i\} \in \{ \{\uprho_H, \upnu_H\}, \{\uprho_L, \upnu_L\} \}$ where $\uprho_L < \uprho_H$ and $\upnu_L < \upnu_H$. Hence, strategies are a set of two variables where the cooperate strategy is represented by $\{ \uprho_H, \upnu_H \}$ and the defect strategy is represented by $\{ \uprho_L, \upnu_L \}$. For a two player game, the payoff matrix is as given in Table \ref{tab:pay_disc}.
\begin{table}[!h]
	\centering
	\begin{tabular}{ >{\centering}m{10pt} | >{\centering}m{100pt} | >{\centering}m{100pt} | m{0pt}}
		\multicolumn{1}{c}{ }& \multicolumn{1}{c}{ C} & \multicolumn{1}{c}{D}\\
		\cline{2-3}
		$C$ & $\xbar\ybar_1(\uprho_H,\upnu_H,\uprho_H,\upnu_H)$,\\ 
		\hspace{-4pt}$\xbar\ybar_2(\uprho_H,\upnu_H,\uprho_H,\upnu_H)$ 
		& $\xbar\ybar_1(\uprho_H,\upnu_H,\uprho_L,\upnu_L)$,\\ 
		\hspace{-4pt}$\xbar\ybar_2(\uprho_H,\upnu_H,\uprho_L,\upnu_L)$ &\\[25pt]
		\cline{2-3}
		$D$ & $\xbar\ybar_1(\uprho_L,\upnu_L,\uprho_H,\upnu_H)$,\\ 
		\hspace{-4pt}$\xbar\ybar_2(\uprho_L,\upnu_L,\uprho_H,\upnu_H)$ 
		& $\xbar\ybar_1(\uprho_L,\upnu_L,\uprho_L,\upnu_L)$,\\ 
		\hspace{-4pt}$\xbar\ybar_2(\uprho_L,\upnu_L,\uprho_L,\upnu_L)$ &\\[25pt]
		\cline{2-3}
	\end{tabular}
	\captionsetup{font=normal,width=0.8\textwidth}
	\caption{Structure of the discrete consumption game. $\xbar$, $\ybar_1$ and $\ybar_2$ are given by Equation \ref{eq:equi-2comp}.}
	\label{tab:pay_disc} 
\end{table}

Note that the nature of the game is dependent only on the four parameters $\uprho_H, \uprho_L, \upnu_H, \upnu_L$. By varying these parameters over their entire range along with the constraints $\uprho_L < \uprho_H$ and $\upnu_L < \upnu_H$, we encounter a variety of games which we classify as either tragic or non-tragic.

\subsection{Tragic and non-tragic games}
We define a tragic game as one where at least one nash equilibrium is sub-pareto optimal (see \cite{perman2003natural} for the notion of pareto optimality). Three such games are encountered while varying the parameters and are shown in Figure \ref{tab:tragic}. All of these games are commonly found in literature as models of cooperation~\cite{doebeli2005models}.

\begin{figure}[!htb]
\begin{center}
	\begin{minipage}[t]{.25\linewidth}
		\caption*{\large{Type 1}}
		\centering
		\begin{tabular}[t]{ >{\centering}m{10pt} | >{\centering}m{15pt} | >{\centering}m{15pt} | m{0pt}}
			\multicolumn{1}{c}{ }& \multicolumn{1}{c}{ C} & \multicolumn{1}{c}{D}\\
			\cline{2-3}
			$C$ & 3,3 & 1,4 &\\[15pt]
			\cline{2-3}
			$D$ & 4,1 & \cellcolor{gray}2,2 &\\[15pt]
			\cline{2-3}
		\end{tabular}\\[10pt]
		$-$ prisoner's dilemma
	\end{minipage}
	\begin{minipage}[t]{.25\linewidth}
		\caption*{\large{Type 2}}
		\centering{
		\begin{tabular}[t]{ >{\centering}m{10pt} | >{\centering}m{15pt} | >{\centering}m{15pt} | m{0pt}}
			\multicolumn{1}{c}{ }& \multicolumn{1}{c}{ C} & \multicolumn{1}{c}{D}\\
			\cline{2-3}
			$C$ & \cellcolor{gray} 4,4 & 1,3 &\\[15pt]
			\cline{2-3}
			$D$ & 3,1 & \cellcolor{gray}2,2 &\\[15pt]
			\cline{2-3}
		\end{tabular}\\[10pt]}
		\begin{tabular}{l}
		$-$ stag-hunt\\
		$-$ trust dilemma\\
		$-$ assurance dilemma
		\end{tabular}
	\end{minipage}
	\begin{minipage}[t]{.25\linewidth}
		\caption*{\large{Type 3}}
		\centering
		\begin{tabular}[t]{ >{\centering}m{10pt} | >{\centering}m{15pt} | >{\centering}m{15pt} | m{0pt}}
			\multicolumn{1}{c}{ }& \multicolumn{1}{c}{ C} & \multicolumn{1}{c}{D}\\
			\cline{2-3}
			$C$ & 3,3 & \cellcolor{gray} 2,4 &\\[15pt]
			\cline{2-3}
			$D$ & \cellcolor{gray}4,2 & 1,1 &\\[15pt]
			\cline{2-3}
		\end{tabular}\\[10pt]
		\begin{tabular}{l}
		$-$ chicken\\
		$-$ hawk-dove\\
		$-$ snowdrift
		\end{tabular}
	\end{minipage}
\captionsetup{font=normal,width=0.8\textwidth}
\caption{The three tragic realizations of the consumption game along with the labels commonly associated with them. Nash equilibria are shaded in gray. The values of the payoffs are hypothetical and have been chosen to illustrate the relative magnitudes of the payoffs for each outcome.}
\label{tab:tragic}
\end{center}
\end{figure}

We define non-tragic games as games that are not tragic. In these games all nash equilibria are also pareto optimal. All such games encountered in the consumption game are given in Figure \ref{tab:non-tragic}.
\begin{figure}[!htb]
\begin{center}
	\begin{minipage}[t]{.25\linewidth}
		\caption*{\large{Type 4}}
		\centering
		\begin{tabular}{ >{\centering}m{10pt} | >{\centering}m{15pt} | >{\centering}m{15pt} | m{0pt}}
			\multicolumn{1}{c}{ }& \multicolumn{1}{c}{ C} & \multicolumn{1}{c}{D}\\
			\cline{2-3}
			$C$ & 2,2 & 1,4 &\\[15pt]
			\cline{2-3}
			$D$ & 4,1 & \cellcolor{gray}3,3 &\\[15pt]
			\cline{2-3}
		\end{tabular}
	\end{minipage}
	\begin{minipage}[t]{.25\linewidth}
		\caption*{\large{Type 5}}
		\centering
		\begin{tabular}{ >{\centering}m{10pt} | >{\centering}m{15pt} | >{\centering}m{15pt} | m{0pt}}
			\multicolumn{1}{c}{ }& \multicolumn{1}{c}{ C} & \multicolumn{1}{c}{D}\\
			\cline{2-3}
			$C$ & 1,1 & 2,4 &\\[15pt]
			\cline{2-3}
			$D$ & 4,2 & \cellcolor{gray}3,3 &\\[15pt]
			\cline{2-3}
		\end{tabular}
	\end{minipage}
	\begin{minipage}[t]{.25\linewidth}
		\caption*{\large{Type 6 }}
		\centering
		\begin{tabular}{ >{\centering}m{10pt} | >{\centering}m{15pt} | >{\centering}m{15pt} | m{0pt}}
			\multicolumn{1}{c}{ }& \multicolumn{1}{c}{ C} & \multicolumn{1}{c}{D}\\
			\cline{2-3}
			$C$ &2,2 & \cellcolor{gray}3,4 &\\[15pt]
			\cline{2-3}
			$D$ & \cellcolor{gray}4,3 & 1,1&\\[15pt]
			\cline{2-3}
		\end{tabular}
	\end{minipage}\vspace{20pt}

	\begin{minipage}[t]{.25\linewidth}
		\caption*{\large{Type 7 }}
		\centering
		\begin{tabular}{ >{\centering}m{10pt} | >{\centering}m{15pt} | >{\centering}m{15pt} | m{0pt}}
			\multicolumn{1}{c}{ }& \multicolumn{1}{c}{ C} & \multicolumn{1}{c}{D}\\
			\cline{2-3}
			$C$ & \cellcolor{gray}4,4 & 2,3 &\\[15pt]
			\cline{2-3}
			$D$ & 3,2 &1,1 &\\[15pt]
			\cline{2-3}
		\end{tabular}
	\end{minipage}
	\begin{minipage}[t]{.25\linewidth}
		\caption*{\large{Type 8 }}
		\centering
		\begin{tabular}{ >{\centering}m{10pt} | >{\centering}m{15pt} | >{\centering}m{15pt} | m{0pt}}
			\multicolumn{1}{c}{ }& \multicolumn{1}{c}{ C} & \multicolumn{1}{c}{D}\\
			\cline{2-3}
			$C$ & 1,1 & \cellcolor{gray} 3,4 &\\[15pt]
			\cline{2-3}
			$D$ & \cellcolor{gray}4,3 & 2,2 &\\[15pt]
			\cline{2-3}
		\end{tabular}
	\end{minipage}
	\begin{minipage}[t]{.25\linewidth}
		\caption*{\large{Type 9}}
		\centering
		\begin{tabular}{ >{\centering}m{10pt} | >{\centering}m{15pt} | >{\centering}m{15pt} | m{0pt}}
			\multicolumn{1}{c}{ }& \multicolumn{1}{c}{ C} & \multicolumn{1}{c}{D}\\
			\cline{2-3}
			$C$ &1,1 & 2,3 &\\[15pt]
			\cline{2-3}
			$D$ & 3,2 & \cellcolor{gray}4,4&\\[15pt]
			\cline{2-3}
		\end{tabular}
	\end{minipage}
\captionsetup{font=normal,width=0.6\textwidth}
\caption{Non-tragic realizations of the consumption game.}
\label{tab:non-tragic}
\end{center}
\end{figure}

\subsection{The Influence of Social Attributes on the Game Nature}
Figure \ref{fig:disc_type} shows the type of the consumption game that arises as the four parameters are varied. The plot covers only those regions where $\uprho_L < \uprho_H$, $\upnu_L < \upnu_H$ and parameter values for which stability of the equilibrium is guaranteed.
\begin{figure}[h!]
	\captionsetup{font=normal,width=0.8\textwidth}
	\begin{center}
		\includegraphics[trim = 0mm 0mm 0mm 13mm, clip=true,  width=0.85\linewidth]{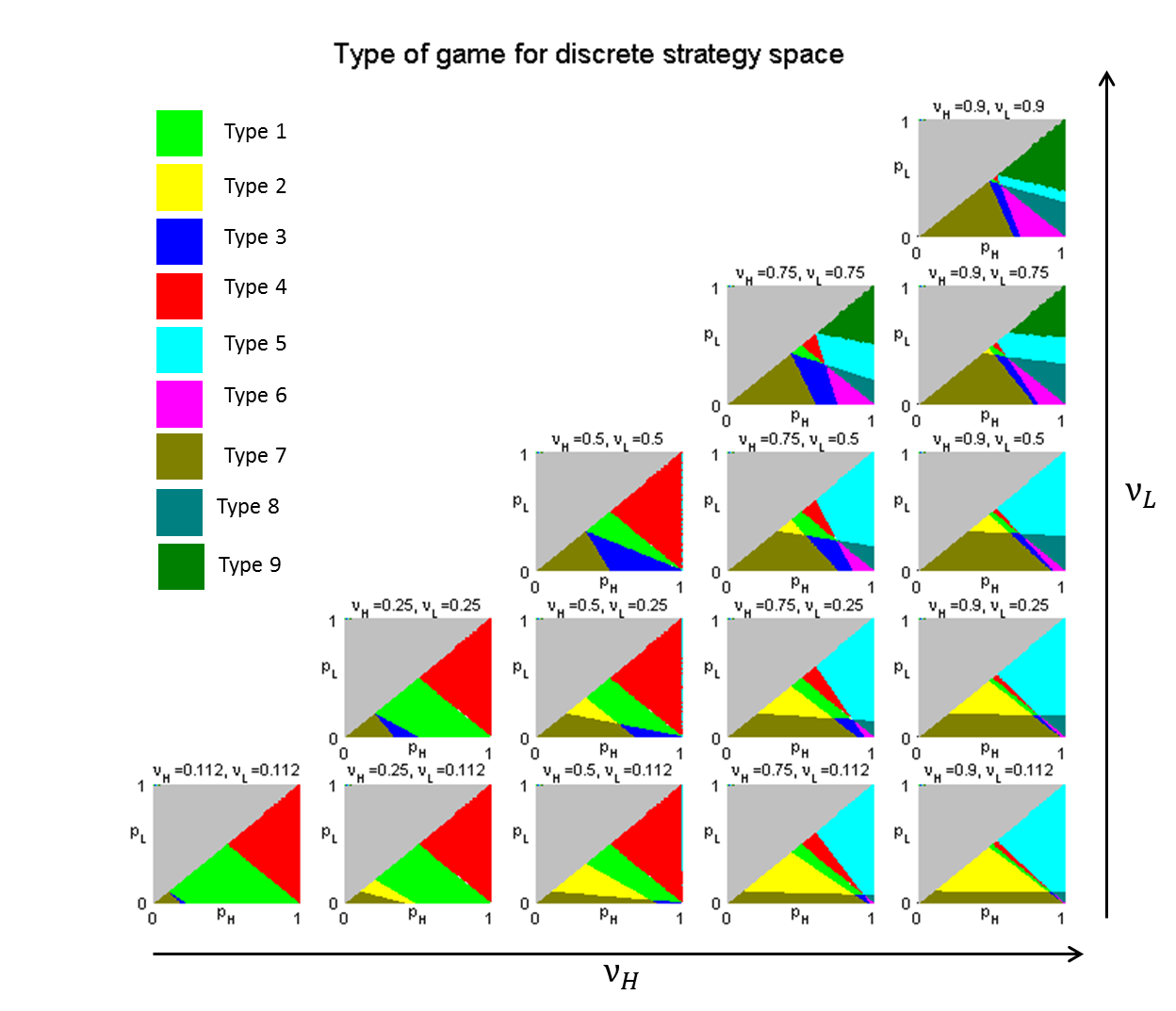}
	\end{center}
	\vspace{-25pt}
	\caption{Plot depicting the type of game that each realization of the consumption game belongs to as the parameters are varied.}
	\label{fig:disc_type} 
\end{figure}

Figure \ref{fig:disc_trag} shows the nature of the consumption game evaluated for the four dimensional parameter space and classified as either tragic (red) or non-tragic (green). 
\begin{figure}[h!]
	\captionsetup{font=normal,width=0.8\textwidth}
	\begin{center}
		\includegraphics[trim = 0mm 0mm 0mm 13mm, clip=true,  width=0.85\linewidth]{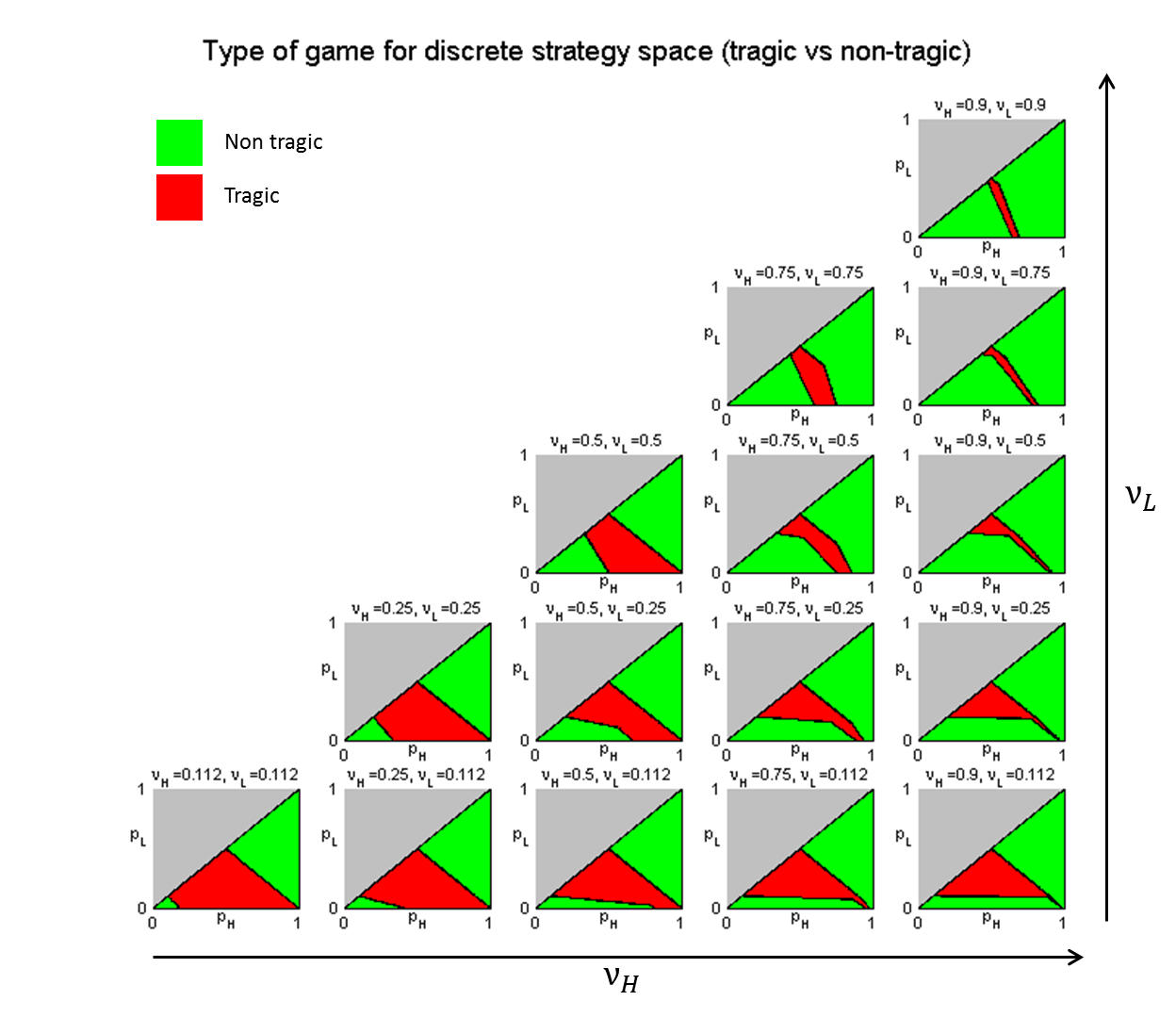}
	\end{center}
	\vspace{-25pt}
	\caption{Tragic and non-tragic regions for the consumption game.}
	\label{fig:disc_trag} 
\end{figure}

Another classification of the games exists depending on whether the rational outcome for both players to defect, both players to cooperate or one player to defect and one to cooperate. The distinction between these 3 classes of games is notable in Figure \ref{fig:disc_res} which shows the quantity of the resource at the Nash equilibrium.
\begin{figure}[h!]
	\captionsetup{font=normal,width=0.8\textwidth}
	\begin{center}
		\includegraphics[trim = 0mm 0mm 0mm 13mm, clip=true,  width=0.85\linewidth]{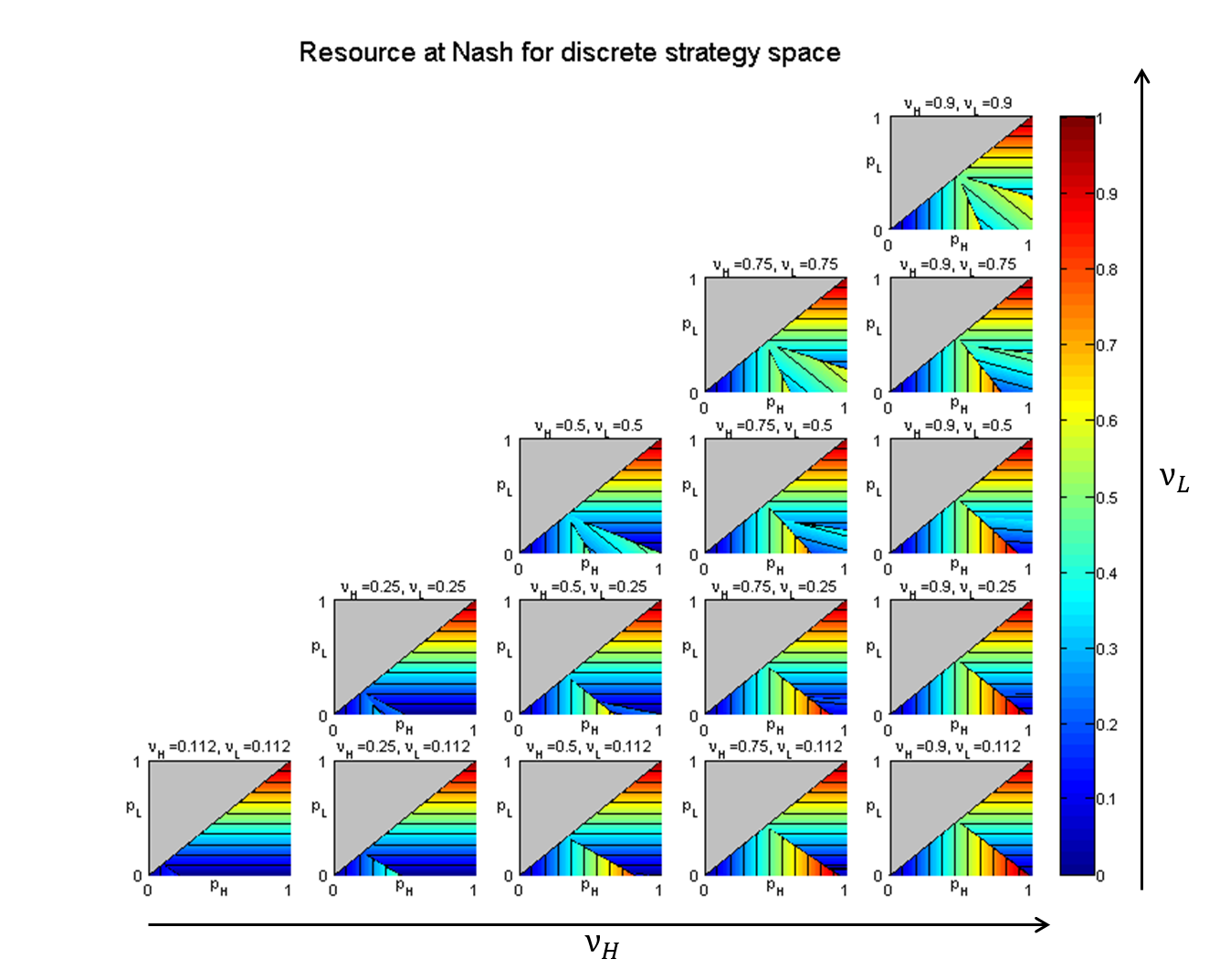}
	\end{center}
	\vspace{-25pt}
	\caption{Resource quantity at the Nash equilibrium for the discrete consumption game.}
	\label{fig:disc_res} 
\end{figure}

By observing the trends in the simulations, the following interpretations can be drawn. Note that in what follows, we label consumers with high values of $\uprho_i$ as environmentalists and consumers with high values of $\upnu_i$ as pro-socialists.

\subsubsection{High degrees of environmentalism can avoid tragedies}
From Figure \ref{fig:disc_trag} it can be seen that if the lower and upper levels of environmentalism ($\uprho_L$ and $\uprho_H$ respectively) are greater than the maximum quantity of the resource ($\uprho_L + \uprho_H > 1$) then tragedies are avoided irrespective of the level of socialism. From Figure \ref{fig:disc_type} it can be seen that in this case the rational outcome consists of at least one player defecting.

\subsubsection{High social ties compensate for low environmentalism in avoiding tragedies}
From Figure \ref{fig:disc_trag} we see that if the upper and lower thresholds of $\upnu_i$ are high, then tragedies are avoided even in the case of low environmentalism. From Figure \ref{fig:disc_type}, this corresponds to the Type 7 game where the rational outcome is for both individuals to cooperate. 

\subsubsection{Low environmentalism and strong social ties promote cooperation}
In all plots in Figure \ref{fig:disc_type}, it can be seen that tragedies are avoided in the lower left region of the axes where both the upper and lower thresholds of $\uprho_i$ are low (areas with low environmentalism). This region grows as $\upnu_L$ and $\upnu_H$ are increased (areas with high social weights). From Figure \ref{fig:disc_type} we see that this corresponds to the Type 7 game which is the only game with the unique rational outcome of both players cooperating. Thus these conditions promote cooperation.

\subsubsection{Small values of the lower environmental threshold promotes cooperation in the presence of strong asymmetry in the social threshold levels.}
This again refers to the Type 7 game which corresponds to bottom green region in the plots of Figure \ref{fig:disc_trag}. We observe that if there is a strong asymmetry between $\upnu_L$ and $\upnu_H$ (bottom right plot) then the Type 7 game is realized if the lower threshold $\uprho_L$ is kept small enough. Note that even though this relates to a condition on $\uprho_L$, the rational outcome corresponds to both players assuming the upper environmental threshold $\uprho_H$.  

\subsubsection{High environmentalism and strong social ties are beneficial to the resource}
This interpretation follows by an examination of Figure \ref{fig:disc_res}. Based on the trends of the contours in each plot, three distinct regions are observable. The lower left region corresponds to those games in which the rational outcome is for both players to cooperate (see Figure \ref{fig:disc_type}), or to choose the upper thresholds $\uprho_H$ and $\upnu_H$. Here the resource level remains unchanged by variation in $\uprho_L$ and $\upnu_L$ as these values are not realized by any player in the outcome of the game. Thus in this region the resource increases by increasing $\uprho_H$ and $\upnu_H$. The upper right region corresponds to games in which the outcome is for both players to defect and so the resource quantity increases for higher levels of $\uprho_L$ and $\upnu_L$ and remains invariant to changes in $\uprho_H$ and $\upnu_H$. The third intermediate region corresponds to games in which the Nash equilibrium corresponds to one player defecting and the other cooperating. In this region, the resource is increased by increase in both $\uprho_H, \upnu_H$ and $\uprho_L, \upnu_L$.

\section{Discussion}
Opinions on the effect of heterogeneity on successful resource management are highly variable \cite{varughese2001contested, bardhan2002unequal, chand2015production} and range from negative to positive to no effect at all. Our model presents mixed results regarding heterogeneity in general, which suggests that heterogeneity is important in different ways across different factors. Note that heterogeneity may be manifested in either of the parameters $\uprho_i$ and $\upnu_i$, which are both different in nature. Hence, while our model does not provide a straight-forward verdict on the role of heterogeneity, it does contribute to the debate by indicating, that heterogeneity may need to be exposed to a classification, which is deeper than the conventional classification of economic, non-economic and socio-cultural heterogeneity to correctly establish its correlation with successful resource management. 

An important limitation of the model already discussed in Chapter \ref{chap:model} is the static nature of the psychological attributes of the consumers. Individuals in the real world are in a constant state of change and the human psychology is seldom static. We study the effects of relaxing this assumption by allow the consumers from the consumption game to vary their characteristics via the theory of learning in games, an exercise undertaken in the material of the next chapter.

\clearpage

\chapter{Closing the Feedback Loop}

\label{chap:learning}

In this chapter we return to the single and dual-agent networks to study the effects of certain feedback schemes, selected in the context of the analysis conducted in previous chapters. In Section \ref{sec:feed1}, we synthesize an optimal feedback law for the single agent homogeneous network on the basis of the analysis conducted in Chapter \ref{chap:optimal}. We find that the optimal feedback law is given as the solution to a non-linear differential equation whose solution is difficult to obtain analytically. Instead we demonstrate how the control law can be obtained numerically and present simulations of the optimally controlled process that exhibit the expected behavior. Later in Section \ref{sec:learn}, we return to the two-player consumption game of Chapter \ref{chap:game}. We allow the players to change their strategy in direction of the best response to the other player's current strategy. Such a tactic known commonly as the best response learning scheme, is a basic variant of fictitious play learning. We find that under the selected learning scheme, the players not only converge to the Nash equilibrium but do so in a way that avoids free-riding behavior, a phenomenon that occurred in the open-loop system of Chapter \ref{chap:open}.

\section{The Optimal Feedback Law}
\label{sec:feed1}
Recall the Hamiltonian system of the OCP (P3) in Chapter \ref{chap:optimal}, reproduced below 
\begin{align}
\label{ham-lamb2}
\begin{split}
	&\dot z(t) =  -z(t)-\frac{1}{\lambda(t)} + 1,\\
	&\dot\lambda(t) =(\del+1)\lambda (t)+\frac{2}{z(t)}.
\end{split}
\end{align}
In Chapter \ref{chap:optimal} we categorized the long-term behavior of the solutions of the above system into two cases: the sustainable case ($\del < 1$) and the unsustainable case ($\del \geq 1$). Consider first, the sustainable case. Figure \ref{fig:sust_fp} shows the optimal region in the phase space of the Hamiltonian system in this scenario.
\begin{figure}[h]
	\captionsetup{font=normal,width=0.8\textwidth}
	\begin{center}
		\includegraphics[width=0.6\linewidth]{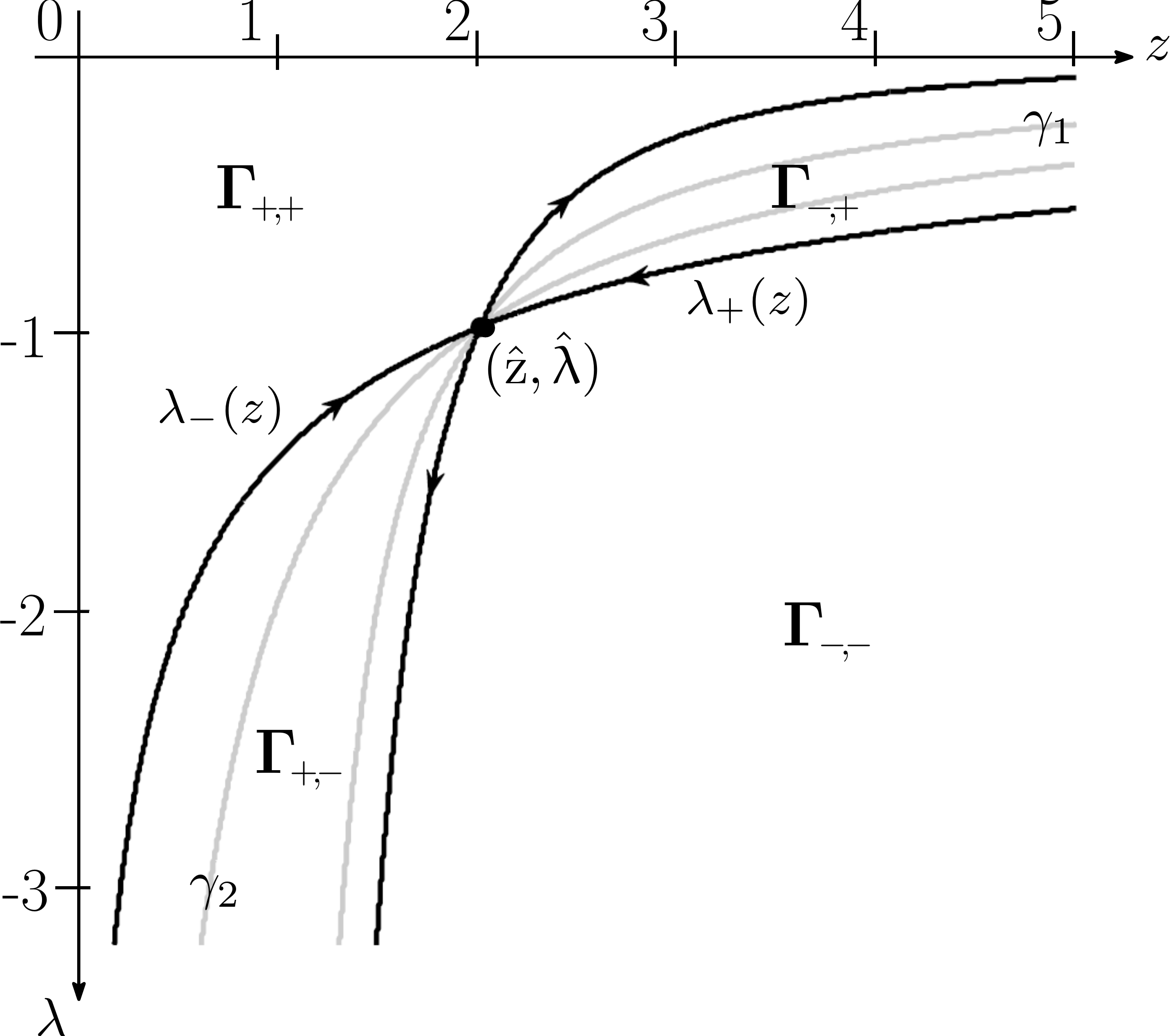}
	\end{center}
	\caption{The saddle point $(\hat(z),\hat{\uplambda})$ and regions of the stable manifold $\lambda_-(z)$ \& $\lambda_+(z)$. Here $\del = 0.01<1$}
	\label{fig:sust_fp} 
\end{figure}
Recall the trajectory $(z_1(\cdot),\lambda_1(\cdot))$ that approaches the point $(\hat z,\hat\lambda)$ from the left from the set $\Gamma_{+,+}$. We called this trajectory the left equilibrium trajectory, while the second trajectory $(z_2(\cdot),\lambda_2(\cdot))$ which approaches the point $(\hat z,\hat\lambda)$ from the right from the set $\Gamma_{-,-}$ was called the {\it the right equilibrium trajectory}. These trajectories are represented as the curves $\lambda_-(z)$ and $\lambda_+(z)$ respectively in the phase space. 

Note that to the left of the point $(\hat z,\hat\lambda)$ in the set $\Gamma_{+,+}$, the function $z_1(\cdot)$ increases. Therefore, while $(z_1(\cdot),\lambda_1(\cdot))$ lies in $\Gamma_{+,+}$, the time can be uniquely expressed in terms of the first coordinate of the trajectory $(z_1(\cdot),\lambda_1(\cdot))$ as a smooth function $t=t_1(z)$, $z\in (0,\hat z)$. Changing the time variable $t=t_1(z)$ on interval $(0,\hat z)$, we find that the function $\lambda_{-}(z)=\lambda_1(t_1(z))$, $z\in (0,\hat z)$, is a solution to the following differential equation on $(0,\hat z)$:
\begin{align}
\label{eq:feed_lambda}
	\frac{d\lambda (z)}{dz}= \frac{d\lambda (t_1(z))}{dt} \times \frac{dt_1(z)}{dz} = \frac{\lambda(z) \left( (\del+1)\lambda(z)z + 2\right)}{z \left( -\lambda(z)z -1 + \lambda(z)\right)}
\end{align}
with the boundary condition
\begin{equation}
\label{lamb-}
	\lim_{z\to\hat z-0}\lambda(z) =\hat\lambda.
\end{equation}
Obviously, the curve $\lambda_{-}=\left\{(z,\lambda)\colon \lambda=\lambda_{-}(z), z\in (0,\hat z)\right\}$ corresponds to the region of the stable manifold of $(\hat z,\hat\lambda)$ where $z<\hat{z}$.

Analogously, to the right of the point $(\hat z,\hat\lambda)$ in the set $\Gamma_{-,-}$, while $(z_2(\cdot),\lambda_2(\cdot))$ lies in $\Gamma_{-,-}$, the function $z_2(\cdot)$ decreases.  Hence, the time can be uniquely expressed in terms of the first coordinate of the trajectory $(z_2(\cdot),\lambda_2(\cdot))$ as a smooth function $t=t_2(z)$, $z\in (\hat z,\infty)$. Changing the time variable $t=t_2(z)$ on interval $(\hat z,\infty)$, we find that the function $\lambda_{+}(z)=\lambda_2(t_2(z))$, $z>\hat z$, is a solution to the differential equation~\eqref{eq:feed_lambda} on $(\hat z,\infty))$ with the boundary condition
\begin{equation}
\label{lamb+} 
	\lim_{z\to\hat z+0}\lambda(z) =\hat\lambda.
\end{equation}
As above, the curve $\lambda_{+}=\left\{(z,\lambda)\colon \lambda=\lambda_{+}(z), z\in (\hat z,\infty)\right\}$ corresponds to the region of the stable manifold of $(\hat z,\hat\lambda)$ where $z>\hat{z}$.

Using solutions $\lambda_{-}(\cdot)$ and $\lambda_{+}(\cdot)$ of differential equation~\eqref{eq:feed_lambda} along with~\eqref{eq:max_prncpl} we can get an expression for the optimal feedback law as follows
\begin{align*}
	y_*(z)=\begin{cases}
	-\frac{1}{\lambda_{-}(z)z}, & \text{if $z<\hat{z}$},\\
	\hskip5mm \frac{\del+1}{2}, & \text{if $z=\hat{z}$},\\
	-\frac{1}{\lambda_{+}(z)z}, & \text{if $z>\hat{z}$}.
    \end{cases}
\end{align*}
\vskip3mm

\begin{figure*}[t!]
    \centering
    \captionsetup{width=0.85\textwidth}
    \begin{subfigure}[t]{0.5\textwidth}
        \captionsetup{width=0.9\textwidth}
        \centering
        \includegraphics[trim = 40mm 80mm 40mm 80mm, clip, width = \linewidth]{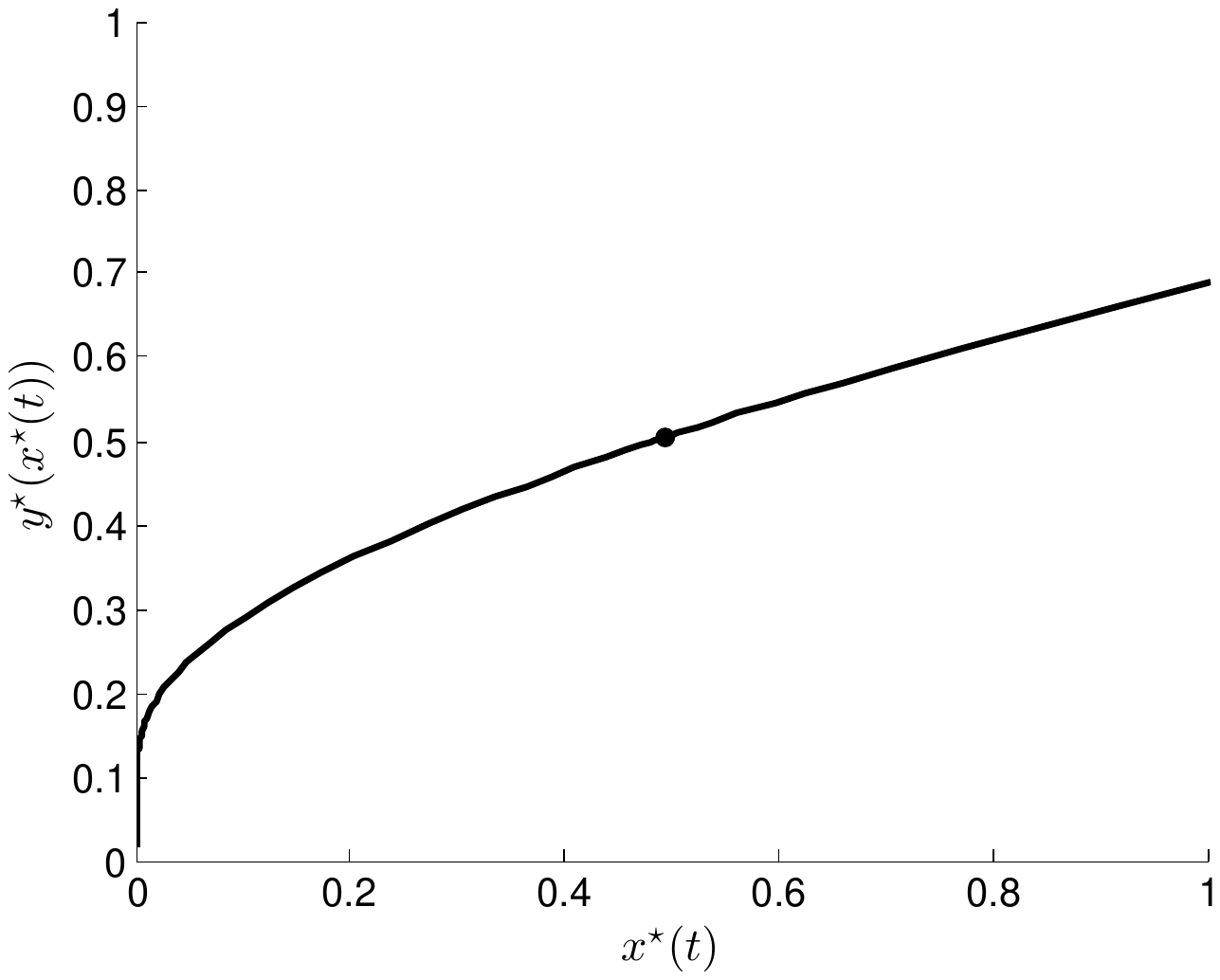}
        \caption{The optimal feedback law obtained by numerically solving \eqref{eq:feed_lambda}.}
        \label{fig:feedback_sust}
    \end{subfigure}%
    \begin{subfigure}[t]{0.5\textwidth}
        \captionsetup{width=0.9\textwidth}
        \centering
        \includegraphics[trim = 40mm 80mm 40mm 80mm, clip, width = \linewidth]{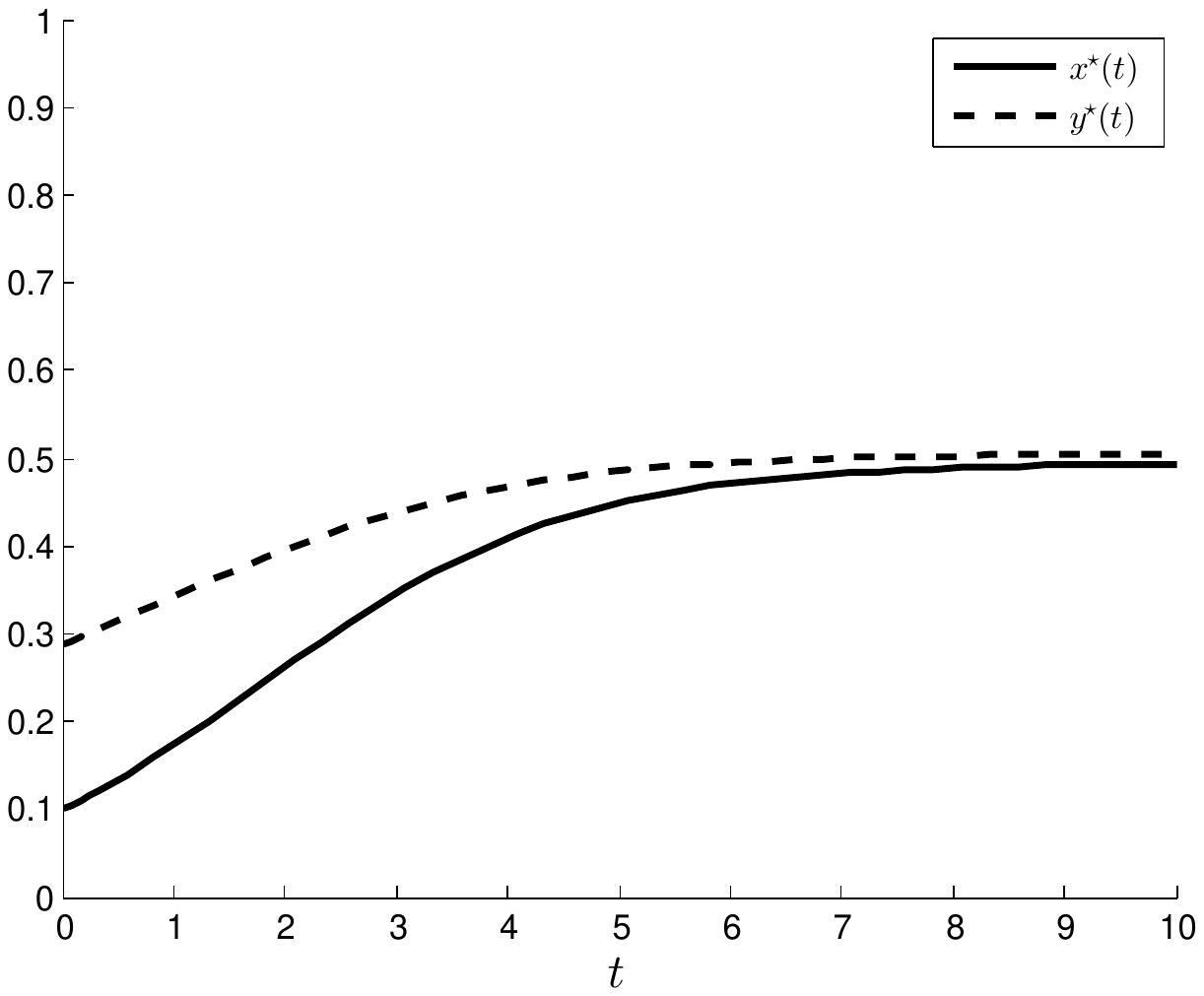}
        \caption{Solution with the optimal feedback law. Here the initial stock $x(0)$ = 0.1.}
        \label{fig:sol_sust}
    \end{subfigure}
    \caption{The optimal feedback law and a representative solution for the case $\del = 0.01 < 1$.}
\label{fig:results_sust}
\end{figure*}

An analytical solution to  nonlinear differential equation~\eqref{eq:feed_lambda} is difficult to obtain. However, it is possible to solve numerically. A graphical depiction of the feedback law in the original variables obtained by numerically solving the above ODE can be seen in Figure~\ref{fig:feedback_sust}. A representative solution of ($P1$) incorporating this feedback law is also shown in Figure~\ref{fig:sol_sust}. The trajectories show convergence of the stock and consumption to a steady state equilibrium. \vskip3mm

Now consider the unsustainable case $\del \geq 1$. We already saw in Chapter \ref{chap:optimal} that there exists a unique optimal pair $(z_*(\cdot), y_*(\cdot))$ for this case. Figure \ref{fig:phase_unsust_fp} shows the optimal trajectory along with the nullclines and the respective partitions of the phase space.

\begin{figure*}[h!]
        \captionsetup{width=0.8\textwidth}
        \centering
        \includegraphics[width = 0.6\linewidth]{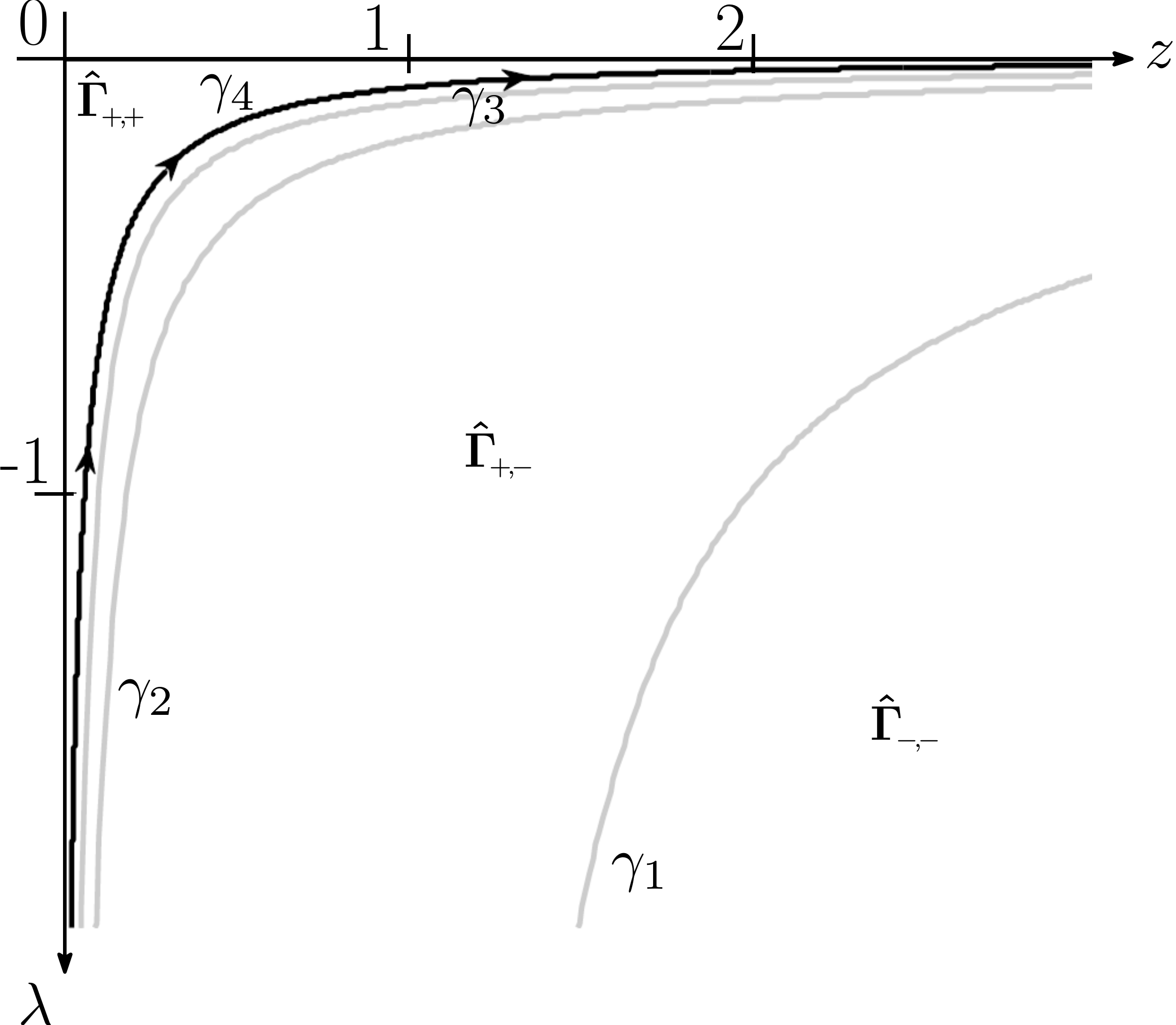}
        \caption{Optimal solution for \eqref{ham-lamb} in the case $\del = 10 \geq 1$. The optimal pair $(z_*(\cdot), y_*(\cdot))$ corresponds to curve $\gamma_4$ in the phase space. }
        \label{fig:phase_unsust_fp}
\end{figure*}

Now, since $(z_*(\cdot), y_*(\cdot))$ is unique, as a consequence the corresponding current value adjoint variable $\lambda_*(\cdot)$ is also defined uniquely as the maximal negative solution to equation (see~\eqref{ham-lamb})
\begin{equation}
\label{lamb0}
	\dot\lambda (t)=(\del+1)\lambda (t) +\frac{2}{z_*(t)}
\end{equation}
on the whole infinite time interval $[0,\infty)$. Note that the  function $z_*(\cdot)$ increases on $[0,\infty)$. Therefore, the time can be uniquely expressed as a smooth function $t=t_*(z)$, $z\in (0,\infty)$. Changing the time variable $t=t_*(z)$, we find that the function $\lambda_0(z)=\lambda_*(t_*(z))$ is a solution to the differential equation~\eqref{eq:feed_lambda} on the infinite interval $(0,\infty)$, similar to the case where $\del < 1$.

Using solution $\lambda_0(\cdot)$  of differential equation~\eqref{eq:feed_lambda}  along with~\eqref{eq:max_prncpl} we can get an expression for the optimal feedback law as follows
\begin{align*}
	y_*(z)=-\frac{1}{\lambda_0(z)z},\qquad z>0.
\end{align*}
Thus, to find the optimal feedback, we must determine for an initial state $z_0>0$ the corresponding initial state $\lambda_0<0$ such that solution  $(z_*(\cdot),\lambda_*(\cdot))$ of system~\eqref{ham-lamb} with initial conditions $z(0)=z_0$ and $\lambda(0)=\lambda_0$ exists on $[0,\infty)$ and $\lambda_*(\cdot)$ is the maximal negative function among all such solutions. Note that numerically finding this solution is highly unstable since any trajectory even infinitesimally below it is not optimal and any trajectory even infinitesimally above it is not defined on the whole time interval (and so cannot be optimal as well). We thus do not present a representative solution for this case, as we did for the sustainable case in Figure \ref{fig:results_sust}. Obtaining this solution would require an intricate numerical procedure which is beyond the scope of this dissertation.

Complementing the analysis of this section with that of Chapter \ref{chap:optimal}, let us summarize the main findings in the following theorem.

\begin{theorem}
\label{thm-synth}
For any initial state $z_0>0$ there is a unique optimal admissible pair $(z_*(\cdot),y_*(\cdot))$ in problem $(P3)$, and there is a unique adjoint variable $\psi(\cdot)$  that corresponds to $(z_*(\cdot),y_*(\cdot))$ due to the maximum principle (Theorem~\ref{thm2}).

If $\del < 1$ then there is a unique equilibrium $(\hat z,\hat\lambda)$ (see~\eqref{eq_pt}) in the corresponding current value Hamiltonian system~\eqref{ham-lamb} and the optimal synthesis is defined as follows
\begin{align*}
	y_*(x)=\begin{cases}
	-\frac{1}{\lambda_{-}(z)z}, & \text{if $z<\hat{z}$},\\
	\hskip5mm \frac{1+\del}{2}, & \text{if $z=\hat{z}$},\\
	-\frac{1}{\lambda_{+}(z)z}, & \text{if $z>\hat{z}$},
	\end{cases}
\end{align*}
where $\lambda_-(\cdot)$ and $\lambda_+(\cdot)$ are the unique solutions of~\eqref{eq:feed_lambda} that satisfy the boundary conditions~\eqref{lamb-} and~\eqref{lamb+} respectively. In this case optimal path $z_*(\cdot)$ is either decreasing, or increasing on $[0,\infty)$, or $z_*(t)\equiv\hat z$, $t\geq 0$, depending on the initial state $z_0$. For any optimal admissible pair $(z_*(\cdot),y_*(\cdot))$ we have $\lim_{t\to\infty}z_*(t)=\hat z$ and $\lim_{t\to\infty}y_*(t)=\hat y$ (see~\eqref{eq-u}).

If $\del \geq 1$ then for any initial state $z_0$ the corresponding optimal path $z_*(\cdot)$ in problem $(P3)$ is an increasing function, $\lim_{t\to\infty}z_*(t)=\infty$,  and the corresponding optimal control $y_*(\cdot)$ satisfies asymptotically to the Hotelling rule of optimal depletion of an exhaustible resource, i.e. $\lim_{t\to\infty}y_*(t)=\del$. The corresponding current value adjoint variable  $\lambda_*(\cdot)$ is defined uniquely as the maximal negative solution to equation~\eqref{lamb0} on $[0,\infty)$. The optimal synthesis is defined as
\[
	y_*(x)=-\frac{1}{\lambda_0(z)z},\qquad z>0,
\]
where $\lambda_0(z)=\lambda_*(t_*(z))$ is the corresponding solution of~\eqref{eq:feed_lambda}.
\end{theorem}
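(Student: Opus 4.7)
The plan is to assemble the theorem from three ingredients already developed in the paper: the existence result (Theorem~\ref{thm-exist}), the normal-form maximum principle with the Cauchy-type adjoint (Theorem~\ref{thm2}), and the case-by-case phase-plane analysis of the current-value Hamiltonian system~\eqref{ham-lamb} carried out in Section~\ref{sec:ham}. Existence of an optimal admissible pair $(z_*(\cdot),y_*(\cdot))$ for any $z_0>0$ is given directly by Theorem~\ref{thm-exist}; uniqueness of the corresponding current-value adjoint $\lambda(\cdot)$ (equivalently $\psi(\cdot)$) is automatic from the Cauchy formula~\eqref{psi} once $(z_*(\cdot),y_*(\cdot))$ is fixed. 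The maximum condition~\eqref{max-psi} combined with the strict inequality~\eqref{g-transv-lamb} yields the pointwise representation~\eqref{eq:max_prncpl} of $y_*(\cdot)$, so that the entire optimal triple is captured by a trajectory of~\eqref{ham-lamb} that additionally satisfies~\eqref{g-transv-lamb} and the stationarity condition~\eqref{stat-lamb}.

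First I would treat the sustainable case $\delta<1$. Direct computation locates the unique equilibrium $(\hat z,\hat\lambda)$ of~\eqref{ham-lamb} in $\Gamma$ given by~\eqref{eq_pt}, and linearization there produces real eigenvalues of opposite sign, so by Grobman--Hartman the equilibrium is a saddle. The nullclines $\gamma_1,\gamma_2$ partition $\Gamma$ into the four regions $\Gamma_{\pm,\pm}$ with signs of $\dot z,\dot\lambda$ fixed inside each region. A standard monotonicity argument in each region reduces the possible long-time behaviors of a trajectory of~\eqref{ham-lamb} in $\Gamma$ to the three types $1)$--$3)$ listed in Section~\ref{sec:ham}: type $1)$ is excluded because it contradicts~\eqref{g-transv-lamb}; type $2)$ is excluded by the auxiliary $\phi_*$-calculation showing that such a trajectory would force $y_*(t)\to\delta<1$ incompatibly with $z_*(t)\to\infty$; hence only type $3)$ survives and the optimal trajectory must lie on the stable manifold of $(\hat z,\hat\lambda)$. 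The stable manifold splits at the saddle into the left branch $\lambda_-$ in $\Gamma_{+,+}$ and the right branch $\lambda_+$ in $\Gamma_{-,-}$, and existence (Theorem~\ref{thm-exist}) of some optimal pair for \emph{each} initial $z_0$ forces these branches to be the \emph{only} optimal trajectories, which gives uniqueness. To obtain the feedback, I would parameterize each branch by its (monotone) $z$-coordinate: on the left branch $z_1(\cdot)$ is increasing on $(0,\hat z)$ and on the right branch $z_2(\cdot)$ is decreasing on $(\hat z,\infty)$, so the inverse functions $t=t_1(z),t_2(z)$ are smooth, and the chain rule applied to~\eqref{ham-lamb} yields the scalar ODE~\eqref{eq:feed_lambda} for $\lambda(\cdot)$ as a function of $z$, with the boundary conditions~\eqref{lamb-}, \eqref{lamb+}. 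Uniqueness of $\lambda_{\pm}$ as solutions of~\eqref{eq:feed_lambda} then follows from uniqueness of the two branches of the stable manifold. Substituting into~\eqref{eq:max_prncpl} gives the piecewise synthesis formula, and the asymptotic statements $z_*(t)\to\hat z$, $y_*(t)\to\hat y=(1+\delta)/2$ are immediate from the definition of the stable manifold.

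Next I would treat the unsustainable case $\delta\ge 1$. Here the nullclines no longer intersect in $\Gamma$, and the same sign analysis in $\hat\Gamma_{-,-},\hat\Gamma_{+,-},\hat\Gamma_{+,+}$ leaves only two asymptotic types; the first is ruled out by~\eqref{g-transv-lamb}, so every admissible optimal trajectory must satisfy $z_*(t)\to\infty$ and $\lambda_*(t)\to 0$ and lie in $\hat\Gamma_{+,+}$ eventually. The same $\phi_*$-computation as in Section~\ref{sec:ham} then yields the Hotelling asymptotics $y_*(t)\to\delta$. Uniqueness in this case is the most delicate step and will be the main obstacle: there is a \emph{continuum} of admissible trajectories sandwiched between the curve $\gamma_3=\{\lambda=-1/(\delta z)\}$ and the graph $\gamma_4$ of the optimal trajectory, all of them satisfying~\eqref{g-transv-lamb} and~\eqref{stat-lamb}. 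To single out $\gamma_4$ I would evaluate the current-value Hamiltonian $M(z,\lambda)$ from~\eqref{Ham-M} at $t=0$, using the stationarity identity~\eqref{stat-lamb} to write $J=M(z_0,\lambda(0))/\delta$, and then exploit monotonicity of $\lambda\mapsto M(z_0,\lambda)$ on $(-1/(\delta z_0),0)$: any trajectory strictly below $\gamma_4$ has strictly smaller $J$, and any trajectory strictly above $\gamma_4$ contradicts the same monotonicity combined with the Hamiltonian identity at $z_0$. This isolates $(z_*(\cdot),\lambda_*(\cdot))$ as the \emph{maximal} negative solution of~\eqref{lamb0} existing on $[0,\infty)$, whence $\lambda_*(\cdot)$ is unique. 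Monotonicity of $z_*(\cdot)$ allows the same change of variable $t=t_*(z)$ as in the sustainable case, converting~\eqref{lamb0} along $z_*(\cdot)$ into~\eqref{eq:feed_lambda} and yielding the stated feedback law $y_*(z)=-1/(\lambda_0(z)z)$.

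The main obstacle is therefore the $\delta\ge 1$ uniqueness argument, because the naive phase-plane inspection does not by itself pin down the optimal trajectory among a continuum of admissible candidates; the resolution I have in mind is the explicit Hamiltonian evaluation combined with the stationarity condition. Everything else (existence, the Grobman--Hartman saddle structure, the sign analyses in each region, and the derivation of~\eqref{eq:feed_lambda} by $t\mapsto z$ reparameterization) is routine once Theorems~\ref{thm-exist} and~\ref{thm2} are in place.
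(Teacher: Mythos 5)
Your proposal is correct and follows essentially the same route as the paper: existence from Theorem~\ref{thm-exist}, the Cauchy-formula adjoint and representation \eqref{eq:max_prncpl} from Theorem~\ref{thm2}, the phase-plane elimination of non-saddle (resp.\ non-$\hat\Gamma_{+,+}$) behaviors via \eqref{g-transv-lamb} and the $\phi_*$-computation, the $M(z_0,\cdot)$-monotonicity plus stationarity argument to isolate the maximal negative solution of \eqref{lamb0} when $\del\geq 1$, and the monotone $t\mapsto z$ reparameterization yielding \eqref{eq:feed_lambda} with boundary conditions \eqref{lamb-}, \eqref{lamb+} and the stated synthesis. The only nuance worth noting is that, since \eqref{g-transv-lamb} holds for all $t\geq 0$, the optimal trajectory in the case $\del\geq1$ lies above $\gamma_3$ (hence in the $(+,+)$ region) for \emph{all} times, not merely eventually, which is what guarantees monotonicity of $z_*(\cdot)$ on the whole of $[0,\infty)$ and justifies the global change of variable.
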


\section{Learning and Adaptation}

\label{sec:learn}

In Chapter \ref{chap:open}, Section \ref{sec:open_dual} we analyzed the open-loop characteristics of the two-agent society and observed the steady-state properties of the system. Issues such as stability of the system and the emergence of free-riding behavior were discussed. For the stable region of the parameter space, we where able to study the strategic interactions between the two groups as a two-player game, formulated at steady state. In Chapter \ref{chap:game}, we studied both continuous and discrete time versions of the game. A focus was placed on the nature of the game, more specifically, the difference in the rational outcome of the game and the collectively optimal outcome. We captured this notion via the game tragicness, and elaborated the notion for both continuous and discrete versions of the game. However, during the analysis of the Nash equilibrium, the issue of how the agents actually reach the Nash equilibrium in the consumption game was not discussed. We address this question here through the theory of learning in games \cite{fudenberg1998theory} where the players change their strategies dynamically according to a particular learning scheme. 

A frequently studied family of learning dynamics involves players constructing a belief of each other's future strategies, based on the past strategies played, and playing a best response to the predicted strategy. This form of learning dynamics is commonly known as fictitious play \cite{brown1951iterative} and is perhaps one of the most extensively studied model of learning to date \cite{brenner2006agent}. In this section we extend the game-theoretic construction to a continuous-time repeated game and give a simple demonstration of how the agents can approach the Nash equilibrium by employing basic best response dynamics (an elementary version of fictitious play). This is done by examining the stability of the basic system coupled with the learning adjustment process. Note that there exists no single result giving the learning strategy that always achieves the Nash equilibrium in general two-player games. Rather, rigorous proofs have been given only for specific combinations of games and learning schemes \cite{marden2012game} (A summary of learning in games from a control-theoretic perspective is given in \cite{marden2009cooperative}). We find that the resulting equilibrium is sustainable in the sense that the consumers do not exhibit any free-riding behavior, with the notion of free-riding being adopted from Chapter~\ref{chap:open}.

\subsection{Best-response Dynamics}

As stated above, after previously defining the consumption game and deriving the Nash equilibrium in Chapter \ref{chap:game} we consider here how the players are able to achieve that equilibrium. We do this by allowing the players to adjust their strategies under fictitious play learning. We employ fictitious play in its most basic form, commonly known as Best-response Dynamics, the Best-reply Process or Cournot Adjustment. Best-response learning is generally studied with caution as it is known to encounter several problems when applied as a model of learning in real-world scenarios \cite{brenner2006agent}. Nonetheless it remains one of the oldest, most familiar and perhaps one of the most extensively studied models of dynamic learning \cite{fudenberg1998theory}. We select it here due to its relative simplicity and its ability to reveal certain dynamic aspects of system behavior that may recur when more sophisticated models of learning are applied.

As we will see in what follows, the dynamics of the two-community system under best-response dynamics are such that it indeed results in the actions converging to the Nash equilibrium. This shows that the equilibrium is reachable by the agents in a distributed manner, however we must note that we consider only one particular learning scheme and that the Nash equilibrium may be reachable through schemes other than the one considered here. Under these dynamics, the system propagates as follows.
\begin{subequations}
\label{eq:br_dyn}
\begin{align}
\label{eq:br_dyn1}
\begin{split}
	&\dot{x} = (1-x)x - (y_1 + y_2)x,\\
	&\dot{y}_1 =  \b_1 (1-\upnu_1) (x-\rho_1)  -  \b_1 \upnu_1 (y_1 - y_2),\\
	&\dot{y}_2 =  \b_2 (1-\upnu_2) (x-\rho_2) -  \b_2 \upnu_2 (y_2 - y_1),
\end{split}
\end{align}
\vspace{-10pt}
\begin{align}
\label{eq:br_dyn2}
\begin{split}
	&\dot{\rho}_1 = \mathrm{BR}_1(\rho_2) - \rho_1,\quad
	\dot{\rho}_2 = \mathrm{BR}_2(\rho_1) - \rho_2,
\end{split}
\end{align}
\end{subequations}
\begin{figure}[t!]
	\begin{center} 
		\includegraphics[width=0.75\linewidth]{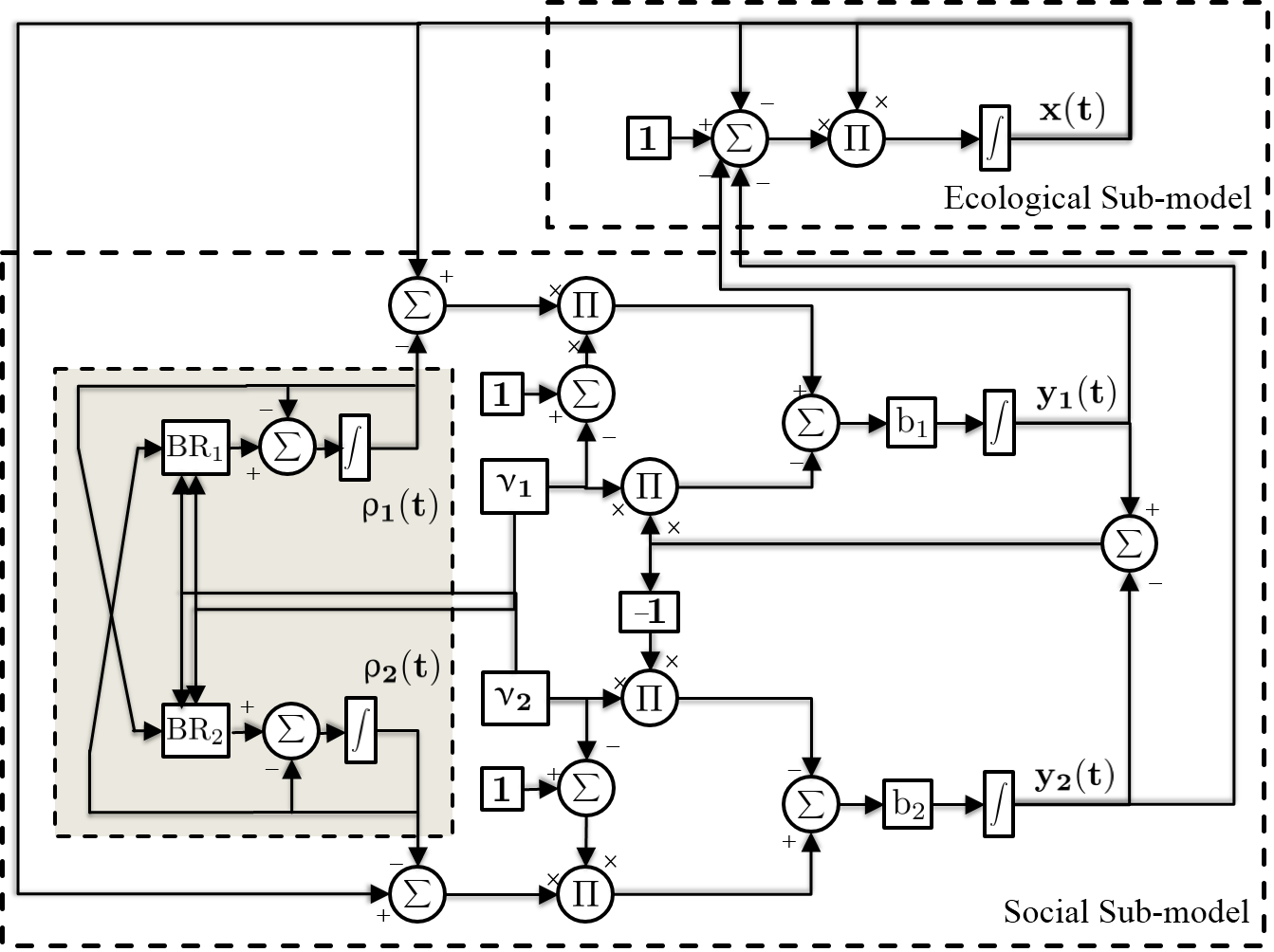}
	\end{center}
	\caption{Block diagram for \eqref{eq:br_dyn1} coupled with the learning dynamics of \eqref{eq:br_dyn2}. The learning process is part of the social component of the model and is highlighted by the shaded area.}
\label{fig:block_dia}
\end{figure}
where $\mathrm{BR}_i = \tilde{\uprho}_i$ is given by \eqref{eq:br} \footnote{Such a formulation of resource-based games while ignoring the transient and focusing on the steady state is not new (see \cite{ostrom2015governing} for similar examples). One must bear in mind however, that the conclusions drawn under such assumptions are relevant only for large time-scales and thus caution must be practiced while interpreting the results.}. Figure~\ref{fig:block_dia} depicts the dynamics of the overall system coupled with the learning process.

Under the dynamics given by \eqref{eq:br_dyn2}, the groups' strategies $\uprho_i$ are updated in direction of the best-response to each other's actions. Thus by definition, if an equilibrium $(\xbar,\ybar_1,\ybar_2,\bar{\uprho}_1,\bar{\uprho}_2)$ of the above system exists it must be the Nash equilibrium, since at steady state $\bar{\uprho}_1 = \mathrm{BR}_1(\bar{\uprho}_2)$ and $\bar{\uprho}_2 = \mathrm{BR}_2(\bar{\uprho}_1)$. We find upon solving \eqref{eq:br_dyn} for the equilibrium, that it is given by
\begin{align}
\begin{split}
\label{eq:learn_eq}
	&\xbar \!=\! \frac{2 \upnu_1\upnu_2}{\upnu_1\!+\!\upnu_2\!+\!2\upnu_1\upnu_2},\,\,
	\ybar_1 \!=\! \frac{\upnu_1}{\upnu_1\!+\!\upnu_2\!+\!2\upnu_1\upnu_2}, \\
	&\ybar_2 \!=\! \frac{\upnu_2}{\upnu_1\!+\!\upnu_2\!+\!2\upnu_1\upnu_2},\,
	\bar{\uprho}_1 \!=\! \frac{\upnu_1(3\upnu_2\!-\!\upnu_1\!-\!2\upnu_1\upnu_2)}{(1\!-\!\upnu_1)(\upnu_1\!+\!\upnu_2\!+\!2\upnu_1\upnu_2)},\\
	&\bar{\uprho}_2 \!=\! \frac{\upnu_2(3\upnu_1\!-\!\upnu_2\!-\!2\upnu_2\upnu_1)}{(1\!-\!\upnu_2)(\upnu_2\!+\!\upnu_1\!+\!2\upnu_2\upnu_1)},
\end{split}
\end{align}
where $\bar{\uprho}_1$ and $\bar{\uprho}_2$ indeed constitute the Nash equilibrium as given in \eqref{eq:nash1}. Note that in \eqref{eq:learn_eq} both consumptions are positive i.e., $\ybar_1 > 0, \ybar_2 > 0$ which constitutes a self-reliant equilibrium as per the notion defined in Chapter \ref{chap:open}. This is in contrast to the equilibrium \eqref{eq:equi-2comp} of the open-loop system, which permits the qualitatively different steady state  $\ybar_i > 0, \ybar_j < 0$ (the free-riding phenomenon). Thus in a society following best-response dynamics, the consuming groups do not exhibit free-riding behavior as opposed to the open-loop system \eqref{eq:two_com_sys} where the strategies are specified as exogenous parameters. 

\subsection{Convergence to the Nash Equilibrium}
In order to find whether the Nash equilibrium is reachable or not, we need to explore the stability properties of system \eqref{eq:br_dyn}. The global stability of \eqref{eq:br_dyn1} has already been discussed in Chapter~\ref{chap:open}, Section~\ref{sec:open_dual} for constant $\rho_i$'s. Since the subsystem \eqref{eq:br_dyn2} does not depend on \eqref{eq:br_dyn1}, the stability of the overall system \eqref{eq:br_dyn} can then be deduced by analyzing \eqref{eq:br_dyn2} separately. More precisely, if $\rho_1$ and $\rho_2$ are guaranteed to reach a fixed point, then they may be treated as constants in \eqref{eq:br_dyn1} which is known to be globally stable if inequality \eqref{eq:con_stab1} holds. Thus, once \eqref{eq:br_dyn2} reaches its steady-state, \eqref{eq:br_dyn1} will also converge to its equilibrium (provided \eqref{eq:con_stab1} holds) regardless of where the system was when the steady-state was achieved. Since \eqref{eq:br_dyn2} is linear in $\rho_1$ and $\rho_2$, the stability can be deduced by analyzing the eigenvalues, which are given as 
\begin{align}
\label{eq:eigs}
	\uplambda_{1,2} = -1 \pm \sqrt{ \frac{  \left(\upnu_1 - \upnu_2\right)^2+4 \upnu_1^2 \upnu_2^2}{4 \upnu_1 \upnu_2}}.
\end{align}
It is obvious from \eqref{eq:eigs}, that both $\uplambda_1$ and $\uplambda_2$ are real. However, $\uplambda_1$ may be positive which corresponds to unstable learning dynamics. It can be seen from Figure \ref{fig:eigs} that the learning dynamics are stable in a major area of the parameter space. Combining this information with the stability condition for \eqref{eq:two_com_sys}, the overall criterion for stability of \eqref{eq:br_dyn} constitutes the following two conditions 
\begin{subequations}
\begin{align}
\label{eq:cond1}
	&(\b_1-\b_2)(\b_1\upnu_1-\b_2\upnu_2) + 4\b_1\upnu_1\b_2\upnu_2 > 0,
\end{align}\vspace{-15pt}
\begin{align}
\label{eq:cond2}
	&\left(\upnu_1 - \upnu_2\right)^2+4 \upnu_1^2 \upnu_2^2 - 4 \upnu_1 \upnu_2 < 0.
\end{align}
\end{subequations}
This shows that the Nash Equilibrium is reachable via best-response dynamics, if \eqref{eq:cond1} and \eqref{eq:cond2} hold collectively. 

\subsection{Overcoming Free-riding Behavior}
Figures \ref{fig:graph_1} and \ref{fig:graph_2} illustrate the learning process for two different points in the parameter space. Due to \eqref{eq:learn_eq}, when $\upnu_1 = \upnu_2$ (Figure \ref{fig:graph_1}) then the resource at steady state $\xbar$ equals $\bar{\uprho}_1 = \bar{\uprho}_2$. However when $\upnu_1 \neq \upnu_2$ (Figure \ref{fig:graph_2}), the resource $x$ follows the group with lower social relevance $\upnu_i$. The case depicted in Figure \ref{fig:graph_2} is particularly interesting and merits additional interpretation. As seen from the long-term behavior of the graphs, and from the mathematical form of the equilibrium \eqref{eq:learn_eq}, higher social-relevance $\upnu_i$ implies a lower level of environmentalism $\rho_i$ and a higher steady state consumption $\ybar_i$. This is in contrast to what the steady-state of the open-loop system \eqref{eq:two_com_sys} predicts i.e., cooperative individuals are usually associated with higher levels of environmental concern, and also consume less than non-cooperative individuals on average. Indeed this is what has been observed through different social-psychological studies as well \cite{manzoor2016game}. The closed-loop system \eqref{eq:br_dyn} however predicts the behavior of the consumers \emph{if} they act rationally according to the game defined by \eqref{eq:br_dyn}. This highlights the relevance of game theory to distributed control problems, whereby the selected payoffs and learning scheme may be treated as design principles in the bigger problem of social control \cite{marden2012game, manzoor2014optimal} in order to obtain desired behavior. Indeed as we find here, the selected scheme eliminates the phenomenon of free-riding, which poses a significant challenge in the effective governance of natural resources \cite{ostrom2015governing}. 
\begin{figure*}[t]
\vspace{25pt}
\begin{center}
\captionsetup{width=\linewidth}
\begin{subfigure}[t]{0.25\linewidth}
	\captionsetup{width=0.9\linewidth}
	\includegraphics[width=\linewidth]{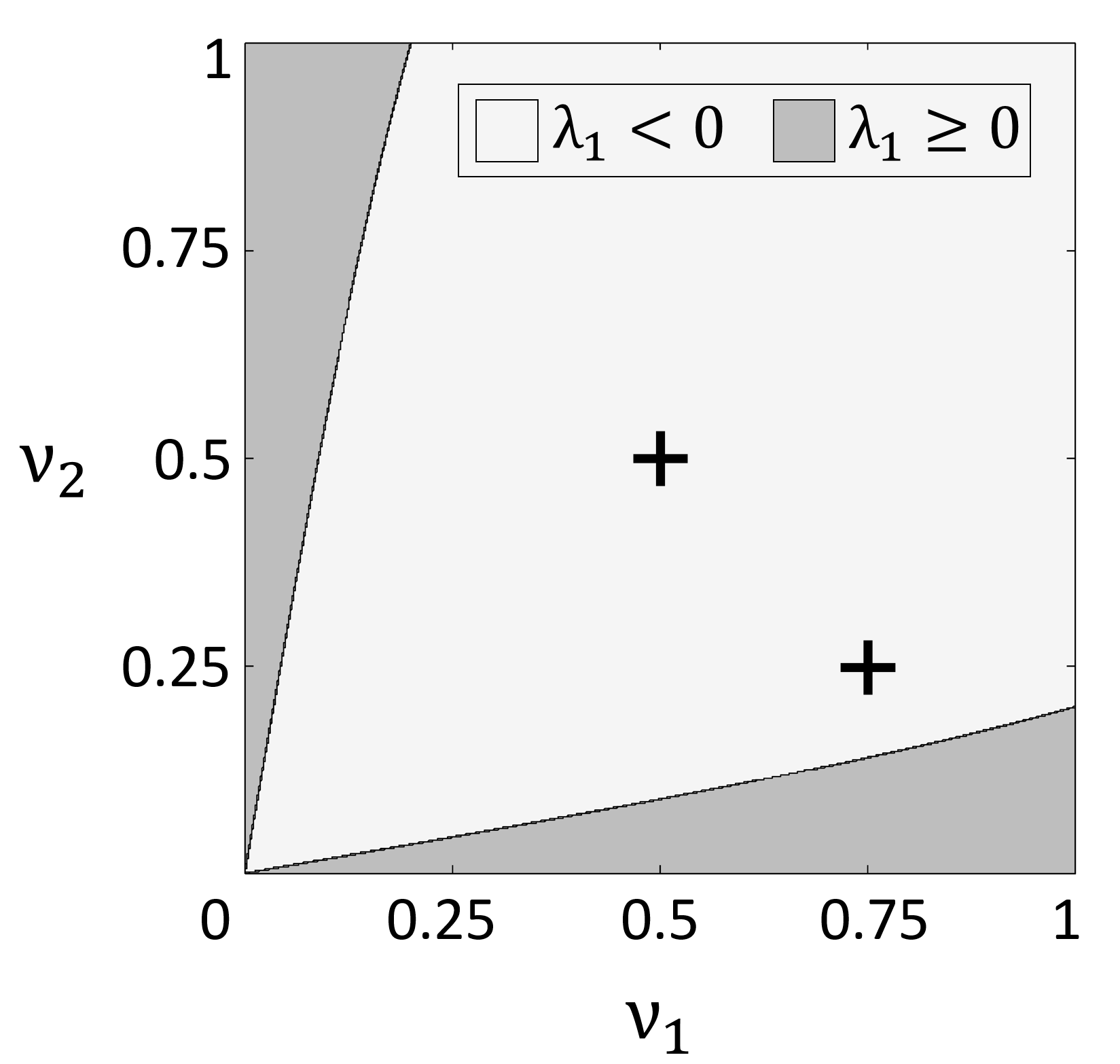}
	\vspace{-15pt}
	\caption{The positivity of $\uplambda_1$ as given by \eqref{eq:eigs}.}
	\label{fig:eigs}
\end{subfigure}
\begin{subfigure}[t]{0.37\linewidth}
	\captionsetup{width=0.9\linewidth}
	\includegraphics[width = \linewidth]{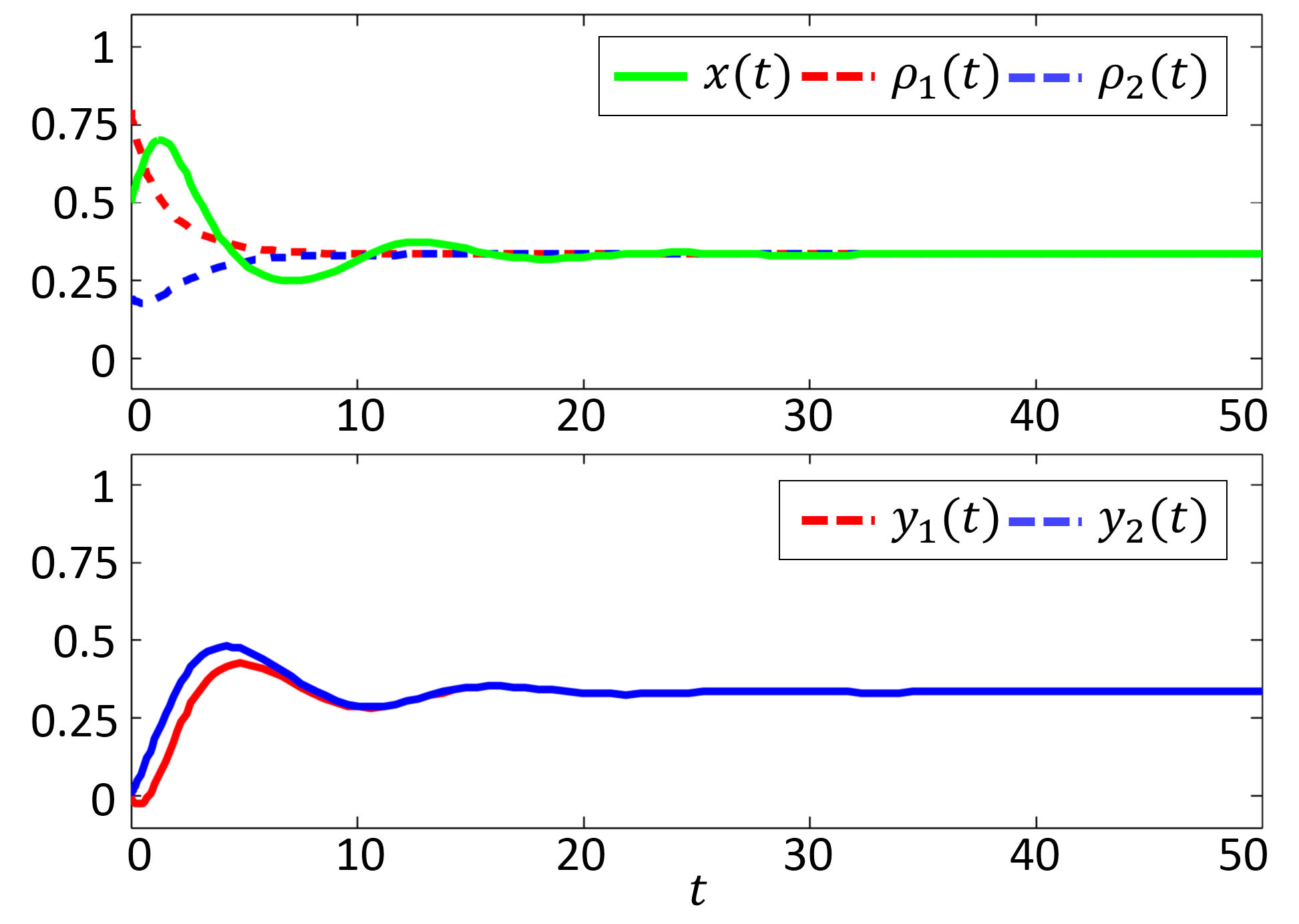}
	\vspace{-15pt}
	\caption{Learning process for $\upnu_1 = \upnu_2 = 0.5$.}
	\label{fig:graph_1}
\end{subfigure}%
\begin{subfigure}[t]{0.37\linewidth}
	\captionsetup{width=0.9\linewidth}
	\includegraphics[width = \linewidth]{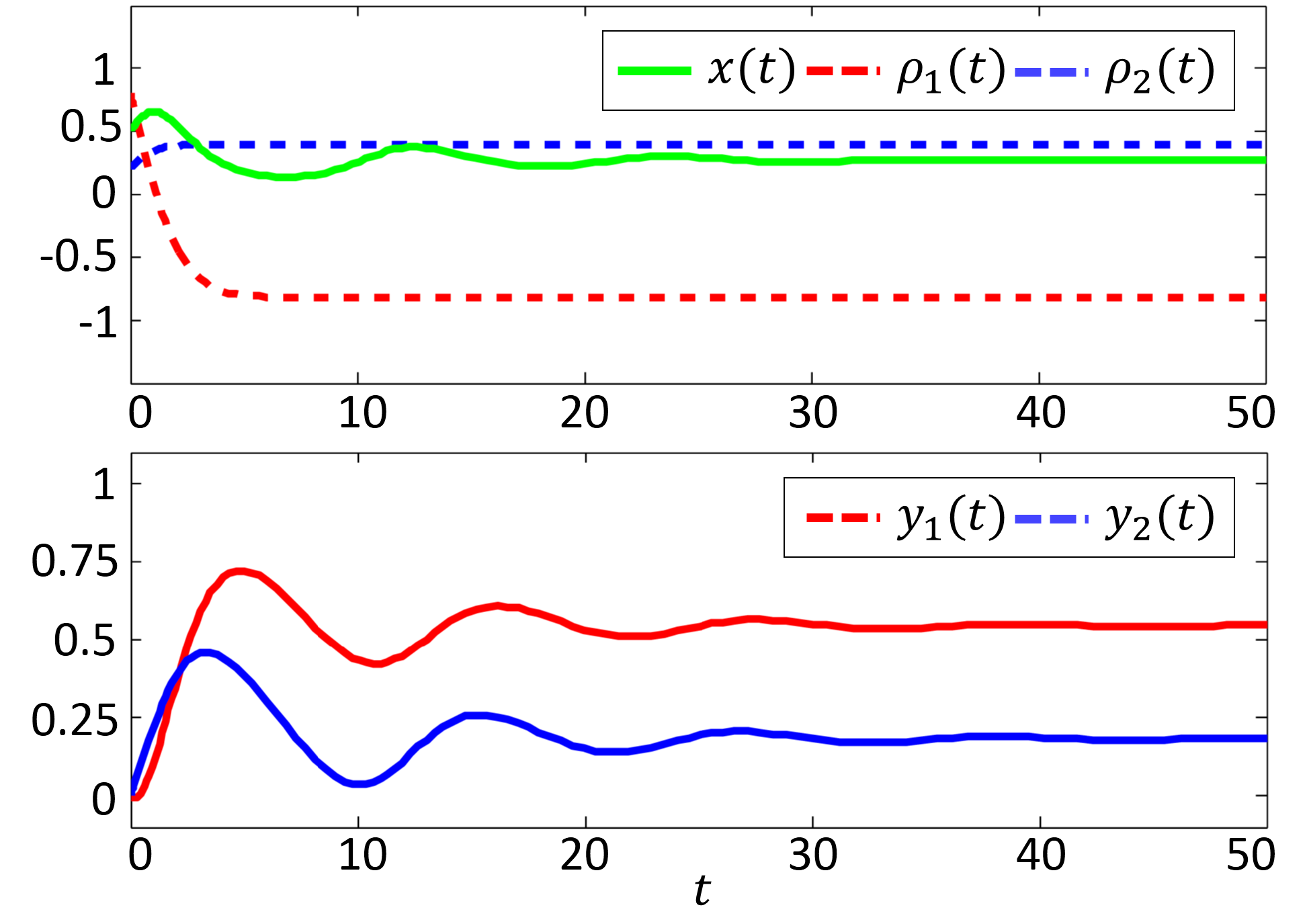}
	\vspace{-15pt}
	\caption{Learning process for $\upnu_1 = 0.75$ and $\upnu_2 = 0.25$.}
	\label{fig:graph_2}
\end{subfigure}
\vspace{-2pt}
\caption{\emph{left:} The region of stability for \eqref{eq:br_dyn2}. The cross-hairs correspond to points selected for the subsequent simulations. \emph{center:} The learning process for identical $\upnu_i$'s. \emph{right:} The learning process for different $\upnu_i$'s. In both simulations $\b_1 = \b_2 = 1$, $x(0) = 0.5$, $y_1(0) = y_2(0) = 0$, $\uprho_1(0) = 0.8$ and $\uprho_2(0) = 0.2$. Note that since $\b_1 = \b_2$, \eqref{eq:cond1} holds true for all $\upnu_1. \upnu_2$.}
\end{center}
\label{fig:results}
\vspace{-15pt}
\end{figure*}

\section{Discussion}

In this chapter we have studied the basic consumer networks under feedback mechanisms inspired from the analysis conducted in preceding chapters. Firstly, we devised the optimal feedback law for the single-group network for both sustainable and unsustainable cases. We also discussed how to obtain it numerically for both cases and demonstrated the results through a numerical simulation. Secondly, we applied the theory of learning in games to study the effects of the consumers adapting their characteristics in the two-agent network. The relevance of the Nash equilibrium has been demonstrated by observing the dynamics of agents employing best-response learning dynamics. We hope this serves to instigate inquiry into other complex learning schemes that may also converge to the Nash equilibrium. We also found that while the consumers may indulge in free-riding in the open-loop system, the adoption of best-response learning eliminates this phenomenon. These findings may potentially be used to inform policy making for natural resource management, especially since the study is based on one of the few fully justified mathematical models (at least to our knowledge) of socio-ecological couplings.

\clearpage
\chapter{Moving Towards Generically Structured Networks}

\label{chap:n}

The analysis of Chapters \ref{chap:open} -- \ref{chap:learning} has focused on the basic single and dual agent consumer networks. We now consider the full $\n$-agent network as an open problem. Due to the large dimension and non-linear nature of the system, replicating the analysis of the previous chapters appears not to be straight forward. Here we present the fixed point calculations for two structured $\n$-agent populations using an inductive rather than constructive approach. The exercise highlights the difficulties to be expected for analyzing the behavior of the generically structured network. In the end, we indicate how the network topology may influence the system dynamics via the ``Laplacian" of the influence network.

\section{Static Characteristics for Structured Networks}

\subsection{Well-mixed population}

Here we assume that the consuming population is well-mixed, i.e., all consumers are connected  and exert a uniform influence on each other. Thus the network constitutes a complete graph and the dynamics are given by
\begin{subequations}
\label{well_mixed_system}
\begin{align}
	\label{well_mixed_x}
	&\dot{x} = (1-x)x - \sum_{i=1}^{\n} y_i \, x \\
	\label{well_mixed_yi}
	&\dot{y}_i = \b_i \left( (1-\upnu_i) (x-\uprho_i) + \frac{\upnu_i}{\n-1}\sum_{j=1}^\n  y_j - y_i \right)
\end{align}
\end{subequations}
where $ i\in \{1, \dots , \n\}$.  We represent any possible equilibria by $(\xbar,\ybar_1, \dots, \ybar_\n)$.\\

First let us note the following assumptions that we will invoke at various points in the analysis that follows

\vskip1mm \noindent{\bf (A1)} {\bf Maximally social society} - {\it  All consumers carry the maximum possible social weight i.e.,\\ $\upnu_1 = \upnu_2 = \dots = \upnu_\n=1$} \vskip1mm
\vskip1mm \noindent{\bf (A2)} {\bf Minimally social society} - {\it  All consumers carry the minimum possible social weight i.e., \\ $\upnu_1 = \upnu_2 = \dots = \upnu_\n=0$} \vskip1mm
\vskip1mm \noindent{\bf (A3)} {\bf Maximally environmental society} - {\it  All consumers manifest the maximum possible level of environmentalism i.e., $\uprho_1 = \uprho_2 = \dots = \uprho_\n=1$} \vskip1mm
\vskip1mm \noindent{\bf (A4)} {\bf Minimally environmental society} - {\it  All consumers manifest the minimum possible level of environmentalism i.e., $\uprho_1 = \uprho_2 = \dots = \uprho_\n=0$} \vskip1mm

Before presenting the equilibrium for system \eqref{well_mixed_system}, we require the following intermediate results

\begin{lemma}
\label{lemmax0_1}
	If (A1) holds, there exist an infinite number of equilibria for system \eqref{well_mixed_system}.
\end{lemma}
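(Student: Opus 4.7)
The plan is to exploit the fact that setting $\upnu_i=1$ for every $i$ completely decouples the consumption dynamics from the resource. Under (A1) the ecological driving term $(1-\upnu_i)(x-\uprho_i)$ in \eqref{well_mixed_yi} vanishes for every agent, so each $\dot y_i$ depends only on the vector of consumptions $(y_1,\dots,y_\n)$ through the social coupling $\sum_j\w_{ij}(y_i-y_j)$ inherited from the derivation in Chapter \ref{chap:model}. The resulting social-only subsystem is a pure consensus process on the complete graph, so at equilibrium it pins down the \emph{shape} of the consumption profile (all agents must be equal) but leaves its common magnitude as a free parameter, and that free parameter is what generates the continuum of equilibria.

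Concretely, I would first set $\upnu_i=1$ in the consumption dynamics and equate $\dot y_i$ to zero for every $i$. The resulting linear system takes the form $L\,\bar y=0$, where $L$ is (a positive scalar multiple of) the graph Laplacian of the complete graph $K_\n$ on the $\n$ consumers, namely the matrix with entries $L_{ii}=1$ and $L_{ij}=-1/(\n-1)$ for $i\neq j$. Since $K_\n$ is connected, $\ker L=\mathrm{span}(\mathbf{1})$, so the social equilibrium condition forces $\bar y_1=\cdots=\bar y_\n=:\bar y$ for some $\bar y\in\mathbb{R}$, with $\bar y$ otherwise arbitrary. Substituting $\bar y_i=\bar y$ into \eqref{well_mixed_x} then yields $\bar x\bigl(1-\bar x-\n\bar y\bigr)=0$, whose roots are $\bar x=0$ and $\bar x=1-\n\bar y$. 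Letting $\bar y$ range over $\mathbb{R}$ therefore produces two one-parameter families of fixed points of \eqref{well_mixed_system}, namely $(\bar x,\bar y_1,\dots,\bar y_\n)=(1-\n\bar y,\bar y,\dots,\bar y)$ and $(0,\bar y,\dots,\bar y)$, which is a continuum and in particular an infinite set of equilibria, as claimed.

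The argument is essentially a direct calculation; the only nontrivial step is the kernel computation for $L$, which is immediate from the connectivity of $K_\n$. The conceptual point worth flagging in a short remark after the proof is that (A1) severs every channel through which ecological feedback could single out a unique steady state: with no weight on the resource, consumers only chase one another, and the stock is left to settle wherever the common consumption rate happens to push it. This indeterminacy is the genuine mechanism behind the infinitude of equilibria, and it is precisely what we should expect to disappear as soon as any $\upnu_i<1$, since restoring even one ecological term reintroduces the $\bar x=\uprho_i$ constraint that fixes the steady-state stock.
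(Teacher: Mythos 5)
Your proof is correct and takes essentially the same route as the paper: under (A1) both arguments reduce the consumption equations to the condition that all $\ybar_i$ equal a free common value $\ybar$ (you invoke the kernel of the complete-graph Laplacian where the paper does an explicit row reduction), and then read off $\xbar=0$ or $\xbar=1-\n\ybar$ from the resource equation. The only detail the paper adds is that the branch $\xbar=1-\n\ybar$ is a valid (non-negative) equilibrium only for $\ybar\le 1/\n$, a restriction your ``$\ybar$ ranges over $\mathbb{R}$'' parameterization ignores, though this does not affect the conclusion since the $\xbar=0$ family alone is already infinite.
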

\begin{proof}
	Substituting the assertion of $(A1)$ in \eqref{well_mixed_yi} and solving for $\dot{y}_i=0$ gives the following linear system
\begin{align*}
	(\n-1)\ybar_i - \sum_{j\neq i} y_j = 0
\end{align*}
In matrix notation, the row reduced echelon form for this system is given by
\begin{align*}
	\left[\begin{array}{ccccc} 1 & 0 & \dots & 0 & -1 \\ 0 & 1 & \dots & 0 & -1 \\ \vdots & \vdots & \ddots & \vdots & \vdots \\ 0 & 0 & \dots & 1 & -1 \\ 0 & 0 & \dots & 0 & 0 \end{array} \right] \left[\begin{array}{c} \ybar_1 \\ \ybar_2 \\ \vdots \\ \ybar_{\n\!-\!1} \\ \ybar_\n \end{array} \right] = \left[ \begin{array}{c} 0 \\ 0 \\ \vdots \\ 0 \\  0 \end{array} \right]
\end{align*}
From here, the solution for the consumption efforts is given by $\ybar_1 = \ybar_2 = \dots \ybar_\n = \ybar$ where $\ybar \in \mathbb{R}$. Substituting this result in \eqref{well_mixed_x} we see that either $\xbar = 0$ or $\xbar = 1 -  \n y$. Note that due to \eqref{well_mixed_x}, $\xbar = 1-\n y$ constitutes a valid equilibrium only if $y\leq 1/\n$. This completes the proof.
\end{proof}

\begin{lemma}
\label{lemma_well_mixed_A2}
	If (A2) holds, there exist an infinite number of equilibria for system \eqref{well_mixed_system} only when $\uprho_1 = \uprho_2 = \dots = \uprho_\n = \uprho$ (where $\uprho \in [0,1]$). Otherwise no equilibrium exists.
\end{lemma}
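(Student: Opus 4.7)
The plan is to substitute assumption (A2) directly into the social sub-dynamics \eqref{well_mixed_yi}, exploit the fact that the social coupling term is entirely scaled by $\upnu_i$, and thereby reduce the equilibrium conditions to a collection of pointwise constraints on $\xbar$.

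First, setting $\upnu_i = 0$ for every $i$ in \eqref{well_mixed_yi} collapses the equation to
\begin{equation*}
	\dot{y}_i = \b_i(x - \uprho_i), \qquad i \in \{1,\dots,\n\}.
\end{equation*}
At equilibrium $\dot{y}_i = 0$, and since $\b_i > 0$ by definition (see Chapter \ref{chap:model}), this forces $\xbar = \uprho_i$ for each index $i$ simultaneously. Hence a necessary condition for the existence of any equilibrium is $\uprho_1 = \uprho_2 = \cdots = \uprho_\n$, which establishes the contrapositive ``otherwise no equilibrium exists''.

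Next I would show sufficiency. Suppose all $\uprho_i$ coincide at some common value $\uprho \in [0,1]$. The social equations impose $\xbar = \uprho$, which is automatically compatible, and \eqref{well_mixed_x} evaluated at equilibrium yields
\begin{equation*}
	0 = \xbar\bigl(1 - \xbar - \textstyle\sum_{i=1}^{\n} \ybar_i \bigr).
\end{equation*}
If $\uprho > 0$, then $\xbar = \uprho \neq 0$, so the bracketed factor must vanish, giving the single linear constraint $\sum_{i=1}^{\n} \ybar_i = 1 - \uprho$. This is one equation in $\n$ unknowns and so defines an $(\n-1)$-dimensional affine family of equilibria, hence infinitely many. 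If $\uprho = 0$, then $\xbar = 0$ satisfies \eqref{well_mixed_x} regardless of the $\ybar_i$, so the $\ybar_i$ are completely free and again infinitely many equilibria exist.

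The argument is essentially algebraic and I do not anticipate a technical obstacle; the only subtlety worth flagging is the split into the two sub-cases $\uprho > 0$ and $\uprho = 0$ when recovering the equilibrium consumption set from \eqref{well_mixed_x}, since in the latter case the factorization $\xbar(1 - \xbar - \sum\ybar_i) = 0$ is satisfied by the $\xbar = 0$ branch rather than by the bracketed linear constraint. Handling this case cleanly ensures the ``infinite number of equilibria'' conclusion genuinely holds across the full admissible range of $\uprho$.
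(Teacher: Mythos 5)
Your proposal is correct and follows essentially the same route as the paper: substituting (A2) collapses the social dynamics to $\dot{y}_i = \b_i(x-\uprho_i)$, which at equilibrium forces $\xbar=\uprho_i$ for every $i$ (hence no equilibrium unless all $\uprho_i$ coincide), after which the resource equation yields the infinite family $\sum_{i}\ybar_i = 1-\uprho$ when $\uprho>0$ and leaves the $\ybar_i$ free when $\uprho=0$. The only cosmetic difference is that the paper first splits on the two branches of the resource equation and describes the $\uprho=0$, $\xbar=0$ situation as a single additional state, whereas you read $\xbar=\uprho_i$ off the social equations directly and correctly note that in the degenerate case the whole $\xbar=0$ slice consists of equilibria.
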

\begin{proof}
	Substituting the assertion of $(A2)$ in \eqref{well_mixed_system} gives us the following system
\begin{align*}
	&\dot{x} = (1-x)x - \sum_{i=1}^{\n} y_i \, x \\
	&\dot{y}_i = \b_i \left( x-\uprho_i\right)
\end{align*}

It is easy to see that for the case $\xbar=0$, the dynamics for $y_i$ are given by $\dot{y}_i = -\b_i \uprho_i$. Thus an equilibrium can exist in this case only if $\uprho_i = 0 \,\,\forall \,\, i$. In fact, the equilibrium would consist of the single state $\xbar = \ybar_1 = \dots = \ybar_\n = 0$. For the case $\xbar = 1 - \sum_{i=1}^\n \ybar_i$, we obtain the following system for the steady state efforts
\begin{align*}
	\sum_{i=1}^\n \ybar_i = 1-\uprho_i 
\end{align*}
This system is solvable only when $\uprho_i = \uprho \,\, \forall \,\, i$. In this case the steady state effort is given by $\xbar = \uprho$ and $\sum_{i=1}^\n \ybar_i = 1-\uprho$. Otherwise no equilibrium exists.

Thus, if  $\uprho_1 = \uprho_2 = \dots = \uprho_\n = \uprho$, then the steady state constitutes all points where  $\sum_{i=1}^\n \ybar_i = 1-\uprho$ with $\xbar = \uprho$. In the event that $\uprho = 0$ there also exists an additional equilibrium consisting of the single state $\xbar = \ybar_1 = \dots = \ybar_\n = 0$. This completes the proof.
\end{proof}

\begin{lemma}
\label{lemma_xbar0}
	In the event that none of (A1), (A2) or (A4) hold true, there can exist no equilibrium for system \eqref{well_mixed_system} with $\xbar = 0$.
\end{lemma}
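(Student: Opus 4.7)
The plan is to substitute $\xbar = 0$ into the equilibrium system, derive a single non-negative sum identity from the $\n$ consumption equations, and use that identity to force each parameter pair into a degenerate configuration that coincides with one of (A1), (A2), or (A4). The resource equation $\dot{x} = (1-x)x - x\sum_i y_i$ vanishes automatically at $\xbar = 0$, so all binding constraints come from the social sub-system \eqref{well_mixed_yi}. Writing $S := \sum_{j=1}^{\n} \ybar_j$ and cancelling $\b_i > 0$ in the condition $\dot{y}_i = 0$ at $x=0$, one obtains
\begin{equation*}
(1-\upnu_i)\uprho_i \;=\; \frac{\upnu_i}{\n-1}\bigl(S - \n\,\ybar_i\bigr), \qquad i = 1,\ldots,\n. \tag{$\ast$}
\end{equation*}

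The first step is to handle indices with $\upnu_i = 0$: relation $(\ast)$ collapses to $\uprho_i = 0$ for every such index. If this occurred simultaneously for all $i$, the hypothesis would be violated by (A2), so at least one index must satisfy $\upnu_i > 0$. For any such index, dividing $(\ast)$ by $\upnu_i$ gives
\begin{equation*}
\frac{(1-\upnu_i)\uprho_i}{\upnu_i} \;=\; \frac{S - \n\,\ybar_i}{\n-1}.
\end{equation*}
The pivotal algebraic observation is that the right-hand side sums to zero across all indices, since $\sum_{i=1}^{\n}(S - \n\ybar_i) = \n S - \n S = 0$. Summing over $i$ (and extending the summand as $0$ for indices with $\upnu_i = 0$, which is consistent by Step 1) yields the identity
\begin{equation*}
\sum_{i=1}^{\n}\frac{(1-\upnu_i)\uprho_i}{\upnu_i} \;=\; 0.
\end{equation*}

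Every summand is non-negative because $\upnu_i \in [0,1]$ and $\uprho_i \geq 0$, so each term must individually vanish: for every $i$, either $\upnu_i = 1$ or $\uprho_i = 0$. The hypothesis not-(A1) supplies some index $i^\star$ with $\upnu_{i^\star} < 1$, which forces $\uprho_{i^\star} = 0$; the hypothesis not-(A4) supplies some index $j^\star$ with $\uprho_{j^\star} > 0$, which forces $\upnu_{j^\star} = 1$. The aim is then to rule out the resulting mixed configuration and thereby conclude that no equilibrium with $\xbar = 0$ can exist.

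The hardest part will be this last step, since the index-wise dichotomy $\{\upnu_i = 1\}\cup\{\uprho_i = 0\}$ admits heterogeneous profiles that are not captured by any of (A1)–(A4) uniformly across the population. My strategy to close the gap would be to look for an auxiliary constraint that does not come from $(\ast)$ alone: either exploit a standing assumption restricting $\upnu_i$ to the open interval $(0,1)$ (under which every consumer would be forced into $\uprho_i = 0$, directly contradicting not-(A4) and finishing the argument), or invoke a transversality/regularity argument by linearising \eqref{well_mixed_system} around $\xbar = 0$ and showing that mixed profiles collapse the $(\ybar_1,\ldots,\ybar_{\n})$-system to an under-determined non-isolated fixed set that fails to qualify as a genuine equilibrium in the sense used by Lemmas~\ref{lemmax0_1} and \ref{lemma_well_mixed_A2}. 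Identifying which of these routes the authors intend is the main obstacle I foresee.
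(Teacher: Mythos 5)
Up to the consistency identity your computation is exactly the paper's: the paper writes the $\n$ stationarity conditions at $\xbar=0$ as a singular linear system (rank $\n-1$, rows summing to zero) and reads off the solvability condition $\sum_{i}(1-\upnu_i)\uprho_i/\upnu_i=0$, which is precisely what you obtain by dividing $(\ast)$ by $\upnu_i$ and summing. The gap is that you never finish the proof, and the two routes you propose for finishing it are not equally viable. The paper closes the argument by asserting that $(1-\upnu_i)\uprho_i/\upnu_i>0$ for \emph{every} $i$, i.e.\ it tacitly reads ``none of (A1), (A2), (A4) hold'' pointwise: every $\upnu_i\in(0,1)$ (it has already divided by $\upnu_i$, so $\upnu_i=0$ is excluded as well) and every $\uprho_i>0$. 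That is your route (a); under that reading a sum of strictly positive terms cannot vanish and the lemma follows immediately.

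Your route (b) cannot be rescued, and your suspicion about the ``mixed configuration'' is exactly right: under the literal, existential reading of the negated assumptions the statement is false, so no transversality or non-degeneracy argument can exclude those profiles. Concretely, take $\n=2$, $\upnu_1=1$, $\uprho_1>0$, $\upnu_2\in(0,1)$, $\uprho_2=0$: none of (A1), (A2), (A4) holds, yet every point $(\xbar,\ybar_1,\ybar_2)=(0,c,c)$, $c\in\mathbb{R}$, is an equilibrium of \eqref{well_mixed_system}, since both consumption brackets vanish identically at $x=0$, $y_1=y_2$. (A similar continuum arises with $\upnu_1=0$, $\uprho_1=0$, $\upnu_2\in(0,1)$, $\uprho_2>0$.) Relatedly, your summation step is only legitimate when $\upnu_i>0$ for all $i$: at an index with $\upnu_i=0$, equation $(\ast)$ only yields $\uprho_i=0$ and places no constraint on $S-\n\ybar_i$, so padding the left-hand side with zeros does not preserve equality between the two sums. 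Both defects disappear under the pointwise interpretation, which is what the paper's own proof silently assumes; an honest statement of the lemma would hypothesize $\upnu_i\in(0,1)$ and $\uprho_i>0$ for every consumer $i$.
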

\begin{proof}
	Assuming $\xbar=0$ and solving \eqref{well_mixed_yi} for $\dot{y}_i = 0$ leads to the following system of equations
\begin{align*}
	\left[\begin{array}{cccc} -(\n-1) & 1 & \dots & 1 \\ 1 & -(\n-1) & \dots & 1 \\ \vdots & \vdots & \ddots & \vdots \\ 1 & 1 & \dots & -(\n-1) \end{array} \right] \left[\begin{array}{c} \ybar_1 \\ \ybar_2 \\ \vdots \\ \ybar_\n \end{array} \right] = (\n-1)\left[ \begin{array}{c} \frac{(1-\upnu_1)\uprho_1}{\upnu_1} \\ \frac{(1-\upnu_2)\uprho_2}{\upnu_2} \\ \vdots \\ \frac{(1-\upnu_\n)\uprho_\n}{\upnu_\n} \end{array} \right]
\end{align*}
It is easy to check that the rank of the coefficient matrix is equal to $\n-1$. Adding all rows to the last one, gives us the following equivalent form of the equation
\begin{align*}
	\left[\begin{array}{ccccc} -(\n-1) & 1 & \dots & 1 & 1 \\ 1 & -(\n-1) & \dots & 1 & 1 \\ \vdots & \vdots & \ddots & \vdots & \vdots \\ 1 & 1 & \dots & -(\n-1) & 1 \\ 0 & 0 & \dots & 0 & 0 \end{array} \right] \left[\begin{array}{c} \ybar_1 \\ \ybar_2 \\ \vdots \\ \ybar_{\n-1} \\ \ybar_\n \end{array} \right] = (\n-1)\left[ \begin{array}{c} \frac{(1-\upnu_1)\uprho_1}{\upnu_1} \\ \frac{(1-\upnu_2)\uprho_2}{\upnu_2} \\ \vdots \\ \frac{(1-\upnu_{\n\!-\!1})\uprho_{\n\!-\!1}}{\upnu_{\n\!-\!1}} \\ \sum_{i=1}^\n \frac{(1-\upnu_i)\uprho_i}{\upnu_i} \end{array} \right]
\end{align*}
which means that a solution can exist only if $\sum_{i=1}^\n \frac{(1-\upnu_i)\uprho_i}{\upnu_i}=0$. Since none of (A1),(A2) or (A4) are true,  $\frac{(1-\upnu_i)\uprho_i}{\upnu_i}>0 \,\, \forall\,\, i$ and so, $\sum_{i=1}^\n \frac{(1-\upnu_i)\uprho_i}{\upnu_i}\neq 0$. Thus no equilibrium exists in this scenario.
\end{proof}

\begin{lemma}
\label{lemma_uniqueness}
	In the event that none of (A1), (A2), (A3) or (A4) hold true, there is a unique equilibrium for system \eqref{well_mixed_system} with $\xbar = 1-\sum_{i=1}^\n \ybar_i$. This equilibrium will exist only if there are at least $\n-1$ consumers with positive social weight ($\upnu_i > 0$)
\end{lemma}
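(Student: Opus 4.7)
The plan is to eliminate $\xbar$ from the equilibrium equations via the relation $\xbar = 1 - S$ with $S := \sum_{j=1}^\n \ybar_j$, and then to split the analysis according to the set $I_0 := \{i : \upnu_i = 0\}$ of consumers with vanishing social weight. For every $i \in I_0$ the social coupling term in $\dot y_i$ vanishes identically, so the equation $\dot y_i = 0$ collapses to the scalar constraint $\xbar = \uprho_i$. Consequently, if $|I_0| \geq 2$, picking any two such indices forces $\uprho_i = \uprho_j$; since assumptions (A3) and (A4) are precisely the uniform configurations $\uprho_1 = \cdots = \uprho_\n \in \{0, 1\}$ already excluded, this equality fails generically and no equilibrium exists. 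This proves the ``only if'' direction of the lemma, namely that $|I_0| \leq 1$, equivalently that at least $\n - 1$ consumers must have $\upnu_i > 0$.

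For uniqueness I would handle the two remaining cases $|I_0| = 0$ and $I_0 = \{k\}$ in parallel. In each case, for $i \in I_+ := \{1, \ldots, \n\} \setminus I_0$, the equation $\dot y_i = 0$ rearranges to express $\ybar_i$ as an affine function of $\xbar$ and $S$ alone, of the form $\ybar_i = \frac{(\n-1)(1-\upnu_i)(\xbar - \uprho_i)}{\n \upnu_i} + \frac{S}{\n}$. Summing these equations and using $\xbar = 1 - S$ produces a single scalar equation which, after the trivial identity $\sum_{i \in I_+}(S/\n)$ cancels with part of the left-hand side, reduces to $\sum_{i \in I_+}\frac{1-\upnu_i}{\upnu_i}(\xbar - \uprho_i) = 0$. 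This determines $\xbar$ as a weighted average of the $\uprho_i$ with strictly positive weights $(1-\upnu_i)/\upnu_i$, provided the denominator $\sum_{i \in I_+}(1-\upnu_i)/\upnu_i$ is nonzero; inspection shows that this denominator vanishes only when $\upnu_i = 1$ for every $i \in I_+$, which together with $|I_0| \leq 1$ corresponds to (A1). Hence, under the negation of (A1)--(A4), the value of $\xbar$ is unique, $S = 1 - \xbar$ is then unique, each $\ybar_{i \in I_+}$ follows from its affine formula, and when $I_0 = \{k\}$ the last coordinate is recovered via $\ybar_k = S - \sum_{i \neq k} \ybar_i$.

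The main obstacle I anticipate is the algebraic bookkeeping required to obtain the clean scalar reduction above and to verify that $\sum_{i \in I_+}(1-\upnu_i)/\upnu_i$ is the precise quantity whose vanishing captures the excluded case (A1). The computation is sensitive to the correct normalization of the self-coupling in the well-mixed dynamics (as is visible in Lemma \ref{lemmax0_1}, where the $\upnu_i = 1$ case forces $\ybar_i = S/\n$ rather than $S/(\n-1)$), so one must take care to preserve the identity $\frac{\n \upnu_i}{\n - 1}$ for the self-term when writing $\dot y_i$ out before summing. A subsidiary point to check is that in the case $I_0 = \{k\}$ the two a priori independent constraints $\xbar = \uprho_k$ (from $\dot y_k = 0$) and $\xbar = 1 - S$ (from $\dot x = 0$) are compatible rather than over-determined; this is settled by noting that $S$ is determined directly from $S = 1 - \uprho_k$ without invoking the summation identity over $I_+$, so the scalar reduction of the previous paragraph merely plays the role of fixing the individual $\ybar_{i \in I_+}$ and introduces no additional restriction.
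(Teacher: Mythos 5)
Your uniqueness argument for the cases $|I_0|\le 1$ is essentially sound and takes a genuinely different route from the paper: the paper substitutes $\xbar=1-\sum_i\ybar_i$, writes the full $\n\times\n$ linear system in $(\ybar_1,\dots,\ybar_\n)$ and shows its determinant $\n\big(\n\prod_j\upnu_j-\sum_j\prod_{k\neq j}\upnu_k\big)$ is nonzero precisely when at most one $\upnu_i$ vanishes, whereas you eliminate down to a single scalar equation $\sum_{i\in I_+}\frac{1-\upnu_i}{\upnu_i}(\xbar-\uprho_i)=0$ that pins $\xbar$ as a weighted average of the $\uprho_i$. That buys a more transparent formula for $\xbar$ (and foreshadows the explicit equilibrium of the next lemma), at the cost of a case split. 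Two small repairs are needed there: the scalar reduction as you state it holds only when $I_0=\emptyset$ (for $I_0=\{k\}$ the summed equation retains the extra term $S/\n-\ybar_k$, so your third-paragraph treatment via $\xbar=\uprho_k$, $S=1-\uprho_k$ is the argument that should carry that case, not the reduction); and the vanishing of $\sum_{i\in I_+}(1-\upnu_i)/\upnu_i$ coincides with (A1) only when $I_0=\emptyset$ --- with $I_0=\{k\}$ and all other $\upnu_i=1$ the sum vanishes without (A1) holding, but there the denominator is never used.

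The genuine gap is in the ``only if'' direction. You infer from $|I_0|\ge 2$ that $\uprho_i=\uprho_j$ is forced for $i,j\in I_0$ and then claim this ``fails generically'' because (A3)/(A4) are excluded, hence no equilibrium exists. Excluding (A3) and (A4) only rules out \emph{all} environmentalisms being simultaneously $1$ or $0$; it does not prevent the two (or more) zero-weight consumers from sharing a common value $\uprho\in(0,1)$ while the others differ. In that configuration an equilibrium on the branch $\xbar=1-\sum_i\ybar_i$ of \eqref{well_mixed_system} does exist --- in fact a continuum: $\xbar=\uprho$, $S=1-\uprho$, the $\ybar_i$ with $\upnu_i>0$ are fixed by your affine formulas, and the zero-weight efforts are constrained only through their sum. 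So the assertion ``no equilibrium exists'' is false there, and a genericity appeal cannot substitute for the stated hypotheses, which must cover every admissible parameter choice. What actually fails when $|I_0|\ge 2$ is \emph{uniqueness}: after substituting $\xbar=1-S$ the linear system for the efforts is singular (this is exactly the paper's determinant criterion), hence either inconsistent (distinct $\uprho_i$ among $I_0$) or underdetermined (equal $\uprho_i$), and in neither case does the unique equilibrium asserted by the lemma exist. Rephrasing your necessity step in these terms closes the gap and aligns it with what the lemma actually claims.
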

\begin{proof}
	We know from \eqref{well_mixed_x} that $\dot{x} = 0$ if $x + \sum_{i=1}^\n y_i = 1$. Substituting this in \eqref{well_mixed_yi} and solving for $\dot{y} = 0$, gives us the following system of equations
\begin{align*}
	(\n-1)\ybar_i + (\n-1-\n\upnu_i)\sum_{j\neq i} \ybar_j = (\n-1)(1-\upnu_i)(1-\uprho_i)
\end{align*}
In matrix form
{
\setlength{\mathindent}{0cm}
\fontsize{10}{12}\selectfont
\begin{align*}
	\left[\begin{array}{cccc} (\n-1) & (\n-1-\n\upnu_1) & \dots & (\n-1-\n\upnu_1) \\ (\n-1-\n\upnu_2) & (\n-1) & \dots & (\n-1-\n\upnu_2) \\ \vdots & \vdots & \ddots & \vdots \\ (\n-1-\n\upnu_n) & (\n-1-\n\upnu_\n) & \dots & (\n-1) \end{array} \right] \left[\begin{array}{c} \ybar_1 \\ \ybar_2 \\ \vdots \\ \ybar_\n \end{array} \right] = (\n-1)\left[ \begin{array}{c} (1-\upnu_1)(1-\uprho_1) \\ (1-\upnu_2)(1-\uprho_2) \\ \vdots \\ (1-\upnu_\n)(1-\uprho_\n) \end{array} \right]
\end{align*}}
Since neither of $(A1)$ nor $(A3)$ hold true, the vector on the RHS does not equate to zero. Thus for the purpose of this proof, it suffices to show that the coefficient matrix for the above equation is full rank. We do this by observing the determinant, which is given as
{
\setlength{\mathindent}{0cm}
\begin{align*}
	\left|\begin{array}{cccc} (\n-1) & (\n-1-\n\upnu_1) & \dots & (\n-1-\n\upnu_1) \\ (\n-1-\n\upnu_2) & (\n-1) & \dots & (\n-1-\n\upnu_2) \\ \vdots & \vdots & \ddots & \vdots \\ (\n-1-\n\upnu_\n) & (\n-1-\n\upnu_\n) & \dots & (\n-1) \end{array} \right| = \n\left( \n\prod_{j}{\upnu_j} - \sum_{j} \left( \prod_{k\neq j}{\upnu_k}\right) \right)
\end{align*}}
It can be seen that in order for the determinant to be non-zero, there cannot be more than one consumer with social weight $\upnu_i$ equal to zero. In this event, the coefficient matrix above will always be full rank, and thus a unique equilibrium will always exist for system \eqref{well_mixed_system}. This concludes the proof.
\end{proof}

In light of the new condition obtained by Lemma \ref{lemma_uniqueness}, we define the following assumption

\vskip3mm \noindent{\bf (A5)} {\bf Sufficiently social society} - {\it  There exists no more than one consumer in the population who carries social weight $\upnu_i$ equal to 0.} \vskip3mm

\begin{corollary}
	If $(A5)$ does not hold true, then there exists no equilibrium for system \eqref{well_mixed_equilibrium}
\end{corollary}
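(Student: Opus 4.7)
The plan is to derive this corollary as a direct consequence of Lemmas \ref{lemma_xbar0} and \ref{lemma_uniqueness}, which between them exhaust the two possible types of equilibria of system \eqref{well_mixed_system}. I read the corollary in its ``generic'' sense, i.e.\ under the tacit understanding that none of the degenerate regimes (A1)--(A4) hold, since Lemma \ref{lemma_well_mixed_A2} already shows that the (A2) regime admits a continuum of equilibria when $\uprho_1=\cdots=\uprho_\n$, which would otherwise be a literal counterexample to the corollary.

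The first step is to factor the resource equation \eqref{well_mixed_x} as $\dot x = x\bigl(1-x-\sum_{i=1}^{\n} y_i\bigr)$, so that any equilibrium must satisfy either $\xbar=0$ or $\xbar=1-\sum_{i=1}^{\n}\ybar_i$. These two alternatives are mutually exclusive (assuming $\sum_i\ybar_i\neq 1$) and exhaustive, so it suffices to rule out both of them under the failure of (A5). For the first alternative, Lemma \ref{lemma_xbar0} immediately precludes $\xbar=0$, because the generic assumption excludes (A1), (A2) and (A4) simultaneously.

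For the second alternative, I would invoke Lemma \ref{lemma_uniqueness} directly: its proof computes the determinant of the linear system for $(\ybar_1,\dots,\ybar_\n)$ as
\[
\n\left(\n\prod_{j}\upnu_j-\sum_{j}\prod_{k\neq j}\upnu_k\right),
\]
and this quantity vanishes precisely when at least two of the $\upnu_i$ are zero, i.e.\ precisely when (A5) fails. In that degenerate situation the matrix drops rank by more than one, and one needs to check by inspection of any two rows $i,j$ with $\upnu_i=\upnu_j=0$ that the corresponding right-hand-side entries $(\n-1)(1-\upnu_i)(1-\uprho_i)$ and $(\n-1)(1-\upnu_j)(1-\uprho_j)$ produce an inconsistent system unless $\uprho_i=\uprho_j$. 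Since the generic assumption excludes the degenerate (A3)/(A4) situations and allows heterogeneous environmentalisms, the inhomogeneous system has no solution.

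The main obstacle is not a technical one but a bookkeeping one: making sure the failure of (A5) together with the generic exclusion of (A1)--(A4) is genuinely incompatible with \emph{every} residual equilibrium. Concretely, one must verify that the two rows of the coefficient matrix indexed by the two consumers with $\upnu=0$ collapse to the same left-hand side but distinct right-hand sides; any potential escape route (such as a third consumer absorbing the inconsistency) is blocked because those two rows are already identical before any row operations. Once this linear-algebra observation is in place, the corollary follows by contraposition of Lemma \ref{lemma_uniqueness} combined with Lemma \ref{lemma_xbar0}.
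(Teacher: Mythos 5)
Your route is the same as the paper's---the printed proof is literally a one-line citation of Lemmas \ref{lemma_xbar0} and \ref{lemma_uniqueness}---but the step where you try to fill in the ``only if'' half of Lemma \ref{lemma_uniqueness} contains a genuine gap. When (A5) fails, every consumer $i$ with $\upnu_i=0$ contributes the row $(\n-1)\ybar_i+(\n-1)\sum_{j\neq i}\ybar_j=(\n-1)(1-\uprho_i)$, i.e.\ simply $\sum_j\ybar_j=1-\uprho_i$ (equivalently $\xbar=\uprho_i$). Two such rows are inconsistent only if the corresponding environmentalisms differ, as you yourself note---but your claim that excluding (A3)/(A4) ``allows heterogeneous environmentalisms'' does not make them differ: ruling out $\uprho_1=\dots=\uprho_\n=1$ and $\uprho_1=\dots=\uprho_\n=0$ in no way prevents the two particular consumers with zero social weight from sharing a common $\uprho$. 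And in that sub-case the conclusion is actually false, not merely unproven: take $\n=3$, $\upnu_1=\upnu_2=0$, $\uprho_1=\uprho_2=\uprho\in(0,1)$, $\upnu_3\in(0,1)$, $\uprho_3\in(0,1)$. Then $\xbar=\uprho$ makes $\dot{y}_1=\dot{y}_2=0$ automatically, $\ybar_3$ is pinned down by the third consumption equation together with $\sum_j\ybar_j=1-\uprho$, and any split of the remaining total between $\ybar_1$ and $\ybar_2$ is an equilibrium---a whole continuum of them, with none of (A1)--(A4) holding and (A5) violated.

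So the obstacle you dismiss as ``bookkeeping'' is in fact the crux, and it cannot be closed as stated: outside (A1)--(A4), equilibria with more than one zero-$\upnu$ consumer exist precisely when all those consumers share the same environmentalism (in which case they form a continuum, exactly parallel to the (A2) regime of Lemma \ref{lemma_well_mixed_A2}), and fail to exist otherwise. Your argument goes through verbatim once you add the hypothesis that at least two of the zero-social-weight consumers have distinct $\uprho$'s; without it, the inconsistency you rely on evaporates. To be fair, the paper glosses over the same point: the vanishing determinant in Lemma \ref{lemma_uniqueness} shows loss of uniqueness, not non-existence, so the corollary as printed needs the same qualification that your proof does.
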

\begin{proof}
	The corollary follows directly from Lemma \ref{lemma_xbar0} and Lemma \ref{lemma_uniqueness} .
\end{proof}

Now that all the pathological cases have been delineated, we are ready to complete the specification of the equilibrium for system \eqref{well_mixed_system}.

\begin{lemma}
\label{lemma_well_mixed_equilibrium}
	In the event that (A5) holds, and none of (A1)-(A4) hold true, the equilibrium for system \eqref{well_mixed_system} is given by
{
\setlength{\mathindent}{0cm}
\fontsize{9.5}{11.5}\selectfont
\begin{subequations}
\label{well_mixed_equilibrium}
\begin{align}
\label{well_mixed_equilibrium_x}
	&\xbar = \frac{\displaystyle \sum_{i}\left(\uprho_i(\upnu_i-1) \prod_{j\neq i} \upnu_j\right)}{\displaystyle \n\prod_{j}{\upnu_j} - \sum_{j} \left( \prod_{k\neq j}{\upnu_k}\right) }\\
\label{well_mixed_equilibrium_y}
	&\ybar_i = \frac{1}{\n} + \frac{\displaystyle \uprho_i(\upnu_i-1) \left( (1-\n)\sum_{j\neq i} \left( \prod_{k \neq i,j} \upnu_k \right) + \n(\n-2)\prod_{j\neq i} \upnu_j \right) - (1-\n+\n \upnu_i)\sum_{j\neq i}\left( \uprho_j (\upnu_j-1)\prod_{k\neq i,j}\upnu_k \right)}{\displaystyle \n\left( \n\prod_{j}{\upnu_j} - \sum_{j} \left( \prod_{k\neq j}{\upnu_k}\right) \right)}
\end{align}
\end{subequations}}
where $i,j,k \in \{1,\dots,\n\}$. 
\end{lemma}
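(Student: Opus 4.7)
The plan is to start from the linear system that $(\ybar_1, \ldots, \ybar_\n)$ must satisfy once the resource-balance condition $\xbar = 1 - \sum_i \ybar_i$ has been imposed. Under (A5) and the negations of (A1)--(A4), Lemma~\ref{lemma_xbar0} rules out the branch $\xbar = 0$, and Lemma~\ref{lemma_uniqueness} already isolates $M\ybar = \mathbf{r}$ with
\[
M_{ij} = \n\upnu_i\,\delta_{ij} + (\n-1-\n\upnu_i), \qquad r_i = (\n-1)(1-\upnu_i)(1-\uprho_i).
\]
The key observation is that $M$ is a rank-one perturbation of a diagonal matrix: setting $D = \diag(\n\upnu_1, \ldots, \n\upnu_\n)$ and $\mathbf{a} = (\n-1-\n\upnu_1, \ldots, \n-1-\n\upnu_\n)^{T}$, one has $M = D + \mathbf{a}\mathbf{1}^{T}$. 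This is the structure that will make everything tractable.

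For the common denominator, I would invoke the matrix-determinant lemma in the form $\det(D + \mathbf{a}\mathbf{1}^{T}) = \det(D) + \mathbf{1}^{T}\mathrm{adj}(D)\mathbf{a}$, which remains valid even when $D$ is singular --- essential because (A5) admits up to one vanishing $\upnu_i$. Since $\mathrm{adj}(D)$ is diagonal with entries $\prod_{j\neq i}\n\upnu_j$, a direct calculation combined with the telescoping identity $\sum_i \upnu_i \prod_{j\neq i}\upnu_j = \n\prod_j \upnu_j$ collapses the result to
\[
\det(M) \;=\; -\,\n^{\n-1}(\n-1)\Big(\n\prod_j \upnu_j \;-\; \sum_j \prod_{k\neq j}\upnu_k\Big),
\]
which reproduces precisely (up to the factor $-\n^{\n-1}(\n-1)$) the denominator appearing in both \eqref{well_mixed_equilibrium_x} and \eqref{well_mixed_equilibrium_y} and is nonzero under (A5).

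For the resource level, I would proceed dually via $\xbar = 1 - \mathbf{1}^{T}M^{-1}\mathbf{r}$. The Sherman--Morrison formula (with continuity covering the singular case already handled by Cramer above) gives $\mathbf{1}^{T}M^{-1}\mathbf{r} = (\mathbf{1}^{T}D^{-1}\mathbf{r})/(1+\mathbf{1}^{T}D^{-1}\mathbf{a})$. Clearing denominators by multiplying through by $\prod_j\upnu_j$, expanding $(1-\upnu_i)(1-\uprho_i)$, and reapplying the same telescoping identity, the $\upnu$-only contributions cancel exactly, leaving $\sum_i \uprho_i(\upnu_i - 1)\prod_{j\neq i}\upnu_j$ in the numerator and hence formula \eqref{well_mixed_equilibrium_x}. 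For each consumption effort I would then use Cramer's rule, $\ybar_i = \det(M^{(i)})/\det(M)$, where $M^{(i)}$ replaces column $i$ of $M$ by $\mathbf{r}$. Cofactor expansion along column $i$ splits $\det(M^{(i)})$ into a term proportional to $r_i$ (which, after division by $\det(M)$, yields the universal constant $1/\n$) and cross terms proportional to $r_j$ for $j\neq i$, each multiplied by a second-order minor of $M$ that factors through the $D + \mathbf{a}\mathbf{1}^{T}$ structure as a product $\prod_{k\neq i,j}\upnu_k$ times a linear combination in the remaining $\upnu$'s.

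The main obstacle is the bookkeeping in this last step: the numerator of \eqref{well_mixed_equilibrium_y} packages the $\uprho_i$-contribution with the asymmetric bracket $(1-\n)\sum_{j\neq i}\prod_{k\neq i,j}\upnu_k + \n(\n-2)\prod_{j\neq i}\upnu_j$, while the remaining $\uprho_j$-terms ($j\neq i$) are pooled with the coefficient $(1-\n+\n\upnu_i)$. Identifying the cofactor expansions with this precise packaging, and keeping the signs consistent (noting that $\upnu_i - 1 \leq 0$ and that the denominator is generically negative), is routine but lengthy linear algebra. No new ideas beyond the matrix-determinant lemma are needed; the degenerate case in which exactly one $\upnu_i$ vanishes introduces no genuine difficulty since both sides of \eqref{well_mixed_equilibrium_y} remain finite and the denominator stays nonzero by (A5), so the formulas follow by continuity from the fully positive case.
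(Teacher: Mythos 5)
Your proposal is correct, but it takes a genuinely different route from the paper. The paper proves the lemma by \emph{verification}: it substitutes the claimed $(\xbar,\ybar_1,\dots,\ybar_\n)$ directly into \eqref{well_mixed_system}, first showing $\xbar+\sum_i\ybar_i=1$ by a long telescoping cancellation and then showing each $\dot y_i$ vanishes, with uniqueness and the exclusion of the $\xbar=0$ branch supplied by Lemmas \ref{lemma_uniqueness} and \ref{lemma_xbar0} exactly as you use them. You instead \emph{derive} the formulas by solving the linear system, exploiting the structure $M=D+\mathbf{a}\mathbf{1}^{T}$: the matrix determinant lemma gives $\det M=-\n^{\n-1}(\n-1)\bigl(\n\prod_j\upnu_j-\sum_j\prod_{k\neq j}\upnu_k\bigr)$ (which is in fact the correct value of this determinant; the paper's Lemma \ref{lemma_uniqueness} states it with a different constant prefactor, though only nonvanishing matters for either argument), and Sherman--Morrison yields $\mathbf{1}^{T}M^{-1}\mathbf{r}=(\mathbf{1}^{T}D^{-1}\mathbf{r})/(1+\mathbf{1}^{T}D^{-1}\mathbf{a})$, which after clearing $\prod_j\upnu_j$ reproduces \eqref{well_mixed_equilibrium_x} exactly; I checked this reduction and it works. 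What your approach buys is an explanation of where the formulas come from (the paper's substitution proof offers no such insight) and a reusable closed form for $\det M$; what it costs is that the final repackaging of $M^{-1}\mathbf{r}$ into the specific grouping of \eqref{well_mixed_equilibrium_y} is deferred as ``routine but lengthy,'' and one descriptive detail there is loose: the constant $1/\n$ is not the $r_i$-cofactor contribution alone but emerges only after recombining the $r_i$ and cross terms (equivalently, from $\ybar_i=\tfrac{r_i}{\n\upnu_i}-\tfrac{\n-1-\n\upnu_i}{\n\upnu_i}(1-\xbar)$). This is a minor imprecision in the bookkeeping plan, not a gap, since the linear system has a unique solution under (A5) and matching it to the stated expression is determinate algebra; your continuity argument for the single-$\upnu_i$-zero case is also legitimate because, after clearing the nonvanishing denominator, the identity to be checked is polynomial in the parameters.
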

\begin{proof}
	We begin by substituting the expression given by \eqref{well_mixed_equilibrium} in \eqref{well_mixed_x}. Since $\xbar \neq 0$, in order for $\dot{x}$ to be zero, it must be true that $\xbar + \sum \ybar_i=1$. Summing equations \eqref{well_mixed_equilibrium_x} and \eqref{well_mixed_equilibrium_y} over all $i$ gives us

{ \allowdisplaybreaks
\fontsize{9.5}{11.5}\selectfont
\begin{align*}
	\hspace{-20pt}&\xbar + \sum_{i=1}^\n \ybar_i \\
	\hspace{-20pt}&= \n\left(\frac{1}{\n}\right) + \sum_{i=1}^\n \frac{\displaystyle \n\prod_{j\neq i} \upnu_j \!+\! (1\!-\!\n)\!\sum_{j\neq i} \!\!\left( \prod_{k \neq i,j} \upnu_k \!\!\right) \!+\! \n(\n\!-\!2)\prod_{j\neq i} \upnu_j \!-\! \sum_{j\neq i}(1\!-\!\n\!+\!\n \upnu_j)\prod_{k\neq i,j}\upnu_k}{\displaystyle \n\left( \n\prod_{j}{\upnu_j} - \sum_{j} \left( \prod_{k\neq j}{\upnu_k}\right) \right)}\uprho_i(\upnu_i\!-\!1)\\
	\hspace{-20pt}&= 1 + \sum_{i=1}^\n \frac{\displaystyle \n\prod_{j\neq i} \upnu_j \!+\! \n(\n\!-\!2)\prod_{j\neq i} \upnu_j \!-\! \n\sum_{j\neq i}\left(\prod_{k\neq i}\upnu_k\right) \!+\! (1\!-\!\n)\!\sum_{j\neq i} \!\!\left( \prod_{k \neq i,j} \upnu_k \!\!\right)\!-\! \sum_{j\neq i}(1\!-\!\n)\prod_{k\neq i,j}\upnu_k}{\displaystyle \n\left( \n\prod_{j}{\upnu_j} - \sum_{j} \left( \prod_{k\neq j}{\upnu_k}\right) \right)}\uprho_i(\upnu_i\!-\!1)\\
	\hspace{-20pt}&= 1 + \sum_{i=1}^\n \frac{\displaystyle \n(\n\!-\!1)\prod_{j\neq i} \upnu_j \!-\! \n(\n\!-\!1)\prod_{k\neq i}\upnu_k \!+\! (1\!-\!\n)\!\sum_{j\neq i} \!\!\left( \prod_{k \neq i,j} \upnu_k \!\!\right)\!-\! (1\!-\!\n)\sum_{j\neq i}\left(\prod_{k\neq i,j}\upnu_k\right)}{\displaystyle \n\left( \n\prod_{j}{\upnu_j} - \sum_{j} \left( \prod_{k\neq j}{\upnu_k}\right) \right)}\uprho_i(\upnu_i\!-\!1)\\
	\hspace{-20pt}&= 1 + \sum_{i=1}^\n \frac{\displaystyle (\n(\n\!-\!1)\!-\!\n(\n\!-\!1))\prod_{j\neq i} \upnu_j \!+\! (1\!-\!\n\!-\!(1\!-\!\n))\!\sum_{j\neq i} \!\!\left( \prod_{k \neq i,j} \upnu_k \!\!\right)}{\displaystyle \n\left( \n\prod_{j}{\upnu_j} - \sum_{j} \left( \prod_{k\neq j}{\upnu_k}\right) \right)}\uprho_i(\upnu_i\!-\!1)\\
	\hspace{-20pt}&= 1 + \sum_{i=1}^\n \frac{\displaystyle (0)\prod_{j\neq i} \upnu_j \!+\! (0)\!\sum_{j\neq i} \!\!\left( \prod_{k \neq i,j} \upnu_k \!\!\right)}{\displaystyle \n\left( \n\prod_{j}{\upnu_j} - \sum_{j} \left( \prod_{k\neq j}{\upnu_k}\right) \right)}\uprho_i(\upnu_i\!-\!1)\\
	\hspace{-20pt}&= 1
\end{align*} 
}
which shows that $\dot{x}$ indeed vanishes at the point given by \eqref{well_mixed_equilibrium}. Next we need to show that $\dot{y}_i$ also vanishes at \eqref{well_mixed_equilibrium}. Since we have just shown that $\xbar + \sum \ybar_i = 1$, we can use this relation to simplify \eqref{well_mixed_yi}, prior to substitution of \eqref{well_mixed_equilibrium}. We write \eqref{well_mixed_yi} as follows
\begin{align*}
	\dot{y}_i &=  \b_i \left( (1-\upnu_i) (x-\uprho_i)  - \upnu_i y_i + \frac{\upnu_i}{\n-1}\sum_{j\neq i}  y_j \right)\\
	&= \b_i \left( (1-\upnu_i) (x-\uprho_i) - \frac{\upnu_i}{\n-1}y_i - \upnu_i y_i + \frac{\upnu_i}{\n-1}\sum_{j=1}^\n  y_j \right)\\
	&= \b_i \left( (1-\upnu_i) (x-\uprho_i) - \frac{\n \upnu_i}{\n-1}y_i + \frac{\upnu_i}{\n-1}\sum_{j=1}^\n  y_j \right)
\end{align*}
While evaluating at \eqref{well_mixed_equilibrium}, since $\xbar + \sum \ybar_i = 1$, we can substitute $\sum_{j=1}^\n \ybar_i = 1-\xbar$. Making this substitution, we get
\begin{align*}
	\left.\dot{y}_i \right|_{(\xbar,\ybar_i)}&=  \b_i \left( (1-\upnu_i) (\xbar-\uprho_i) - \frac{\n \upnu_i}{\n-1}\ybar_i + \frac{\upnu_i}{\n-1}(1-\xbar) \right) 
\end{align*}
rearranging the terms gives us
\begin{align*}
	\left.\dot{y}_i \right|_{(\xbar,\ybar_i)}&=  \b_i \left( \left( 1 - \frac{\n \upnu_i}{\n-1} \right)\xbar - \frac{\n\upnu_i}{\n-1}\ybar_i - (1-\upnu_i)\uprho_i + \frac{\upnu_i}{\n-1} \right) 
\end{align*}
now substituting $\xbar$ and $\ybar_i$ from \eqref{well_mixed_equilibrium}
{
\setlength{\mathindent}{0cm}
\fontsize{9.5}{11.5}\selectfont
\begin{align*}
	\left.\dot{y}_i \right|_{(\xbar,\ybar_i)} &= \frac{\displaystyle \left( 1 - \frac{\n \upnu_i}{\n-1} \right) \sum_{j}\left(\uprho_j(\upnu_j-1) \prod_{k\neq j} \upnu_k\right)}{\displaystyle \n\prod_{j}{\upnu_j} - \sum_{j} \left( \prod_{k\neq j}{\upnu_k}\right) } - \frac{\n\upnu_i}{\n-1}\left(\frac{1}{\n}\right) + \frac{\displaystyle \uprho_i\upnu_i(\upnu_i-1) \left(\sum_{j\neq i} \left( \prod_{k \neq i,j} \upnu_k \right)\right)}{\displaystyle  \n\prod_{j}{\upnu_j} - \sum_{j} \left( \prod_{k\neq j}{\upnu_k} \right)}\\
	&- \frac{\displaystyle \uprho_i\upnu_i(\upnu_i-1) \left( \n(\n-2)\prod_{j\neq i} \upnu_j \right)}{\displaystyle (\n-1)\left( \n\prod_{j}{\upnu_j} - \sum_{j} \left( \prod_{k\neq j}{\upnu_k}\right) \right)} +  \frac{\displaystyle \upnu_i(1-\n+\n \upnu_i)\sum_{j\neq i}\left( \uprho_j (\upnu_j-1)\prod_{k\neq i,j}\upnu_k \right)}{\displaystyle (\n-1)\left(\n\prod_{j}{\upnu_j} - \sum_{j} \left( \prod_{k\neq j}{\upnu_k}\right)\right)} \\
	& - (1-\upnu_i)\uprho_i + \frac{\upnu_i}{\n-1}
\end{align*}}
simplifying
{
\allowdisplaybreaks
\setlength{\mathindent}{0cm}
\fontsize{9.5}{11.5}\selectfont
\begin{align*}
\left.\dot{y}_i \right|_{(\xbar,\ybar_i)}&\begin{aligned}[t]
	= &\frac{\displaystyle \left( 1 - \frac{\n \upnu_i}{\n-1} \right) \sum_{j}\left(\uprho_j(\upnu_j-1) \prod_{k\neq j} \upnu_k\right)}{\displaystyle \n\prod_{j}{\upnu_j} - \sum_{j} \left( \prod_{k\neq j}{\upnu_k}\right) } + \frac{\displaystyle \uprho_i(\upnu_i-1) \left(\sum_{j\neq i} \left( \prod_{k \neq j} \upnu_k \right)\right)}{\displaystyle  \n\prod_{j}{\upnu_j} - \sum_{j} \left( \prod_{k\neq j}{\upnu_k} \right)}\\
	&- \frac{\displaystyle \uprho_i(\upnu_i-1) \left( \n(\n-2)\prod_{j} \upnu_j \right)}{\displaystyle (\n-1)\left( \n\prod_{j}{\upnu_j} - \sum_{j} \left( \prod_{k\neq j}{\upnu_k}\right) \right)} +  \frac{\displaystyle(1-\n+\n \upnu_i)\sum_{j\neq i}\left( \uprho_j (\upnu_j-1)\prod_{k\neq j}\upnu_k \right)}{\displaystyle (\n-1)\left(\n\prod_{j}{\upnu_j} - \sum_{j} \left( \prod_{k\neq j}{\upnu_k}\right)\right)} \\
	& - (1-\upnu_i)\uprho_i 
\end{aligned}\\
&\begin{aligned}
	= &\frac{\displaystyle \left( \n-1 - \n\upnu_i \right) \sum_{j\neq i}\left(\uprho_j(\upnu_j-1) \prod_{k\neq j} \upnu_k\right)}{\displaystyle (\n-1)\left(\n\prod_{j}{\upnu_j} - \sum_{j} \left( \prod_{k\neq j}{\upnu_k}\right)\right) } +  \frac{\displaystyle(1-\n+\n \upnu_i)\sum_{j\neq i}\left( \uprho_j (\upnu_j-1)\prod_{k\neq j}\upnu_k \right)}{\displaystyle (\n-1)\left(\n\prod_{j}{\upnu_j} - \sum_{j} \left( \prod_{k\neq j}{\upnu_k}\right)\right)} \\
	& + \frac{\displaystyle \uprho_i(\upnu_i-1) \left(\sum_{j\neq i} \left( \prod_{k \neq j} \upnu_k \right)\right)}{\displaystyle  \n\prod_{j}{\upnu_j} - \sum_{j} \left( \prod_{k\neq j}{\upnu_k} \right)} - \frac{\displaystyle \uprho_i(\upnu_i-1) \left( \n(\n-2)\prod_{j} \upnu_j \right)}{\displaystyle (\n-1)\left( \n\prod_{j}{\upnu_j} - \sum_{j} \left( \prod_{k\neq j}{\upnu_k}\right) \right)} \\
	& + \frac{\displaystyle \uprho_i(\upnu_i-1)\left( \n-1 - \n\upnu_i \right) \prod_{k\neq i} \upnu_k}{\displaystyle (\n-1)\left(\n\prod_{j}{\upnu_j} - \sum_{j} \left( \prod_{k\neq j}{\upnu_k}\right)\right) } - (1-\upnu_i)\uprho_i 
\end{aligned}\\
&\begin{aligned}
	= &\frac{\displaystyle \left( \n-1 - \n\upnu_i - (\n-1+\n\upnu_i) \right) \sum_{j\neq i}\left(\uprho_j(\upnu_j-1) \prod_{k\neq j} \upnu_k\right)}{\displaystyle (\n-1)\left(\n\prod_{j}{\upnu_j} - \sum_{j} \left( \prod_{k\neq j}{\upnu_k}\right)\right) } - (1-\upnu_i)\uprho_i \\
	& + \frac{\displaystyle \uprho_i(\upnu_i-1) \left((\n-1)\sum_{j\neq i} \left( \prod_{k \neq j} \upnu_k \right) - \n(\n-2)\prod_{j} \upnu_j + ( \n-1 - \n\upnu_i) \prod_{k\neq i} \upnu_k \right)}{\displaystyle (\n-1)\left( \n\prod_{j}{\upnu_j} - \sum_{j} \left( \prod_{k\neq j}{\upnu_k}\right) \right)} 
\end{aligned}\\
&\begin{aligned}
	= &\frac{\displaystyle \left(0\right) \sum_{j\neq i}\left(\uprho_j(\upnu_j-1) \prod_{k\neq j} \upnu_k\right)}{\displaystyle (\n-1)\left(\n\prod_{j}{\upnu_j} - \sum_{j} \left( \prod_{k\neq j}{\upnu_k}\right)\right) } - (1-\upnu_i)\uprho_i \\
	& + \frac{\displaystyle \uprho_i(\upnu_i-1) \left((\n-1)\sum_{j\neq i} \left( \prod_{k \neq j} \upnu_k \right) - \n(\n-2)\prod_{j} \upnu_j + ( \n-1)\prod_{k\neq i} \upnu_k - \n\prod_{j} \upnu_j \right)}{\displaystyle (\n-1)\left( \n\prod_{j}{\upnu_j} - \sum_{j} \left( \prod_{k\neq j}{\upnu_k}\right) \right)} 
\end{aligned}\\
&\begin{aligned}
	= 0 - (1-\upnu_i)\uprho_i + \frac{\displaystyle \uprho_i(\upnu_i-1) \left((\n-1)\sum_{j} \left( \prod_{k \neq j} \upnu_k \right) - \n(\n-1)\prod_{j} \upnu_j \right)}{\displaystyle (\n-1)\left( \n\prod_{j}{\upnu_j} - \sum_{j} \left( \prod_{k\neq j}{\upnu_k}\right) \right)} 
\end{aligned}\\
&\begin{aligned}
	= &(\upnu_i-1)\uprho_i - \frac{\displaystyle \uprho_i(\upnu_i-1)(\n-1)\left( \n\prod_{j}{\upnu_j} - \sum_{j} \left( \prod_{k\neq j}{\upnu_k}\right) \right)}{\displaystyle (\n-1)\left( \n\prod_{j}{\upnu_j} - \sum_{j} \left( \prod_{k\neq j}{\upnu_k}\right) \right)} 
	\\&= (\upnu_i-1)\uprho_i - (\upnu_i-1)\uprho_i = 0
\end{aligned}\\
\end{align*}}
Thus both $\dot{x}$ and $\dot{y}_i$ vanish at \eqref{well_mixed_equilibrium}, which shows that it is indeed the equilibrium for system \eqref{well_mixed_system}. This completes the proof.
\end{proof}

\subsection{Star population}

Now we assume that the population is structured so that there exists a consumer $i_*$ that is connected to every other consumer, and that no other connections exist. Thus the network constitutes a star graph and the dynamics are given by
\begin{subequations}
\label{star_system}
\begin{align}
	\label{star_x}
	&\dot{x} = (1-x)x - \sum_{i=1}^{\n} y_i \, x \\
	\label{star_y1}
	&\dot{y}_1 = \b_1 \left( (1-\upnu_1) (x-\uprho_1) + \frac{\upnu_1}{\n-1}\sum_{j=1}^\n  y_j - y_1 \right)\\
	\label{star_yi}
	&\dot{y}_i = \b_i \left( (1-\upnu_i) (x-\uprho_i) + \upnu_i (y_1 - y_i) \right) \hspace{20pt} i \neq 1
\end{align}
\end{subequations}
where $ i\in \{1, \dots , \n\}$, and the indices are ordered so that $i_*=1$. We represent any possible equilibria by $(\xbar,\ybar_1, \dots, \ybar_\n)$.\\
Before presenting the equilibrium for system \ref{star_system}, we require the following intermediate results
\begin{lemma}
\label{lemma_star_x0_1}
	If (A1) holds, there exist two equilibria for system \eqref{star_system}. The first equilibrium $(\xbar,\ybar_1, \dots, \ybar_\n)$ is given by $(1,0,\dots, 0)$ and the second equilibrium is given by $(0,0,\dots,0)$.
\end{lemma}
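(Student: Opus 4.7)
The plan is to substitute assumption (A1), namely $\upnu_i = 1$ for all $i$, directly into the star system \eqref{star_system} and then solve the equilibrium conditions sequentially, beginning with the peripheral consumers and working inward to the hub. Under (A1) every term of the form $(1-\upnu_i)(x-\uprho_i)$ vanishes, so the social subsystem decouples entirely from the resource equation and becomes purely linear in the efforts $y_i$. This decoupling is what will make the problem tractable and ultimately force the uniqueness (up to two choices for $\xbar$) of the equilibrium.

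First I would examine the $\n-1$ peripheral equations \eqref{star_yi}, which reduce to $\dot{y}_i = \b_i(y_1 - y_i)$ for $i \neq 1$. Setting these to zero immediately forces $\ybar_i = \ybar_1$ for every peripheral agent, collapsing the effort vector to a single unknown. Substituting this into the hub equation \eqref{star_y1}, which simplifies to $\dot{y}_1 = \b_1 \bigl( \tfrac{1}{\n-1}\sum_{j=1}^{\n} y_j - y_1 \bigr)$, one obtains $\tfrac{\n}{\n-1}\ybar_1 = \ybar_1$; since $\n/(\n-1) \neq 1$ for any $\n \geq 2$, this pins down $\ybar_1 = 0$ and therefore $\ybar_i = 0$ for all $i$. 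With every effort vanishing at equilibrium, the resource equation \eqref{star_x} reduces to $\dot{x} = (1-x)x$, whose zeros are exactly $\xbar = 0$ and $\xbar = 1$, producing the two equilibria $(1, 0, \dots, 0)$ and $(0, 0, \dots, 0)$ claimed in the statement, and no others.

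The main obstacle, in my view, is conceptual rather than computational: one must explain the sharp contrast with the well-mixed case of Lemma \ref{lemmax0_1}, where the same assumption (A1) yielded an entire continuum of equilibria parameterised by a common effort. Here the asymmetry of the star topology breaks that degeneracy, because the hub equation carries an extra factor $\n/(\n-1)$, rather than unity, once the peripheral constraints $\ybar_i = \ybar_1$ are imposed. I expect the only residual subtlety will be making this coefficient comparison explicit and noting that $\n = 1$ lies trivially outside the scope of a star network, so that the strict inequality $\n/(\n-1) \neq 1$ is always at our disposal.
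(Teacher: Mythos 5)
Your proof is correct and follows essentially the same route as the paper's own argument: substitute (A1), observe that the peripheral equations \eqref{star_yi} force $\ybar_i=\ybar_1$, that the hub equation \eqref{star_y1} then forces $\ybar_1=0$, and that the resource equation $(1-x)x=0$ leaves exactly the two equilibria $(1,0,\dots,0)$ and $(0,\dots,0)$. One caution on your concluding ``conceptual'' remark: the factor $\n/(\n-1)$ that breaks the degeneracy arises only because the hub's sum in \eqref{star_y1} runs over all $j$ including $j=1$; if that self-term is dropped (as $\w_{11}=0$ in the underlying model would dictate, and as the linear system $(\n-1)\ybar_1=\sum_{j=2}^{\n}\ybar_j$ written in the paper's proof in fact corresponds to), the hub equation is satisfied identically once $\ybar_i=\ybar_1$, and the star network admits the same consensus continuum as the well-mixed case of Lemma \ref{lemmax0_1} --- so the uniqueness here hinges on the displayed form of the equation rather than on the star topology per se.
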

\begin{proof}
	Substituting the assertion of $(A1)$ in \eqref{star_y1} and \eqref{star_yi}, and solving for $\dot{y}_i=0$ gives the following linear system
\begin{align*}
	&(\n-1)\ybar_1 - \sum_{j=2}^\n y_j = 0\\
	&\ybar_i - \ybar_1 = 0 \hspace{20pt} i \neq 1
\end{align*}
it is easy to see that $\ybar_1 = \dots \ybar_\n = 0$ is the only solution to the above equations. Substituting this in \eqref{star_x} and solving for $\dot{x}=0$ yields $\xbar = 1$ or $\xbar = 0$. Thus $(1,0,\dots, 0)$ and $(0,0,\dots,0)$ are the two equilibria for this case.
\end{proof}

\begin{lemma}
	If (A2) holds, there exist an infinite number of equilibria for system \eqref{star_system} only when $\uprho_1 = \uprho_2 = \dots = \uprho_\n = \uprho$ (where $\uprho \in [0,1]$). Otherwise no equilibrium exists.
\end{lemma}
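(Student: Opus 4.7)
The plan is to observe that under assumption (A2), the entire network topology becomes irrelevant because the social coupling terms vanish identically; consequently the argument will essentially mirror that of Lemma \ref{lemma_well_mixed_A2}. First I would substitute $\upnu_i = 0$ for all $i$ into \eqref{star_y1} and \eqref{star_yi}. In \eqref{star_y1}, both the pairwise averaging term $\frac{\upnu_1}{\n-1}\sum_j y_j$ and the $-\upnu_1 y_1$ term drop out, leaving $\dot y_1 = \b_1(x - \uprho_1)$. In \eqref{star_yi}, the social term $\upnu_i(y_1 - y_i)$ vanishes for every leaf $i \neq 1$, leaving $\dot y_i = \b_i(x - \uprho_i)$. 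Thus the consumption dynamics decouple completely from the adjacency structure (star, complete, or otherwise), and the only structural information that survives is the scalar equation \eqref{star_x} for the resource.

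Next I would impose $\dot y_i = 0$ for every $i$. This forces $\xbar = \uprho_i$ simultaneously for all $i \in \{1,\dots,\n\}$. If the environmentalisms are not all equal, these conditions are mutually inconsistent and no equilibrium can exist; this immediately yields the second half of the claim. Otherwise, write $\uprho_1 = \dots = \uprho_\n = \uprho$ and proceed to the resource equation.

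With $\xbar = \uprho$, equation \eqref{star_x} reduces at equilibrium to $\uprho\bigl(1 - \uprho - \sum_{i=1}^{\n}\ybar_i\bigr) = 0$. If $\uprho > 0$, this gives the single linear constraint $\sum_i \ybar_i = 1 - \uprho$ on the $\n$-tuple $(\ybar_1,\dots,\ybar_\n) \in \mathbb{R}^\n$, which defines an $(\n-1)$-dimensional affine subspace and therefore a continuum of equilibria. If $\uprho = 0$, the additional point $\xbar = 0$ with any $(\ybar_1,\dots,\ybar_\n)$ satisfying the same constraint $\sum_i \ybar_i = 1$ (together with the degenerate point $\xbar = 0$, $\ybar_i = 0$) are also admissible; either way the conclusion that there are infinitely many equilibria holds.

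There is no real obstacle here: the potentially subtle step --- disentangling the influence of the star topology on the long-run behaviour --- is trivialised by (A2), since setting every $\upnu_i$ to zero removes all pairwise coupling between the $y_i$'s regardless of which edges are present in the influence graph. The proof is therefore essentially a carbon copy of the well-mixed case, and the only bookkeeping needed is to verify that both the central node equation \eqref{star_y1} and the leaf equations \eqref{star_yi} simplify to the same form $\dot y_i = \b_i(x - \uprho_i)$, which they manifestly do.
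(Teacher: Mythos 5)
Your proof is correct and follows essentially the same route as the paper, which also substitutes $\upnu_i=0$ to obtain the decoupled system $\dot y_i=\b_i(x-\uprho_i)$ and then simply invokes the argument of Lemma \ref{lemma_well_mixed_A2}: equilibrium forces $\xbar=\uprho_i$ for every $i$ (hence all environmentalisms equal, else no equilibrium), and with a common $\uprho$ the resource equation yields the continuum $\sum_i\ybar_i=1-\uprho$, $\xbar=\uprho$, plus the extra $\xbar=0$ equilibria when $\uprho=0$. No substantive differences to report.
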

\begin{proof}
	Substituting the assertion of $(A2)$ in \eqref{star_system} gives us the following system
\begin{align*}
	&\dot{x} = (1-x)x - \sum_{i=1}^{\n} y_i \, x \\
	&\dot{y}_i = \b_i \left( x-\uprho_i\right)
\end{align*}

Following the same argument as that presented in proof of Lemma \ref{lemma_well_mixed_A2}, we see that if  $\uprho_1 = \uprho_2 = \dots = \uprho_\n = \uprho$, then the steady state constitutes all points where  $\sum_{i=1}^\n \ybar_i = 1-\uprho$ with $\xbar = \uprho$. In the event that $\uprho = 0$ there also exists an additional equilibrium consisting of the single state $\xbar = \ybar_1 = \dots = \ybar_\n = 0$. This completes the proof.
\end{proof}

\begin{lemma}
\label{lemma_star_xbar0}
	In the event that none of (A1), (A2) or (A4) hold true, there can exist no equilibrium for system \eqref{star_system} with $\xbar = 0$.
\end{lemma}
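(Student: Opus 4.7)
The plan is to mirror the structure of the proof of Lemma \ref{lemma_xbar0}. Setting $\xbar = 0$ in the resource equation \eqref{star_x} makes $\dot x$ vanish identically, since $x$ appears as a common factor throughout. Therefore, verifying existence of an equilibrium with $\xbar = 0$ reduces to the solvability of the linear system in $(\ybar_1, \ldots, \ybar_\n)$ obtained from $\dot y_i = 0$ in \eqref{star_y1} and \eqref{star_yi}.

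The first step is to exploit the tree structure: each leaf equation ($i \neq 1$) reduces to $\upnu_i(\ybar_1 - \ybar_i) = (1-\upnu_i)\uprho_i$, which couples only $\ybar_1$ and $\ybar_i$. I first dispose of the sub-case where $\upnu_i = 0$ for some leaf, which forces $\uprho_i = 0$; iterating this reasoning over any such leaves pushes us into one of the excluded hypotheses (A2) or (A4), yielding an immediate contradiction. When $\upnu_i > 0$ for every leaf, I solve each leaf equation for $\ybar_i = \ybar_1 - (1-\upnu_i)\uprho_i/\upnu_i$ and substitute into the center equation \eqref{star_y1}. After multiplying through by $\n-1$ and collecting terms, the whole $\n$-dimensional system condenses into the single scalar equation
\[
\bigl(\n\upnu_1 - (\n-1)\bigr)\,\ybar_1 \;=\; (\n-1)(1-\upnu_1)\uprho_1 \;+\; \upnu_1 \sum_{i=2}^{\n} \frac{(1-\upnu_i)\uprho_i}{\upnu_i}.
\]

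Each summand on the right is a product of non-negative factors that vanish iff $\upnu_j = 1$ or $\uprho_j = 0$; the simultaneous failure of (A1) and (A4) therefore makes the right-hand side strictly positive. The main obstacle is then the coefficient on the left: when $\n\upnu_1 \neq \n - 1$ the scalar equation is uniquely solvable and a candidate equilibrium with $\xbar = 0$ exists, so the statement of the lemma, as literally written, appears to need a supplementary hypothesis analogous to (A5) from Lemma \ref{lemma_uniqueness} excluding the singular value $\upnu_1 = (\n-1)/\n$ for the central agent. Once such a hypothesis is in place, the argument is completed by the sign incompatibility ``$0 = \text{positive}$'' at that singular slice; identifying and justifying the right additional condition is where the real work lies, and by analogy with the well-mixed case I expect it to be a natural ``sufficiently social'' condition on the hub of the star.
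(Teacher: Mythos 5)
Your overall strategy (eliminate the leaf variables and reduce to a single scalar compatibility condition) is essentially the same move as the paper's row reduction, but your execution contains a genuine error in the hub equation, and it is exactly this error that leads you to the false conclusion that the lemma needs a supplementary hypothesis. In \eqref{star_y1} the social term of the central agent is the $\upnu_1$-weighted average of the \emph{differences}, i.e.\ $\frac{\upnu_1}{\n-1}\sum_{j\neq 1}(y_j-y_1)$, as dictated by the parent model \eqref{eq:ses} with $\w_{1j}=1/(\n-1)$; you instead read it as $\frac{\upnu_1}{\n-1}\sum_j y_j - y_1$ with the trailing $-y_1$ carrying coefficient $1$. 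That your reading is not the intended one is confirmed by the paper's own Lemma \ref{lemma_star_x0_1}: substituting $\upnu_i=1$ there yields $(\n-1)\ybar_1-\sum_{j\geq 2}\ybar_j=0$, which only follows from the ``differences'' form (your form would give $\sum_{j\geq2}\ybar_j=(\n-2)\ybar_1$).

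With the correct hub equation, your substitution of the leaf relations $\ybar_j=\ybar_1-(1-\upnu_j)\uprho_j/\upnu_j$ makes the $\ybar_1$ terms cancel identically — there is no coefficient $\n\upnu_1-(\n-1)$ — and what remains is
\begin{align*}
(\n-1)(1-\upnu_1)\uprho_1+\upnu_1\sum_{j=2}^{\n}\frac{(1-\upnu_j)\uprho_j}{\upnu_j}=0,
\end{align*}
a sum of nonnegative terms that (under the paper's reading of the failure of (A1), (A2), (A4)) are strictly positive, hence a contradiction. This is precisely the compatibility condition the paper obtains by subtracting the leaf rows from the hub row, so no condition of the type $\upnu_1\neq(\n-1)/\n$ and no analogue of (A5) is required: the star lemma holds as stated, and your own substitution route proves it once the hub equation is corrected. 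A secondary remark: your disposal of the sub-case $\upnu_i=0$ for a leaf does not actually contradict the negations of (A2) or (A4) as literally written (those negations only say \emph{not all} weights or thresholds vanish); the paper silently treats the failure of (A1)--(A4) as the pointwise conditions $\upnu_i\in(0,1)$, $\uprho_i>0$ for every $i$, and your argument inherits the same looseness, so this is not the decisive flaw — the mis-derived center equation is.
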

\begin{proof}
	Assuming $\xbar=0$ and solving \eqref{star_y1} \& \eqref{star_yi} for $\dot{y}_i = 0$ leads to the following system of equations
\begin{align*}
	\left[\begin{array}{ccccc} -(\n-1) & 1 & 1 & \dots & 1 \\ -1 & 1 & 0 & \dots & 0 \\ -1 & 0 & 1 & \dots & 0 \\ \vdots & \vdots & \vdots & \ddots & \vdots \\ -1 & 0 & 0 & \dots & 1 \end{array} \right] \left[\begin{array}{c} \ybar_1 \\ \ybar_2 \\ \ybar_3 \\ \vdots \\ \ybar_\n \end{array} \right] = \left[ \begin{array}{c} \frac{-(\n-1)(1-\upnu_1)\uprho_1}{\upnu_1} \\ \frac{(1-\upnu_2)\uprho_2}{\upnu_2} \\ \frac{(1-\upnu_3)\uprho_3}{\upnu_3} \\ \vdots \\ \frac{(1-\upnu_\n)\uprho_\n}{\upnu_\n} \end{array} \right]
\end{align*}
It is easy to check that the rank of the coefficient matrix is equal to $\n-1$. Subtracting all subsequent rows from the first row, gives the following equivalent form of the equation
\begin{align*}
	\left[\begin{array}{ccccc} 0 & 0 & 0 & \dots & 0 \\ -1 & 1 & 0 & \dots & 0 \\ -1 & 0 & 1 & \dots & 0 \\ \vdots & \vdots & \vdots & \ddots & \vdots \\ -1 & 0 & 0 & \dots & 1 \end{array} \right] \left[\begin{array}{c} \ybar_1 \\ \ybar_2 \\ \ybar_3 \\ \vdots \\ \ybar_\n \end{array} \right] = \left[ \begin{array}{c} \frac{-(\n-1)(1-\upnu_1)\uprho_1}{\upnu_1} - \sum_{j=2}^\n \frac{(1-\upnu_j)\uprho_j}{\upnu_j} \\ \frac{(1-\upnu_2)\uprho_2}{\upnu_2} \\ \frac{(1-\upnu_3)\uprho_3}{\upnu_3} \\ \vdots \\ \frac{(1-\upnu_\n)\uprho_\n}{\upnu_\n} \end{array} \right]
\end{align*}
which means that a solution can exist only if $\frac{(\n-1)(1-\upnu_1)\uprho_1}{\upnu_1} + \sum_{j=2}^\n \frac{(1-\upnu_j)\uprho_j}{\upnu_j}=0$. Since none of (A1),(A2) or (A4) are true,  $\frac{(1-\upnu_i)\uprho_i}{\upnu_i}>0 \,\, \forall\,\, i$ and so, $\frac{(\n-1)(1-\upnu_1)\uprho_1}{\upnu_1} + \sum_{j=2}^\n \frac{(1-\upnu_j)\uprho_j}{\upnu_j}\neq 0$. Thus no equilibrium exists in this scenario.
\end{proof}

\begin{lemma}
\label{lemma_star_equilibrium}
	If none of (A1),(A2) or (A4) hold true, the equilibrium for system \eqref{star_system} is given by
{\allowdisplaybreaks
\setlength{\mathindent}{0cm}
\fontsize{9.5}{11.5}\selectfont
\begin{subequations}
\label{star_equilibrium}
\begin{align}
\label{star_equilibrium_x}
	&\xbar = \frac{\displaystyle (\n-1)\uprho_1(\upnu_1-1) \prod_{j\neq 1} \upnu_j + \sum_{i}\left(\uprho_i(\upnu_i-1) \prod_{j\neq i} \upnu_j\right)}{\displaystyle 2(\n-1)\prod_{j}{\upnu_j} - \sum_{j\neq 1} \left( \prod_{k\neq j}{\upnu_k}\right) - (\n-1)\prod_{k\neq 1}{\upnu_k} }\\
\label{star_equilibrium_y1}
	&\ybar_1 = \frac{1}{\n} + \frac{\displaystyle (\n\!-\!1)\uprho_1(\upnu_1\!-\!1)\left( \!-\!\sum_{j\neq 1} \left( \prod_{k\neq 1,j} \upnu_k \right) \!+\! (\n\!-\!2) \prod_{k\neq 1} \upnu_k \right) \!-\! (1\!-\!\n\!+\!\n \upnu_1)\sum_{j\neq 1}\left( \uprho_j (\upnu_j\!-\!1)\prod_{k\neq 1,j}\upnu_k \right)}{\n\left(\displaystyle 2(\n-1)\prod_{j}{\upnu_j} - \sum_{j\neq 1} \left( \prod_{k\neq j}{\upnu_k}\right) - (\n-1)\prod_{k\neq 1}{\upnu_k}\right)}\\
	&\begin{aligned}\ybar_i = \frac{1}{\n} - &\frac{ \displaystyle (\n-1)(-1+\upnu_1)\left( \sum_{j\neq i} \left( \prod_{k\neq 1, k \neq j} \upnu_k \right) - (\n-1)\prod_{k\neq 1, k \neq i} \upnu_k + 2 \prod_{j \neq 1} \upnu_k \right)\uprho_1}{\n\left(\displaystyle 2(\n-1)\prod_{j}{\upnu_j} - \sum_{j\neq 1} \left( \prod_{k\neq j}{\upnu_k}\right) - (\n-1)\prod_{k\neq 1}{\upnu_k}\right)} \\+ &\frac{ \displaystyle (-1+\upnu_i)\left( -\n \sum_{j\neq 1} \left( \prod_{k \neq i,j} \upnu_k \right) - {(\n-1)}^2 \prod_{k\neq i,1} \upnu_k + 2\n(\n-2)\prod_{k \neq i} \upnu_k \right) \uprho_i}{\n\left(\displaystyle 2(\n-1)\prod_{j}{\upnu_j} - \sum_{j\neq 1} \left( \prod_{k\neq j}{\upnu_k}\right) - (\n-1)\prod_{k\neq 1}{\upnu_k}\right)} \\ - &\frac{ \displaystyle \sum_{j\neq i, j \neq 1} (-1+\upnu_j) \left( -\n \prod_{k \neq i,j} \upnu_k - (\n-1) \prod_{k \neq 1,j} \upnu_k + 2\n \prod_{k \neq j} \upnu_k \right) \uprho_j}{\n\left(\displaystyle 2(\n-1)\prod_{j}{\upnu_j} - \sum_{j\neq 1} \left( \prod_{k\neq j}{\upnu_k}\right) - (\n-1)\prod_{k\neq 1}{\upnu_k}\right)}\end{aligned}
\end{align}
\end{subequations}
}
where $i \in \{2,\dots,\n\}$ and $j,k \in \{1,\dots,\n\}$. 
\end{lemma}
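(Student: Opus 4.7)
The plan is to follow the same verification strategy used in the proof of Lemma \ref{lemma_well_mixed_equilibrium}. Since none of (A1), (A2), (A4) hold, Lemma \ref{lemma_star_xbar0} rules out any equilibrium with $\xbar=0$, so from \eqref{star_x} any equilibrium must satisfy the identity
\[
    \xbar + \sum_{i=1}^{\n} \ybar_i = 1.
\]
My first step would therefore be to establish this identity for the candidate point \eqref{star_equilibrium}: summing \eqref{star_equilibrium_x} together with \eqref{star_equilibrium_y1} and the $\n-1$ copies of the formula for $\ybar_i$ (with $i\neq 1$), clearing the common denominator, and checking that the $\uprho_1$-coefficient and each $\uprho_i$-coefficient ($i\neq 1$) vanishes, while the constant term (from the $1/\n$ pieces) yields exactly the denominator. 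Because node $1$ plays a distinguished role, the bookkeeping must separate the contribution of $\uprho_1$ from that of the ``leaf'' environmentalisms.

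Once $\xbar+\sum_j\ybar_j=1$ is confirmed, I would use it to rewrite the dynamics as in the well-mixed proof. In \eqref{star_y1} the substitution $\sum_{j=1}^{\n}\ybar_j = 1-\xbar$ gives
\[
    \left.\dot{y}_1\right|_{(\xbar,\ybar_1)} = \b_1\left[ \left(1 - \frac{\n\upnu_1}{\n-1}\right)\xbar - \frac{\n\upnu_1}{\n-1}\ybar_1 - (1-\upnu_1)\uprho_1 + \frac{\upnu_1}{\n-1}\right],
\]
which is structurally identical to the well-mixed case and can be checked by the same manipulation as in Lemma~\ref{lemma_well_mixed_equilibrium}. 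For $i\neq 1$ the equation \eqref{star_yi} already involves only $\xbar$, $\ybar_1$, $\ybar_i$, so substitution of \eqref{star_equilibrium_x}, \eqref{star_equilibrium_y1} and the corresponding $\ybar_i$ expression reduces to showing that an explicit rational function of the $\upnu_j$'s and $\uprho_j$'s vanishes identically; this should collapse after grouping terms by which $\uprho_j$ they multiply, exactly as in the well-mixed proof.

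An alternative, and perhaps more transparent, route would be a constructive one. Using $\xbar = 1-\sum_j \ybar_j$ to eliminate $\xbar$ turns $\{\dot y_i=0\}_{i=1}^{\n}$ into an $\n\times\n$ linear system in $(\ybar_1,\dots,\ybar_\n)$; Cramer's rule then produces \eqref{star_equilibrium} directly, with the common denominator in \eqref{star_equilibrium} arising as the determinant of the coefficient matrix and simultaneously giving uniqueness (whenever that determinant is nonzero). The main obstacle in either route will be purely algebraic: the star topology breaks the permutation symmetry enjoyed in Lemma~\ref{lemma_well_mixed_equilibrium}, so the three qualitatively different pieces of $\ybar_i$ (one proportional to $\uprho_1$, one to $\uprho_i$, one summed over $\uprho_j$ with $j\notin\{1,i\}$) each carry a different combinatorial coefficient in the $\upnu_j$'s, and tracking these through the verification — or, equivalently, computing the relevant $(\n-1)$-by-$(\n-1)$ minors — will require careful index manipulation rather than any new idea. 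Degenerate cases in which the denominator vanishes (e.g., two or more leaves having $\upnu_i=0$) would need to be handled by a separate rank argument analogous to Lemma~\ref{lemma_uniqueness}, likely yielding an assumption analogous to (A5) tailored to the star.
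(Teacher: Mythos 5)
Your primary plan — rule out $\xbar=0$ via Lemma \ref{lemma_star_xbar0}, establish $\xbar+\sum_i\ybar_i=1$ for the candidate point, and then verify each $\dot y_i=0$ by substitution exactly as in Lemma \ref{lemma_well_mixed_equilibrium} — is precisely the route the paper takes, which simply states that the same steps as the well-mixed proof go through. Your remarks on the Cramer's-rule alternative and on possible denominator degeneracies are reasonable extras, but the core argument matches the paper's.
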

\begin{proof}
	The proof follows by substituting the expressions of $\xbar$, $\ybar_1$ and $\ybar_i$, $i \in \{2,\dots,\n\}$ from \eqref{star_equilibrium} in \eqref{star_system}. Following the same steps as in Lemma \ref{lemma_well_mixed_equilibrium}, it can be found that the derivatives $\dot{x}$, $\dot{y}_1$ and $\dot{y}_i$ $i \in \{2,\dots,\n\}$ indeed equate to zero. Since none of (A1),(A2) or (A4) are true, \eqref{star_equilibrium_x} is the unique equilibrium for system \eqref{star_system}. 
\end{proof}

\section{System Dynamics and the Role of the Laplacian}
Consider the original system of the consumer network reproduced as follows from \eqref{eq:ses}
\begin{align}
\label{eq:ses10}
\begin{split}
	&\dot{x} =  (1-x)x - x\sum_{i=1}^{\n}  y_i,\\
	&\dot{y}_i = \b_i \Big( \upalpha_i(x -\uprho_i)- \upnu_i \sum_{j=1}^{\n} \w_{ij}\left( y_i - y_j \right) \Big).
\end{split}
\end{align}
Recall from Chapter \ref{chap:lump} that the influence that $j$ has on $i$'s consumption is given by $\b_i \upnu_i \w_{ij}$. Let us represent this influence by $\upgamma_{ij} = \b_i \upnu_i \w_{ij}$. The system may now be written as
\begin{align}
\label{eq:ses20}
\begin{split}
	&\dot{x} =  (1-x)x - x\sum_{i=1}^{\n}  y_i,\\
	&\dot{y}_i = \b_i \upalpha_i(x -\uprho_i) - \sum_{j=1}^{\n} \upgamma_{ij}\left( y_i - y_j \right),
\end{split}
\end{align}
where the weights $\gamma_{ij}$ directly define the \emph{influence network} which is different from the social network that the $\w_{ij}$'s represent. The consumption dynamics can now be written in compact form as
{ \setlength{\mathindent}{0 cm}
\fontsize{10}{11.2}\selectfont
\begin{align}
	\label{eq:net_mat} 
	\begin{split}
	\left[\!\!\begin{array}{c} \dot{y}_1 \\ \vdots \\ \dot{y}_\n \end{array}\!\! \right] = \left[\!\! \begin{array}{c} \b_1 \upalpha_1 \\ \vdots \\ \b_\n \upalpha_\n \end{array}\! \!\right] x - \left[\! \!\begin{array}{cccc} d^-_1 & -\upgamma_{12} & \dots & -\upgamma_{1\n} \\ -\upgamma_{21} & d^-_2 & \dots & -\upgamma_{2\n} \\ \vdots & \vdots & \ddots & \vdots \\ -\upgamma_{\n 1} & -\upgamma_{\n 2} & \dots & d^-_\n \end{array}\! \!\right] \left[\!\!\begin{array}{c} y_1 \\ \vdots \\ y_\n \end{array}\!\! \right] - \left[\!\! \begin{array}{c} \b_1 \upalpha_1 \uprho_1 \\ \vdots \\ \b_\n \upalpha_\n \uprho_\n \end{array}\!\! \right]
	\end{split}
\end{align}}
where $d^-_i = \sum_{j=1}^\n \upgamma_{ij}$ is the in-degree of node $i$ in the influence network. We may represent this system as
\begin{align}
	\dot{\vec{y}} =  \vec{\mathrm{a}} \, x - \mathcal{L}\,\vec{y} - \vec{\mathrm{c}}
\end{align}
\begin{figure}[t!]
	\captionsetup{font=normal,width=0.8\textwidth}
	\begin{center}
		\includegraphics[width=0.6\linewidth]{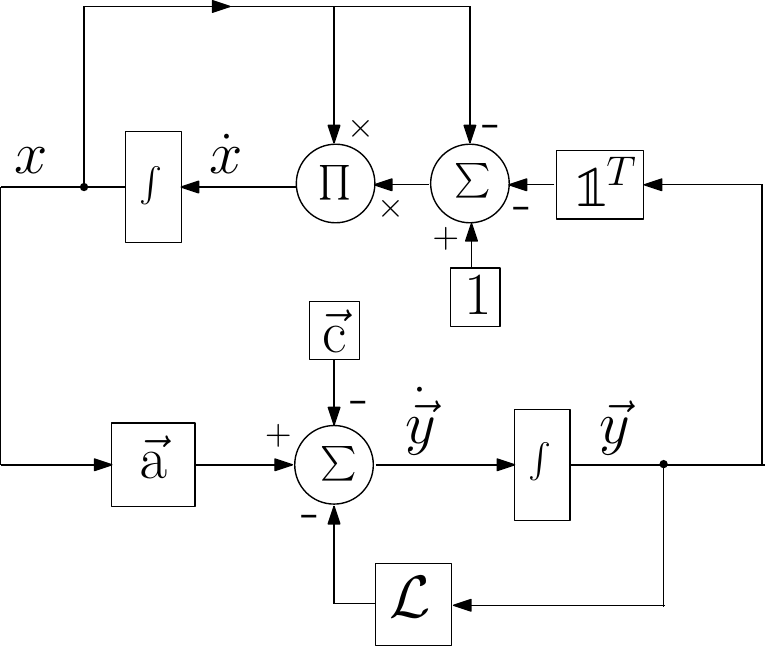}
	\end{center}
	\caption{Block diagram showing the coupled resource and consumption dynamics.}
	\label{fig:blk_dia} 
\end{figure}
where the definitions of the matrices can be seen from \eqref{eq:net_mat}. $\mathcal{L}$ is the \emph{Laplacian} matrix \cite{chung1997spectral}. The coupled resource and consumption dynamics can now be represented as follows
\begin{align}
	\label{cons_sys} 
\begin{split}
	&\dot{x} = f(x,\vec{y}) =  \left[ 1-x - \mathbbm{1}^T \vec{y} \right]x\\
	&\dot{\vec{y}} =  \vec{\mathrm{a}} \, x - \mathcal{L}\,\vec{y} - \vec{\mathrm{c}}
\end{split}
\end{align}
It is worth noting here that the matrix $\mathcal{L}$ contains all information about the network topology. This is a system in which the network topology (in the form of $\mathcal{L}$) appears explicitly in the dynamical model. Thus if there exists a relation between the poles of \eqref{cons_sys} and the eigenvalues of $\mathcal{L}$ then it may be translated into a relation between the network topology and overall behavior of the system. Appendix \ref{app:laplacian} gives a short demonstration of how network structure effects the eigenvalues of the Laplacian. 



\bookmarksetup{startatroot}

\chapter{Epilogue}

\label{chap:conc}

This chapter is intended as a commentary on the work that has been presented in this thesis. More specifically, the material in this chapter reinforces the arguments of the first part of the dissertation, in hindsight of the more technical subject matter presented thereafter. After revisiting the cybernetic ideology and its potential with respect to the control of socio-ecological systems, we provide a summary of conclusions for the dissertation. We then explore future avenues to expand the work and also contemplate the prospects offered by a few systems-theoretic tools developed by the controls faction of the engineering community.

\section{Revisiting the Cybernetic Picture}

Norbert Weiner, through his discourse \cite{wiener1961cybernetics, wiener1954human}, intended for cybernetics to embrace universal principles, independent of system domain, that were applicable to both man-made and living systems \cite{skyttner2005general}. He also succeeded in demonstrating that the cybernetic principles could be applied fruitfully, at least in theory, to all systems, independent of context. Unfortunately, Weiner's time came and went, and slowly, the face of cybernetics was lost under its mainstream descendants of control and communication. Indeed in the present era, as Thomas Rid puts it \cite{rid2016rise}, the prefix ``cyber" is commonly slapped without comprehension, in front of other terms (as in ``cyberspace" or ``cyberwar" ) to make them sound more technologically intensive, more cutting edge, more compelling -- and sometimes more ironic.

The principles of cybernetics are now becoming increasingly relevant due to the ever-growing involvement of human beings in technological applications. While the population bomb is constantly testing the sustaining boundaries of our planet, mankind has come to realize the intricate relationship between our economic system, the environment and technology. One cannot be changed independently of the others and so, an integrative approach is required to seize the holy grail of sustainable development. In this dissertation, we have argued that the principles of cybernetics may provide the necessary language for such an approach, at least for the effective governance of our natural resources, an absolute necessity for sustainability. 

Before framing the resource governance problem, it is important to comprehend the criticality of the issue and to realize that it is possible to address it in a distributed and scalable manner. This has been our focus in Chapter \ref{chap:res_gov}.  In Chapters \ref{chap:sa} and \ref{chap:cyber} we discussed the relevance of the cybernetic approach to natural resource governance from two different perspectives. Firstly, scientists already working on global issues in coupled human and natural systems have employed the systems theoretical approach in various forms with different levels of abstraction and in varying contexts. However there is much potential in employing the cybernetic tradition which is still largely untapped, especially due to the prospects that technological solutions now offer regarding various environmental challenges of global concern. Secondly, engineers have actively been applying the cybernetic principles in the form of feedback control and communication to a diverse range of applications which have recently begun to take on more human-centric implementations. The age of ``smarts" is quickly being ushered in, with the dream of smart cities and the internet of everything seemingly within grasp. We argue that the cybernetic method offers an integrative language in which the engineers are able to view complex CHANS through the same lens that they are so accustomed to view their machines through, and thus account for human and environmental factors more effectively.

\begin{figure}[t!]
	\captionsetup{font=normal,width=0.9\textwidth}
	\begin{center}
		\includegraphics[width=\linewidth]{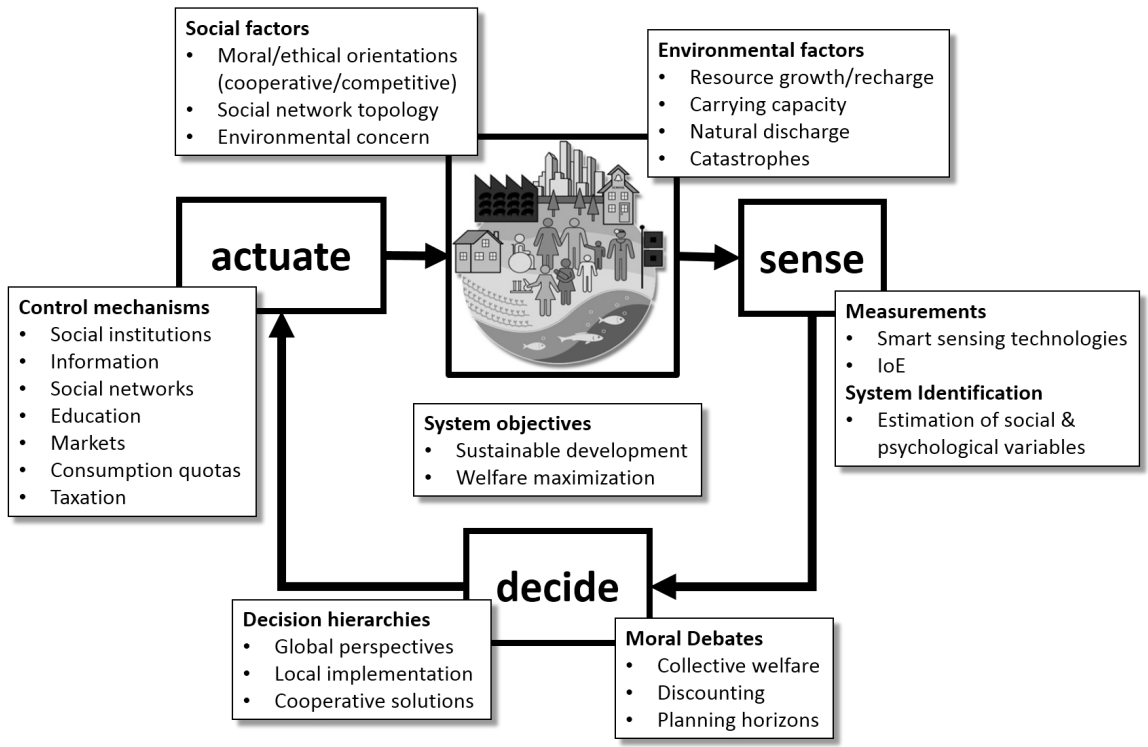}
	\end{center}
	\caption{The cybernetic picture revisited. This figure has first appeared in Chapter \ref{chap:cyber} and has been reproduced here for the sake of readability.} 
	\label{fig:cybernet_pic2} 
\end{figure}

The idea of viewing natural systems as machines is not new. Wescoat for instance, describes in \cite{wescoat2013reconstructing} how the British engineers took inspiration from the steam engine while laying out the design of the irrigation system in colonial India. However, while the physical principles of matter and energy have largely guided this ``mechanistic" world view \cite{hammond2005philosophical} in prior times, the cybernetic world view includes information as a new fundamental pillar, the relevance of which has grown immensely in recent times. Although we have not talked explicitly about the role of information in cybernetic systems, its immense significance is undeniable \cite[Part III]{mobus2015principles} and has perhaps already been grasped by the reader. Consider the control loop depicted in Figure \ref{fig:cybernet_pic2}. Each arrow in the figure represents a flow of information in one sense or another. The sensors send information about the state of the system to the ``controller" or decision maker. Based on the information received from the sensors and the ultimate objective of the system, the decision maker then sends information about the correcting action to the actuator. The action of the actuator and the stimulus received by the sensor also have interpretations as information being conveyed to and received from the system respectively (see \cite[Chapters III, VII]{wiener1961cybernetics} and \cite{corning2007control}).

It is also worth noting that the concepts of cybernetics are not new for the ecological and social sciences communities. In 1979 James Lovelock introduced the Gaia\footnote{The term Gaia is a reference to the Greek Earth Goddess.} hypothesis as a new perspective of the ecosphere \cite{lovelock1979new}. Gaia is envisioned as a complex entity that includes all of Earth's biosphere, atmosphere, oceans and soil. The aggregate of these components constitutes a complex feedback system whose sole objective is to sustain an environment that is optimal for life. Although the Gaia hypothesis has been met with much controversy \cite{schneider2004scientists} it has served as a precursor to the more rigorous discipline of geo-cybernetics \cite{schellnhuber1998geocybernetics}, that has much overlap with the Earth Systems Analysis framework referred to in Table \ref{tab:chans_fw}, and envisions control strategies at a planetary level. 

On the other hand, the cybernetics ideology has also seeped into the social sciences, giving birth to the discipline of socio-cybernetics \cite{mancilla2013introduction} that is concerned with the application of cybernetic models to the social and human sciences. Anton \cite{anton2015sociocybernetics} identifies over 15 research directions for socio cybernetics which include the cybernetics of education, social accountability, cyberspace and the internet, social networks, democratization, management of complex firms, trust and problem solving. In particular, the application of cybernetics to management and organization goes back more than half a century \cite{beer1960cybernetics} and management cybernetics is, by now a well studied and established domain.

Even though cybernetics has been applied separately to systems originating from both ecological and social sciences, natural resource systems pose a unique challenge for the application of cybernetic models as they span across both social and ecological disciplines. The addition of a third aspect in the form of technology and its recent accentuation in CHANS makes the challenge extremely relevant to the controls community in particular. As we have seen in the content of the dissertation, the application of cybernetics to NRM involves the study of human psychology, the exercise of ecological modeling and an understanding of social interactions, all of which are beyond the conventional domain of the control sciences. Although we have argued in support of the relevance of cybernetic methods to NRM for the social and ecological communities, the main focus of the dissertation is geared towards the engineering community.  We have argued that the technical systems envisioned for social and environmental control cannot be successfully realized if the human aspect is not properly accounted for. We hope that through this dissertation we have been able to depict, at least to some extent, how this may be achieved.

\section{Summary of Main Findings}
Here we present a summary of the main findings for the technical part of the dissertation. The language of cybernetics translates all processes into mathematical models. Whether the models are conceptual or data-driven, they are always understood to lie at the heart of the regulating mechanism \cite{conant1970every}. In Chapter \ref{chap:model}, we presented a mathematical model of consumer decision making, developed from social psychological research in consumer behavior. The model is agent-based with $n$ equations with $n$ consumers and one additional equation for the resource dynamics. Understandably, the dimension of such a system makes it less tractable, especially due to the non-linearity of the resource dynamics, for large consuming populations. In Chapter \ref{chap:lump} we addressed this problem by introducing block-models of resource consumption where the dynamics describe not the consumption of individuals, but the aggregate consumption of tightly knitted communities with uniform characteristics. We found under certain structural assumptions, that the lumped parameter model has the same form as the original model and thus in the subsequent analysis, we are able to understand a consuming agent as either comprising a single individual, or groups of individuals with homogeneous characteristics. Referring to the cybernetic picture of Figure \ref{fig:cybernet_pic2}, this pertains to understanding the process, how its dynamics are dependent on the various social and ecological factors and the conditions under which the selected model of is valid.

Chapters \ref{chap:open} -- \ref{chap:n} presented various applications and extensions of the developed model. Chapter \ref{chap:open} discussed the open-loop characteristics of the system at different levels of abstraction. We saw how a mathematically straight-forward exercise such as steady-state analysis was able to yield profound insights on phenomena such as free-riding behavior. We found that high preference to social information and homogeneity in the levels of environmental concern promoted self-reliant behavior. Where there was asymmetry in the environmentalism, the relatively environmental group showed a tendency to subsidize the consumption of the other group. This tendency decreased with increasing levels of preference to social behavior. We then studied the society as an aggregated single entity in Chapter \ref{chap:optimal} and studied favorable consumption patterns via the application of optimal control theory. This required us to define a mathematical notion of collective welfare and sustainable development, which may require considerable debate while interpreting the results for real-world settings (refer to the ``decide" block in Figure \ref{fig:cybernet_pic2}). We found the optimal consumption path to be sustainable if the discounting of future consumption was below a certain level, which corresponds to a long-term planning policy in a real-world setting. We also devised the optimal feedback law in both the sustainable and unsustainable cases. In order to analyze the interactions between groups and how they might strategically be exploited by the consuming agents, we analyzed the most basic network (the diad) in a game-theoretic framework in Chapter \ref{chap:game}. The positive role of environmental concern and preference to social information was explored along with the mixed role of heterogeneity in the consuming society, prompting for a deeper categorization of heterogeneity beyond conventional classifications. We developed a notion of tragedy in the consumption game and saw that high preference to social information and high levels of environmentalism were able to reduce the game ``tragicness".  In Chapter \ref{chap:learning}, we returned to the single and dual agent networks of Chapters \ref{chap:optimal} \& \ref{chap:game} and discussed the effects of feedback, via the optimal feedback law for the single agent case, and via the theory of learning in games for the dual agent case where we saw how the selected learning scheme discouraged free-riding behavior in the long-run. We also saw in the formulated scenario that the agents were able to reach the Nash equilibrium via simple best response dynamics. Finally, Chapter \ref{chap:n} intended as a possible extension, included calculations for the equilibrium of the full $\n$ agent network for two different types of structured populations, namely the complete network and the star network. In the end it was discussed that the topology of the network may possibly play a significant role in determining the overall behavior of the system via the Laplacian. If such an exercise is undertaken, it may give an indication as to what type of network structures are ``sustainable" in our model of natural resource consumption.

\section{Open Problems and Future Avenues}
Here we describe a few possibilities for further research. Note that there exist some natural extensions to the work of this dissertation that have already been discussed in the preceding chapters. These mostly pertain to the relaxation of assumptions and the consideration of additional effects both from within and outside the system. However the discussion below is related to certain aspects not covered in the dissertation, that we feel must be covered in the envisioned cybernetic foundation for natural resource governance. 

\subsection{A Rigorous Definition of Sustainability}
Given the state of the current landscape in sustainability science, aspiring for a rigorous, agreed-upon definition of sustainability is perhaps too optimistic at this point. However consciously or unconsciously, it remains the underlying goal of any management scheme for natural resource systems. The definition of sustainable development given by the Brundtland Commission \cite{brundtland1987report} is as follows : ``sustainable development is development that meets the needs of the present without compromising the ability of future generations to meet their own needs". While this definition of sustainability has been agreed upon and adopted by many nations, a clear, rigorous definition is still not evident. The different perspectives on sustainability coming from various disciplines further aggravates the issue. For instance, in biology sustainability means avoiding extinction and learning to survive and reproduce. In economics sustainability means avoiding major disruptions and collapses, hedging against instabilities and discontinuities. In ecology sustainability describes the capacity of ecosystems to maintain their essential functions and processes, and retain their biodiversity in full measure over the long-term. Costanza and Patten \cite{costanza1995defining} attempt to decipher the common aspects of these various definitions and argue that a sustainable system is essentially \emph{a system that persists}. 

Unfortunately, the definition of a sustainable system as one that persists poses more questions than it answers. The foremost of them being: what characteristics of the system must persist, and for how long? It is easy for control engineers to confuse sustainability with stability. However, a stable system is not necessarily a sustainable one, at least on some scale. The evolution of systems dictates that in order for the system to survive as a whole, smaller, sub-optimal parts must adapt or vanquish. The cybernetic framework relies much heavily on mathematical models, and any notion of sustainability applied within the framework must be mathematically rigorous as well. There have already been some attempts in this regard (see for instance \cite{chichilnisky1996axiomatic}, \cite{cabezas2002towards}, \cite{valente2005sustainable}), however they are mostly domain specific and have not qualified as definitions that are agreed upon across disciplines. Indeed we admit that the definition of sustainability provided in Chapter \ref{chap:optimal} of this dissertation also falls within this category. Nevertheless, the cybernetically inclined  still yearn for a ``sustainability function" similar to the Lyapunov functions that they rely most heavily on for characterizing long-term behavior of their systems. 

\subsection{Formulating Social Variables in the Context of Feedback Control}
During the discourse of the dissertation, we have largely overlooked the ``sense" and ``actuate" blocks of the control loop from Figure \ref{fig:cybernet_pic2}. However these blocks hold immense importance in any realization of feedback control. Before any mathematically intricate issues are brought up, let us consider the following: how do we sense? and how do we actuate? In Figure \ref{fig:cybernet_pic2} we have listed several mechanisms by which control may be exerted on the consuming population of a natural resource. While it is certainly hard to disagree that mechanisms such as education, taxation and information dissemination can indeed influence consumer behavior, the more relevant issue to us right now is how to incorporate these strategies in a formal control law. For instance, what would education look like as a control variable in system \eqref{eq:ses}? What would be the mathematical nature of that variable, what would its units be and how would we be able to justifiably associate any numerics with it? Clearly we are nowhere near to furnishing an answer to these questions than we are to reaching a quantifiable notion of sustainable development. It seems befitting that Stankovic \cite{stankovic2014research} lists the incorporation of social models into the formal methodology of feedback control as one of the major challenges for the control of CPS.

Let us now consider the ``sense" block of Figure \ref{fig:cybernet_pic2}. In all the mathematical models developed in the dissertation, it has conveniently been assumed that the state of the system is accessible for the decision maker (whether the decision making happens in a centralized or distributed manner). Sensing technologies have greatly evolved over time \cite{soloman2009sensors}, with smart sensing technologies offering opportunities for control that were not thought to be possible a few decades ago. However, we are still not quite there yet when it comes to sensing cognitive variables and social influences at least for the particular control loop envisioned in Figure \ref{fig:cybernet_pic2}. Not only this, but formulation of the mathematical nature of the variables to be sensed would pose the same challenges that have been discussed above regarding the control variables.

On the other hand, implementing the ``decide" block associated with the decision maker requires intense debates of moral and ethical nature. The difficulty of framing the objective of sustainability has already been discussed in the previous section. Since these issues have been relevant long before the recent advent of control and sensing technology, they have at least been thoroughly discussed for a long time now by the political scientists and economists \cite{perman2003natural}. 

\subsection{The Issues of Controllability and Observability}
%
%
%

In the previous section, we have deliberated over the formulation of the various control mechanisms listed in Figure \ref{fig:cybernet_pic2}. Assuming that this has been done successfully, the next question that arises: is it possible to steer the system to the required state using the selected control scheme? For instance, is educating the population enough to ensure sustainability or is there a need for increased taxation? In order to control the consumption pattern of the entire network, is there a need to control every consumer individually or is there a possibility of targeting only the most influential nodes? The set of states that are attainable using one control mechanism may be different than that of another, which can reveal the viability of one mechanism over the rest. This is the type of information that the controllability \cite{franklin1994feedback} of the system can reveal.

Observability \cite{franklin1994feedback} on the other hand relates to issues in sensing. Consider the synthesis of the feedback control law in Section \ref{sec:feed1}, which requires complete information about the state of the resource being available to the decision maker. Similarly, the the adjustment process from Section \ref{sec:learn} assumes that each consumer has complete knowledge about the other consumer's information preferences. Assuming that there existed a sensor to convey this information, an $n$-agent network would require knowledge of the characteristics of all $n$ agents. Installing a sensor for each consumer in large populations would not only be an extremely tedious exercise, but would also be extremely expensive both economically and computationally. In such a scenario, the tool of observability can help identify exactly which of the nodes need to be sensed/observed in order for the control law to serve its purpose effectively. It can also be used to explore the feasibility across different types of sensing mechanisms. 

The concepts of controllability and observability are not new \cite{kalman1960contributions} and there exist elegant results for linear systems, non-linear systems \cite{khalil1996noninear} and even for complex networks \cite{liu2011controllability, liu2013observability} which have the potential to yield important insights for our resource-consumer network.

\section{Final Remark}

In this chapter we have presented a concluding note to the work conducted as part of this dissertation by summarizing the ideology and main findings of the research along with a few considerations for further work. As a final remark we wish to emphasize that the analysis presented in this dissertation, is in no way meant to give a complete account of the cybernetics of natural resources. Indeed such an exposition would require several dissertations to say the least. Rather, it is intended to portray the potential that the cybernetic perspective holds for the resource governance problem, and global environmental issues in general. One must also bear in mind that at the cybernetic foundation of every synthesized regulating mechanism, lies a mathematical model. However, no model (including the one presented herein) is perfect, thus implying that the models and their conclusions are only as good as their underlying assumptions. 

It would appear from this dissertation (and almost all logical treatments of NRM for that matter) that sustainable management of our natural resources while simultaneously catering for the needs of a growing world population, is an extremely daunting task to say the least. Perhaps as science and human knowledge progress, mankind will be able to dig itself out of the current environmental crisis. For now, we must stay focused on doing our part and not lose faith in our planet. The frame of mind that should guide our environmental research is perhaps best summarized in the following note that Donella Meadows had taped on her office door \cite{meadows2001reflection} while she was still alive,

\finishchapquote{``Even if I knew the world would end tomorrow, I would plant a tree today"}{Anonymous}{ }
%
%
%

\cleardoublepage
\singlespacing

\addtocontents{toc}{\protect\addvspace{2.25em}}
\bookmarksetup{startatroot}
\begin{singlespace}
\bibliography{main}
\bibliographystyle{plain}
\end{singlespace}

\addcontentsline{toc}{chapter}{Bibliography}

\cleardoublepage

\pdfbookmark[-1]{Appendices}{appendix}
\appendix
\chapter{Game Theory Primer}
\label{app:games}


This appendix gives a very brief introduction to basic game theoretical concepts. It is in no way meant to serve as an introduction to game theory in general for which we refer the interested reader to some excellent texts on the subject \cite{osborne1994course, myerson2013game, bacsar1998dynamic}. The extent of the following material has been selected solely for the reader to be able to grasp the game theoretic terminology used in this dissertation.

\section{Definitions}
A game $\mathcal{G}$ is defined as a three-tuple $\mathcal{G}=\langle \mathcal{I},(\mathcal{S}_i ),(\uppi_i )\rangle$ where
\begin{itemize}
	\item $\mathcal{I} \coloneqq \{1, \dots, n\}$ is the set of \emph{players}, where $n$ is the total number of players.
	\item $\mathcal{S}_i$ is the \emph{strategy} space for player $i \in \mathcal{I}$. The elements of this set, also called the \emph{actions} may either be discrete or continuous in nature. An element from the set $\mathcal{S}_1 \times \dots \times \mathcal{S}_n$ is called a \emph{strategy profile}.
	\item $\uppi_i : \mathcal{S}_1 \times \dots \times \mathcal{S}_n \rightarrow \mathbbm{R}$ is the payoff function for player $i$.
\end{itemize}
Games are commonly represented as a table. In Figure \ref{fig:pris_dil_app} we reproduce the Prisoner's dilemma game from Chapter \ref{chap:res_gov}. Here there are two players so $n = 2$. The rows of the table correspond to the strategy of Player 1, whereas the columns correspond to the strategies of Player 2. In some writings the players are also referred to as the row player and the column player. Each player has two possible strategies and so, according to Figure~\ref{fig:pris_dil_app}, $\mathcal{S}_i = \{\mathrm{C}, \mathrm{D} \}$, where $\C$ and $\D$ represent the ``cooperate" and ``defect" strategy respectively\footnote{where the ``cooperate" strategy corresponds to pleading innocent}. The entries in the table depict the payoffs of each player as an ordered pair where the first entry gives the payoff of the first player and the second entry gives the payoff of the second player respectively, if each player plays the strategy given by that row and column of the table. Thus from Figure~\ref{fig:pris_dil_app}, if both the players cooperate, this corresponds to the upper left corner of the table, from which we see that in this case both the players receive payoff $\uppi_1(\C,\C) = \uppi_2(\C,\C) = -1$.
\begin{figure*}[!htb]
\begin{center}
	\centering
	\begin{tabular}[t]{ >{\centering}m{10pt} | >{\centering}m{30pt} | >{\centering}m{30pt} | m{0pt}}
		\multicolumn{1}{c}{ }& \multicolumn{1}{c}{ C} & \multicolumn{1}{c}{D}\\
		\cline{2-3}
		C & -1,-1 & -10,0 &\\[30pt]
		\cline{2-3}
		D & 0,-10 & \cellcolor{gray}-5,-5 &\\[30pt]
		\cline{2-3}
	\end{tabular}\\[10pt]
\end{center}
\caption{The prisoner's dilemma game}
\label{fig:pris_dil_app}
\end{figure*}
The story of the prisoner's dilemma game is usually narrated as follows -- both the players are in separate rooms of a police station where they are being questioned about a crime they have committed together. Since the police do not have enough evidence to evict them, each prisoner is given the following proposition. If the prisoner confesses to the crime and rats his partner out (the ``defect" strategy) but the other prisoner does not confess (the ``cooperate" strategy), then he would be freed but his partner would get a sentence of 10 years. However if his partner also confesses to the crime then both would serve a 5 year sentence. If neither of them confesses, then both of them will be evicted for a lighter crime and serve a sentence of 1 year. The payoffs in Figure \ref{fig:pris_dil_app} are based on the number of years the prisoners would have to serve for each possible outcome of the game. 

\section{Dominant Strategies}

Given a game in normal form, it is reasonable to expect certain strategies to be played more often than the others. These strategies are usually the ``dominant" strategies. A strategy is said to be dominant for player $i$ if it is guaranteed to result in the highest possible payoff no matter what strategy the other players choose. More precisely, a strategy $s'_i \in \mathcal{S}_i$ is weakly dominant if $\uppi(s'_i,s_{-i}) \geq \uppi(s_i, s_{-i})$ for all $s_i \in \mathcal{S}_i$ and $s_{-i} \in \mathcal{S}_{-i}$. $s'_i$ is then strongly dominant (or simply dominant) $\uppi(s'_i,s_{-i}) > \uppi(s_i, s_{-i})$ for all $s_i \in \mathcal{S}_i$ and $s_{-i} \in \mathcal{S}_{-i}$.

In the prisoner's dilemma, it is straight forward to see which strategy is the dominating one. If the column player plays $\C$, then the row player gets the highest payoff by playing $\D$. Now, if the column player plays $\D$, the row player will still ensure the highest payoff for himself by playing $\D$. In both cases, $\D$ gives the highest payoff and so, $\D$ is the dominant strategy for Player 1. Similar reasoning leads to the realization that $\D$ is also the dominant strategy for Player 2.

\section{Best Response}

A strategy $\tilde{s}_i \in \mathcal{S}_i$ is said to be the best response to a fixed strategy profile $s_{-i} \in \mathcal{S}_{-i}$, on behalf of the remaining players if $\uppi(\tilde{s}_i,s_{-i}) \geq \uppi(s_i, s_{-i})$ for all $s_i \in \mathcal{S}_i$. If the best response is unique then it is also called a strict best response.

Note that the best response is a slightly weaker concept than dominance. While a dominant strategy might not exist for a particular player, the best response of a player to a particular strategy of the other players always exists. A dominant strategy, if it exists, is then the common best response to \emph{all} strategies of the opposing players.

\section{Nash Equilibrium}

A strategy profile $s^\# = \in \mathcal{S}_1, \times, \mathcal{S}_n$ is called a Nash Equilibrium if $s^\#_i$ is a best response to $s^\#_{-i}$ for every $i \in \mathcal{I}$. More precisely, $s^\# = (s^\#_1, \dots, s^\#_n)$ constitutes a Nash Equilibrium if $\uppi(s^\#_i,s^\#_{-i}) \geq \uppi(s_i,s^\#{-i})$ for all $s_i \in \mathcal{S}_i$ and all $i \in \mathcal{I}$. While this definition appears to be similar to that of dominance, it is important to note the subtlety that the Nash Equilibrium requires, for every $i$, that $s^\#_i$ be the best response to only those actions of the other players that are being played at the equilibrium, and not to the whole set of possible actions.

The Nash Equilibrium also has associated with it a notion of stability. If all players play the Nash Equilibrium strategy, then no player can gain a higher payoff by unilaterally changing her strategy. The Nash Equilibrium for the Prisoner's dilemma game is given by $\{\D,\D\}$ and is highlighted in gray in Figure \ref{fig:pris_dil_app}. This highlights the ``dilemma" of the game. While the players can clearly get the collectively best outcome if both cooperate, rational reasoning leads them both to defect, resulting in a worse outcome for both. The strategies considered here are \emph{pure} strategies i.e., the player commits to playing a single strategy and plays it with certainty. There is also the possibility of a player assigning probabilities to each of the actions such that each strategy has a certain probability of being payed. For instance, in the prisoner's dilemma game, one player may decide to toss a coin and play $\C$ if head comes up and play $\D$ otherwise, thus assigning a probability of 0.5 to each action. The act of assigning probabilities to the strategies is also considered a valid strategy and is usually referred to as a \emph{mixed} strategy. It is known that the Nash equilibrium for finite games does not always exist in the form of pure strategies, however if mixed strategies are also considered, then the Nash Equilibrium always exists. 

\section{Games with Continuous Strategies: The Cournot Game}

The concepts defined for games with discrete strategies carry forward to games with continuous strategies. We illustrate this through the Cournot game whose story is described as follows. Imagine two firms, each producing an identical good for the same market. The two firms 1 \& 2 are the players and so $\mathcal{I} = \{1,2\}$. The strategy for each firm $i$ is the quantity of good $q_i$ it decides to produce and so $\mathcal{S}_i = \{q_i\}$ constitutes a continuous strategy space.    
\begin{figure}[t!]
	\captionsetup{font=normal,width=\textwidth}
	\begin{center}
		\includegraphics[width=0.45\linewidth]{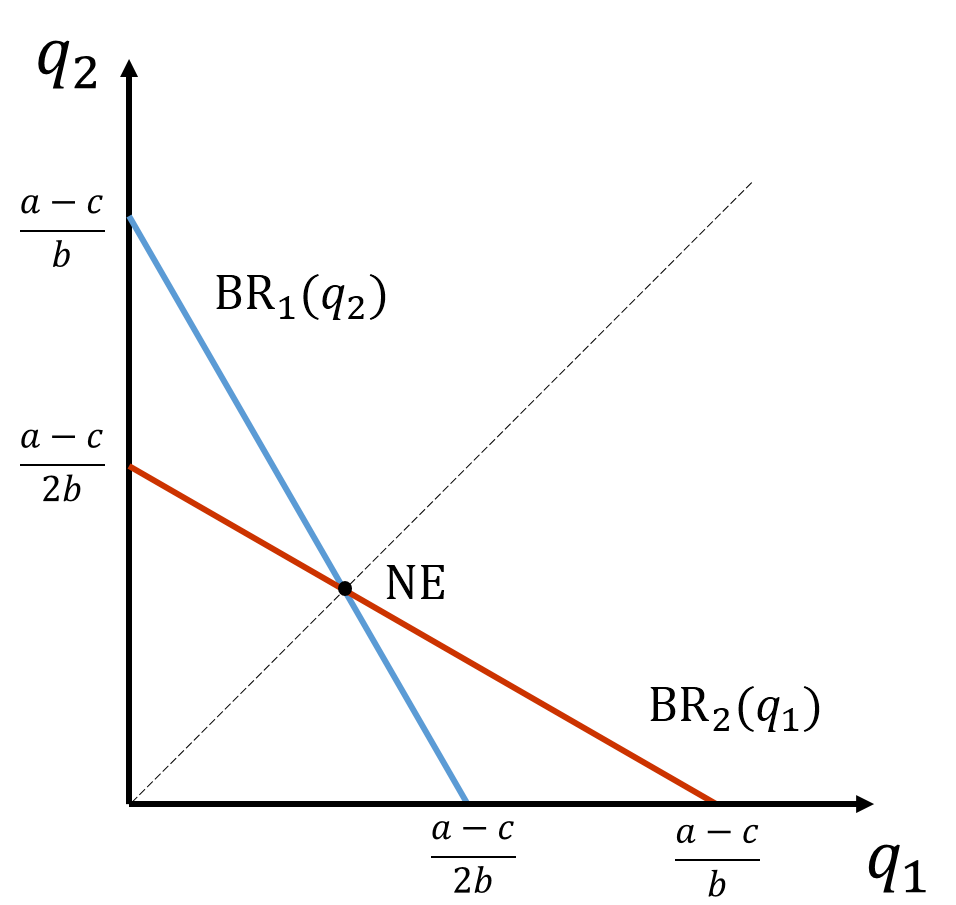}
	\end{center}
	\caption{Illustration of the Cournot Game.}
	\label{fig:cournot} 
\end{figure}
The payoffs for each firm are given by the profit each firm earns by selling the good. The cost to produce the good is given by $c>0$. The good is sold at the market-clearing price which is given by the inverse demand curve $p = a - b(q_1 + q_2$ where $p$ denotes the price and $a,b >0$ are constant parameters. The payoff for each firm is then given by
\begin{align*}
	\uppi_i(q_1,q_2) = \Big(a-b(q_1+q_2)\Big)\,q_i - c\,q_i \quad i \in \{1,2\},
\end{align*}
The best response is given as $\displaystyle \tilde{q}_i = \mathrm{BR}_i(q_j) = \max_{q_i} \uppi_i(q_1,q_2) \quad i \neq j$, whose solution gives us
\begin{align*}
	\tilde{q}_1 = \mathrm{BR}_1(q_2) = \frac{a-c}{2b} - \frac{q_2}{2},\quad \tilde{q}_2 = \mathrm{BR}_2(q_1) = \frac{a-c}{2b} - \frac{q_1}{2}.
\end{align*}
The Nash Equilibrium, $(q^\#1,q^\#_2)$ is the strategy profile such that $q^\#_1 = \mathrm{BR}_1(q^\#_2)$ and $q^\#_2 = \mathrm{BR}_2(q^\#_1)$. This is given by the solution to the following system
\begin{align*}
	q^\#_1 = \frac{a-c}{2b} - \frac{q^\#_2}{2},\quad 	q^\#_2 = \frac{a-c}{2b} - \frac{q^\#_1}{2}.
\end{align*}
Solving the above algebraically gives us the Nash Equilibrium
\begin{align*}
	q^\#_1 = q^\#_2 = \frac{a-c}{3b}.
\end{align*}
Figure \ref{fig:cournot} gives a graphical depiction of the Equilibrium in the Cournot game. It is possible to show that in this equilibrium of the Cournot game, the firms produce too much quantity of the good than they would need at the point of joint profit maximization. As such, the Cournot game is also used as a model for scenarios where individually optimum behavior does not result in the collectively optimal outcome.

The concept of the Nash Equilibrium is named after John Nash who gave the first existence proof for finite games. There also exist other equilibrium concepts that can be used to make descriptive or predictive reasoning about the outcome of the game. Moreover, there exist various classifications of games based on whether the strategies are played simultaneously or sequentially, whether the games are played just once or repeatedly, whether the players are allowed to form coalitions or not, whether or not the players have complete access to information on the parameters of the game, and so on. Regarding the totality of the game theoretic literature, the material discussed here is just a tiny tip of the proverbial iceberg, presented only with the intention of enabling the reader to grasp the contents of the dissertation more effectively.

\chapter{Basic Concepts in Economic Growth Theory}
\label{app:resecon}

	In this appendix we provide a small compilation of a few basic concepts frequently referred to in the resource economics and growth theory literature. The material is intended to serve as a small supplement for the reader to better comprehend the exposition of Chapter \ref{chap:optimal}.  

\section{Early Developments in Growth Theory}

Economic growth theory is separated into exogenous and endogenous growth. The major ingredients of an exogenous growth problem consist of a production function relating factors of production to output whose level determines the current state of the economy. This is coupled with the dynamics of the production factors usually as differential equations of capital and labor. It is assumed that all parameters of the model are fixed from some mechanism outside our system, thus the term exogenous growth. Equilibrium conditions are found and policy prescriptions are made on the base of measures that have a positive influence on steady state growth.

The first influential model of exogenous growth is the Harrod-Domar model \cite{harrod1939essay} introduced in 1939. This model assumed a linear production function with capital as the only input to production. It concludes that the amount of reinvestment of the capital determines the long term growth. Solow and Swan introduced the Solow-Swan model\cite{solow1956contribution} in 1956 as an extension to the Harrod-Domar model which relaxed the fixed capital to labor ratio assumption by taking labor as a separate input to production. Later on the Mankiw-Romer-Weil model \cite{mankiw1990contribution} incorporated human capital as a third input to production thus distinguishing between physical capital and human capital.

The starting point of endogenous growth theory is considered by most to be the seminal paper by Ramsey \cite{ramsey1928mathematical}. The potential of Ramsey's work was not realized until 1965 when Cass and Koopmans adapted it to form the Ramsey-Cass-Koopmans (RCK) model \cite{cass1965optimum}. In problems of endogenous growth, parameters assumed to be determined by outside forces are taken so as to maximize a certain criterion for social welfare represented by an objective function. The underlying machinery used in such problems is that of optimal control with Pontryagin's Maximum Principle \cite{aseev2007pontryagin} used as the basic optimization apparatus. Earlier in 1961 Phelps introduced the golden savings rule \cite{phelps1961golden} which was later found out to be a special case of the RCK model. 

Endogenous growth problems rely heavily on objective functions which specify aggregate and intertemporal preferences on individual utility. The normative debate on which preference scheme is ethically correct is the foundation of the field of welfare economics \cite{perman2003natural}. Despite criticism, discounting the utility of future generations is the most practiced scheme by growth theorists as it allows for the intertemporal objective to converge in the limit. In this regard, Chichilnisky developed an axiomatic criterion for optimal growth \cite{chichilnisky1996axiomatic} which later on was also used to develop the green golden rule \cite{beltratti1993sustainable} of resource consumption.

\section{A Compilation of Fundamental Notions}

\subsection{The production function and returns to scale}

The term returns to scale arises in the context of a firm's production function to describe what happens to the scale of production when all inputs to production are subject to a given change in scale. Formally, a production function $F(K,L)$ is defined to have
\begin{itemize}
	\item constant returns to scale if $F(a\,K, a\,L) = a\,F(K,L)$
	\item increasing returns to scale if $F(a\,K, a\,L) > a\,F(K,L)$
	\item decreasing returns to scale if $F(a\,K, a\,L) < a\,F(K,L)$
\end{itemize}
where $K$ and $L$ are factors of production, usually used to represent capital and labor respectively, and $a>0$ is some constant.

\subsection{The production function and technological progress}

A production function in which technological progress enters as labour-augmenting is defined as \emph{Harrod-neutral}
\begin{align*} 
	Y(t) = f\big(K(t), A(t)L(t)\big)
\end{align*}
where Y(t) is the economic output and A(t) represents the level of technology at time $t$. If technological progress is capital augmenting of the form
\begin{align*} 
	Y(t) = f\big(A(t)K(t), L(t)\big)
\end{align*}
it is defined as \emph{Solow-neutral}. Finally, if technological progress simply multiplies the production function by an increasing scale factor as
\begin{align*} 
	Y(t) = A(t)\, f\big(K(t), L(t)\big)
\end{align*}
it is defined as \emph{Hicks-neutral}

\subsection{A Note on Cobb-Douglass production functions}
The Cobb-Douglass production function is a particular form of production function which represents the technological relationship between a single good called the output, and different factors of productions called the inputs. It was developed and tested against statistical evidence by Charles Cobb and Paul Douglas during 1927-1947. In its most standard form, for production of a single good with two factors of production
\begin{align*} 
	Y = A\,L^{\beta}\,K^{\alpha}
\end{align*}
where $Y$ is the total output, $L$ is the labor input, $K$ is the capital input, $A$ is the total factor productivity and $\alpha, \beta$ are the output elasticities of capital and labor respectively. $\alpha$ and $\beta$ are constants which are determined by the current level of technology. Furthermore they control the returns to scale of the production as follows
\begin{itemize}
	\item If $\alpha+\beta=1$ then $Y$ has constant returns to scale
	\item If $\alpha + \beta > 1$ then $Y$ has increasing returns to scale
	\item If $\alpha+ \beta < 1$ then $Y$ has decreasing returns to scale
\end{itemize}

\subsection{Continuity, differentiability, positive and diminishing marginal products, and constant returns to scale}
The first common assumption on the production function in economic growth theory is as follows: the production function $F: \mathbb{R}^2_+ \rightarrow \mathbb{R}_+$ is twice differentiable in $K$ and $L$, and satisfies
\begin{align*}
	\frac{\partial F(K,L)}{\partial K} > 0,& \hspace{20pt} \frac{\partial F(K,L)}{\partial L} > 0\\
	\frac{\partial^2 F(K,L)}{\partial K^2} < 0,& \hspace{20pt} \frac{\partial^2 F(K,L)}{\partial L^2} < 0
\end{align*}
Moreover, $F$ exhibits constant returns to scale in $K$ and $L$.

\subsection{The Inada conditions}
The second common assumption on the production function in economic growth theory is a collection of conditions called the Inada conditions. They are given as follows: The production function $F: \mathbb{R}^2_+ \rightarrow \mathbb{R}_+$ satisfies \begin{align*}
	\lim_{K\rightarrow 0}\frac{\partial F(K,L)}{\partial K} = \infty,& \hspace{20pt} \lim_{K\rightarrow \infty}\frac{\partial F(K,L)}{\partial K} = 0 \hspace{20pt} \forall \,\, L>0\\
	\lim_{L\rightarrow 0}\frac{\partial F(K,L)}{\partial L} = \infty,& \hspace{20pt} \lim_{L\rightarrow \infty}\frac{\partial F(K,L)}{\partial L} = 0 \hspace{20pt} \forall \,\, K>0
\end{align*}

\subsection{Some formal examples of social welfare functions}
The Social Welfare Function (SWF) is a mechanism to aggregate the utilities of different individuals to determine which set of utilities are socially favorable. Assume that there are $N$ individuals in a society and denote the utility of each individual as $U_i$ where $i\,\,\in\,\{1, \dots , N\}$. Following this notation, some formal examples of SWFs are given below
\begin{itemize}
	\item Benthamite Utilitarianism

	{\centering
	 $ \displaystyle
	    	\begin{aligned} 
			W(U_1,\dots,U_N) = \sum_{i=1}^{N} a_i\,U_i \hspace{10pt} \text{where } a_i>0\,\,\forall i
    		\end{aligned}$
	\par}
	\item Egalitarian

	{\centering
	 $ \displaystyle
	    	\begin{aligned} 
			W(U_1,\dots,U_N) = \sum_{i=1}^{N} - \lambda \sum_{i=1}^{N} \left( U_i - \min_{i} U_i \right) \hspace{10pt} \text{where } 0<\lambda<1
    		\end{aligned}$
	\par}
	\item Rawlsian

	{\centering
	 $ \displaystyle
	    	\begin{aligned} 
			W(U_1,\dots,U_N) = \min_{i} U_i
    		\end{aligned}$
	\par}	
\end{itemize}
Kenneth Arrow argued that any ``reasonable" social welfare function must satisfy 6 basic axioms and then demonstrated via his \emph{impossibility theorem} \cite{maskin2014arrow} that no such function exists.

\subsection{The Brundtland Commission and sustainability}
The Brundtland Commission is formally known as World Commission on Environment and Development (WCED) and aims to unite countries in the pursuit of sustainable development. The commission's report of 1987 titled ``Our Common Future" \cite{brundtland1987report} had a great impact at the Earth Summit of 1992 in Rio de Janeiro. According to the report sustainability is defined as  
\begin{quotation}
``Sustainable development is development that meets the needs of the present without compromising the ability of future generations to meet their own needs"
\end{quotation}
This definition of sustainability includes two major concepts. First is the concept of ``needs", in particular the essential needs of the world's poor, to which overriding priority should be given. Second is the idea of limitations imposed by the state of technology and social organization on the environment's ability to meet present and future needs.

\chapter{Network topology and spectrum of the Laplacian matrix}

\label{app:laplacian}

In Chapter \ref{chap:n}, we mentioned the influence of the network topology on the eigenvalues of the \emph{Laplacian} matrix. There exist various formal results \cite{chung1997spectral, spielman2007spectral, gross2004handbook} on the interpretation of the spectrum of the Laplacian and its connection to the network structure. Here, without using any formal jargon, we give a brief illustration of how the two smallest eigenvalues of the Laplacian matrix change as the connectedness of a simple network is varied.

Consider a weighted directed graph $G$ on $n$ vertices. Each edge $(i,j)$ has associated with it a weight $w_{ij} > 0$. Define the in-degree of node $i$ as 
\[
	d^-_i = \sum_{j=1}^n w_{ji},
\]
then the Laplacian matrix is defined as
\[
	\mathcal{L} = \left[ \begin{array}{cccc} d^-_1 & -w_{21} & \dots & -w_{n1} \\ -w_{12} & d^-_2 & \dots & -w_{n2} \\ \vdots & \vdots & \ddots & \vdots \\ -w_{1n} & -w_{2n} & \dots & d^-_n \end{array} \right].
\]
Now denote the eigenvalues of $\mathcal{L}$ as 
\[
	\lambda_1 \leq \lambda_2 \dots \leq \lambda_n.
\]
It is known that $\mathcal{L}$ will have only real eigenvalues and is positive semidefinite. The smallest eigenvalue $\lambda_1$ is always 0 and the corresponding eigenvector is $[1, 1, \dots , 1]^T$. The multiplicity of 0 as an eigenvalue is equal to the number of connected components of $G$.\\
Of particular interest is the second smallest eigenvalue $\lambda_2$. As the multiplicity of 0 as an eigenvector is equal to the number of connected components, if $G$ is connected then $\lambda_2 > 0$. Thus $\lambda_2$ is also called the algebraic connectivity of $G$. If $\lambda_2$ is positive, then the magnitude of $\lambda_2$ increases as graph connectedness is increased. This effect can be observed in Figure \ref{fig:lap_eig} in which $\lambda_2$ increases as edges are added to a weakly connected graph and continues to increase until a strongly connected regular graph is achieved at which point $\lambda_2 = \lambda_n$.  

\begin{figure}[h!]
	\captionsetup{font=normal,width=\textwidth}
	\begin{center}
		\includegraphics[height=0.75\textheight,keepaspectratio]{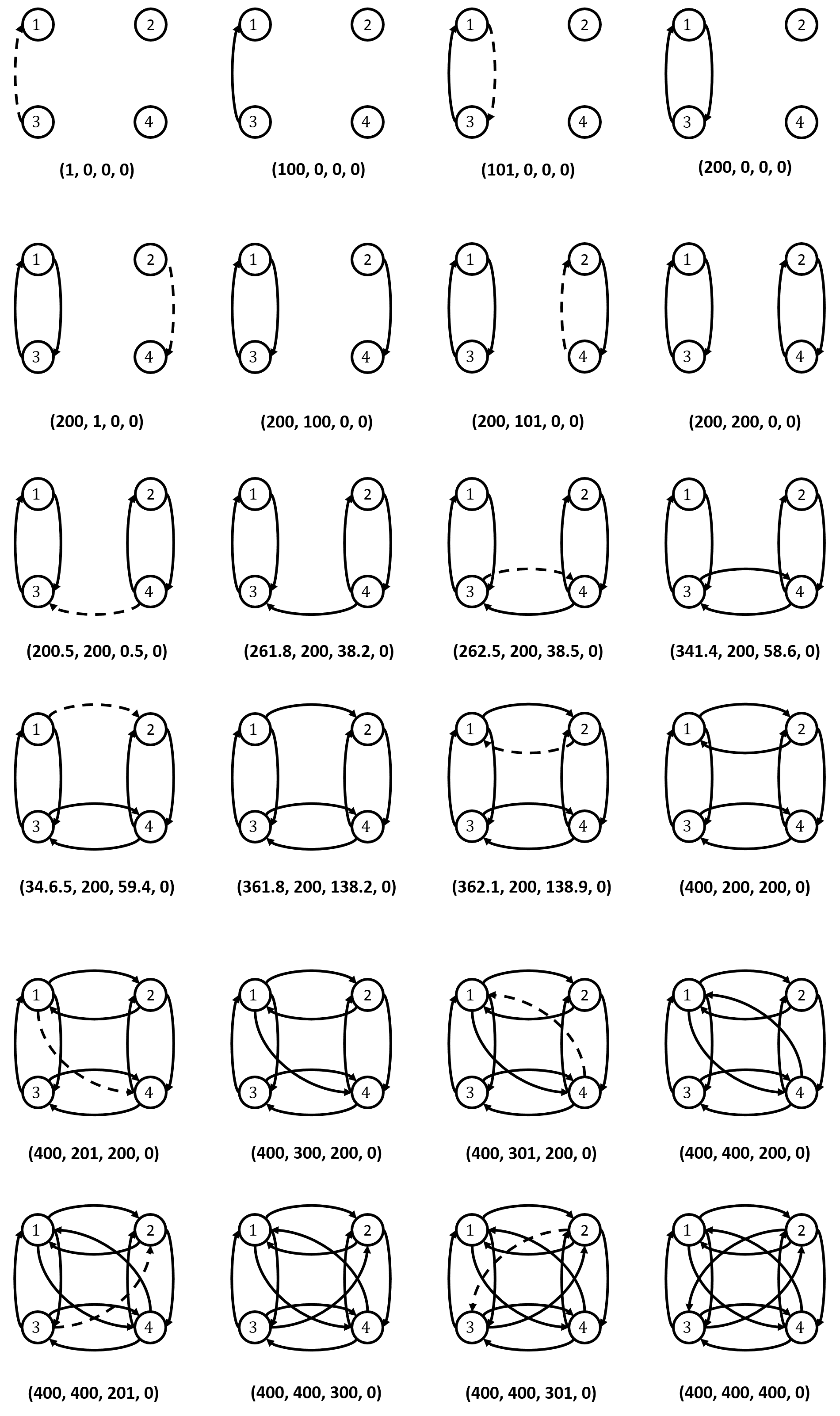}
	\end{center}
	\caption{Different possible networks of varying connectivity along with the eigenvalues of the respective $\mathcal{L}$ matrix. All dashed links carry weight 1 (weak links), whereas continuously drawn links carry weight 100 (strong links). It can be seen that as network connectivity increases, so does the magnitude of the second smallest eigenvalue.}
	\label{fig:lap_eig} 
\end{figure}

\cleardoublepage

\pdfbookmark[-1]{Vita}{ }

\addcontentsline{toc}{chapter}{Vita}
\chapter* {\centering Vita}

\label{chap:vita}

Talha Manzoor was born on May 15, 1988 in Faisalabad, Pakistan. He completed his primary \& middle education from The City School, Capital Campus in Islamabad and matriculation \& intermediate education from Pakistan Air Force (PAF) Public School, Lower Topa, Murree. He graduated in 2010 with a BS in Mechatronics Engineering from National University of Science and Technology (NUST), Islamabad and subsequently joined Lahore University of Management Sciences (LUMS), Lahore for his graduate studies. He completed his MS in Computer Engineering from LUMS in 2013 and has since been pursuing his PhD in the Electrical Engineering Department at LUMS. During his PhD he has spent time as a visiting researcher at the Advanced Systems Analysis program of the International Institute for Applied Systems Analysis (IIASA), Austria and at the Robotics Research Lab of Technical University, Kaiserslautern, Germany. He is generally interested in the application of Systems Analysis techniques to real world problems. His research spans a diverse area, ranging from control and estimation problems in robotics to cybernetic applications in coupled human and natural systems.

%

%
%

\end{document}